\theoremstyle{plain}
\newtheorem{theorem}{Theorem}
\newtheorem{corollary}{Corollary}
\newtheorem{lemma}{Lemma}
\newtheorem{assumption}{Assumption}
\theoremstyle{plain}
\newtheorem{definition}{Definition}
\newtheorem{oracle}{Oracle}
\newcommand\CONDITION[2]%
\newcommand{\R}{\mathbb{R}}
\DeclareMathOperator*{\argmax}{arg\,max}
\newcommand{\diag}{\operatorname{\mathbf{diag}}}
\newcommand{\mdiag}{\diag}
\newcommand{\bone}{\mathbf{1}}
\newcommand{\eye}{\mI}
\newcommand{\nnz}{\mathrm{nnz}}
\newcommand{\eq}{{\ \leftarrow\ }}
\newcommand{\mdecide}{$\code{MMatrix-Decide}$}
\newcommand{\findperronvalue}{$\code{Find-Perron-Value}$}
\newcommand{\boldVar}[1]{\mathbf{#1}}
\newcommand{\mvar}[1]{\boldVar{#1}}
\newcommand{\vvar}[1]{\vec{#1}}
\newcommand{\defeq}{\stackrel{\mathrm{{\scriptscriptstyle def}}}{=}}
\newcommand{\eqdef}{\defeq}
\newcommand{\otilde}{\tilde{O}}
\newcommand{\norm}[1]{\|#1\|}
\newcommand{\va}{\vvar a}
 \newcommand{\vb}{\vvar b}
 \newcommand{\vd}{\vvar d}
 \newcommand{\ve}{\vvar e}
 \newcommand{\vl}{\vvar l}
 \newcommand{\vell}{\vvar \ell}
 \newcommand{\vp}{\vvar p}
 \newcommand{\vr}{\vvar r}
 \newcommand{\vv}{\vvar v}
 \newcommand{\vw}{\vvar w}
 \newcommand{\vx}{\vvar x}
 \newcommand{\vy}{\vvar y}
 \newcommand{\vz}{\vvar z}
 \newcommand{\code}{\mathtt }
 \newcommand{\vzero}{\vvar 0}
 \newcommand{\vones}{\vvar 1}
\newcommand{\ma}{\mvar A}
 \newcommand{\mb}{\mvar B}
 \newcommand{\mc}{\mvar C}
 \newcommand{\md}{\mvar D}
\newcommand{\mI}{\mvar I}
 \newcommand{\ml}{\mvar L}
\newcommand{\mm}{\mvar M}
 \newcommand{\mn}{\mvar N}
 \newcommand{\mP}{\mvar P}
 \newcommand{\mr}{\mvar R}
 \newcommand{\ms}{\mvar S}
 \newcommand{\mv}{\mvar V}
 \newcommand{\mw}{\mvar W}
 \newcommand{\mx}{\mvar X}
 \newcommand{\my}{\mvar Y}
\newcommand{\mz}{\mvar Z}
\newcommand{\mzero}{\mvar 0}
\newcommand{\mlap}{\mvar{\mathcal{L}}}
\newcommand{\Pright}{\mP^{(r)}}
\newcommand{\Pleft}{\mP^{(l)}}
\newcommand{\Prightk}{\mP^{(r, k)}}
\newcommand{\Pleftk}{\mP^{(l, k)}}
\newcommand{\nn}{{n \times n}}
\newcommand{\lmax}{\lambda_{\mathrm{max}}}
\newcommand{\lmin}{\lambda_{\mathrm{min}}}
\newcommand{\poly}{\mathrm{poly}}
\newcommand{\opZ}{\mz}
\newcommand{\Exp}{\mathlarger{\mathbb{E}} \!}
\newcommand{\eigl}{\vv_l}
\newcommand{\eigr}{\vv_r}
\newcommand{\trans}{\top}
\newcommand{\invtrans}{{- \top}}
\newcommand{\pseudo}{+}
\renewcommand{\Pright}{\mP^{(r)}}
\renewcommand{\Pleft}{\mP^{(l)}}
\renewcommand{\Prightk}{\mP^{(l, k)}}
\renewcommand{\Pleftk}{\mP^{(r, k)}}
\renewcommand{\nn}{n \times n}
\renewcommand{\opZ}{\mz}
\newcommand{\fullsymmscale}{$\code{SymMMatrix-Scale}$}
\newcommand{\SDDsolve}{$\code{SDDSolve}$}
\begin{document}

\title{Perron-Frobenius Theory in Nearly Linear Time: \\
\normalsize Positive Eigenvectors, M-matrices, Graph Kernels, and Other Applications
}
\author{
AmirMahdi Ahmadinejad\\
Stanford University\\
ahmadi@stanford.edu\and 
Arun Jambulapati\\
Stanford University\\
jmblpati@stanford.edu\and 
Amin Saberi\\
Stanford University\\
saberi@stanford.edu\and
Aaron Sidford\\
Stanford University\\
sidford@stanford.edu
}
\date{}
\maketitle
\thispagestyle{empty}

\begin{abstract}
In this paper we provide nearly linear time algorithms for several problems closely associated with the classic Perron-Frobenius theorem, including computing Perron vectors, i.e. entrywise non-negative eigenvectors of non-negative matrices, and solving linear systems in asymmetric M-matrices, a generalization of Laplacian systems. The running times of our algorithms  depend nearly linearly on the input size and   polylogarithmically on the desired accuracy and problem condition number. 

Leveraging these results we also provide improved running times for a broader range of problems including computing random walk-based graph kernels, computing Katz centrality, and more. The running times of our algorithms improve upon previously known results which either depended polynomially on the condition number of the problem, required quadratic time, or only applied to special cases.

We obtain these results by providing new iterative methods for reducing these problems to solving linear systems in Row-Column Diagonally Dominant (RCDD) matrices. Our methods are related to the classic shift-and-invert preconditioning technique for eigenvector computation and constitute the first alternative to the result in Cohen et al. (2016) for reducing stationary distribution computation and solving directed Laplacian systems to solving RCDD systems.

\end{abstract}
\newpage
\setcounter{page}{1}

\section{Introduction}


The \emph{Perron-Frobenius theorem} of O. Perron (1907) and G. Frobenius (1912) is a fundamental result in linear algebra that has had far reaching implications over the past century. Notable applications include the study of Markov chains in probability theory, the theory of dynamical systems, economic analysis (such as Leontief's input-output model), and modeling and analysis of social networks.

In its simplest form, the Perron-Frobenius theorem states that every  positive square matrix $\ma \in \R_{>0}^{n\times n}$ has a unique largest real eigenvalue $\lambda > 0$ such that $\ma \vv = \lambda \vv$ for some positive vector $\vv \in \R_{>0}^{n}$. This vector, known as the \emph{Perron-Frobenius vector} or \emph{Perron vector}, plays a central role in understanding positive matrices and a broad classes of linear systems known as M-matrices. Through the natural representation of directed graphs with non-negative edge weights via their adjacency matrices, the Perron vector provides a critical tool to understand numerous graph related properties.

Given its far-reaching applications and its role as a cornerstone to numerous disciplines, providing faster algorithms for computing Perron vectors and positive eigenvectors of non-negative matrices has been a central problem studied extensively for decades \cite{mises1929praktische,wielandt1944beitrage,parlett1974rayleigh,ipsen1997computing, francis1961qr,kublanovskaya1962some,cuppen1980divide,householder2013theory,parlett1998symmetric}. However, despite extensive research on this problem and numerous proposed algorithms (see  \cite{golub2001eigenvalue} for a survey), all previously known algorithms either run in quadratic time, are applicable only in special cases \cite{DaitchS09,coh16,coh17}, or depend more than polylogarithmically on the desired accuracy and  the relevant condition numbers (see Section~\ref{sub:previous_work} for further discussion).

In this paper, we provide the first nearly linear time algorithm\footnote{We use the term ``nearly linear time" to refer to any running time that is $O(z^{1 + o(1)})$ where $z$ is a measure of the size of the input. This is in contrast to certain works \cite{coh17,spielman2014} which use the phrase ``almost linear" to refer to this running time and reserve the phrase ``nearly linear" for running times which hide only polylogarithmic factors. We adopt this notation as our running time depend on the best running time for solving row-column diagonally dominant systems \cite{coh16}. While the state of the art for those is currently almost linear, their (expected) improvement to nearly linear time will result in the same running time for us too.} for computing Perron vectors and prove the following theorem.

\begin{restatable}[Perron Vector Nearly Linear Time]{theorem}{perron}
\label{thm:perron}
Given non-negative irreducible matrix $\ma\in \R_{\geq 0}^{n\times n}$ and $\delta>0$, there is an Algorithm (See Algorithm~\ref{alg:perron}) which in $\otilde(m)$ time with high probability in $n$ computes real number $s>0$, and positive vectors $\vl,\vr,\in \R_{>0}^{n}$ such that $(1-\delta)\rho(\ma) < s \leq \rho(\ma)$,
\[
\|(s \eye - \mm) \vr\|_\infty \leq \frac{\delta}{K^2}\norm{\vr}_\infty ,
\text{ and }
\|\vl^\top (s \eye - \mm)\|_\infty \leq \frac{\delta}{K^2}\norm{\vl}_\infty
\]
where $K = \Theta(\kappa(\eigl)+\kappa(\eigr))$ and $\eigl$, $\eigr$ are the left and right Perron vectors of $\ma$.
\end{restatable}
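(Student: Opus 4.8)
The plan is to implement the classical \emph{shift-and-invert} strategy for the dominant eigenvector, with this paper's $\otilde(m)$-time solvers for shifted M-matrix (equivalently RCDD) systems as the inner loop. Write $\rho=\rho(\ma)$; by Perron--Frobenius $\rho>0$ is a simple eigenvalue whose left and right eigenvectors $\eigl,\eigr$ may be chosen entrywise positive. The elementary facts I will exploit are: for every $s'>\rho$ the Z-matrix $s'\eye-\ma$ is a \emph{nonsingular} M-matrix with entrywise positive inverse $(s'\eye-\ma)^{-1}=\sum_{k\ge 0}\ma^{k}/(s')^{k+1}$ (irreducibility forces strict positivity), and conversely $s'\eye-\ma$ fails to be a nonsingular M-matrix once $s'\le\rho$; moreover $(s'\eye-\ma)^{-1}$ has unique eigenvalue of maximum modulus $1/(s'-\rho)$, with eigenvector $\eigr$, and symmetrically on the left. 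Normalizing $\|\eigr\|_\infty=1$ so $\eigr\le\onesVec$ entrywise, positivity of the inverse gives $(s'\eye-\ma)^{-1}\onesVec\ge(s'\eye-\ma)^{-1}\eigr=(s'-\rho)^{-1}\eigr\ge\vzero$, hence $\|(s'\eye-\ma)^{-1}\onesVec\|_\infty\ge(s'-\rho)^{-1}$: a single application of the inverse to the all-ones vector already produces a positive vector whose norm blows up as $s'\downarrow\rho$.

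This dictates the algorithm: (i) locate $s'>\rho$ with $s'-\rho$ tiny relative to $\rho$, together with an estimate $s\le\rho$ of the same quality; then (ii) output $\vr\approx(s'\eye-\ma)^{-1}\onesVec$ and $\vl\approx(s'\eye-\ma)^{-\top}\onesVec$, each from one approximate solve. For (ii) the key identity is $(s\eye-\ma)\vr=\onesVec-(s'-s)\vr$, so $\|(s\eye-\ma)\vr\|_\infty\le\|\onesVec\|_\infty+(s'-s)\|\vr\|_\infty$; since $\|\vr\|_\infty\ge(s'-\rho)^{-1}\ge(s'-s)^{-1}$ (using $s\le\rho$), the first term is itself at most $(s'-s)\|\vr\|_\infty$, so the residual is at most $2(s'-s)\|\vr\|_\infty$. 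Taking $s'-s\le\eta\rho$ with $\eta$ a sufficiently small inverse polynomial in $\delta/K^{2}$, $1/n$, and the input bit length (to absorb $\|s\eye-\ma\|_\infty$ and the solver error below) yields $\|(s\eye-\ma)\vr\|_\infty\le\tfrac{\delta}{K^{2}}\|\vr\|_\infty$, and symmetrically for $\vl$. Notably this uses \emph{nothing} about the gap between $\rho$ and the remaining eigenvalues of $\ma$: one near-singular solve does what gap-dependent shift-and-invert power iteration would otherwise require, which is exactly why the running time and accuracy in the theorem are gap-free.

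For step (i), note that ``is $s'\eye-\ma$ a nonsingular M-matrix?'' is decidable in $\otilde(m)$ by the same RCDD/M-matrix machinery (run a solve and check a certificate), and a priori $\rho$ lies between the smallest positive entry of $\ma$ — there is a directed cycle of some length $k$ by irreducibility, and $\rho(\ma)^{k}=\rho(\ma^{k})\ge(\ma^{k})_{v_0v_0}\ge(\text{smallest positive entry})^{k}$ — and the largest row sum of $\ma$, so $\log\rho$ ranges over an interval of length polynomial in the input bit length. Binary searching on $\log s'$ to relative tolerance $\eta$ therefore takes a polylogarithmic number of $\otilde(m)$-time tests; the smallest tested infeasible value is our $s\le\rho$, the adjacent feasible one is $s'>\rho$, and narrowness of the bracket gives $(1-\delta)\rho<s\le\rho$ and $s'-s\le\eta\rho$. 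All $\otilde(m)$-time solves use the randomized RCDD solver of \cite{coh16} and the M-matrix solvers built on it in this paper, which is the sole source of randomness and of the ``with high probability'' clause.

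The factors $K=\Theta(\kappa(\eigl)+\kappa(\eigr))$ enter only through the error analysis of the inner solves, and I expect that to be the main obstacle. To guarantee the returned vectors are genuinely positive one must lower bound the ratio of the smallest to the largest coordinate of the exact solution $(s'\eye-\ma)^{-1}\onesVec$: from $(s'\eye-\ma)^{-1}\onesVec\ge(s'-\rho)^{-1}\eigr$ the smallest coordinate is $\ge(s'-\rho)^{-1}\min_i(\eigr)_i=(s'-\rho)^{-1}/\kappa(\eigr)$, while the Perron decomposition $\onesVec=c\,\eigr+(\text{rest})$ with $c=\langle\eigl,\onesVec\rangle/\langle\eigl,\eigr\rangle\le\kappa(\eigr)$, together with a bound on the non-dominant part, caps the largest coordinate at $O(\kappa(\eigr))/(s'-\rho)$; hence the ratio is $\Omega(1/K^{2})$, and the inner solve must be run to relative accuracy below $1/K^{2}$ (affordable, since solver cost is polylogarithmic in $1/\mathrm{accuracy}$), with the same $K^{2}$ reappearing when the solver's Euclidean-norm guarantee is converted into the $\infty$-norm residual bound above. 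Pinning the final dependence to exactly $K^{2}$ rather than a larger power — in particular controlling the non-dominant block of $(s'\eye-\ma)^{-1}$, which need not be diagonalizable — and verifying that the M-matrix test stays reliable arbitrarily close to the singularity $s'=\rho$ are the delicate points; the rest is bookkeeping around the two displayed estimates.
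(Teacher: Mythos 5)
Your outline tracks the paper's approach in its broad strokes---binary search for the spectral radius combined with one near-singular solve $(s'\eye-\ma)^{-1}\onesVec$ to produce the approximate eigenvector, with the residual bound $\|(s\eye-\ma)\vr\|_\infty\le 2(s'-s)\|\vr\|_\infty$ playing the role of the paper's Lemma~\ref{lem:perron-approx}. But there is a genuine gap in step~(i), and you have in fact flagged its shadow without realizing it is fatal to the argument as written.

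You assert that ``is $s'\eye-\ma$ a nonsingular M-matrix?'' is decidable in $\otilde(m)$ ``by the same RCDD/M-matrix machinery (run a solve and check a certificate).'' That is not available to you. The paper's decision procedure $\code{MMatrix-Decide}$ (Lemma~\ref{lem:decision}) requires as \emph{input} a parameter $K\ge\kappa(\eigl)+\kappa(\eigr)$, because both the number of preconditioned-Richardson iterations needed and the time budget allowed to each RCDD solve (lines 11--14 of Algorithm~\ref{alg:decision}) are set in terms of $K$; if you feed it a $K$ that is too small, the procedure will falsely declare ``not an M-matrix'' when $s'\eye-\ma$ in fact is one. Such a false negative is not harmless in your binary search: it drives the bracket upward, so the procedure can terminate with an $s'$ that is far above $\rho$ and never detect the error. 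Likewise, running the $\otilde(m)$ M-matrix solver (Theorem~\ref{thm:msolve}) for step~(ii) also requires the bound $K$ to be supplied. So $K$ enters before any ``error analysis''---you cannot even carry out the test and solve in $\otilde(m)$ time without it---and since $K$ is exactly the kind of spectral information that the Perron problem is supposed to deliver, you are in a chicken-and-egg bind. The paper's escape is the doubling scheme in Section~\ref{sec:kfix} (Algorithm~\ref{alg:perron}): guess $K$, run the whole pipeline, then use the explicit \emph{certificate} of Lemma~\ref{lem:perron-approx-inverse} (a Collatz--Wielandt lower bound $\mv^{-1}\tfrac{\ma}{s}\vv\ge(1-8\epsilon\kappa(\eigr)^2)\vones$) to verify \emph{a posteriori} whether the guess was large enough, doubling $K$ if not. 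This certificate is what makes the procedure self-terminating and correct despite the unreliable decision subroutine, and it is the missing idea in your proposal. Without some analogue of it, your binary search has no way to distinguish ``$s'\le\rho$'' from ``my $K$ was too small.''

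A secondary issue: your bound ``caps the largest coordinate at $O(\kappa(\eigr))/(s'-\rho)$'' appeals to ``a bound on the non-dominant part'' of $(s'\eye-\ma)^{-1}$, and you note yourself that this block need not be diagonalizable. That objection is real, and the way to dispose of it cleanly is the paper's similarity trick (Lemma~\ref{lemma:eigvec_scaling} and Lemma~\ref{lemma:mmatrix_to_eigenvector}): conjugating by $\mv_r=\mdiag(\eigr)$ turns $\ma$ into a nonnegative matrix with $\|\mv_r^{-1}\ma\mv_r\|_\infty=\rho(\ma)$, after which the Neumann series for $\mm_\epsilon^{-1}$ converges in $\ell_\infty$ at a rate independent of the Jordan structure and $\|\mm_\epsilon^{-1}\|_\infty\le\kappa(\eigr)/\epsilon$ falls out directly. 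That same bound is what justifies the $K^2$-bounded condition numbers on the scalings $\mr_\alpha,\ml_\alpha$ (Lemma~\ref{lemma:scaling_condition}) and keeps the inner RCDD solves fast even as $s'\downarrow\rho$; without it, your ``pinning the final dependence to exactly $K^2$'' remains a wish rather than a proof.
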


To achieve this result we consider the more general problem of solving linear systems for a broad class of matrices, known as M-matrices. A matrix $\mm$ is an M-matrix if it can be written of the form $\mm = s \mI - \mb$ where $s > 0$ and $\mb$ is a non-negative matrix with spectral radius (i.e. largest eigenvalue) of magnitude at most $s$. These systems generalize symmetric diagonally dominant systems \cite{spielman04}, row-column diagonally dominant systems \cite{coh16}, and factor width 2 systems \cite{DaitchS09} and have numerous applications.

The previously known algorithms for solving linear systems in M-matrices suffer from the same issues as we mentioned for computing the Perron vector. In this paper, we present the first nearly linear time algorithm for solving this problem and prove the following theorem:

\begin{restatable}[M-Matrices in Nearly Linear Time]{theorem}{msolve}
    \label{thm:msolve}
    Let $\ma \in \R^{\nn}_{\geq 0}$ with $m$ nonzero entries, $s > 0$, $\rho(\ma) < s$, and $\mm = s \eye - \ma$. For all $\epsilon > 0$ and $K \geq \max \{s \|\mm^{-1}\|_\infty, s \|\mm^{-1}\|_1\}$ there is an algorithm (See Algorithm~\ref{alg:SolveM}) which runs in $\otilde(m)$ time and with high probability computes an operator $P$ where for any vector $\vb$ it is the case $\Exp \|\vb - \mm P(\vb) \|_2 \leq \epsilon \| \vb \|_2$.
\end{restatable}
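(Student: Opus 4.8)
The plan is to reduce solving an M-matrix system $\mm = s\eye - \ma$ to solving a sequence of RCDD systems via a shift-and-invert / preconditioned iteration. The first step is to understand the Perron structure of $\ma$: since $\rho(\ma) < s$, the matrix $\mm$ is invertible and $\mm^{-1} = \frac{1}{s}\sum_{k\geq 0} (\ma/s)^k$ has all non-negative entries, and the hypothesis $K \geq \max\{s\|\mm^{-1}\|_\infty, s\|\mm^{-1}\|_1\}$ should be exactly the quantitative control we need on how close $\rho(\ma)$ is to $s$ and on the conditioning of the left/right Perron vectors. Concretely, I would first rescale the problem: let $\vr,\vl$ be (approximate) right and left Perron vectors of $\ma$ and set $\md_r = \diag(\vr)$, $\md_l = \diag(\vl)$. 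Then $\widetilde{\ma} = \md_l^{1/2}\md_r^{-1/2}\,\ma\,\md_r^{1/2}\md_l^{-1/2}$ (up to getting the diagonal scalings in the right order) has the property that $\rho(\widetilde\ma)\cdot\onesVec$ is close to both its row sums and column sums, so $s\eye - \widetilde\ma$ is, after symmetrization, close to an RCDD matrix — and solving the rescaled system is equivalent to solving the original one up to the diagonal change of variables. The natural tool to produce the needed $\vl,\vr,s$ is Theorem~\ref{thm:perron}, whose guarantees $\|(s\eye-\mm)\vr\|_\infty \leq \frac{\delta}{K^2}\|\vr\|_\infty$ and its transpose are precisely an entrywise statement that after rescaling the row/column sums are within a $\delta/K^2$-factor of uniform.

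The second step is the iterative solver itself. Having rescaled so that $\mm' = s\eye - \widetilde\ma$ is spectrally close to an RCDD matrix $\mn$, I would write $\mm' = \mn - \mE$ where $\mE$ is a small "error" perturbation (built from the discrepancy between $\widetilde\ma$'s row/column sums and uniform), and run a Richardson-type iteration $\vx_{t+1} = \vx_t + \mn^{-1}(\vb - \mm'\vx_t)$, where each application of $\mn^{-1}$ is an approximate RCDD solve via the algorithm of Cohen et al. \cite{coh16} in $\otilde(m)$ time. The key is to show the iteration matrix $\eye - \mn^{-1}\mm' = \mn^{-1}\mE$ has operator norm (in the appropriate $\md$-weighted $\ell_2$ norm) bounded below $1$ by a constant, so that $O(\log(1/\epsilon))$ iterations suffice; here the "$K^2$" slack in the accuracy demanded from Theorem~\ref{thm:perron} is what makes $\mn^{-1}\mE$ small relative to $1$, since $\|\mn^{-1}\|$ can be as large as $\Theta(K)$ and $\|\mE\|$ is of order $\delta/K^2$ times the relevant scale. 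The randomization and the "with high probability" and expectation in the conclusion come from the randomized RCDD solver and from the randomized Perron-vector routine; one takes a union bound over the $\otilde(1)$ iterations.

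The main obstacle I anticipate is controlling the interaction between the two sources of error: the rescaling is only approximate (the Perron vectors $\vl,\vr$ are computed to finite precision, and $s$ only satisfies $(1-\delta)\rho(\ma) < s \leq \rho(\ma)$), and the inner RCDD solves are only approximate. One must argue that the composed operator $P$ — rescale, then run Richardson with approximate inner solves, then unrescale — still satisfies $\Exp\|\vb - \mm P(\vb)\|_2 \leq \epsilon\|\vb\|_2$ in the \emph{unweighted} $\ell_2$ norm, whereas the natural contraction estimates live in the $\md_l,\md_r$-weighted norms; converting between these costs factors of $\kappa(\vl),\kappa(\vr)$, which is again why the theorem only promises control in terms of $K$ and why Theorem~\ref{thm:perron} is stated with the $\delta/K^2$ (rather than $\delta/K$) accuracy. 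A secondary technical point is handling the case where $\ma$ is reducible or has zero rows/columns: one reduces to the irreducible case (so that Theorem~\ref{thm:perron} applies) by a standard perturbation or block-decomposition argument, or by adding a tiny multiple of the all-ones matrix and absorbing the change into $\epsilon$. I expect the bulk of the write-up to be the careful bookkeeping of these error terms; the algorithmic skeleton (rescale $\to$ preconditioned Richardson with RCDD inner solves) is straightforward given Theorem~\ref{thm:perron} and \cite{coh16}.
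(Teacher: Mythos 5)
Your high‐level intuition is right — make $\mm$ RCDD by a diagonal left–right scaling, then invoke the RCDD solver — but your proposal gets the logical dependency backwards and falls into exactly the circularity the paper is designed to escape. You propose to produce the scaling vectors by invoking Theorem~\ref{thm:perron}. In the paper, however, Theorem~\ref{thm:perron} is proved \emph{from} the M-matrix scaling machinery: \computeperron{} calls \simpleperron{}, which in turn calls \findperronvalue{} and \fullmscale{}, and \fullmscale{} is precisely the algorithm realizing Theorem~\ref{thm:mscale}, which is the engine behind Theorem~\ref{thm:msolve}. So taking Theorem~\ref{thm:perron} as a given tool here is circular. Section~\ref{sub:intro_tech_motiv} spells out this chicken-and-egg problem (Perron vectors give M-matrix solvers, and M-matrix solvers give Perron vectors, but it is unclear how to get either on its own) and explicitly names escaping it as the paper's central technical contribution.

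The paper's actual proof of Theorem~\ref{thm:msolve} is a short reduction to Theorem~\ref{thm:mscale}: Algorithm~\ref{alg:SolveM} runs \fullmscale$(\ma, \epsilon, K)$ to obtain diagonal $\ml, \mr$ with $\ml((1+\epsilon)s\eye - \ma)\mr$ RCDD, feeds this to $\code{RCDDSolve}$, and returns $P(\vb) = \mr\,\opZ(\ml\vb)$. There is no outer Richardson iteration of the kind you describe; the iteration lives inside \fullmscale{}, which starts at $\alpha = 2\max\{\|\ma\|_1,\|\ma\|_\infty\}$ (where $\mm_\alpha = (1+\alpha)\eye - \ma$ is trivially RCDD with $\vl_\alpha=\vr_\alpha=\tfrac{1}{\alpha}\vones$) and repeatedly halves $\alpha$, each time using the scaling of $\mm_{2\alpha}$ as a preconditioner to solve $\mm_\alpha \vr_\alpha \approx \vones$ and $\mm_\alpha^\top \vl_\alpha \approx \vones$ (Lemmas~\ref{lem:preconditioner}, \ref{lemma:MProg}, \ref{lemma:scaling_innerloop}). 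At no point are actual Perron vectors computed; the parameter $K$ enters only through the condition-number bounds on these intermediate scalings and the induced norm $\md=\ml_0\mr_0^{-1}$ (Lemmas~\ref{lem:pdscaling} and \ref{lemma:scaling_condition}), not through the accuracy of an approximate eigenvector, and the $\delta/K^2$ tolerance in Theorem~\ref{thm:perron} is a downstream consequence of this machinery rather than an input to it. To repair your outline, you would need to prove Theorem~\ref{thm:mscale} directly — that is, develop the shift-and-halve scheme you gesture at in your first sentence — rather than appeal to Theorem~\ref{thm:perron}.
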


As an immediate corollary of this theorem, we can show that if a matrix $\mm \in \R^{n \times n}$ has the property that after negating its off-diagonal entries it becomes an M-matrix, then we can also solve systems in $\mm$ in nearly linear time (See Section \ref{sec:factor2}). We provide a specialization of this result for symmetric matrices that shows we can solve factor-width 2 matrices, that is matrices of the form $\mm = \mc^\top \mc$ where each row of $\mc$ has at most $2$ non-zero entries (See Section~\ref{sec:factor2}). Further in Section~\ref{sec:symmscaling} we prove a specialization of Theorem~\ref{thm:msolve} for the case where $\mm$ is symmetric. In this case we are able to get a shorter analysis with a tighter running time bound.

To prove Theorem~\ref{thm:msolve} we build upon and enhance an exciting  line of work on developing nearly linear time algorithms for computing the stationary distribution of Markov chains, a special case of the Perron vector problem, and solving linear systems in directed Laplacians, a special cases of the problem of solving M-matrices \cite{coh16, coh17}.  This line of work achieved these results in two steps. First \cite{coh16} provided a reduction from these problems to solving linear systems in Eulerian Laplacians or more broadly, row-column diagonally dominant (RCDD) systems. Second, \cite{coh17} provided a nearly linear time algorithm for solving these systems.

These results suggest a natural line of attack to achieve our results, extend the reduction from \cite{coh16} to reduce solving M-matrices to solving RCDD matrices. While there are reasons to hope this might be possible (See Section~\ref{sub:intro_tech_motiv}), there are two significant barriers to this approach (See Section~\ref{sub:previous_work}). 
First, the analysis in \cite{coh16} leveraged the fact that directed Laplacians are always M-matrices, and there is a simple linear time procedure to check that a matrix is a directed Laplacian (and therefore an M-matrix). However, in general the mere problem of computing whether or not a matrix is an M-matrix involves computing the spectral radius or top eigenvector of a non-negative matrix and prior to our work there was no nearly linear time algorithm for it. Consequently, extending \cite{coh16} would require a different type of analysis that leverages properties of M-matrices that cannot be easily tested. Second, while the algorithm and analysis of the reduction in \cite{coh16} is fairly simple and short, it does not follow any previously known analysis of an iterative method and thus it does not seem clear how to modify the algorithm and analysis in a principled manner. 

To circumvent these issues, we provide a new algorithm for reducing solving M-matrices to solving RCDD systems that provides the first known alternative to the reduction in \cite{coh16} for solving directed Laplacians. Interestingly, our algorithm is a close relative of a classic and natural method for computing eigenvectors of a matrix known as \emph{shift-and-invert preconditiong} \cite{saad1992numerical, GarberHJKMNS16} and can be easily analyzed by a simple bound on how well different regularized M-matrices approximate or \emph{precondition} each other. While, the careful error analysis of our method is detailed (see Section~\ref{sec:scaling}) its core fits on just two pages and can be found in Section~\ref{sec:overview}. 
We believe this is one of the primary technical contributions of our paper. (See Section~\ref{sub:approach} and Section~\ref{sec:overview} for further discussion.)

Finally, we show how to leverage these results to achieve faster running times for a wide range of problems involving Perron vectors, M-matrices, and random walks.  In particular, we show our method can be applied in the analysis of Leontief input-output models, 
computing Katz Centrality~\cite{katz1953new}, computing the top singular value and its corresponding vectors of a non-negative matrix, and computing random walk based graph kernels. 
While some of these are direct applications, others require opening up our framework and leveraging its flexibility, e.g computing the top singular value and its corresponding vectors of a non-negative matrix.  (See Section~\ref{sec:applications})

Just as Laplacian systems have been crucial for solving a broad range of combinatorial problems \cite{kelner2014almost,madry2010fast,orecchia2012approximating,spielman2013local,spielman2014}, we hope that the work in this paper will ultimately open the door to the development of even faster and simpler linear algebraic primitives for processing graphs and probing their random walk structure and find even broader applications than those provided in this paper.

\subsection{M-Matrices, Positive Matrix Sums, and Perron Vectors}
\label{sub:intro_tech_motiv}

Our results leverage a tight connection between invertible M-matrices, matrix series, random walks, Perron vectors, and RCDD matrices. Here we elaborate on this connection to motivate previous work on these problems and our approach to solving them. 

To demonstrate this connection, first observe that a matrix $\mm \in \R^{n \times n}$ is an invertible M-matrix if and only if it is of the form $\mm = s \mI - \mb$ where $\rho(\mb) < s$. In this case it is easy to see that 
\[
(s \mI - \mb)^{-1} = \frac{1}{s} \left(\mI - \frac{1}{s} \mb\right)^{-1}
= \frac{1}{s} \sum_{i = 0}^{\infty} \left(\frac{1}{s} \mb\right)^{i} ~.
\]
Consequently, the question of testing whether a matrix is an M-matrix and solving linear systems in it is closely related to the following problems:

\begin{definition}[M-Matrix Problem]
Given matrix $\mm$ and a vector $\vb$ either solve the linear system $\mm \vx = \vb$ or prove that $\mm$ is not an M-matrix.
\end{definition}

\begin{definition}[Geometric Non-negative Matrix Series Problem] 
Given a non-negative matrix $\mb \in \R^{n \times n}_{\geq 0}$ and a vector $b$ compute $\sum_{i = 0}^{\infty} \mb^i b$ or prove it diverges.
\end{definition}

\begin{definition}[Perron Vector Problem] \label{def:perron}
Given a non-negative  matrix $\mb \in \R^{n \times n}_{\geq 0}$ computes $\rho(\mb)$, its largest eigenvector, and a Perron vector, $\vv_r \in \R^{n}_{\geq 0}$ such that $\mb \vv_r = \rho(\mb) \vv_r$. 
\end{definition}

We provide the first nearly linear time algorithms for the above three problems. This is done by leveraging a somewhat classic fact that Perron vector problem is also intimately connected to the M-matrix problem. If $\mb \in \R^{n \times n}$ is a non-negative matrix then,
\[
\lim_{s \rightarrow \rho(\mb)^+} (s \mI- \mb)^{-1} \vones 
= 
\lim_{s \rightarrow \rho(\mb)^+} \frac{1}{s} \sum_{i = 0}^{\infty} \left(\frac{1}{s} \mb \right)^{i} \vones 
= \vv_r
\]
where $\lim_{s \rightarrow \rho(\mb)^+}$ is the limit of $s$ approach $\rho(\mb)$ from above and $\vv_r$ is a non-negative Perron vector of $\mb$. Consequently, given a solution of the M-matrix problem we can obtain arbitrary good approximations to the Perron vector problem. 

Second, (and perhaps more surprising), recent work on solving RCDD systems \cite{coh16, coh17} implies that solutions to the Perron vector problem lead to nearly linear time algorithms for the M-Matrix Problem. Suppose we have an invertible M-matrix with $\mm = \mI - \mb$ and we have computed positive left and right Perron vectors $\vl \in \R^{n}_{> 0}$ and $\vr \in \R^{n}_{> 0}$ respectively, then it is easy to see (See Section~\ref{sec:overview}) that $\mv_l \mm \mv_r$ is RCDD, where $\mv_l$ and $\mv_r$ are the diagonal matrices associated with $\vv_l$ and $\vv_r$ respectively. Consequently, given Perron vectors and RCDD solvers we can solve M-matrices.

These two results together provide chicken and egg problem for solving M-matrices. We can solve M-matrices with Perron vectors and compute Perron vectors with M-matrix solvers, however it is unclear how to achieve either individually. While there have been previous results for escaping this conundrum to provide solvers for special cases of M-matrices, they are only applicable in settings where we were able to certify that the input matrix was an M-matrix in nearly linear time. A key result of this paper is that a variant of shift-and-invert preconditioning carefully analyzed leveraging properties of M-matrices lets us efficiently escape this problem and obtain nearly linear time algorithms.

\subsection{Previous Work}
\label{sub:previous_work}

Here we briefly review previous work on solving M-matrices and computing Perron vectors. These problems have been studied extensively and there are too many results to list. Here we briefly survey various areas of research relevant to our theoretical improvements. For a more comprehensive treatment of these problems see  \cite{golub2001eigenvalue}.  Previous work on the additional applications of these results are deferred to Section~\ref{sec:applications}.

Each of the problems described in Section~\ref{sub:intro_tech_motiv} can easily be solved using an algorithm for solving square linear systems. Moreover, given a matrix $\ma \in \R^{n \times n}$ and a vector $\vb \in \R^n$, it is well-known that a linear system $\ma \vx = \vb$ can be solved in $O(n^\omega)$, where $\omega < 2.373$ \cite{Williams12}, or can be solved in $O(m n)$ where $m = \nnz(\ma)$ is the number of nonzero entries of $\ma$.\footnote{Note that conjugate gradient achieves $O(m n)$ time only using exact arithmetic. It is open how to achieve a comparable running time where arithmetic operations can only be performed on numbers with a polylogarithmic number of bits. See \cite{muscoms17} for further discussion.} However, in both theory and practice $\ma$ is often sparse, e.g. $\nnz(\ma) = O(n)$, in which case these algorithms run in quadratic, i.e. $\Omega(n^2)$ time, which may be prohibitively expensive for many purposes.

To improve upon the performance of these generic linear system solvers a wide array of iterative methods and practical heuristics have been proposed. However, even in the special case of computing the stationary distribution of Markov chains, until the work of \cite{coh16, coh17}, the previously best known running times either only applied in special cases or depended polynomially on the desired accuracy or conditioning of the problem.  

Recently, in an important special case of the problems considered in Section~\ref{sub:intro_tech_motiv}, namely computing the \emph{stationary distribution}, and solving \emph{directed Laplacians} (i.e. $\mm = \mI - \mb$ for these $\mb$) it was shown how to solve these problems in nearly linear time \cite{coh17}. This result was achieved by a result in \cite{coh17} showing that RCDD matrices can be solved in nearly linear time and a result in \cite{coh16}, reducing each problem to solving RCDD systems with only polylogarithmic overhead. 

These works created hope for achieving the results of this paper, e.g. by extending the reduction in \cite{coh16}, however there are significant issues to this approach. First, the analysis in \cite{coh16} leveraged that it is easy to certify that a matrix is a directed Laplacian. The method incrementally computed the associated stationary distribution for this matrix using that a directed Laplacian has a kernel. However, as discussed even certifying that a matrix is an M-matrix involves computing the spectral radius of a non-negative matrix, for which there were no previously known nearly linear time algorithm. Second, the reduction in \cite{coh16} was not known to be an instance of any previously studied iterative framework and therefore it is difficult to adapt it to new cases. 

The only other work directly relevant to the problems we consider was the exciting work of \cite{DaitchS09} showing that symmetric M-matrices and more broadly factor-width 2 matrices could be solved in nearly linear time, provided an explicit factorization is given, i.e. for the input matrix $\mm$ there was a known $\mc$ with at most $2$ non-zero entries per row such that $\mc^\top \mc = \mm$. Note that all symmetric M-matrices are factor-width $2$, and though \cite{DaitchS09} provided a nearly linear time algorithm which given $\mc$ could reduce solving M to solving symmetric diagonally dominant (SDD) matrices, i.e. symmetric RCDD matrices, our result is the first to actual compute this factorization in nearly linear time when it is not given.

\subsection{Overview of Approach}
\label{sub:approach}

We achieve the results of our paper by leveraging the RCDD solvers of \cite{coh17}. As discussed in Section~\ref{sub:intro_tech_motiv} this would suffice to solve M-matrices in nearly linear time provided we could computed Perron vectors in nearly linear time. However, as discussed in Section~\ref{sub:previous_work} there are barriers to applying the previous methods of \cite{coh16, DaitchS09} for computing Perron vectors in our more general setting.

To overcome this difficulty, we provide a new reduction from solving M-matrices to solving RCDD systems, that can serve as an alternative for the reductions in both \cite{coh16} and \cite{DaitchS09}.
Our method is loosely related to a method known as \emph{shift-and-invert preconditioning} for computing eigenvectors of matrices \cite{saad1992numerical, GarberHJKMNS16}. The crux of this method is that if a matrix $\ma$ has top eigenvalue $\lambda > 0$ with top eigenvector $\vv$, then the matrix $(c \mI - \ma)^{-1}$ for $c > \lambda$ also has top eigenvector $\vv$ with eigenvalue $1/(c - \lambda)$. Consequently, by performing power method on $(c \mI - \ma)^{-1}$, i.e. solving linear systems in $c \mI- \ma$ we can compute a top eigenvector of $\ma$. Shift-and-invert precondition does precisely this, leveraging that power method may converge faster for this matrix as bringing $c$ towards $\lambda$ may accentuate the gap between $\lambda$ and the next smallest eigenvector.

Now, as we have discussed computing the left and right Perron vectors for $\mb$ suffices to solve linear systems in the M-matrix $s \mI - \mb$. Furthermore, similar to shift-and-invert we know that if $s \rightarrow \rho(\mb)^+$ solving linear systems in this matrix would suffice to compute the left and right Perron vectors. 
Our method works by considering the matrices $\mm_\alpha \defeq \mm + \alpha \mI = (s + \alpha) \mI - \mb$ for $\alpha > 0$. By an very minor strengthening of a not as well known fact about M-matrices,  it can be shown that actually just solving two linear system in $\mm_\alpha$ suffice to get left and right scalings to make it RCDD. Consequently, if we have a solver for any $\mm_\alpha$ we can get the scalings for $\mm_\alpha$ and have a new solver for that matrix.

Now, we can always pick $\alpha$ so large that $\mm_{\alpha}$ is RCDD and therefore we can compute its scalings. However, this argument does not yet suffice to get a scaling for any other $\mm_{\alpha}$. To circumvent this we show that in fact for any $\alpha$ and $\alpha'$ that are multiplicatively close, the matrix $\mm_{\alpha'}^{-1} \mm_{\alpha}$ is close to the identity in an appropriate norm and therefore a solver for one can be used through preconditioning to yield a solver for the other. 

Combining these insights yields a natural algorithm: compute a scaling for $\mm_{ \alpha}$ for large $\alpha$, use this to solve linear systems in $\mm_{ \alpha}$ and thereby $\mm_{ \alpha / 2}$. Use this solver to get scaling for $\mm_{ \alpha / 2}$ and repeat. We provide the basic mathematical analysis for this assuming exact solvers in Section~\ref{sec:overview} and then show how to build upon this to solve our problem in the inexact case in Section~\ref{sec:scaling_proofs}. In Section~\ref{sec:perron-alg} we then show how to carefully binary search to solve the Perron problem and certify if $\mm$ matrices are M-matrices.

\subsection{Overview of Results}

In this paper we give the first nearly linear time algorithms for Perron Vector computation and solving linear systems in M-matrices. Our main results are the following:


\perron*

\msolve*

As discussed, we achieve these results by providing reductions from solving these problems to solving linear systems in RCDD matrices. Our reduction is the first alternative to the reduction in \cite{coh16} for reducing stationary distribution computation and solving directed Laplacians to solving RCDD matrices and the first alternative to the reduction in \cite{DaitchS09} from solving symmetric M-matrices given a factorization to solving symmetric Laplacians. Our main theorem corresponding to this reduction is the following:

\begin{restatable}[M-Matrix Scaling in Nearly Linear Time]{theorem}{mscaler}
    \label{thm:mscale}
    Let $\ma \in \R^{\nn}_{\geq 0}$ with $m$ nonzero entries. Let $s > 0$ and $\rho(\ma) < s$, and let $\mm = s\eye - \ma$. For all $\epsilon > 0$ and $K \geq \max (s \|\mm^{-1}\|_\infty, s \|\mm^{-1}\|_1)$ there is an algorithm (See Algorithm~\ref{alg:MMatrix}) which runs in $\otilde(m)$ time and with high probability computes a pair of diagonal matrices $(\ml, \mr)$ where $\ml((1+\epsilon)s\eye - \ma)\mr$ is RCDD with high probability.
    \end{restatable}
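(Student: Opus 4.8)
The plan is to build $(\ml,\mr)$ not for the target matrix directly, but for the whole family of shifted M-matrices $\mm_\alpha\defeq(s+\alpha)\eye-\ma=\mm+\alpha\eye$ ($\alpha>0$), bootstrapping from a large shift down to $\alpha=\epsilon s$, at which point $\mm_{\epsilon s}=(1+\epsilon)s\eye-\ma$ is exactly the matrix in the statement. Everything rests on one structural fact (the ``minor strengthening'' alluded to in Section~\ref{sub:approach}): set $\vr\defeq\mm_\alpha^{-1}\vones$ and $\vl\defeq(\mm_\alpha^\top)^{-1}\vones$. Both are coordinatewise positive since $\mm_\alpha^{-1}=\frac{1}{s+\alpha}\sum_{i\ge 0}\big(\tfrac{1}{s+\alpha}\ma\big)^i\ge\frac{1}{s+\alpha}\eye$ entrywise. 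Then $\mn\defeq\diag(\vl)\,\mm_\alpha\,\diag(\vr)$ has nonpositive off-diagonals, its $i$-th row sum equals $\vl_i(\mm_\alpha\vr)_i=\vl_i>0$, and its $j$-th column sum equals $\vr_j(\mm_\alpha^\top\vl)_j=\vr_j>0$; hence each diagonal entry of $\mn$ exceeds the absolute sum of the off-diagonals in its row (and in its column) by a strictly positive margin, so $\mn$ is strictly RCDD. Thus two linear solves in $\mm_\alpha$ produce valid scalings for $\mm_\alpha$. I will also need the inexact version: a bound on the RCDD-violation of $\diag(\tilde\vl)\mm_\alpha\diag(\tilde\vr)$ when $\tilde\vl,\tilde\vr$ only approximate $\vl,\vr$, together with the observation that adding a tiny positive diagonal restores exact RCDD-ness.

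Given this, the algorithm and its correctness proof run as follows. First I would handle the base case: from $K\ge s\|\mm^{-1}\|_\infty\ge 1+\|\ma\|_\infty/s$ (and its transpose for $\|\cdot\|_1$) we get $\|\ma\|_\infty,\|\ma\|_1<Ks$, so for $\alpha_0=\Theta(Ks)$ the matrix $\mm_{\alpha_0}$ is already RCDD with the identity scalings and is solved directly by the RCDD solver of \cite{coh17}. For the inductive step, fix a shift $\alpha$ for which we already have a fast solver for $\mm_\alpha$ (a pair of RCDD scalings plus the \cite{coh17} solver for the scaled matrix) and set $\alpha'=\alpha/2$. Since $\mm_\alpha-\mm_{\alpha'}=(\alpha/2)\eye$ we have $\eye-\mm_\alpha^{-1}\mm_{\alpha'}=(\alpha/2)\mm_\alpha^{-1}$, whose spectral radius is $\frac{\alpha/2}{(s+\alpha)-\rho(\ma)}<\tfrac12$ because the smallest-modulus eigenvalue of $\mm_\alpha$ is $(s+\alpha)-\rho(\ma)>\alpha$ (using $|\mu|\le\rho(\ma)<s$ for every eigenvalue $\mu$ of $\ma$). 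So a short preconditioned iteration (Richardson/Chebyshev-type), each step solving one system in $\mm_\alpha$, gives a crude solver for $\mm_{\alpha'}$; applying the structural fact with this crude solver yields approximate scalings for $\mm_{\alpha'}$, the small diagonal correction makes them exact RCDD scalings, and feeding the scaled matrix to \cite{coh17} upgrades this to a fast solver for $\mm_{\alpha'}$. Iterating $O(\log(\alpha_0/(\epsilon s)))=O(\log(K/\epsilon))$ times (choosing the schedule so the last shift is exactly $\epsilon s$) reaches $\mm_{\epsilon s}$ and outputs its scalings. Each round makes only $\poly\log$-many calls to the \cite{coh17} solver and to matrix--vector products with $\ma$, so the total time is $\otilde(m)$, and the high-probability guarantee follows by a union bound over the rounds.

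The hard part will be the inexactness bookkeeping (deferred by the paper to Section~\ref{sec:scaling_proofs}): showing the error introduced within a round stays controlled and does not compound over the $\poly\log$ rounds. Three points need care. (i) The \cite{coh17} solver only inverts the scaled matrix approximately and $\mm_\alpha$ is non-normal, so the bare spectral-radius bound above does not control the transient growth of the preconditioned iteration; I would therefore track all errors in a norm adapted to the current scaling -- e.g.\ the energy norm of the symmetric part of the scaled (hence SDD) matrix -- in which both the \cite{coh17} guarantee and the contraction of the preconditioned iteration are naturally stated. (ii) The approximate solutions $\tilde\vr,\tilde\vl$ must stay coordinatewise positive and bounded away from $0$, which follows from the entrywise lower bound $\mm_\alpha^{-1}\ge\frac{1}{s+\alpha}\eye$ provided the inner tolerance is $\poly(\epsilon/(nK))$; this keeps the scalings well-conditioned from round to round. (iii) After the diagonal correction the residual RCDD-violation must be negligible relative to the margins $\vl_i,\vr_j$, so downstream solver quality does not decay. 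Since all inner tolerances can be taken polynomially small they cost only $\poly\log$ factors in bit-length and preserve the $\otilde(m)$ bound. The remaining pieces -- the base-case norm estimate, the iteration count, and assembling the final scalings for $(1+\epsilon)s\eye-\ma$ -- are routine given the above.
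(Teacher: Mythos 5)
Your overall strategy coincides with the paper's: bootstrap scalings of $\mm_\alpha = \mm + \alpha s\eye$ downward from a trivially-RCDD shift to $\alpha = \epsilon s$, using $\vr_\alpha = \mm_\alpha^{-1}\vones$, $\vl_\alpha = \mm_\alpha^{-\top}\vones$ as scalings (which is precisely Lemma~\ref{lemma:MMatrix_RCDD}), and using a preconditioned Richardson iteration at each halving step. Your base-case bound $\alpha_0 = \Theta(Ks)$ works (the paper uses $2\max\{\|\ma\|_1,\|\ma\|_\infty\}$, which you correctly observe is $O(Ks)$), and your spectral-radius observation $\rho(\alpha\mm_{2\alpha}^{-1}) < 1/2$ is the heuristic version of the paper's Lemma~\ref{lem:preconditioner}.

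However, there is a genuine gap at precisely the step you flag as ``the hard part,'' namely the choice of error norm. You propose to track error in ``a norm adapted to the current scaling -- e.g.\ the energy norm of the symmetric part of the scaled (hence SDD) matrix,'' which means $\md_{2\alpha} = \ml_{2\alpha}\mr_{2\alpha}^{-1}$ (or the associated energy norm of $\ms_{2\alpha}+\ms_{2\alpha}^\top$). This does not give the required contraction. The key bound the paper needs (Lemma~\ref{lem:preconditioner}) is $\|\md^{1/2}\mm_{2\alpha}^{-1}\md^{-1/2}\|_2 \leq \frac{1}{2\alpha}$, and its proof expands $\md^{1/2}\mm_{2\alpha}\md^{-1}\mm_{2\alpha}^\top\md^{1/2} = \md^{1/2}\mm\md^{-1}\mm^\top\md^{1/2} + 2\alpha(\md^{1/2}\mm\md^{-1/2}+\md^{-1/2}\mm^\top\md^{1/2}) + 4\alpha^2\eye$, which is $\succeq 4\alpha^2\eye$ \emph{only if} $\md^{1/2}\mm\md^{-1/2}$ (the conjugated \emph{unshifted} $\mm$) has PSD symmetric part. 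That property holds for the fixed diagonal matrix $\md = \ml_0\mr_0^{-1}$ where $(\ml_0,\mr_0)$ scale the limiting matrix $\mm_0 = \mm$ (Lemma~\ref{lem:pdscaling}), and this $\md$ is used as a single analysis-only norm across all rounds, with $\kappa(\md) \leq 9K^2$ controlled by Lemma~\ref{lemma:scaling_condition}. If instead you conjugate with $\md_{2\alpha}$ (the current scaling), you only know $\md_{2\alpha}^{1/2}\mm_{2\alpha}\md_{2\alpha}^{-1/2}$ has PSD symmetric part, which yields $\md_{2\alpha}^{1/2}\mm_{2\alpha}\md_{2\alpha}^{-1}\mm_{2\alpha}^\top\md_{2\alpha}^{1/2}\succeq 0$ -- no quantitative lower bound at all, hence no contraction. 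A norm that changes each round also introduces a multiplicative ratio of scaling condition numbers between consecutive rounds that could compound over the $\Theta(\log(K/\epsilon))$ rounds. Identifying the fixed reference norm $\md_0 = \ml_0\mr_0^{-1}$ and proving Lemmas~\ref{lem:pdscaling} and~\ref{lem:preconditioner} is the heart of Section~\ref{sec:scaling}, and is the piece your proposal would need replaced rather than glossed.

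Two smaller remarks. First, the paper never needs your ``small diagonal correction'' step: the stopping criterion $\|\mm_\alpha\vr_\alpha^{(k)}-\vones\|_\infty\leq\frac12$ already implies $\ml_\alpha\mm_\alpha\mr_\alpha$ is exactly RCDD (by Lemma~\ref{lemma:MMatrix_RCDD}), and passing from $\mm_{\alpha^*}$ with $\alpha^*\leq\epsilon$ to $\mm_\epsilon$ only adds a nonnegative diagonal, so RCDDness is preserved. Second, your points (ii) and (iii) on positivity and conditioning of the approximate scalings are handled in the paper by Lemmas~\ref{lemma:smallest_scaling} and~\ref{lemma:scaling_infnorms}--\ref{lemma:scaling_condition}, and the inner tolerance $\Theta(1/K)$ (not $\poly(\epsilon/(nK))$) suffices because the $\md$-norm contraction already absorbs the $\kappa(\md)$ factor.
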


Leveraging these results we obtain faster algorithms for a host of problems (See Section~\ref{sec:applications} for a more detailed discussion). Key results from this section include

\begin{itemize} 
\item {\bf Leontief economies:} these are models capturing interdependencies between different sectors within a national economy and how the output of one sector can be used as input to another sector changes in the production of one will affect the others. We give a near linear time for checking the Hawkin-Simons condition~\cite{hawkins1949note} that guarantees the existence of a non-negative output vector that solves the equilibrium relation in which demand equals supply.
\item {\bf Katz centrality:} we give a nearly linear time algorithm for computing the Katz centrality which defines the relative influence of a node within a network as a linear function of the influence of its neighbors.  
\item {\bf Left and right top singular vectors:} we can also compute the top left-right singular values and associated top lef-right singular vectors of a non-negative matrix in nearly linear time. For this application we need additional ingredients solve linear systems in $\mm$ when $\mm$ is RCDD (see Oracle~\ref{oracle:RCDDSolve}). 
\item{\bf Graph kernels:} graph kernels capture the similarity between two graphs with applications in social networks, studying chemical compounds, comparison and function prediction of protein structures, and analysis of semantic structures in natural language processing \cite{kash2003, kash2004, vish2010}. 
One key obstacle in applying the existing algorithms for computing the kernel function between two graphs is their large running time~\cite{vish2010}. Using our methodology, we obtain improved running times for computing canonical kernel functions known as random walk kernels.
\end{itemize} 


\subsection{Paper Organization}

The rest of the paper is organized as follows. In Section~\ref{sec:prelim} we cover preliminaries including notation and facts about M-matrices and Perron vectors that we use throughout the paper. Then, in Section~\ref{sec:overview} we provide a semi-rigorous technical overview of our approach ignoring errors from approximate numerical computations to make the insights of our results clear. 

In Section~\ref{sec:scaling} we provide the algorithm for computing RCDD scalings of M-matrices and in Section~\ref{sec:perron-alg} we show how to use this method to compute Perron vectors and achieve efficient running times for solving and certifying M-matrices with unknown condition number. In Section~\ref{sec:applications} we provide our applications. 

Finally, in Appendix~\ref{sec:mmatrix_facts} we provide proofs of facts about M-matrices and Perron vectors that we use throughout the paper and in Section~\ref{sec:scaling_proofs} we provide proofs missing from Section~\ref{sec:scaling_proofs}.

\section{Preliminaries}
\label{sec:prelim}

In this section we provide our notation (Section~\ref{sub:prelim_notation}) as well as facts about M-matrices (Section~\ref{sub:prelim_mmatrices}), the Perron-Frobenius theorem (Section~\ref{sub:perron-frobenius}), and RCDD matrices (Section~\ref{sub:prelim_rcdd}) that we use throughout the paper.


\subsection{Notation}
\label{sub:prelim_notation}

\textbf{Variables}: We use bold to denote matrices and arrows to denote vectors. We use $\mI, \mzero \in \R^{n \times n}$ to denote the identity matrix and the all-zero matrix respectively. We use $\vzero, \vones \in \R^n$ to denote the all-zeros and all-ones vectors respectively. We use $\ve_i \in \R^n$ to denote the $i^{th}$ standard basis vector, i.e. the vector where $\ve_i(j) = 0$ for $j \neq i$ and $\ve_i(i) = 1$.\\
\\
\textbf{Matrix Operations}: For square matrix $\ma$ we use $\ma^\trans$ to denote its transpose, $\ma^{- 1}$ to denote its inverse, and $\ma^\pseudo$ to denote its Moore-Penrose pseuduinverse. For notational convenience we use  $\ma^\invtrans$ to denote the inverse of the transpose of a matrix, i.e. $\ma^{\invtrans} \defeq (\ma^{-1})^\trans = (\ma^\trans)^{-1}$. Given a vector $\vx \in \R^n$, we use
 $\mdiag(\vx) \in \R^{n \times n}$ to denote the diagonal matrix where $\mdiag(\vx)_{ii} = \vx(i)$ and whenever it is unambiguous, we will refer to $\mdiag(\vx)$ as $\mx$.\\
\\
\textbf{Matrix Ordering}: For symmetric matrices $\ma, \mb \in \R^{n \times n}$ we use $A \preceq B$ to denote the condition that $\vx^\top \ma \vx \leq \vx^\top \mb \vx$ for all $\vx \in \R^n$ and define $\succeq, \succ,$ and $\prec$ analogously. We say symmetric matrix $\ma \in \R^{n \times n}$ is positive semidefinite (PSD) if $\ma \succeq \textbf{0}$ and positive definite (PD) if $\ma \succ \textbf{0}$. For any PSD $\ma$, we define $\ma^{1/2}$ as the unique symmetric PSD matrix where $\ma^{1/2} \ma^{1/2} = \ma$ and $\ma^{1/2}$ as the unique PSD matrix where $\ma^{-1/2} \ma^{-1/2} = \ma^\pseudo$. \\
\\
\textbf{Matrix Norms}: Given a positive semidefinite matrix (PSD) $\ma$, we define the \emph{$\ma$-norm} (or seminorm) $\|\vx\|_\ma \defeq \sqrt{\vx^\top \ma \vx}$. We define the $\ell_2$ norm $\|\vx\|_2 =  \|\vx\|_\mI = \sqrt{\vx^\top \vx}$. We define the $\ell_2$ norm of a matrix $\ma$ to be the matrix norm \textit{induced} by the $\ell_2$ norm on vectors: for a matrix $\ma$, $\|\ma\|_2 = \max_{\|\vx\|_2 = 1} \|\ma \vx\|_2$. Similarly, for any PSD $\mm$ we define $\|\ma\|_\mm \defeq \max_{\|\vx\|_\mm = 1} \|\ma \vx\|_\mm = \|\mm^{1/2} A \mm^{-1/2}\|_2$. We observe that $\|\ma\|_\mm = \|\ma^\top\|_{\mm^{-1}}$. 
We further define the induced $\ell_1$ and $\ell_\infty$ matrix norms as $\|\ma\|_1 = \max_{\|\vx\|_1 = 1} \|\ma \vx\|_1$ and $\|\ma\|_\infty = \max_{\|\vx\|_\infty = 1} \|\ma \vx\|_\infty$. We observe that $\|\ma\|_1 = \max_j \sum_i |A_{i,j}|$ and $\|\ma\|_\infty = \max_j \sum_i |A_{j,i}| = \|\ma^\top\|_1$.\\
\\
\textbf{Spectral Quantities}: For a square matrix $\ma$, we let $\lmax(\ma)$ and $\lmin(\ma)$ denote its largest and smallest eigenvalues of $\ma$ respectively, we let $\kappa(\ma) \defeq \|\ma\|_2 \|\ma^{-1}\|_2$ denote its condition number, 
and we let $\rho(\ma) \defeq \max\{|\lmax(\ma)|, |\lmin(\ma)| \} = \lim_{k \rightarrow \infty} \norm{\ma^{k}}_2^{1/k}$ denote its spectral radius.

\subsection{M-Matrices}
\label{sub:prelim_mmatrices}

This paper deals extensively with a broad prevalent class of matrices known as M-matrices. We will present some basic definitions and prove some basic facts about M-matrices here. We start by giving the formal definition of an M-matrix:

\begin{definition}[M-Matrix]
A matrix $\mm \in \R^{n \times n}$ is an M-matrix if it can be expressed as $\mm = s \mI - \mb$ where $s > 0$ and $\mb$ is an entrywise nonnegative matrix with $\rho(\mb) \leq s$. 
\end{definition}

M-matrices have many useful properties, some of which can be found in \cite{berman1994nonnegative}. We state and prove the properties that are needed in our analysis, in  Appendix (see Section~\ref{sec:mmatrix_facts}).

\subsection{Perron-Frobenius Theorem}
\label{sub:perron-frobenius}

Here we give several definitions and lemmas related to Perron-Frobenius theorem. 
\begin{definition} A matrix $\ma\in \R^{n\times n}$ is called an irreducible matrix, if for every pair of row/column indices $i$ and $j$ there exists a natural number $m$ such that $\ma^m_{i,j}$ is non-zero.
\end{definition}

\begin{theorem}[Perron-Frobenius Theorem]
	Let $\ma \in \R_{\geq 0}^{n \times n}$ be an irreducible matrix with spectral radius $s \eqdef \rho(\ma)$. Then the following facts hold:
	\begin{itemize}
		\item  $s$ is a positive real number and it is an eigenvalue of $\ma$.
		\item The eigenvalue corresponding to $s$ is simple, which means the left and right eigenspaces associated with $s$ are one-dimensional.
		\item $\ma$ has a right eigenvector $\vv_R$ and a left eigenvector $\vv_L$ with eigenvalue $s$, which are component-wise positive. i.e. $\vv_L, \vv_R >0$.
		\item The only eigenvectors whose components are all positive, are the ones associated with $s$.
	\end{itemize}
\end{theorem}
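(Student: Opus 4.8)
The final statement to prove is the Perron-Frobenius Theorem itself. The plan is to prove it by a sequence of standard arguments, leveraging the nonnegativity and irreducibility hypotheses, since this is a classical result and the proof below is likely a textbook-style argument included for completeness.

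\medskip

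First I would establish that the spectral radius $s = \rho(\ma)$ is attained by a nonnegative eigenvector. The standard route is to consider the continuous function $f(\vx) = \min_{i : \vx(i) > 0} \frac{(\ma \vx)(i)}{\vx(i)}$ on the standard simplex intersected with the nonnegative orthant, and let $r = \sup_{\vx \geq 0, \vx \neq 0} f(\vx)$. A compactness argument (after noting that irreducibility lets us replace $\vx$ by $(\mI + \ma)^{n-1}\vx$, which is strictly positive, without decreasing $f$) produces a maximizer $\vz > 0$ with $\ma \vz = r \vz$; one then checks $r = s$ by showing $|\lambda| \leq r$ for every eigenvalue $\lambda$ (apply the defining inequality to $|\vv|$ entrywise) and $r \leq s$ trivially. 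Since $\ma$ is irreducible and not the zero matrix, $\vz > 0$ forces $r > 0$, giving the first bullet. The same argument applied to $\ma^\top$ yields the positive left eigenvector $\vv_L$, completing the third bullet together with $\vv_R \defeq \vz$.

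\medskip

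Next I would prove simplicity of the eigenvalue $s$. For one-dimensionality of the eigenspace: if $\ma \vw = s \vw$ for real $\vw$, then for small enough $t$ the vector $\vv_R - t\vw$ is nonnegative but has a zero entry, yet it is an eigenvector with eigenvalue $s$; applying $(\mI+\ma)^{n-1}$ and using irreducibility shows any nonzero nonnegative $s$-eigenvector must be strictly positive, a contradiction unless $\vv_R - t\vw = 0$. Extending to complex $\vw$ is routine by taking real and imaginary parts. For algebraic simplicity (eigenspace one-dimensional in the sense stated — left and right eigenspaces each one-dimensional — this is what the theorem actually asserts, so I only need geometric simplicity on both sides, which the above handles for $\ma$ and symmetrically for $\ma^\top$).

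\medskip

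Finally, the fourth bullet — that the only entrywise-positive eigenvectors are those for $s$ — follows from a duality/positivity argument: if $\ma \vx = \lambda \vx$ with $\vx > 0$, pair with the positive left eigenvector $\vv_L$ to get $\vv_L^\top \ma \vx = \lambda\, \vv_L^\top \vx = s\, \vv_L^\top \vx$, and since $\vv_L^\top \vx > 0$ we conclude $\lambda = s$. The main obstacle in carrying this out cleanly is the compactness/continuity step producing the maximizer: $f$ is only upper semicontinuous on the full nonnegative simplex because of the $\min$ over the support, so one must either restrict to the strictly positive part via the $(\mI+\ma)^{n-1}$ smoothing trick or argue semicontinuity carefully. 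I expect the rest to be routine manipulations of the irreducibility-smoothing operator $(\mI + \ma)^{n-1} > 0$.
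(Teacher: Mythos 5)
The paper states the Perron--Frobenius theorem in Section~2.3 without proof; it is cited as a classical background result (and the Collatz--Wielandt characterization you invoke is stated separately as Lemma~\ref{lem:collatz}, also without proof). Your sketch correctly recalls the standard textbook argument: the variational characterization $r = \sup_{\vx \geq 0,\,\vx \neq 0} \min_{i:\vx_i>0} (\ma\vx)_i/\vx_i$, the $(\mI+\ma)^{n-1}$ smoothing trick to obtain a strictly positive maximizer and to handle the upper semicontinuity of the min over the support, the boundary-touching argument $\vv_R - t\vw$ for geometric simplicity, and pairing against $\vv_L$ for the fourth bullet. The steps are all sound; the only imprecision is the phrase ``for small enough $t$'' where you actually want the extremal $t$ for which $\vv_R - t\vw$ first leaves the nonnegative orthant, but this is a wording issue and the intent is clearly correct. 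Since there is no proof in the paper to compare against, no divergence-of-approach comment applies.
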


\begin{lemma}[Collatz - Wielandt formula]\label{lem:collatz}
	Given a non-negative irreducible matrix $\ma \in \R_{\geq 0}^{n\times n}$, for all non-negative non-zero vectors $x \in \R_{\geq 0}^n$, let $f(x)$ be the minimum value of $[\ma x]_i / x_i$ taken over all those $i$'s such that $x_i \neq 0$. Then $f$ is a real valued function whose maximum over all non-negative non-zero vectors $x$ is the Perron-Frobenius eigenvalue.
\end{lemma}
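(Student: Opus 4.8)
The plan is to show two inequalities: first that $f(x) \le \rho(\ma)$ for every non-negative non-zero $x$, and second that equality is attained at $x = \vv_R$, the positive right Perron eigenvector guaranteed by the Perron--Frobenius theorem stated above. Combining these gives that the maximum of $f$ over non-negative non-zero vectors equals $\rho(\ma)$. I would begin by observing that $f$ is well defined and real valued: for non-negative non-zero $x$ there is at least one index $i$ with $x_i \neq 0$, and since $\ma$ is non-negative each ratio $[\ma x]_i / x_i$ is a non-negative real, so the minimum over the (finite, non-empty) set of such indices exists.

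For the upper bound, fix a non-negative non-zero $x$ and let $t \defeq f(x)$, so that by definition $[\ma x]_i \ge t\, x_i$ for all $i$ (this holds trivially when $x_i = 0$ since $[\ma x]_i \ge 0$, and by the definition of the minimum when $x_i \neq 0$); in vector form $\ma x \ge t\, x$ entrywise. Now I would pair this with the \emph{left} Perron eigenvector $\vv_L > 0$, which satisfies $\vv_L^\top \ma = \rho(\ma)\, \vv_L^\top$. Taking the inner product of $\ma x \ge t x$ with the strictly positive vector $\vv_L$ preserves the inequality: $\vv_L^\top \ma x \ge t\, \vv_L^\top x$. The left-hand side equals $\rho(\ma)\, \vv_L^\top x$, and since $x$ is non-negative non-zero and $\vv_L > 0$ we have $\vv_L^\top x > 0$; dividing through yields $\rho(\ma) \ge t = f(x)$. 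This is the step that genuinely uses irreducibility — it is exactly what guarantees a \emph{strictly positive} left eigenvector, without which the inner-product argument could collapse.

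For the lower bound I would simply evaluate $f$ at $\vv_R$. Since $\ma \vv_R = \rho(\ma)\, \vv_R$ and $\vv_R > 0$, every ratio $[\ma \vv_R]_i / (\vv_R)_i$ equals $\rho(\ma)$ exactly, so $f(\vv_R) = \rho(\ma)$. Hence the supremum of $f$ is at least $\rho(\ma)$, and combined with the upper bound it equals $\rho(\ma)$ and is attained.

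The main obstacle, such as it is, is making sure the upper-bound argument is airtight when $x$ has zero coordinates: one must check that the inequality $\ma x \ge f(x)\, x$ still holds coordinatewise at those zero coordinates (it does, because the left side is non-negative and the right side is zero there), and that $\vv_L^\top x > 0$ despite $x$ possibly being supported on a strict subset of coordinates (it is, because $\vv_L$ is strictly positive on all coordinates and $x \neq 0$). Everything else is a direct consequence of the Perron--Frobenius theorem already stated.
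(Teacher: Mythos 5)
The paper states the Collatz--Wielandt formula as a background fact in its preliminaries (Section~\ref{sub:perron-frobenius}) and does not supply a proof of its own, so there is no paper proof to compare against. Your argument is a correct and essentially the standard proof of this classical result: you establish well-definedness, prove the upper bound $f(x) \le \rho(\ma)$ by pairing the entrywise inequality $\ma x \ge f(x)\,x$ against the strictly positive left Perron eigenvector $\vv_L$, and achieve equality at $\vv_R$; the two delicate points you explicitly check (the inequality at zero coordinates of $x$, and $\vv_L^\top x > 0$) are exactly the ones that need care, and you handle both correctly, invoking irreducibility precisely where it is used.
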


\subsection{Row Column Diagonally Dominant Matrices}
\label{sub:prelim_rcdd}

A key result in our paper is to reduce the problem of solving M-matrices to solving a sequence of row column diagonally dominant (RCDD) matrices and then leverage previous work on solving such systems. Here we provide the results and notation that we use in the remainder of the paper. We begin by defining RCDD matrices.

\begin{definition}[Row Column Diagonally Dominant (RCDD) Matrices] A matrix $\ma \in \R^{n \times n}$ is \emph{RCDD} if $\ma_{ii} \geq \sum_{j \neq i} |\ma_{ij}|$ and $\ma_{ii} \geq \sum_{j \neq i} |\ma_{ji}|$ for all $i \in [n]$. It is \emph{strictly RCDD} if each of these inequalities holds strictly.
\end{definition}

Our algorithms make extensive use of recent exciting results that show linear system in RCDD systems can be solved efficiently. Since our algorithms only use RCDD solvers as black box, through this paper we simply assume that we have oracle access to a RCDD solver defined as follows.

\begin{oracle}[RCDD Solver] \label{oracle:RCDDSolve}
 Let $S \in \R^{n \times n}$ be an RCDD matrix, and let $x \in \R^n$ be a vector. Then there is a procedure, called $\code{RCDDSolve}(\ms, \epsilon)$ which can generate an operator $\opZ: \R^n \to \R^n$ where
 \[
 \Exp \left[ \|\vx - \ms \opZ (\vx)\|_2 \right] \leq \epsilon \| \vx \|_2     
 \]
 Further, we can compute and apply $\opZ$ to vectors in $\mathcal{T}_{solve}(m, n,\kappa(\ms), \epsilon)$ time. 
\end{oracle}

Designing faster RCDD solvers is an active area of research and therefore we parameterize our running times in terms of $\mathcal{T}_{solve}$ so that should faster RCDD solvers be developed our running times immediately improve. We achieve our claims about nearly linear time algorithms by leveraging the following results of \cite{coh16, coh17} that RCDD systems can be solved in nearly linear time.

\begin{theorem}[Nearly Linear Time RCDD Solvers \cite{coh16, coh17}]\label{thm:rcdd_solver}
It is possible to implement Oracle~\ref{oracle:RCDDSolve} with 
$$
\mathcal{T}_{solve}(m, n,\kappa(\ms), \epsilon) = O((m + n)^{1 +o(1)} \log^{O(1)}(\kappa(\ms) / \epsilon))\,.
$$
\end{theorem}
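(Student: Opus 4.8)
This statement is not established within the present paper: it is quoted as the main algorithmic export of \cite{coh16,coh17}, and the ``proof'' consists of invoking those works. For completeness I sketch the route they take, since understanding it clarifies why we are content to treat the oracle as a black box.

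The plan has two stages. \textbf{Stage 1: reduce RCDD systems to Eulerian Laplacian systems.} Given a strictly RCDD matrix $\ms$, write $\ms = \md - \mn$ with $\md$ the diagonal and $\mn$ the (signed) off-diagonal part, and augment with $O(n)$ auxiliary coordinates to absorb the row/column slack so that the enlarged matrix $\mlap$ is a directed Laplacian with equal in- and out-degree at every vertex (an \emph{Eulerian} Laplacian). A solve against $\ms$ is then recovered from a solve against $\mlap$ by a linear-time change of variables, with $\kappa(\mlap)$ controlled by $\kappa(\ms)$ up to polynomial factors; so it suffices to solve Eulerian Laplacians within the stated budget.

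\textbf{Stage 2: solve Eulerian Laplacians.} Here \cite{coh16,coh17} give two approaches. The first \cite{coh16} is a directed analogue of approximate Gaussian elimination: eliminate vertices one at a time, re-sparsify the (directed) Schur complement after each step so it stays sparse, and assemble the partial eliminations into an approximate $LU$ factorization used as a preconditioner; a few rounds of preconditioned Richardson iteration then drive the residual down, with the accuracy $\epsilon$ and the conditioning entering only through the iteration count, i.e.\ polylogarithmically. The second \cite{coh17} is the ``sparsified squaring'' chain: form the lazy random-walk matrix of $\mlap$, repeatedly square it (which shrinks the relevant mixing quantity on a logarithmic scale) while re-sparsifying after each squaring to keep the edge count near-linear; after $O(\log(\kappa(\mlap)/\epsilon))$ levels the matrix is well enough conditioned to invert directly, and one unwinds the chain using each level as a preconditioner for the one below. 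Either way one obtains an operator meeting the $\ell_2$ residual guarantee of Oracle~\ref{oracle:RCDDSolve}, and tracking the work per level yields the claimed $O((m+n)^{1+o(1)}\log^{O(1)}(\kappa(\ms)/\epsilon))$ running time, the $(m+n)^{o(1)}$ factor coming from the sparsification/recursion machinery.

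The crux throughout — and the reason this is a genuine theorem rather than a citation to classical numerical linear algebra — is \emph{asymmetry}: $\ms$ is not symmetric, so conjugate-gradient methods and positive-definite Cholesky are unavailable, Schur complements of directed Laplacians need not be simple directed Laplacians, and ``spectral sparsification'' must be replaced by a notion controlling a suitable symmetrized quadratic form (equivalently, keeping $\|\mI - \widetilde{\mlap}^{\pseudo}\mlap\|$ small in the right norm) so that the preconditioning arguments still converge. Making that sparsification and the accompanying non-Euclidean error analysis cooperate is the technical heart of \cite{coh16,coh17}; for our purposes it is enough that the end product is exactly the oracle above.
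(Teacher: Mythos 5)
You correctly recognize that the paper offers no proof of this theorem---it is imported wholesale from \cite{coh16,coh17}---and treating the oracle as a black box is exactly what the paper does, so your ``proof'' matches the paper's. One small historical correction to your sketch: \cite{coh16} did not give the approximate directed Gaussian elimination / sparse LU solver (that appeared in a later 2018 paper by Cohen, Kelner, Kyng, Peebles, Peng, Rao, and Sidford, which is not among the two citations here); \cite{coh16} supplies the reduction to Eulerian Laplacians plus a superlinear solver, while \cite{coh17} is the sparsified-squaring chain that delivers the $(m+n)^{1+o(1)}$ bound stated, so only that second mechanism is actually being invoked by Theorem~\ref{thm:rcdd_solver}.
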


\section{A Technical Overview of Approach}
\label{sec:overview}

In this section we give a technical overview of the key results in our paper. We provide several of the core proofs and algorithms we use in the rest of the paper. Our focus in this section is to provide a semi-rigorous demonstration of our results. While the complete analysis of our results is rather technical due to the careful account for the numerical precision of various algorithmic subroutines that can only be carried out approximately, e.g. solving various linear systems, here we provide a vastly simplified analysis that ignores these issues, but conveys the core of our arguments. 

Formally, we make the following assumption through this section (and only this section) to simplify our analysis. In the remainder of the paper we carry on our analysis without it.

\begin{assumption} [Exact Numerical Operations Assumption Made Only in Section~\ref{sec:overview}] \label{asmp:exact-solver}
For any routine that can compute a vector $v$ with $\epsilon$-approximate error in some norm $\|\cdot\|$, i.e. compute $\tilde{v}$ with $\|v - \tilde{v}\| < \epsilon $, in time $O(\mathcal{T} \log^{O(1)}(1/\epsilon))$ we say there is a routine that can compute $v$ exactly in $\otilde(\mathcal{T})$ time, where here $\otilde$ hides factors polylogarithmic in $T$ as well as natural parameters of the problem, e.g. $n$.
\end{assumption} 

Note that this assumption is obviously not met in various settings and can easily be abused by performing arithmetic operations on overly large numbers for free. However, it certainly simplifies analysis greatly, as for instance, under this assumption, Theorem~\ref{thm:rcdd_solver}. implies that a $n \times n$ RCDD matrix with $m$ non-zero entries can be solved exactly in $\otilde(m + n)$ time. However, in many cases (and is formally the case in this paper), analysis under the assumption loses only polylogarithmic factors in the overall running time. 

Again, we make this assumption purely to simplify exposition in this section. In the remainder of the paper we carry on our analysis without it.

\subsection{From M-Matrices to RCDD Matrices}\label{sec:mmatrix-rcdd}

We begin our technical overview by formally stating the connections between M-matrices and diagonal scaling that make such matrices RCDD.

Given a non-negative irreducible matrix $\ma \in \R^{n \times n}$, suppose $\mm  = s\mI - \ma$ is an invertible M-matrix for some $s > 0$. As we discussed in Section~\ref{sub:intro_tech_motiv} we know that if $\vv_l \in \R^n_{> 0}$ and $\vv_r \in \R^{n}_{> 0}$ are the left and right Perron vectors of $\ma$ respectively then 
\[
\vv_l^\top \mm = (s - \rho(\ma)) \vv_l^\top > 0
\text{ and }
\mm \vv_r = (s - \rho(\ma)) \vv_r > 0
\]
Consequently, for $\mv_l = \mdiag(\vv_l)$ and $\mv_r = \mdiag(\vv_r)$ this property implies that $\mv_l \mm \mv_r$ is RCDD since its off-diagonal entries are non-positive and 
$$
\mv_l \mm \mv_r \vones = \mv_l \mm \vv_r > 0 \text{ and }  
\vones^\top \mv_l \mm \mv_r = 
\vv_l^\top \mm \mv_r > 0.
$$
Thus any invertible M-matrix can be scaled with positive diagonal matrices to become RCDD. Furthermore, since systems of linear equations with RCDD coefficient matrices can be solved in nearly linear time this implies that given these scaling M-matrices can be solved in nearly linear time, as to solve  $\mm z = b$, one can use an RCDD solver to find $z = \mv_r^{-1} (\mv_l \mm \mv_r)^{-1} \mv_l^{-1} b$ in nearly linear time. 

More interestingly, the reduction also works in the other direction. This means if one can solve a linear system described with an M-matrix efficiently, then it is possible to find $\vell, \vr \in \R^{n}_{> 0}$ such that $\ml \mm \mr$  is RCDD where $\ml = \mdiag(\vell)$ and $\mr = \mdiag(\vr)$. Again from M-matrix theory we know that if $\mm$ is an invertible M-matrix then $\mm^{-1}$ is entrywise non-negative. Consequently, if we apply $\mm^{-1}$ to any positive vector, the resulting vector remains positive. Given this observation, let $\vl = \mm^{\invtrans} \vones$, and $\vr = \mm^{-1} \vones$. Now $\mm \vr = \mm \mm^{-1} \vones > \vzero$ and similarly $\vl^\top \mm > \vzero$. Thus, $\ml \mm \mr$ is RCDD and we see that one can look at either of $\vl$ or $\vr$ as a certificate for the matrix $\mm$ to be an M-matrix.

We use both of these directions in our algorithm to either find the desired scalings or solve linear systems fast. We further use the following lemma which is a formal statement of what we discussed above. The proof is deferred to Appendix (Section~\ref{sec:mmatrix_facts}).

\begin{restatable}{lemma}{MMatrixRCDD}
	\label{lemma:MMatrix_RCDD}
	Let $\ma$ be a nonnegative matrix, and consider $\mm = s \mI - \ma$ for some $s > 0$. Then if $\mm$ is an M-matrix then for every pair of vectors $\vx$ and $\vy$ where $\mm \vy > 0$ and $\mm^\top \vx > 0$ we have $\ms = \mx \mm \my$ is RCDD. Further, if there exist positive vectors $\vx$, $\vy$ where $\ms = \mx \mm \my$ is RCDD, then $\mm$ is an M-matrix.
\end{restatable}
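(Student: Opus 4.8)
The plan is to derive both implications from a single one-line identity. Since $\ma$ is entrywise nonnegative, every off-diagonal entry of $\mm = s\mI - \ma$ is $\mm_{ij} = -\ma_{ij} \le 0$, so $\mm$ is a $Z$-matrix and $|\mm_{ij}| = -\mm_{ij}$ for $j \neq i$. Hence, for diagonal $\mx = \mdiag(\vx)$, $\my = \mdiag(\vy)$ with $\vx,\vy$ \emph{positive}, we get $|\ms_{ij}| = -\vx(i)\,\mm_{ij}\,\vy(j)$ for $j \neq i$, and the $i$-th row diagonal-dominance defect of $\ms = \mx\mm\my$ is exactly $\ms_{ii} - \sum_{j\neq i}|\ms_{ij}| = \vx(i)\sum_j \mm_{ij}\vy(j) = \vx(i)\,[\mm\vy]_i$, while symmetrically the $i$-th column defect equals $\vy(i)\,[\mm^\top\vx]_i$. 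So the lemma reduces to relating positivity of $\mm\vy$ and $\mm^\top\vx$ to the M-matrix property — provided we first know $\vx$ and $\vy$ are genuinely positive.

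For the \emph{forward direction} I would first observe that the hypotheses force $\vx,\vy > 0$ (so ``every pair $\vx,\vy$ with $\mm\vy > 0$, $\mm^\top\vx > 0$'' is no more general than ``every positive such pair,'' and is vacuous when no such pair exists). Indeed, by the M-matrix facts collected in Section~\ref{sec:mmatrix_facts}, an M-matrix $\mm$ admitting some $\vy$ with $\mm\vy > 0$ must be invertible with $\mm^{-1} \ge 0$ entrywise: if $\mm$ were singular it would have a nonzero nonnegative left null vector $\vw$, and $\vw^\top(\mm\vy) = 0$ would contradict $\mm\vy > 0$; and then $\mm^{-1}$, being invertible and nonnegative, has no zero row, so $\vy = \mm^{-1}(\mm\vy) > 0$. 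Applying the same reasoning to $\mm^\top = s\mI - \ma^\top$, which is itself an M-matrix since $\rho(\ma^\top) = \rho(\ma) \le s$, gives $\vx = (\mm^\top)^{-1}(\mm^\top\vx) > 0$. Plugging these into the identity above, both defects are $\vx(i)[\mm\vy]_i > 0$ and $\vy(i)[\mm^\top\vx]_i > 0$, so $\ms$ is (strictly) RCDD.

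For the \emph{converse} I am given positive $\vx,\vy$ with $\ms = \mx\mm\my$ RCDD and must read off what RCDD says about $\ma$. Dividing the $i$-th row inequality $\ms_{ii} \ge \sum_{j\neq i}|\ms_{ij}|$ by $\vx(i) > 0$ and rearranging gives $[\ma\vy]_i \le s\,\vy(i)$ for all $i$, i.e. $\ma\vy \le s\vy$ with $\vy > 0$. It then remains to conclude $\rho(\ma) \le s$, which I would do by iteration: since $\ma \ge 0$ is monotone, $\ma^k\vy \le s^k\vy$ for every $k$, so each entry of $\ma^k$ is at most $s^k\cdot(\max_i\vy(i)/\min_i\vy(i))$, whence $\rho(\ma) = \lim_{k\to\infty}\|\ma^k\|_2^{1/k} \le s$; alternatively this is the min--max form of the Collatz--Wielandt bound (Lemma~\ref{lem:collatz}). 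Since $s > 0$ and $\ma \ge 0$, this is precisely the definition of $\mm = s\mI - \ma$ being an M-matrix.

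None of these computations is hard — the two defect identities are a line each. The only steps with real content are (i) upgrading $\vx,\vy$ to strictly positive vectors in the forward direction, which I would handle by invoking the M-matrix structure results of Section~\ref{sec:mmatrix_facts} (invertibility together with nonnegativity of the inverse whenever $\mm\vy>0$), and (ii) the implication $\ma\vy \le s\vy \Rightarrow \rho(\ma)\le s$ for possibly reducible nonnegative $\ma$ — the Collatz--Wielandt lemma as stated in the excerpt is for irreducible matrices and in max--min form, so I would supply either its min--max counterpart or the short spectral-radius-via-powers argument above. I expect (ii) to be the point that most needs care to state cleanly.
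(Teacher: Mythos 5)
Your proof is correct, and the forward direction is essentially the paper's argument made self-contained. Both reduce RCDD to the one-line defect identities (row defect $= \vx(i)[\mm\vy]_i$, column defect $= \vy(i)[\mm^\top\vx]_i$), and both then need $\vx, \vy > 0$; the paper gets this by citing Lemma~\ref{lemma:MMatrix_monotone}, while you rederive it directly from the nonnegativity of $\mm^{-1}$ (ruling out singularity via a nonnegative left null vector $\vw$ that would annihilate $\mm\vy > 0$). Where you diverge is the converse. The paper probes $\mm^\top\vx > 0$ against a right Perron eigenvector $\vv$ of $\ma$ to read off $(s - \rho(\ma))\,\vv^\top\vx > 0$ in one step; you instead extract the subinvariance relation $\ma\vy \le s\vy$ from the RCDD row sums and close via the Gelfand-formula power bound $\rho(\ma) = \lim_k \|\ma^k\|_2^{1/k} \le s$. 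Your route has a modest advantage that you anticipated: the lemma only assumes $\ma \ge 0$, not irreducibility, and the paper's appeal to a \emph{positive} Perron eigenvector is not literally available for reducible $\ma$. The paper's argument is easily patched (it only needs $\vv \ge 0$, $\vv \ne \vzero$, together with $\vx > 0$ to make $\vv^\top\vx > 0$), but the iteration argument sidesteps the issue entirely, which is why you were right to flag the irreducibility hypothesis in the stated Collatz--Wielandt lemma. Both approaches buy the same conclusion; the paper's is marginally shorter, yours marginally more robust.
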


Also, the following observation about M-matrices and these scaling is important in our analysis (see Section~\ref{sec:mmatrix_facts} for the proof).
\begin{restatable}{lemma}{pdscaling}
	\label{lem:pdscaling}
	Given a non-negative matrix $\ma \in \R_{\geq 0}^{n\times n}$, where $\mm= s \mI-\ma$ is an invertible M-matrix, l et $\ml, \mr \in \R_{>0}^{n\times n}$ be the left and right scalings that make $\ml \mm \mr$ RCDD and let $\md = \ml \mr^{-1}$. Then $\md^{1/2} \mm \md^{-1/2} + \md^{-1/2} \mm \md^{1/2}$ is PD. 
\end{restatable}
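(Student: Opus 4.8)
The plan is to show, via two diagonal congruences, that the matrix in the statement is equivalent to $\ms+\ms^{\top}$ for $\ms \defeq \ml\mm\mr$, whose positive definiteness is immediate once we know $\ms$ is (strictly) RCDD. Write $\mn \defeq \md^{1/2}\mm\md^{-1/2}$, so that the matrix in question is the symmetric matrix $\mn+\mn^{\top}=\md^{1/2}\mm\md^{-1/2}+\md^{-1/2}\mm^{\top}\md^{1/2}$. Since all the diagonal matrices involved commute, a one-line computation gives $\mn+\mn^{\top}=\md^{-1/2}\bigl(\md\mm+\mm^{\top}\md\bigr)\md^{-1/2}$; as $\md^{-1/2}$ is symmetric and invertible, it therefore suffices to prove $\md\mm+\mm^{\top}\md\succ 0$. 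For that, substitute $\md=\ml\mr^{-1}$ and $\ms=\ml\mm\mr$: commuting the diagonal factors, $\md\mm=\ml\mr^{-1}\mm=\mr^{-1}(\ml\mm\mr)\mr^{-1}=\mr^{-1}\ms\mr^{-1}$, and transposing, $\mm^{\top}\md=\mr^{-1}\ms^{\top}\mr^{-1}$. Hence
\[
\md\mm+\mm^{\top}\md \;=\; \mr^{-1}\bigl(\ms+\ms^{\top}\bigr)\mr^{-1},
\]
so a second congruence (with the invertible diagonal matrix $\mr^{-1}$) reduces everything to showing $\ms+\ms^{\top}\succ 0$.

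To finish, I would argue that $\ms$ is strictly RCDD. Since $\mm=s\eye-\ma$ with $\ma\ge 0$ and $\ml,\mr>0$, the off-diagonal entries of $\ms$ are nonpositive, so $\ms+\ms^{\top}$ has nonpositive off-diagonals; adding the row- and column-dominance inequalities of $\ms$ gives $(\ms+\ms^{\top})_{ii}=2\ms_{ii}\ge\sum_{j\ne i}\bigl(|\ms_{ij}|+|\ms_{ji}|\bigr)=\sum_{j\ne i}|(\ms+\ms^{\top})_{ij}|$, i.e. $\ms+\ms^{\top}$ is a symmetric diagonally dominant matrix with positive diagonal (the diagonal is positive because $\ms_{ii}=\vl_i(s-\ma_{ii})\vr_i>0$, using $\ma_{ii}\le\rho(\ma)<s$). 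Such a matrix is PSD, and PD when the dominance is strict. Because $\mm$ is an \emph{invertible} M-matrix, the scalings may be taken strictly RCDD: for instance $\vr=\mm^{-1}\vones$ and $\vl=\mm^{-\top}\vones$ are positive with $\mm\vr=\mm^{\top}\vl=\vones>0$, so $\ms=\ml\mm\mr$ has $\ms\vones=\vl>0$ and $\vones^{\top}\ms=\vr^{\top}>0$ together with nonpositive off-diagonals, hence is strictly RCDD (cf. Lemma~\ref{lemma:MMatrix_RCDD}). This makes the dominance strict and yields $\ms+\ms^{\top}\succ 0$, which by the two congruences above gives the claim.

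The only subtle point is propagating \emph{strictness} through the two congruences: with merely-RCDD (not strictly RCDD) scalings every step still goes through but only produces positive \emph{semi}definiteness, so the proof genuinely needs invertibility of $\mm$ to pin down strictly-RCDD scalings as above. Everything else — the two diagonal congruences, and the fact that a symmetric (strictly) diagonally dominant matrix with positive diagonal is PD, which also follows from the M-matrix facts collected in Section~\ref{sec:mmatrix_facts} — is routine.
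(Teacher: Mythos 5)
Your proposal follows essentially the same route as the paper: both reduce via the same two diagonal congruences ($\md^{-1/2}$ then $\mr^{-1}$) to showing $\ms + \ms^\top \succ 0$ for $\ms = \ml \mm \mr$, and both then observe that $\ms + \ms^\top$ is symmetric diagonally dominant with nonpositive off-diagonals (the paper cites Gershgorin; you spell out the same argument). Your extra care about strictness is a genuine point the paper elides — its proof as written only gives positive semidefiniteness unless $\ms$ is \emph{strictly} RCDD — and your remark that invertibility of $\mm$ (via $\vr = \mm^{-1}\vones$, $\vl = \mm^{-\top}\vones$) is what pins down strict dominance is exactly how the lemma is actually used downstream in Theorem~\ref{thm:scaling}.
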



\subsection{Algorithm for Solving Linear Systems in M-matrices}
Relying on the connection between M-matrices and RCDD matrices discussed in Section~\ref{sec:mmatrix-rcdd} and the fact that linear systems described by RCDD matrices can be solved in nearly linear time Section~\ref{sub:prelim_rcdd}, in this section we present one of our central results which is a simple iterative algorithm that can be used to solve systems of linear equations described by M-matrices. 
Particularly, we give a nearly linear time algorithm for solving the following problem.

\begin{definition}[M-matrix Scaling Problem]
Given non-negative $\ma \in \R_{\geq 0}^{n\times n}$ and value $s$ where $\mm = s \mI - \ma$ is an invertible M-matrix, i.e. $\rho(\ma)< s $, compute scaling vectors $\vell, \vr \in \R^{n}$, such that $\ml \mm \mr$ is strictly RCDD, where $\ml = \diag(l)$ and $\mr = \diag(r)$.
\end{definition}

The main statement we will prove in this section is Theorem~\ref{thm:mmatrix-scale-overview} given below. The fully rigorous version of this Theorem~\ref{thm:mscale} is one of the main results of the paper and provides a tool for solving linear systems in M-matrices in nearly linear time. We also use this result later to give our nearly linear time algorithm for finding the top eigenvalue and eigenvectors of non-negative matrices as described by Perron-Frobenius theorem. We remark that the following statement is written assuming Assumption~\ref{asmp:exact-solver} and the full version of this theorem is stated in Theorem~\ref{thm:mscale}.

\begin{theorem}[M-matrix Scaling, Informal]\label{thm:mmatrix-scale-overview}
Given a value $s > 0$, a non-negative matrix $\ma \in \R_{\geq 0}^{n\times n}$ with $m$-nonzero entries where $\rho(\ma) < s$, and $\epsilon>0$ let $\mm_\alpha \eqdef (1+\alpha) s \mI - \ma$. Algorithm~\ref{alg:MMatrix-overview}, finds positive diagonal matrices $\ml, \mr \in \R_{>0}^{n\times n}$ such that $\ml \mm_\epsilon \mr$ is RCDD in time $\tilde{O}(m \log(\frac{\|\ma\|_1+ \|\ma\|_\infty}{s \epsilon}))$.
\end{theorem}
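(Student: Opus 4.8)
The plan is to follow the iterative "scaling-chaining" strategy sketched in Section~\ref{sub:approach}. Define $\mm_\alpha \defeq (1+\alpha)s\mI - \ma$ and note this is an invertible M-matrix for every $\alpha \geq 0$ since $\rho(\ma) < s \leq (1+\alpha)s$. The algorithm proceeds in two phases. First, choose $\alpha_0$ large enough — concretely $\alpha_0 = \Theta\big((\|\ma\|_1 + \|\ma\|_\infty)/s\big)$ — so that $\mm_{\alpha_0}$ is already (strictly) RCDD with the trivial scalings $\ml = \mr = \mI$; this works because for such $\alpha_0$ the diagonal entry $(1+\alpha_0)s$ dominates both the row sum $\sum_{j\neq i}|\ma_{ij}| \leq \|\ma\|_\infty$ and the column sum $\sum_{j\neq i}|\ma_{ji}| \leq \|\ma\|_1$ of the off-diagonal part. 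Second, repeatedly halve $\alpha$: given a solver for $\mm_\alpha$, obtain scalings for $\mm_\alpha$ via Lemma~\ref{lemma:MMatrix_RCDD} by setting $\vr = \mm_\alpha^{-1}\vones$ and $\vl = \mm_\alpha^{\invtrans}\vones$ (both positive since $\mm_\alpha^{-1}$ is entrywise nonnegative), which makes $\ml\mm_\alpha\mr$ RCDD; then use this RCDD scaling together with the RCDD solver (Oracle~\ref{oracle:RCDDSolve}, exact under Assumption~\ref{asmp:exact-solver}) to build an exact solver for $\mm_\alpha$, and push it to a solver for $\mm_{\alpha/2}$ by preconditioning.

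The key analytic lemma — and the step I expect to be the main obstacle — is showing that a solver for $\mm_\alpha$ preconditions $\mm_{\alpha/2}$ well, i.e. that $\mm_{\alpha/2}^{-1}\mm_\alpha$ is close to the identity in an appropriate norm whenever $\alpha$ and $\alpha/2$ are within a constant factor. The natural norm here is the $\md$-norm where $\md = \ml\mr^{-1}$ is built from the scalings of $\mm_\alpha$: by Lemma~\ref{lem:pdscaling}, $\md^{1/2}\mm_\alpha\md^{-1/2} + \md^{-1/2}\mm_\alpha\md^{1/2}$ is PD, which gives a quantitative handle on how $\mm_\alpha$ behaves as a preconditioner. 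Since $\mm_\alpha - \mm_{\alpha/2} = \tfrac{\alpha}{2}s\mI$ is a small positive multiple of the identity relative to $\mm_\alpha$ (because $\mm_\alpha \succeq (s-\rho(\ma))$-ish in the symmetrized sense, but more carefully one uses the norm bounds $\|\mm_\alpha^{-1}\|_\infty, \|\mm_\alpha^{-1}\|_1$ which are controlled since $\alpha \geq 0$), one gets that $\|\mI - \mm_{\alpha/2}^{-1}\mm_\alpha\|$ is bounded by a constant strictly less than $1$ in the relevant norm, so a constant number of preconditioned-Richardson (or Chebyshev) iterations converts an $\mm_\alpha$-solver into an $\mm_{\alpha/2}$-solver. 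The subtlety is choosing the norm consistently across the chain and tracking that the condition-number-type quantities stay polynomially bounded so that each RCDD solve and each preconditioning step costs only $\otilde(m)$.

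Finally, count iterations: starting from $\alpha_0 = \Theta((\|\ma\|_1+\|\ma\|_\infty)/s)$ and halving until $\alpha \leq \epsilon$ takes $O(\log(\alpha_0/\epsilon)) = O(\log((\|\ma\|_1+\|\ma\|_\infty)/(s\epsilon)))$ rounds. Each round does $O(1)$ RCDD solves plus $O(1)$ applications of the current solver, each costing $\otilde(m)$ under Assumption~\ref{asmp:exact-solver} and the running time bound of Theorem~\ref{thm:rcdd_solver}, for a total of $\otilde\big(m\log((\|\ma\|_1+\|\ma\|_\infty)/(s\epsilon))\big)$. At termination $\alpha \leq \epsilon$, and one more application of Lemma~\ref{lemma:MMatrix_RCDD} with $\vr = \mm_\epsilon^{-1}\vones$, $\vl = \mm_\epsilon^{\invtrans}\vones$ produces the desired $\ml, \mr$ with $\ml\mm_\epsilon\mr$ RCDD. (A small technical point: halving may overshoot $\epsilon$, but one can simply stop at the first $\alpha \leq \epsilon$ and then, if needed, interpolate one final scaling step from $\mm_\alpha$ to $\mm_\epsilon$ using the same preconditioning bound, since $\alpha$ and $\epsilon$ are then within a factor of $2$.)
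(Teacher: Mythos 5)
Your outline follows essentially the same chain of reasoning as the paper: start from $\alpha_0 = \Theta((\|\ma\|_1+\|\ma\|_\infty)/s)$ so that $\mm_{\alpha_0}$ is trivially RCDD, repeatedly halve $\alpha$, use $\vr_\alpha = \mm_\alpha^{-1}\vones$ and $\vl_\alpha = \mm_\alpha^{-\top}\vones$ together with Lemma~\ref{lemma:MMatrix_RCDD} to turn a solver for $\mm_\alpha$ into a scaling of $\mm_\alpha$, and run preconditioned Richardson to propagate the solver to the next shift. However, the ``key analytic lemma'' that you correctly identify as the main obstacle is precisely where your account goes wrong, on two counts.

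First, the preconditioning direction is reversed. You write that one should show $\|\mI - \mm_{\alpha/2}^{-1}\mm_\alpha\|$ is bounded below $1$. But the algorithm already has a solver for $\mm_\alpha$ (the larger shift) and is solving linear systems in $\mm_{\alpha/2}$, so the iteration is $\vx \leftarrow \vx - \mm_\alpha^{-1}(\mm_{\alpha/2}\vx - \vb)$ and the relevant contraction factor is $\|\mI - \mm_\alpha^{-1}\mm_{\alpha/2}\|$. By the paper's Lemma~\ref{lem:preconditioner}, $\|\mm_\alpha^{-1}\mm_{\alpha/2}-\mI\|_{\md} \le |\alpha-\alpha/2|/\alpha = 1/2$, whereas the quantity you wrote is only bounded by $|\alpha/2-\alpha|/(\alpha/2)=1$ and hence does \emph{not} certify convergence.

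Second, your choice of norm does not actually make the preconditioning lemma applicable across the chain. You propose taking $\md$ from the scalings of $\mm_\alpha$, so that (by Lemma~\ref{lem:pdscaling}) the symmetrization of $\mm_\alpha$ in the $\md$-norm is PD. But Lemma~\ref{lem:preconditioner} requires the symmetrization of the \emph{unshifted} matrix $\mm$ to be PSD in the $\md$-norm, because the proof expands $\mm_\beta\md^{-1}\mm_\beta^\top$ around $\mm$, and only nonnegative shifts enter. If instead you only know the symmetrization of $\mm_\alpha$ is PD, then $\mm_{\alpha/2}$ is a \emph{negative} shift of $\mm_\alpha$ and the argument breaks. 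The paper fixes one norm for the whole chain, $\md = \ml_0\mr_0^{-1}$ with $\vl_0 = \mm^{-\top}\vones$, $\vr_0 = \mm^{-1}\vones$; Lemma~\ref{lem:pdscaling} applied at $\alpha=0$ then gives exactly the hypothesis of Lemma~\ref{lem:preconditioner}, uniformly for all $\alpha,\alpha'>0$. This is the small but essential idea your sketch is missing.
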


Note that in solving the above problem we can assume without loss of generality that $s = 1$: observe that $\mm' = \frac{1}{s} \mm = \eye - \frac{1}{s} \ma$ is an M-matrix and additionally note that $\mm_\alpha' = \frac{1}{s} \mm_\alpha$. Thus any scaling of $\mm_\epsilon'$ is a scaling of $\mm_\epsilon$. Note that the running time is unaffected by such scalings since $\| \frac{1}{s}\ma \| = \frac{1}{s} \| \ma \|$. Thus from now on whenever we discuss Theorem \ref{thm:mmatrix-scale-overview} we assume $s=1$ for simplicity. To solve the M-matrix scaling problem with $s=1$ we take a simple iterative approach of solving and computing scalings for $\mm_\alpha$ defined for all $\alpha \geq 0$ by 
$
\mm_{\alpha} \eqdef \mm + \alpha \mI
$.
In particular, we consider the scaling vectors
$
\vell_\alpha \eqdef \mm_{\alpha}^{-\top} \bone \text{ and } \vr_\alpha \eqdef \mm_{\alpha}^{-1} \bone
$
and let $\ml_{\alpha}\eqdef \diag(\vell_\alpha)$ and $\mr_{\alpha}\eqdef \diag(\vr_\alpha)$ (see Algorithm~\ref{alg:MMatrix-overview}). As discussed in Section~\ref{sec:mmatrix-rcdd}, $\ml_{\alpha} \mm_{\alpha} \mr_{\alpha}$ is RCDD. Note that to solve $\mm_{\alpha} \vx = \vb$, we can instead solve $\ml_{\alpha}^{-1}(\ml_{\alpha} \mm_{\alpha} \mr_{\alpha})\mr_{\alpha}^{-1} \vx = \vb$, which is equivalent to computing $\vx = \mr_{\alpha} (\ml_{\alpha} \mm_{\alpha} \mr_{\alpha})^{-1} \ml_{\alpha} \vb$. This means solving an equation in $\mm_{\alpha}$ reduces to inverting a RCDD matrix. 

The above discussion shows computing $\ml_{\alpha}$ and $\mr_{\alpha}$  gives us access to $\mm_{\alpha}^{-1}$ in nearly linear time. We show computing $\ml_{\alpha}$ and $\mr_{\alpha}$ for one value of $\alpha$, suffices to compute it for another nearby. To do this, we show ${\mm_{\alpha'}}^{-1}$ for $\alpha'$ constantly close to $\alpha$ is a good preconditioner of $\mm_{\alpha}$, i.e. $\mm_{\alpha'}^{-1} \mm_{\alpha} \approx \mI$ in a specific norm. It is known that when this happens a simple procedure known as preconditioned Richardson (Algorithm~\ref{alg:prichardson}) can solve a linear system in $\mm_{\alpha}$ using just $\otilde(1)$ applications of $\mm_{\alpha}$ and $\mm_{ \alpha'}^{-1}$, i.e. nearly linear time.

\newcommand{\prichardson}{$\code{Prec-Richardson}$}

\begin{algorithm} [h!]\caption{Exact Preconditioned Richardson}\label{alg:prichardson}
	\begin{algorithmic}[1]
		\Function{\prichardson}{$\mm$, $\mP$, $\vb$, $\epsilon$, $\vx_0$,  $n_{iter}$}
		\State \textbf{Input:} $\mm \in \R^{n\times n}$, $\mP \in \R^{n\times n}$ is a preconditioner for $\mm$, $\vb \in \R^n$, $\vx_0\in \R^n$
		\State \textbf{Input:} $n_{iter}$ is an upper-bound on the number of iterations to run the algorithm.
		\State \textbf{Output:} Either $\vx \in \R^n$ with $\|\mm \vx - \vec{b}\|_2 < \epsilon \|\vx_0\|_2$ or failure after $n_{iter}$ iterations
		\State $\vx \eq \vx_0$
		\Repeat
		\State $\vx \eq \vx - {\mP} [\mm \vx - \vb]$
		\Until{$\|\mm \vx - \vec{b}\|_2 <  \epsilon \|\mm \vx_0-\vb\|_2$ or number of iterations exceeds $n_{iter}$} 
		\State \Return $\vx$
		\EndFunction
	\end{algorithmic}	
\end{algorithm}

The following lemmas provides the formal bound on $\mm_{\alpha'}^{-1}\mm_{\alpha}$ that allows this procedure to work.

\begin{lemma}\label{lem:preconditioner}
Let $\md$ be a symmetric PSD matrix such that $\md^{1/2} \mm \md^{-1/2} + \md^{-1/2} \mm^\top \md^{1/2}$ is PSD. Then for all $\alpha, \alpha' > 0$ the following holds
$$
\|\mm_{\alpha'}^{-1} \mm_\alpha - \mI \|_{\md \rightarrow \md} 
= \left|\alpha-\alpha'\right|\cdot\norm{\md^{1/2}\mm_{\alpha}^{-1}\md^{-1/2}}_{2}
\leq \frac{|\alpha'-\alpha|}{\alpha'} ~.
$$
\end{lemma}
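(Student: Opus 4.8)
The plan is to reduce the statement to two elementary facts: an exact algebraic identity for $\mm_{\alpha'}^{-1}\mm_\alpha - \mI$, and a Euclidean-norm bound on $\mm_{\alpha'}^{-1}$ that uses the hypothesis on the symmetric part. First I would observe that $\mm_\alpha = \mm_{\alpha'} + (\alpha - \alpha')\mI$, so that $\mm_{\alpha'}^{-1}\mm_\alpha - \mI = \mm_{\alpha'}^{-1}\bigl(\mm_\alpha - \mm_{\alpha'}\bigr) = (\alpha-\alpha')\,\mm_{\alpha'}^{-1}$. (This makes clear that the middle term of the statement should carry the index $\alpha'$, i.e.\ read $\md^{1/2}\mm_{\alpha'}^{-1}\md^{-1/2}$.) Next I would record the standard identification of the induced $\md$-to-$\md$ operator norm with an $\ell_2$ operator norm: the substitution $\vy = \md^{1/2}\vx$ gives $\|\mz\|_{\md\to\md} = \|\md^{1/2}\mz\md^{-1/2}\|_2$ for every matrix $\mz$, provided $\md$ is invertible. (We may assume $\md \succ \mzero$; in the applications $\md = \ml\mr^{-1}$ is PD by Lemma~\ref{lem:pdscaling}, and the merely-PSD case follows by restricting to $\im(\md)$ and reading $\md^{-1/2}$ as the pseudoinverse.) Multiplying the identity by $|\alpha-\alpha'|$ and applying the norm identification yields the claimed equality $\|\mm_{\alpha'}^{-1}\mm_\alpha - \mI\|_{\md\to\md} = |\alpha-\alpha'|\cdot\|\md^{1/2}\mm_{\alpha'}^{-1}\md^{-1/2}\|_2$.

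It then remains to bound $\|\md^{1/2}\mm_{\alpha'}^{-1}\md^{-1/2}\|_2 \le 1/\alpha'$. Here I would set $\mn := \md^{1/2}\mm\md^{-1/2}$, so that $\md^{1/2}\mm_{\alpha'}\md^{-1/2} = \mn + \alpha'\mI$ and therefore $\md^{1/2}\mm_{\alpha'}^{-1}\md^{-1/2} = (\mn + \alpha'\mI)^{-1}$. The hypothesis that $\md^{1/2}\mm\md^{-1/2} + \md^{-1/2}\mm^\top\md^{1/2}$ is PSD is exactly the statement that $\mn + \mn^\top \succeq \mzero$, i.e.\ the symmetric part of $\mn$ is PSD, whence $\vx^\top \mn \vx = \tfrac12 \vx^\top(\mn + \mn^\top)\vx \ge 0$ for all $\vx$. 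Consequently, for any unit vector $\vx$, Cauchy--Schwarz gives $\|(\mn + \alpha'\mI)\vx\|_2 \ge \vx^\top(\mn + \alpha'\mI)\vx = \vx^\top\mn\vx + \alpha' \ge \alpha'$. This shows the smallest singular value of $\mn + \alpha'\mI$ is at least $\alpha'$, equivalently $\|(\mn + \alpha'\mI)^{-1}\|_2 \le 1/\alpha'$, which completes the proof.

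The one step that needs genuine care is the last one: turning the hypothesis on the symmetric part into a singular-value lower bound on $\mn + \alpha'\mI$. The subtlety is that $\mn$ need not be symmetric (indeed $\mm$ is generally not), so one cannot bound $\sigma_{\min}$ through eigenvalues of $\mn$; the correct tool is the numerical-range inequality $\|\mz\vx\|_2 \ge |\vx^\top\mz\vx|$ for unit $\vx$, and it is precisely the PSD-symmetric-part hypothesis that makes $\vx^\top(\mn+\alpha'\mI)\vx$ bounded below by $\alpha'$. The remaining ingredients — the algebraic identity and the change of variables for the $\md$-norm — are routine, aside from the minor bookkeeping if one insists on allowing $\md$ to be singular.
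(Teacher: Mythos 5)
Your proof is correct, and the first two steps (the algebraic identity $\mm_{\alpha'}^{-1}\mm_\alpha - \mI = (\alpha-\alpha')\mm_{\alpha'}^{-1}$ and the $\md$-to-$\md$ norm identification) coincide with the paper's. You are also right that the displayed equality in the statement should carry the subscript $\alpha'$, not $\alpha$; the paper's proof makes the same slip and carries the stray $\alpha$ subscript through to the end, but the logic is sound once the subscript is corrected.

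The final step is where you genuinely diverge. The paper rewrites $\norm{\md^{1/2}\mm_{\alpha'}^{-1}\md^{-1/2}}_2$ as $1/\sqrt{\lambda_{\min}(\md^{1/2}\mm_{\alpha'}\md^{-1}\mm_{\alpha'}^\top\md^{1/2})}$, expands this Gram-type matrix as $\md^{1/2}\mm\md^{-1}\mm^\top\md^{1/2} + \alpha'(\md^{1/2}\mm\md^{-1/2} + \md^{-1/2}\mm^\top\md^{1/2}) + (\alpha')^2\mI$, and argues each of the first two summands is PSD to conclude the whole thing is $\succeq (\alpha')^2\mI$. You instead set $\mn = \md^{1/2}\mm\md^{-1/2}$ and apply the numerical-range bound $\|(\mn+\alpha'\mI)\vx\|_2 \geq \vx^\top(\mn+\alpha'\mI)\vx \geq \alpha'$ for unit $\vx$ (Cauchy--Schwarz plus the PSD hypothesis on $\mn + \mn^\top$). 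The two arguments are in a sense dual --- the paper expands $\|(\mn+\alpha'\mI)\vx\|_2^2$ directly and your version lower-bounds $\|(\mn+\alpha'\mI)\vx\|_2$ through the quadratic form --- and both rest on the same fact, namely that the numerical range of $\mn$ lies in the closed right half-plane. Your route avoids needing to invoke that the Gram term $\mn\mn^\top$ is PSD, so it is marginally leaner, but the conclusion and the difficulty level are the same.
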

\begin{proof}
Note that $\mm_{\alpha}=\mm_{\alpha'}+(\alpha-\alpha')\mI$. Consequently,
$
\mm_{\alpha'}^{-1}\mm_{\alpha}-\mI=(\alpha-\alpha')\mm_{\alpha'}
$
and therefore
\[
\norm{\mm_{\alpha'}^{-1}\mm_{\alpha}-\mI}_{\md\rightarrow\md}=\left|\alpha-\alpha'\right|\cdot\norm{\md^{1/2}\mm_{\alpha}^{-1}\md^{-1/2}}_{2}
\]
yielding the first equality of the claim. For the inequality, note that
\[
\norm{\md^{1/2}\mm_{\alpha}^{-1}\md^{-1/2}}_{2}
= \sqrt{\lambda_{\max}\left(\md^{-1/2} \mm_{\alpha}^{\invtrans}
\md\mm_{\alpha}^{-1}\md^{-1/2}\right)}=\frac{1}{\sqrt{\lambda_{\min}\left(\md^{1/2}\mm_{\alpha}\md^{-1}\mm_{\alpha}^{\top}\md^{1/2}\right)}}\,.
\]
Furthermore, we see that
\begin{align*}
\md^{1/2}\mm_{\alpha}\md^{-1}\mm_{\alpha}^{\top}\md^{1/2} & =\md^{1/2}\mm\md^{-1}\mm\md^{1/2}+\alpha^{2}\mI+\alpha\left[\md^{-1/2}\mm^{\top}\md^{1/2}+\md^{1/2}\mm\md^{-1/2}\right]\,.
\end{align*}
Since $\md^{1/2}\mm\md^{-1}\mm\md^{1/2}\succeq\mzero$ and $\md^{-1/2}\mm^{\top}\md^{1/2}+\md^{1/2}\mm\md^{-1/2}\succeq\mzero$
by assumption, we have 
\[
\md^{1/2}\mm_{\alpha}\md^{-1}\mm_{\alpha}^{\top}\md^{1/2}\succeq\alpha^{2}\mI\text{ and }\lambda_{\min}(\md^{1/2}\mm_{\alpha}\md^{-1}\mm_{\alpha}^{\top}\md^{1/2})\geq\alpha^{2}
\]
yielding the result. 
\end{proof}

By Lemma~\ref{lem:preconditioner}, any $\md$  which makes $\md^{1/2} \mm \md^{-1/2}$ PSD lets us bound $\|\mm_{\alpha'}^{-1} \mm_{\alpha} - \mI \|_{\md}$. Furthermore, in Lemma~\ref{lem:pdscaling} we showed that for $\md = \ml_{0} \mr^{-1}_{0}$ $\md^{1/2} \mm \md^{-1/2}$ is PSD. Therefore, we already know an appropriate scaling $\md$.

Having all these observations we are ready to give an efficient algorithm for solving M-matrix scaling problem. 
Our algorithm (Algorithm~\ref{alg:MMatrix-overview}) starts with a large $\alpha$ (e.g. $\alpha =\max\{\|\ma\|_\infty, \|\ma\|_1\}$) such that $\mm_{\alpha}$ trivially becomes RCDD, then halves $\alpha$ in each iteration and find $\ml_{\alpha}, \mr_{\alpha}$ for the new $\alpha$. 
Using Lemmas~\ref{lem:pdscaling} and \ref{lem:preconditioner}, we just showed there exists a norm where $\|\mm_{\alpha}^{-1} \mm_{\alpha/2} - \mI \|_{\md} \leq 1/2$.  
Provided we can solve linear systems in $\mm_{\alpha}$, we compute $\ml_{\alpha/2}$ and $\mr_{\alpha/2}$ using preconditioned Richardson (see Algorithm~\ref{alg:prichardson}). By Assumption~\ref{asmp:exact-solver}, this requires solving $\tilde{O}(1)$ linear systems in $\mm_{\alpha}$. 
Since we already have access to $\ml_{\alpha}$ and $\mr_{\alpha}$ we can solve systems in $\mm_{\alpha}$ exactly in $\otilde(m)$ (again by Assumption~\ref{asmp:exact-solver}). Therefore $\ml_{\alpha/2}$ and $\mr_{\alpha/2}$ can be computed in $\otilde(m)$. Now to find scalings for $\mm_\epsilon$, we iterate $\otilde(\log(\frac{\|\ma\|_1+\|\ma\|_\infty}{\epsilon}))$ times, giving the overall running time of $\otilde(m \log(\frac{\|\ma\|_1+\|\ma\|_\infty}{\epsilon})$. This proves the statement in Theorem~\ref{thm:mmatrix-scale-overview}.

\newcommand{\mscale}{$\code{MMatrix-Scale-Overview}$}
\begin{algorithm} [h]
	\caption {Compute left and right scalings of an M-matrix to make it RCDD.} \label{alg:MMatrix-overview}
	\begin{algorithmic}[1]
		\Function{ \mscale}{$A$, $\epsilon$}
		\State \textbf{Input:} $\ma \in \R_{\geq 0}^{n\times n}$ is an irreducible matrix with $\rho(\ma)<1$, and $\epsilon >0$ is a real number.
		\State \textbf{output:} Positive diagonal scalings $\ml$ and $\mr$ such that $\ml [(1+\epsilon) \mI - \ma] \mr$ is RCDD
		\State $\alpha \eq \max\{\|\ma\|_1, \|\ma\|_\infty\}$
		\State $\ml_{\alpha}, \mr_{\alpha} \eq I$
		\While{$\alpha > \epsilon$}
		
		\State  \textit{{\footnotesize //$\mm_\alpha = (1+\alpha) I - A$.}}
		\State  \textit{\footnotesize //In the following lines, we assume access to an efficient solver for $\mm_{\alpha}^{-1}$ as $L_{\alpha}$, $R_{\alpha}$ are already computed.}
		\State $\vec{\ell}_{\alpha/2} \eq$ {\sc \prichardson}($M^\top_{\alpha/2}$, ${\mm_{\alpha}^{-1}}^\top$, $\vec{\bone}$, $\vec{0}$, 0, $\infty$)
		\State $\vec{r}_{\alpha/2} \eq$ {\sc \prichardson}($M_{\alpha/2}$, $\mm_{\alpha}^{-1}$, $\vec{\bone}$, $\vec{0}$, 0, $\infty$)
		\State $\alpha \eq \alpha/2$
		\EndWhile
		\State \Return $\ml_{\alpha},\mr_{\alpha}$ with $\ml_{\alpha} M_{\epsilon} \mr_{\alpha}$ being RCDD 
		\EndFunction
	\end{algorithmic}	
\end{algorithm}

We used Assumption~\ref{asmp:exact-solver} twice in the above argument. In Section~\ref{sec:scaling} through a careful analysis, we show how to adjust each of these assumptions to the approximate world while avoiding to increase the running time by more than a polylogarithmic factor in $n$, $\frac{1}{\epsilon}$, and $\kappa(\ml_0)+\kappa(\mr_0)$.

\subsection{Finding the top eigenvalue and eigenvectors of positive matrices}\label{sec:perron-alg-overview}
In this section, we present an overview of our result on computing the largest eigenvalue and corresponding eigenvectors of a non-negative matrix as characterized by Perron-Frobenius theorem. In the last section we showed how to find the left and right scalings for an M-matrix to make it RCDD. Here we discuss how to use this idea to come up with a nearly linear time algorithm for the Perron vector problem (see Definition~\ref{def:perron}) or equivalently M-matrix decision problem, defined below.

\begin{definition}[M-matrix decision problem]
	Given a non-negative irreducible matrix $\ma \in \R_{\geq0}^{n\times n}$, and real number $s > 0$, decide whether $s\mI- \ma$ is an M-matrix, or equivalently $\rho(\ma) < s$. 
\end{definition}
If we know the top eigenvalue of $\ma$, then to solve the M-matrix decision problem we only need to check $\rho(A) < s$. Thus a solver for the Perron problem, gives an immediate solution for the M-matrix decision problem. On the other hand, if we have a solver for the M-matrix decision problem, to get a solver for the Perron problem, we need to use a search strategy to find the smallest $s$ such that $\rho(A) < s$. We use the latter approach combined with the ideas from Algorithm~\ref{alg:MMatrix-overview} to come up with a nearly linear time algorithm for the Perron problem. Theorem~\ref{thm:perron} demonstrates this result. A full detailed analysis of this theorem is given in Section~\ref{sec:perron-alg}. Below we discuss the main ideas and steps that are taken to prove Theorem~\ref{thm:perron}, and point to the corresponding subsections in Section~\ref{sec:perron-alg}.

\subsubsection{The M-matrix Decision Problem}\label{sec:mmatrix-decision-overview}

Here we show how a slightly modified version of Algorithm~\ref{alg:MMatrix-overview} can be used to solve the M-matrix decision problem. Note that if Algorithm~\ref{alg:MMatrix-overview} succeeds to find a scaling for $\mm_\epsilon$, it is also a proof that $\rho(\ma) < 1+\epsilon$. 
However, Algorithm~\ref{alg:MMatrix-overview} only works on the premise that $\rho(\ma) < 1$. 
If we remove this assumption, we are no longer able to apply Lemmas~\ref{lem:pdscaling} and~\ref{lem:preconditioner}. In other words, it might no longer be the case that $\mm_{\alpha}^{-1}$ is a good preconditioner for $\mm_{\alpha/2}$ (i.e. satisfies $\mm_{\alpha}^{-1} \mm_{\alpha/2} \approx \mI$ in some norm) or that $\mm_{\alpha/2}$ is an M-matrix. Therefore, preconditioned Richardson might fail to terminate within a reasonable time, or find an appropriate scaling. To give an example on what might go wrong, note that in our analysis we rely on the fact $\vr_{\alpha/2} = \mm_{\alpha/2}^{-1} \vones  > \vzero$, but this is not necessarily true when $\mm_{\alpha/2}$ is not an M-matrix.

 Remember we used Assumption~\ref{asmp:exact-solver}, and Lemmas~\ref{lem:pdscaling},~\ref{lem:preconditioner}, to show each invocation of preconditioned Richardson in Algorithm~\ref{alg:MMatrix-overview} generates an exact solution after $\otilde(1)$ iterations. Interestingly, we can use this to test whether $\mm$ is an M-matrix or not. Given that we have access to an exact solver for $\mm_\alpha$, if preconditioned Richardson fails to find an appropriate scaling after $\otilde(1)$ iterations, it is a proof that $\mm$ is not an M-matrix. On the other hand, if we limit the number of iterations for preconditioned Richardson to $\otilde(1)$ and our algorithm still succeeds to find appropriate scalings for $\mm_\epsilon$, this is a proof that $\mm_\epsilon$ is actually an M-matrix. This way we can come up with a nearly linear time algorithm which either reports that $\mm$ is not an M-matrix, or provides a proof that $\mm_\epsilon$ is an M-matrix. 

For a full discussion on the M-matrix decision problem and the bounds we get for solving this problem refer to Section~\ref{sec:mmatrix-decision}. In this section there are further complications that we address to deal with the fact that the condition number of the top eigenvectors of $\ma$ is not known. The main issue comes up with choosing the number of iterations to run the preconditioned richardson. We discuss this issue briefly here, since it impacts the running time of the algorithm in a way other than tunning the internal solvers precisions. For a detailed discussion we refer the reader to Section~\ref{sec:kfix}. 

In the aforementioned scheme, given that $\mm_\alpha^{-1}$ is a preconditioner for $\mm_{\alpha/2}$, we assumed (by Assumption~\ref{asmp:exact-solver}) there exists a function $f = \otilde(1)$ such that the preconditioned Richardson (see Algorithm~\ref{alg:MMatrix-overview}) should terminate with an exact solution after $f(\cdot)$ iterations. To get the above strategy actually work, we need to know an upper-bound on $f$, which turns out to have a polylogarithmic dependence on $\kappa(\vv_l)$ and $\kappa(\vv_r)$, where $\vv_l$ and $\vv_r$ are the left and right eigenvectors of $\ma$. Unfortunately, we do not know $\kappa(\vv_l)$ and $\kappa(\vv_r)$ in advance. 

One potential fix to this problem is to guess an upper-bound $K$ on the condition numbers of left and right eigenvectors and double it until $K \geq \kappa(\vv_l) + \kappa(\vv_r)$. This simple doubling scheme does not quite work as it is not clear when we should stop doubling $K$. Instead we provide a more complex doubling scheme which can solve the problem. For this we refer reader to Section~\ref{sec:kfix}.


%
%
\subsubsection{Perron problem}
We use our algorithm for the M-matrix decision problem to provide an algorithm for the Perron problem. Our approach for solving the Perron problem consists of two stages. First, we compute the largest eigenvalue, namely $\rho(\ma)$, within an $\epsilon$ multiplicative error. Second, we use this computed value to find approximate eigenvectors corresponding to $\rho(\ma)$. Our method for finding the largest eigenvalue of $\ma$ is depicted in Algorithm~\ref{alg:perronvalue}, function {\sc \findperronvalue} (Section~\ref{sec:perron-sub}).  By using the algorithm for the M-matrix decision problem, for a given $s \in \R_{>0}$, we can either determine $\rho(\ma) > s$ (i.e. $\mI - \ma/s$ is not an M-matrix), or $\rho(\ma) \leq (1+\epsilon) s$ (i.e. $(1+\epsilon)\mI - \ma/s$ is an M-matrix). We use this fact in {\sc \findperronvalue}, to apply a binary search which finds $\rho(A)$ within an $\epsilon$ multiplicative error. For a detailed discussion and analysis of this algorithm refer to Section~\ref{sec:perron-sub}.

Next, in Section~\ref{sec:approx-pvector} we show how to find an approximate eigenvector of $A$ corresponding to its largest eigenvalue. We define an approximate eigenvector as follows.

\begin{definition}[Approximate Eigenvector] \label{def:approx-ev}
	Given a matrix $\ma \in \R^{n \times n}$ and a vector norm $\|.\|$, let $\lambda$ be a non-zero eigenvalue of $\ma$, then a non-zero vector $\vr \in \R^n$ is an $\epsilon$-approximate eigenvector of $\ma$ corresponding to $\lambda$ if $\|(\mI-\frac{\ma}{\lambda})\vr\| < \epsilon \|\vr\|$.
\end{definition}

In Lemma~\ref{lem:perron-approx}, we show that $[(1+\epsilon)\mI - \ma / \rho(\ma)]^{-1} \vones$ is a $2\epsilon$-approximate eigenvector corresponding to $\rho(\ma)$ (in infinity norm). Note that we are able to compute $[(1+\epsilon)\mI - \ma / \rho(\ma)]^{-1} \vones$ in nearly linear time with our machinery for the matrix scaling problem (refer to $\mm_\epsilon$ and $\vr_\alpha$ definitions in Section~\ref{sec:mmatrix-decision-overview}). Therefore if we have $\rho(\ma)$ we can compute its corresponding approximate eigenvectors efficiently. We then show similar bounds hold if instead of $\rho(\ma)$ we use an $1+\epsilon$ approximation of it. This completes our two step approach for finding both the top eigenvalue and eigenvectors of a non-negative irreducible matrix. This simple approach could work under Assumption~\ref{asmp:exact-solver}, if the terms we are hiding in $\otilde(1)$ notation in {\sc \mscale} algorithm did not depend on $\kappa(\vv_l)$ and $\kappa(\vv_r)$, or either we knew a tight upper-bound for them.  However, as discussed in the previous subsection our algorithm for the M-matrix decision problem depends on knowing this bound. We show a fix for this in Section~\ref{sec:kfix}.

\section{Numerically Stable M-Matrix Scaling}
\label{sec:scaling}

\newcommand{\fullmscale}{$\code{MMatrix-Scale}$}

In this section,  we analyze algorithm \fullmscale$(\ma, \epsilon, K)$ in full detail. The algorithm can be viewed as a cousin to shift-and-invert preconditioning methods for  the recovery of the top eigenvector of a matrix. Given an input M-matrix $\mm = s \eye - \ma$, our algorithm maintains  a left-right scaling\footnote{As a reminder, a pair of diagonal matrices $(\ml,\mr)$ is a left-right scaling of an matrix $\mm$ if $\ml\mm \mr$ is RCDD.} $(\ml_\alpha , \mr_\alpha)$ of the matrix $\mm_\alpha \eqdef \mm + \alpha s \eye$ for progressively smaller values of $\alpha$. These scalings will be obtained by approximately solving the linear systems
\[
\mm_\alpha \vr_\alpha \approx \vones \hspace{3mm} \text{and} \hspace{3mm} \mm_\alpha^\top \vl_\alpha \approx \vones.
\]
As discussed in the previous section, we observe that we can assume $s=1$, as for any $\alpha \geq 0$ the matrix $\mm_\alpha' = \frac{1}{s}\mm_\alpha = (1+\alpha)\eye - \frac{1}{s}\ma$: solving the scaling problem for $\mm' = \eye - \frac{1}{s}\ma$ is identical to solving the scaling problem for $\mm$. We asume this throughout the rest of the section. We will choose our initial choice of $\alpha$ to be such that $\mm_\alpha$ is itself RCDD: thus our initial scalings can simply be $\vl = \vr = \frac{1}{\alpha} \vones$. In every step of our procedure, we will use our computed left-right scaling $(\ml_{2\alpha}, \mr_{2\alpha})$ of $\mm_{2 \alpha}$ to obtain a scaling of $\mm_{\alpha}$. We do this in three parts. We first use the recent result on solving RCDD linear systems to compute a good preconditioner $\mP_{2 \alpha}$ for the matrix $\mm_{2 \alpha}$ given the left-right scaling $(\ml_{2\alpha}, \mr_{2\alpha})$. We then show that under a certain measure of error this preconditioner for $\mm_{2 \alpha}$ also acts as a preconditioner for $\mm_\alpha$. Finally, we argue that by applying a standard iterative procedure preconditioned with $\mP_{2 \alpha}$ to solve $\mm_\alpha \vr_\alpha = \vones$ and $\mm_\alpha^\top \vl_\alpha = \vones$ we can solve these linear systems in a small number of iterations precisely enough to recover a scaling of $\mm_{\alpha}$. Once we have obtained our scaling of $\mm_\alpha$ we repeatedly halve $\alpha$ until our additive identity factor becomes small enough.

As discussed we begin by showing that solving M-matrices reduces to solving RCDD linear systems if we are given the left-right scaling of our desired M-matrix. 

\begin{algorithm} [t]
    \caption {\sc $\code{SolvefromScale}(\mm, \ml, \mr, \delta)$} \label{alg:SolveMFromScale}    
\begin{algorithmic}[1]
    \State \textbf{Input:} $\mm\in \R^{n\times n}$ is an M-matrix, $\ml , \mr \in \R^{\nn}$ are diagonal matrices where $\ml \mm \mr$ is RCDD, $\delta >0$ is a real number.
    \State \textbf{output:} Operators $\Pleft, \Pright$ satisfying Theorem~\ref{lemma:MSolve}.
    \State $\ms \gets \ml \mm \mr$;
    \State $\opZ^{(r)} \gets \code{RCDDSolve}(\ms, \frac{\delta}{\kappa(\ml)})$ generated by Oracle~\ref{oracle:RCDDSolve};
    \State $\opZ^{(l)} \gets \code{RCDDSolve}(\ms^\top, \frac{\delta}{\kappa(\mr)})$ generated by Oracle~\ref{oracle:RCDDSolve};
    \State \Return operators $\Pright(\vx) = \mr \opZ^{(r)}(\ml \vx)$ and $\Pleft(\vx) = \ml \opZ^{(l)}(\mr \vx)$; 
\end{algorithmic}	
\end{algorithm}
\begin{lemma}\label{lemma:MSolve}
    Let $\mm$ be an M-matrix. Let $\ml$ and $\mr$ be diagonal matrices where $\ms = \ml \mm \mr$ is RCDD. Let $\md$ be a (potentially unknown) positive definite matrix with condition number at most $\gamma$. Assume we have access to Oracle~\ref{oracle:RCDDSolve}. Then Algorithm~\ref{alg:SolveMFromScale} with $\delta = \frac{\epsilon}{\gamma}$ generates operators $\Pright$  and $\Pleft$ 
    \[
    \Exp \left[ \|\vb - \mm \Pright (\vb) \|_\md \right] \leq \epsilon \|\vb\|_\md
    ~ \text{ and } ~
    \Exp \left[ \|\vb - \mm^\top \Pleft (\vb) \|_{\md^{-1}} \right] \leq  \epsilon \|\vb\|_{\md^{-1}}  
    ~.
    \]
    for any given $\vb$. We can compute and apply both $\Pright$ and $\Pleft$ to $\vb$ in time 
    \[O(\mathcal{T}_{solve}(m, n, \kappa(\ml \mm \mr), \frac{\epsilon}{\gamma \kappa(\mr)}) + \mathcal{T}_{solve}(m, n, \kappa(\ml \mm \mr), \frac{\epsilon}{\gamma \kappa(\ml)})) ~.
    \]
\end{lemma}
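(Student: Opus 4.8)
The plan is to unwind the definitions and let the RCDD solver (Oracle~\ref{oracle:RCDDSolve}) do the work, then convert the $\ell_2$-guarantee it provides into the claimed $\md$-norm guarantee by paying the condition number $\gamma$ of $\md$. First I would recall that $\ms = \ml \mm \mr$ is RCDD, so $\code{RCDDSolve}(\ms, \cdot)$ and $\code{RCDDSolve}(\ms^\top, \cdot)$ are both legal calls ($\ms^\top$ is RCDD iff $\ms$ is). The operator $\Pright(\vx) = \mr \opZ^{(r)}(\ml \vx)$ is built so that $\mm \Pright(\vx) = \mm \mr \opZ^{(r)}(\ml \vx) = \ml^{-1}(\ml \mm \mr)\opZ^{(r)}(\ml \vx) = \ml^{-1} \ms \opZ^{(r)}(\ml \vx)$. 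Hence $\vb - \mm\Pright(\vb) = \ml^{-1}\big(\ml\vb - \ms\opZ^{(r)}(\ml\vb)\big)$, which reduces everything to the residual of $\opZ^{(r)}$ applied to the vector $\ml\vb$.

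**Main computation.**

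Next I would push this residual bound through the norms. For the right operator: the target accuracy is $\delta = \epsilon/\gamma$ and the call is made with parameter $\delta/\kappa(\ml)$, so by Oracle~\ref{oracle:RCDDSolve}, $\Exp\|\ml\vb - \ms\opZ^{(r)}(\ml\vb)\|_2 \leq \frac{\delta}{\kappa(\ml)}\|\ml\vb\|_2$. Multiplying through by $\ml^{-1}$ costs a factor $\|\ml^{-1}\|_2$, and $\|\ml\vb\|_2 \leq \|\ml\|_2\|\vb\|_2$, so $\Exp\|\vb-\mm\Pright(\vb)\|_2 \leq \frac{\delta}{\kappa(\ml)}\|\ml^{-1}\|_2\|\ml\|_2\|\vb\|_2 = \delta\|\vb\|_2$, since $\|\ml\|_2\|\ml^{-1}\|_2 = \kappa(\ml)$. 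Then I convert the $\ell_2$ bound into the $\md$-norm bound: for any PD $\md$ with condition number at most $\gamma$, one has $\|\vy\|_\md \leq \sqrt{\lambda_{\max}(\md)}\|\vy\|_2$ and $\|\vy\|_2 \leq \frac{1}{\sqrt{\lambda_{\min}(\md)}}\|\vy\|_\md$, so $\|\vy\|_\md \leq \sqrt{\kappa(\md)}\,\|\vy\|_2 \cdot \frac{\sqrt{\lambda_{\min}(\md)}}{\sqrt{\lambda_{\min}(\md)}}$ — more precisely $\|\vy\|_\md / \|\vz\|_\md \leq \sqrt{\kappa(\md)}\,\|\vy\|_2/\|\vz\|_2$ for the ratio we care about. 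Applying this with $\vy = \vb - \mm\Pright(\vb)$ and $\vz = \vb$ gives $\Exp\|\vb-\mm\Pright(\vb)\|_\md \leq \sqrt{\gamma}\cdot\delta\|\vb\|_\md \leq \epsilon\|\vb\|_\md$, using $\delta = \epsilon/\gamma \leq \epsilon/\sqrt{\gamma}$ (since $\gamma \geq 1$). The left operator is entirely symmetric: using $\|\ma\|_\md = \|\ma^\top\|_{\md^{-1}}$ and $\kappa(\md^{-1}) = \kappa(\md) \leq \gamma$, and the identity $\mm^\top\Pleft(\vx) = \mr^{-1}\ms^\top\opZ^{(l)}(\mr\vx)$, the call with parameter $\delta/\kappa(\mr)$ yields $\Exp\|\vb-\mm^\top\Pleft(\vb)\|_{\md^{-1}} \leq \epsilon\|\vb\|_{\md^{-1}}$ by the same chain of inequalities with $\ml \leftrightarrow \mr$.

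**Running time and the main subtlety.**

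Finally, the running time: each of $\Pright, \Pleft$ performs one diagonal scaling (linear time) plus one invocation of the RCDD solver, so by Oracle~\ref{oracle:RCDDSolve} the costs are $\mathcal{T}_{solve}(m, n, \kappa(\ms), \delta/\kappa(\ml))$ and $\mathcal{T}_{solve}(m, n, \kappa(\ms), \delta/\kappa(\mr))$ respectively; substituting $\delta = \epsilon/\gamma$ and $\ms = \ml\mm\mr$ gives exactly the claimed bound. The step I expect to need the most care is the norm-conversion: the oracle only promises an \emph{expected} $\ell_2$ residual bound, and $\md$ is explicitly allowed to be \emph{unknown}, so I must make sure the conversion $\|\cdot\|_2 \to \|\cdot\|_\md$ uses only $\kappa(\md) \leq \gamma$ and nothing about $\md$ itself — which it does, since $\sqrt{\lambda_{\max}(\md)/\lambda_{\min}(\md)} = \sqrt{\kappa(\md)}$ is the only $\md$-dependent quantity, and it is bounded by $\sqrt{\gamma}$. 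One should also double-check the direction of the inequality chosen for $\delta$: we need $\sqrt{\gamma}\,\delta \leq \epsilon$, i.e. $\delta \leq \epsilon/\sqrt{\gamma}$, and $\delta = \epsilon/\gamma$ suffices because $\gamma \geq 1$; the slightly loose choice of $\delta$ costs only a constant in the log and is harmless for the $\otilde$ bound.
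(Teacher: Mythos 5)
Your proposal is correct and follows essentially the same approach as the paper: rewrite $\vb - \mm\Pright(\vb) = \ml^{-1}(\ml\vb - \ms\opZ^{(r)}(\ml\vb))$, apply the oracle's $\ell_2$ guarantee, absorb the $\kappa(\ml)$ factor, and then convert to the $\md$-norm using only the bound $\kappa(\md)\leq\gamma$. The one minor difference is that you use the sharper conversion factor $\sqrt{\kappa(\md)}$ where the paper uses the looser (but still sufficient) $\kappa(\md)$; both close with $\delta=\epsilon/\gamma$.
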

\begin{proof}
    Consider the output of Algorithm~\ref{alg:SolveMFromScale}. We observe that 
    \begin{align*}
    \|\vb - \mm \Pright (\vb) \|_2 &=\|\vb - \mm \mr \opZ^{(r)}(\ml \vb) \|_2 \\
    &= \|\vb - \ml^{-1} \ms \mr^{-1} \mr \opZ^{(r)}(\ml \vb) \|_2  \\
    &= \|\ml^{-1} \ml \vb - \ml^{-1} \ms \opZ^{(r)}(\ml \vb) \|_2  \\
    &\leq   \|\ml^{-1}\|_2 \,  \|\ml \vb - \ms \opZ^{(r)}(\ml \vb) \|_2 .
    \end{align*}
    However by the guarantee of Oracle~\ref{oracle:RCDDSolve} and the construction of $\opZ^{(r)}$ we have that 
    \[
    \Exp \left[ \|\ml \vb - \ms \opZ^{(r)}(\ml \vb) \|_2 \right] \leq \frac{\epsilon}{\gamma \kappa(\ml)} \|\ml \vb\|_2 \leq \frac{\epsilon}{\gamma \kappa(\ml)} \, \|\ml\|_2 \, \|\vb\|_2.
    \]
    Taking expectation over $\opZ^{(r)}$ and combining these two facts gives
    \[
    \Exp \left[ \|\vb - \mm \Pright (\vb) \|_2 \right] \leq \frac{\epsilon}{\gamma \kappa(\ml)} \, \|\ml^{-1}\|_2 \, \|\ml\|_2 \, \|\vb\|_2 = \frac{\epsilon}{\gamma} \|\vb\|_2.
    \]
    Now, we note that $\|\vy\|_2 \leq \|\vz\|_2$ implies $\|\vy\|_\md \leq \kappa(\md) \|\vz\|_\md$ for any $\vy, \vz \in \R^n$, and so we have
    \[
    \Exp \left[ \|\vb - \mm \Pright (\vb) \|_\md \right] \leq  \frac{\kappa(\md) \epsilon}{\gamma} \|\vb\|_\md \leq \epsilon \|\vb\|_\md.
    \]
    Repeating this analysis with $\Pleft$ in the $\md^{-1}$ norm gives the analogous
    \[
    \Exp \left[ \|\vb - \mm^\top \Pleft (\vb) \|_{\md^{-1}} \right] \leq  \epsilon \|\vb\|_{\md^{-1}}
    \]
    as desired. To bound the running time, we observe that computing and applying $\Pright$ and $\Pleft$ to vectors requires only computing and applying $\opZ^{(r)}$ and $\opZ^{(l)}$ and $O(n)$ additional work to apply $\mr$ and $\ml$. By the running time assumption of Oracle~\ref{oracle:RCDDSolve} and the error parameters chosen, we obtain that $\code{SolvefromScale(\mm,\ml,\mr,\frac{\epsilon}{\gamma})}$ runs in time
    \[
    O\left(\mathcal{T}_{solve}\left(m, n, \kappa(\ml \mm \mr), \frac{\epsilon}{\gamma \kappa(\mr)}\right) + \mathcal{T}_{solve}\left(m, n,\kappa(\ml \mm \mr), \frac{\epsilon}{\gamma \kappa(\ml)}\right)\right).    
    \]
\end{proof}

\begin{algorithm} [t]
    \caption {\sc \fullmscale$(A, \epsilon, K)$} \label{alg:MMatrix}    
\begin{algorithmic}[1]        
            \State \textbf{Input:} $\ma \in \R_{\geq 0}^{n\times n}$ is an irreducible matrix with $\rho(\ma)<1$, and $\epsilon, K >0$ are real numbers.
            \State \textbf{Output:} Positive diagonal scalings $\ml$ and $\mr$ such that $\ml [(1+\epsilon) \eye - A] \mr$ is RCDD
        \State $\alpha \eq 2 \max\{\|A\|_1, \|A\|_\infty\}$
        \State $\vl_\alpha \gets \frac{1}{\alpha} \vones, \quad \vr_\alpha \gets \frac{1}{\alpha} \vones$
        \While{$\alpha > \epsilon$}
            \State $\alpha \eq \alpha/2, \quad  k \gets 0$
            \State $\vl_{\alpha}^{\, \, (0)} \gets \vzero, \quad \vr_{\alpha}^{\, \, (0)} \gets \vzero$
            \State $\mm_\alpha = (1+\alpha) \eye - A$.
            \While{$\|\mm_{\alpha} \vr_{\alpha}^{\, \, (k)} - \vones\|_\infty > \frac{1}{2}$ or  $\|\mm_{\alpha}^\top \vl_{\alpha}^{\, \, (k)} - \vones\|_\infty > \frac{1}{2}$}
                \State $\Prightk_{2 \alpha}, \Pleftk_{2 \alpha}  \gets  \code{SolvefromScale}(\mm_{2\alpha}, \ml_{2 \alpha}, \mr_{2\alpha}, \frac{1}{8 K})$
                \State $\vl_{\alpha}^{\, \,(k+1)} \eq \vl_{\alpha}^{\, \, (k)} - \Pleftk_{2 \alpha} (\mm^{\top}_{\alpha} \vl_{\alpha}^{\, \, (k)} - \vones)$ 
                \State $\vr_{\alpha}^{\, \, (k+1)} \eq \vr_{\alpha}^{\, \, (k)} - \Prightk_{2 \alpha} (\mm_\alpha \vr_{\alpha}^{\, \, (k)} - \vones)$
                \State $k \gets k + 1$
            \EndWhile
            \State $\vl_\alpha, \vr_\alpha \eq \vl_\alpha^{\, \, (k)}, \vr_\alpha^{\, \, (k)}$
        \EndWhile
        
    \State \Return $\ml_\alpha,\mr_\alpha$ with $\ml_\alpha \mm_{\epsilon} \mr_\alpha$ being RCDD 
\end{algorithmic}	
\end{algorithm}
With this fact, we move to the second part of our proof. Lemma~\ref{lemma:MSolve} tells us that we can compute operators $\Pright_{2 \alpha}$ and $\Pleft_{2 \alpha}$ given $\mm_{2 \alpha}$ and $(\ml_{2 \alpha}, \mr_{2 \alpha})$ which approximately act as $\mm_{2\alpha}$ and $\mm_{2\alpha}^\top$ respectively. Further, we can ensure the error of our operators is small in any positive definite $\md$-norm. We will use this result to show that $\Pright_{2 \alpha}$ and $\Pleft_{2 \alpha}$ function are adequate preconditioners for solving $\mm_\alpha \vr_\alpha = \vones$ and $\mm_\alpha^\top \vl_\alpha = \vones$ respectively.

\begin{lemma} \label{lemma:MProg}
    Let $\mm$ be an M-matrix. Let $\md$ be any positive definite matrix where $\md^{-1/2} \mm^\top \md^{1/2} + \md^{1/2} \mm \md^{-1/2}$ is positive semidefinite. Let $\gamma \geq \kappa(\md)$. Let $\alpha > 0$ be some parameter. Then if $(\Pright_{2 \alpha}, \Pleft_{2\alpha}) = \code{SolvefromScale}(\mm_{2\alpha}, \ml_{2 \alpha}, \mr_{2\alpha},  \frac{1}{8 \gamma})$, for any vector $\vx$ we have
    \[
    \Exp \left[ \| \vx - \mm_\alpha \Pright_{2 \alpha} (\vx) \|_\md \right] \leq \frac{3}{4} \|\vx\|_\md \quad \text{and} \quad \Exp \left[ \|\vx - \mm_\alpha^\top \Pleft_{2 \alpha} (\vx) \|_{\md^{-1}} \right] \leq \frac{3}{4} \|\vx\|_{\md^{-1}}.
    \]
\end{lemma}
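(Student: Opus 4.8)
The plan is to show that the operator $\Pright_{2\alpha}$ returned by $\code{SolvefromScale}$, which approximately inverts $\mm_{2\alpha}$, is also an adequate preconditioner for the nearby matrix $\mm_\alpha = \mm_{2\alpha} - \alpha\eye$; the whole argument is a short chain of triangle inequalities resting on two facts. First I would invoke Lemma~\ref{lemma:MSolve} applied to the M-matrix $\mm_{2\alpha}$ with its scaling $(\ml_{2\alpha},\mr_{2\alpha})$, the positive definite matrix $\md$ (whose condition number is at most $\gamma$), and accuracy parameter $\epsilon = \tfrac{1}{8}$, so that $\epsilon/\gamma = \tfrac{1}{8\gamma}$ matches the call; this gives, for every $\vx$, that $\Exp[\|\vx - \mm_{2\alpha}\Pright_{2\alpha}(\vx)\|_\md] \le \tfrac{1}{8}\|\vx\|_\md$ and $\Exp[\|\vx - \mm_{2\alpha}^\top\Pleft_{2\alpha}(\vx)\|_{\md^{-1}}] \le \tfrac{1}{8}\|\vx\|_{\md^{-1}}$. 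Second, I note that the hypothesis that $\md^{-1/2}\mm^\top\md^{1/2}+\md^{1/2}\mm\md^{-1/2}$ is PSD is exactly the hypothesis of Lemma~\ref{lem:preconditioner}, whose proof establishes $\md^{1/2}\mm_\beta\md^{-1}\mm_\beta^\top\md^{1/2}\succeq\beta^2\mI$ and hence $\|\mm_\beta^{-1}\|_\md = \|\md^{1/2}\mm_\beta^{-1}\md^{-1/2}\|_2 \le 1/\beta$ for every $\beta>0$; in particular $\|\mm_{2\alpha}^{-1}\|_\md \le \tfrac{1}{2\alpha}$.

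For the core step I would fix $\vx$, set $\vy = \Pright_{2\alpha}(\vx)$, and let $\vr = \vx - \mm_{2\alpha}\vy$, so that $\vy = \mm_{2\alpha}^{-1}(\vx-\vr)$ and therefore $\|\vy\|_\md \le \tfrac{1}{2\alpha}(\|\vx\|_\md + \|\vr\|_\md)$. Since $\mm_\alpha = \mm_{2\alpha}-\alpha\eye$, we have $\vx - \mm_\alpha\vy = (\vx - \mm_{2\alpha}\vy) + \alpha\vy = \vr + \alpha\vy$, so by the triangle inequality
\[
\|\vx - \mm_\alpha\vy\|_\md \le \|\vr\|_\md + \alpha\|\vy\|_\md \le \tfrac{3}{2}\|\vr\|_\md + \tfrac{1}{2}\|\vx\|_\md .
\]
Taking expectations and using the first fact yields
\[
\Exp\big[\|\vx - \mm_\alpha\Pright_{2\alpha}(\vx)\|_\md\big] \le \tfrac{3}{2}\cdot\tfrac{1}{8}\|\vx\|_\md + \tfrac{1}{2}\|\vx\|_\md = \tfrac{11}{16}\|\vx\|_\md \le \tfrac{3}{4}\|\vx\|_\md,
\]
which is the first claimed inequality.

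The second inequality I would obtain by symmetry: $\mm^\top$ is again an M-matrix with $(\mm^\top)_\beta = (\mm_\beta)^\top$, the matrix $\md^{-1}$ is positive definite with $\kappa(\md^{-1}) = \kappa(\md) \le \gamma$, and $(\md^{-1})^{-1/2}(\mm^\top)^\top(\md^{-1})^{1/2} + (\md^{-1})^{1/2}\mm^\top(\md^{-1})^{-1/2}$ equals $\md^{1/2}\mm\md^{-1/2}+\md^{-1/2}\mm^\top\md^{1/2}$, which is PSD; running the identical computation with $\mm^\top$, $\md^{-1}$, and $\Pleft_{2\alpha}$ in place of $\mm$, $\md$, and $\Pright_{2\alpha}$ gives $\Exp[\|\vx - \mm_\alpha^\top\Pleft_{2\alpha}(\vx)\|_{\md^{-1}}] \le \tfrac{3}{4}\|\vx\|_{\md^{-1}}$.

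I do not expect a genuine obstacle; the one delicate point is that the constants must close, namely $\tfrac{3}{2}\cdot\tfrac{1}{8} + \tfrac{1}{2} \le \tfrac{3}{4}$. This works precisely because the factor-two gap between $\alpha$ and $2\alpha$ forces $\alpha\|\mm_{2\alpha}^{-1}\|_\md \le \tfrac{1}{2}$, strictly below $1$, leaving enough room to absorb the $\tfrac{1}{8}$ internal solver error; a looser shift ratio or a looser internal accuracy would push the contraction factor up to (or past) $1$ and break the preconditioner recursion that Algorithm~\ref{alg:MMatrix} relies on.
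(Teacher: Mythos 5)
Your proof is correct and follows essentially the same route as the paper's: both rest on Lemma~\ref{lemma:MSolve} for the $\tfrac{1}{8}$ solver error and Lemma~\ref{lem:preconditioner} for $\|\mm_{2\alpha}^{-1}\|_\md \le \tfrac{1}{2\alpha}$, and both end up with the same contraction factor $\tfrac{11}{16}$. The only cosmetic difference is that you fold the paper's three-term triangle-inequality split into a two-term one by first bounding $\|\Pright_{2\alpha}(\vx)\|_\md$ via $\vy=\mm_{2\alpha}^{-1}(\vx-\vr)$, but expanding that step recovers the paper's three terms exactly.
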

\begin{proof} 
    Observe that by the triangle inequality
    \begin{align*}
        \left\lVert\vx - \mm_\alpha \Pright_{2 \alpha} (\vx)\right\rVert_\md &= \left\lVert \vx  - \mm_{2\alpha} \Pright_{2 \alpha} (\vx) + \mm_{2\alpha} \Pright_{2 \alpha} (\vx) - \alpha \mm_{2\alpha}^{-1} \vx + \alpha \mm_{2\alpha}^{-1} \vx - \mm_\alpha \Pright_{2 \alpha} (\vx) \right\rVert_\md  \\
        &\leq \left\lVert\vx  - \mm_{2\alpha} \Pright_{2 \alpha}(\vx) \right\rVert_\md + \alpha \left\lVert\mm_{2\alpha}^{-1} \vx \right\rVert_\md\\ &\phantom{{}={}} +\left\lVert(\mm_{2\alpha} \Pright_{2 \alpha}(\vx) - \alpha \mm_{2\alpha}^{-1}\vx - \mm_\alpha \Pright_{2 \alpha}(\vx)\right\rVert_\md ~.
    \end{align*}
    We bound the expected value over $\Pright_{2\alpha}$ of each of these three terms in order. We see first that
    \[
    \Exp \left[ \left\lVert\vx - \mm_{2\alpha} \Pright_{2 \alpha}(\vx) \right\rVert_\md \right] \leq \frac{1}{8} \left\lVert\vx\right\rVert_\md.
    \]
    by the definition of $\Pright_{2\alpha}$ and Lemma~\ref{lemma:MSolve}. Further,
    \[
    \Exp \left[ \left\lVert\mm_{2\alpha}^{-1} x \right\rVert_\md \right] =  \left\lVert\mm_{2\alpha}^{-1} x \right\rVert_\md \leq \left\lVert\mm_{2\alpha}^{-1}\right\rVert_\md \left\lVert \vx \right\rVert_\md
    \]
    by the definition of the induced $\md$-norm. Finally, we have
    \begin{align*}
    \Exp \left[ \left\lVert \mm_{2\alpha} \Pright_{2 \alpha}(\vx) - \alpha \mm_{2\alpha}^{-1} \vx - \mm_\alpha \Pright_{2 \alpha}(\vx)\right\rVert_\md \right] &=  \Exp \left[ \left\lVert\alpha \Pright_{2 \alpha}(\vx) - \alpha \mm_{2\alpha}^{-1} \vx \right\rVert_\md \right] \\
    &= \alpha \Exp \left[ \left\lVert\mm_{2 \alpha}^{-1} (\vx - \mm_{2 \alpha} \Pright_{2 \alpha}(\vx))\right\rVert_\md \right] \\
    &\leq \alpha \left\lVert\mm_{2 \alpha}^{-1}\right\rVert_\md   \Exp \left[ \left\lVert\vx - \mm_{2 \alpha} \Pright_{2 \alpha}(\vx)\right\rVert_\md \right] \\
    &\leq \frac{\alpha}{8} \left\lVert\mm_{2 \alpha}^{-1}\right\rVert_\md  \left\lVert\vx\right\rVert_\md,
    \end{align*}
    where we again use Lemma~\ref{lemma:MSolve} in the last step. Taking expectation over $\Pright_{2\alpha}$ in our original fact and applying these three inequalities yields
    \[
    \Exp \left[ \left\lVert \vx -  \mm_\alpha \Pright_{2 \alpha}(\vx)\right\rVert_\md \right] \leq \Big( \frac{1}{8}  +  \frac{9 \alpha}{8} \left\lVert\mm_{2\alpha}^{-1}\right\rVert_\md \Big) \left\lVert \vx\right\rVert_\md.
    \]
    By repeating this analysis for the left scaling, we can further obtain 
    \[
    \Exp \left[ \left\lVert \vx-  \mm_\alpha^\top \Pleft_{2 \alpha}(\vx)\right\rVert_{\md^{-1}} \right] \leq \Big( \frac{1}{8}  +  \frac{9 \alpha}{8} \left\lVert\mm_{2\alpha}^{-\top}\right\rVert_{\md^{-1}} \Big) \left\lVert \vx \right\rVert_{\md^{-1}} = \Big( \frac{1}{8}  +  \frac{9 \alpha}{8} \left\lVert \mm_{2\alpha}^{-1}\right\rVert_\md \Big) \left\lVert \vx \right\rVert_{\md^{-1}}
    \]
    as $\left\lVert\mm_{2\alpha}^{-1}\right\rVert_\md  = \left\lVert\mm_{2\alpha}^{-\top}\right\rVert_{\md^{-1}}$.  Now, note that by Lemma \ref{lem:preconditioner} we have $ \left\lVert\mm_{2\alpha}^{-1}\right\rVert_{\md} \leq \frac{1}{2\alpha}$. Substituting this into the above expressions gives us
    \[
    \Exp \left[ \left\lVert \vx -  \mm_\alpha \Pright_{2 \alpha}(\vx)\right\rVert_\md \right] \leq \Big( \frac{1}{8}  +  \frac{9 \alpha}{8} \left\lVert \mm_{2\alpha}^{-1}\right\rVert_\md \Big) \left\lVert x\right\rVert_\md \leq \Big( \frac{1}{8}  +  \frac{9 \alpha}{16 \alpha} \Big) \left\lVert x\right\rVert_\md  < \frac{3}{4} \left\lVert x\right\rVert_\md.
    \]
    and
    \[
    \Exp \left[ \left\lVert \vx-  \mm_\alpha^\top \Pleft_{2 \alpha}(\vx)\right\rVert_{\md^{-1}} \right] \leq \Big( \frac{1}{8}  +  \frac{9 \alpha}{8} \left\lVert\mm_{2\alpha}^{-1}\right\rVert_\md \Big) \left\lVert x\right\rVert_{\md^{-1}} \leq \Big( \frac{1}{8}  +  \frac{9 \alpha}{16 \alpha} \Big) \left\lVert x\right\rVert_{\md^{-1}} < \frac{3}{4} \left\lVert x\right\rVert_{\md^{-1}}
    \] 
    as claimed.
\end{proof}
This lemma gives us a useful handle on how much we can expect our error to decay in each iteration of the inner loop of our algorithm. To formalize this, we show the following lemma:
\begin{lemma} \label{lemma:scaling_innerloop}
    Let $\md$ be a (potentially unknown) matrix where $\md^{-1/2} \mm^\top \md^{1/2} + \md^{1/2} \mm \md^{-1/2}$ is PD, and let $\gamma \geq \kappa(\md)$. If we choose $K = \frac{1}{8 \gamma}$ in Algorithm~\ref{alg:MMatrix}, we have for some $k = O(\log (n \kappa(\md)))$ that $\|\mm_{\alpha} \vr_{\alpha}^{\, \, (k)} - \vones\|_\infty \leq \frac{1}{2}$ and $\|\mm_{\alpha}^\top \vl_{\alpha}^{\, \, (k)} - \vones\|_\infty \leq \frac{1}{2}$ with high probability with respect to $n$ for any $\alpha > 0$. Each iteration of the while loop to compute each $\vr_{\alpha}^{\, \, (i)}$ and $\vl_{\alpha}^{\, \, (i)}$ can be implemented in time
    \[
    O\Big(\mathcal{T}_{solve}\big(m, n, \kappa(\ml_{2 \alpha} \mm_{2 \alpha} \mr_{2 \alpha}), \frac{1}{8 \gamma \kappa(\mr_{2 \alpha})}\big) \! +\! \mathcal{T}_{solve}\big(m, n, \kappa(\ml_{2 \alpha} \mm_{2 \alpha} \mr_{2 \alpha}), \frac{1}{8 \gamma \kappa(\ml_{2 \alpha})}\big)\Big).
    \]
\end{lemma}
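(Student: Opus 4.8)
The plan is to recognize the inner \texttt{while} loop of Algorithm~\ref{alg:MMatrix} as preconditioned Richardson iteration applied simultaneously to the two systems $\mm_\alpha \vr_\alpha = \vones$ and $\mm_\alpha^\top \vl_\alpha = \vones$, with the random preconditioners $\Prightk_{2\alpha},\Pleftk_{2\alpha}$ produced at iteration $k$ by $\code{SolvefromScale}(\mm_{2\alpha},\ml_{2\alpha},\mr_{2\alpha},\frac{1}{8K})$, and then to track the residuals. First I would define $\vp^{(k)} \eqdef \mm_\alpha \vr_\alpha^{(k)} - \vones$ and $\vq^{(k)} \eqdef \mm_\alpha^\top \vl_\alpha^{(k)} - \vones$, and read off directly from the update rules that $\vp^{(k+1)} = \vp^{(k)} - \mm_\alpha \Prightk_{2\alpha}(\vp^{(k)})$ and $\vq^{(k+1)} = \vq^{(k)} - \mm_\alpha^\top \Pleftk_{2\alpha}(\vq^{(k)})$. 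Since the operators generated by distinct $\code{SolvefromScale}$ (hence $\code{RCDDSolve}$) calls use fresh independent randomness, the iteration-$k$ operators are independent of $\vp^{(k)},\vq^{(k)}$, which are measurable with respect to iterations $0,\dots,k-1$. Because $\md^{-1/2}\mm^\top\md^{1/2} + \md^{1/2}\mm\md^{-1/2}$ is PD, $\gamma \geq \kappa(\md)$, $\mm$ is an M-matrix, and (by the invariant maintained by the outer loop) $(\ml_{2\alpha},\mr_{2\alpha})$ is a valid left--right scaling of $\mm_{2\alpha}$, the choice of $K$ makes the inner $\code{SolvefromScale}$ call run with precision $\frac{1}{8\gamma}$, so Lemma~\ref{lemma:MProg} applies to $\mm_\alpha$ through $\mm_{2\alpha}$ and, conditioning on $\vp^{(k)}$ and using the fresh randomness, gives $\Exp[\|\vp^{(k+1)}\|_\md \mid \vp^{(k)}] \leq \frac34 \|\vp^{(k)}\|_\md$ and $\Exp[\|\vq^{(k+1)}\|_{\md^{-1}} \mid \vq^{(k)}] \leq \frac34 \|\vq^{(k)}\|_{\md^{-1}}$.

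Next I would iterate these conditional bounds by the tower property. Since $\vr_\alpha^{(0)} = \vl_\alpha^{(0)} = \vzero$ we have $\vp^{(0)} = \vq^{(0)} = -\vones$, so $\Exp[\|\vp^{(k)}\|_\md] \leq (3/4)^k\|\vones\|_\md$ and $\Exp[\|\vq^{(k)}\|_{\md^{-1}}] \leq (3/4)^k\|\vones\|_{\md^{-1}}$. I would then pass to the $\ell_\infty$ norm using $\|\vones\|_\md^2 = \vones^\top \md \vones \leq n\lmax(\md)$, the analogous bound for $\md^{-1}$, and $\|\vx\|_\infty \leq \|\vx\|_2 \leq \lmin(\md)^{-1/2}\|\vx\|_\md$ (resp. $\lmax(\md)^{1/2}\|\vx\|_{\md^{-1}}$), which yields $\Exp[\|\vp^{(k)}\|_\infty] \leq (3/4)^k\sqrt{n\kappa(\md)}$ and $\Exp[\|\vq^{(k)}\|_\infty] \leq (3/4)^k\sqrt{n\kappa(\md)}$. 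The key point is that only the true $\kappa(\md)$ appears here, not the (possibly much larger) $\gamma$, so choosing $k = k^\star \eqdef C\log(n\kappa(\md)) = O(\log(n\kappa(\md)))$ for a suitable constant $C$ makes both expectations at most $\frac12 n^{-c}$ for any desired constant $c$; Markov's inequality and a union bound then give $\|\vp^{(k^\star)}\|_\infty \leq \frac12$ and $\|\vq^{(k^\star)}\|_\infty \leq \frac12$ with probability at least $1 - 2n^{-c}$. Comparing the actual (stopped) loop with the "never stop" coupling --- they agree up to the first iteration at which both residuals drop below $\frac12$ --- shows the inner loop terminates within $k^\star$ iterations with high probability in $n$, which is the first claim.

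For the per-iteration running time I would just tally the work of one loop iteration: one call to $\code{SolvefromScale}(\mm_{2\alpha},\ml_{2\alpha},\mr_{2\alpha},\frac{1}{8K})$ and one application each of the resulting $\Prightk_{2\alpha},\Pleftk_{2\alpha}$, plus two matrix--vector products with $\mm_\alpha$ and $\mm_\alpha^\top$ and $O(n)$ vector arithmetic. By the running-time bound in Lemma~\ref{lemma:MSolve}, instantiated with accuracy $\frac18$ and condition-number parameter $\gamma$ (so the two internal $\code{RCDDSolve}$ calls use precisions $\frac{1}{8\gamma\kappa(\mr_{2\alpha})}$ and $\frac{1}{8\gamma\kappa(\ml_{2\alpha})}$), generating and applying these operators costs $O(\mathcal{T}_{solve}(m,n,\kappa(\ml_{2\alpha}\mm_{2\alpha}\mr_{2\alpha}),\frac{1}{8\gamma\kappa(\mr_{2\alpha})}) + \mathcal{T}_{solve}(m,n,\kappa(\ml_{2\alpha}\mm_{2\alpha}\mr_{2\alpha}),\frac{1}{8\gamma\kappa(\ml_{2\alpha})}))$, and since $\mathcal{T}_{solve} = \Omega(m+n)$ the $O(m+n)$ overhead of the matrix--vector products and vector operations is absorbed, giving exactly the stated per-iteration bound.

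The main obstacle I anticipate is the probabilistic bookkeeping rather than any single calculation: the preconditioners carry only an in-expectation guarantee (Oracle~\ref{oracle:RCDDSolve}, Lemma~\ref{lemma:MSolve}), so I must be careful that (i) each iteration's fresh randomness is independent of the current residual in order to chain the conditional expectations, (ii) the expected $\md$-norm decay is correctly converted to a high-probability $\ell_\infty$ statement via Markov and a union bound with enough slack (picking the polynomial $n^{-c}$ so the final failure probability is $1/\poly(n)$), and (iii) the stopping time is handled by coupling with the unstopped iteration. A secondary subtlety, worth stating explicitly, is that the iteration count depends on $\kappa(\md)$ and not on $\gamma$: $\gamma$ enters only through the solve precision, which has already been driven small enough for Lemma~\ref{lemma:MProg}, whereas the contraction factor $\frac34$ and the initial residual $\|\vones\|_\md$ depend only on $\md$ itself.
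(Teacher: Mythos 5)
Your proof takes essentially the same approach as the paper: define the residuals $\ve_r^{(k)} = \mm_\alpha \vr_\alpha^{(k)} - \vones$ and $\ve_l^{(k)}$, apply Lemma~\ref{lemma:MProg} to get a $\frac34$ per-iteration contraction in expectation in the $\md$ (resp. $\md^{-1}$) norm, convert to an $\ell_\infty$ bound using $\kappa(\md)$, and conclude with Markov plus a union bound, with the running time read off from Lemma~\ref{lemma:MSolve}. The only deviations are cosmetic: you make the independence/tower-property chaining and the stopping-time coupling explicit where the paper just states the geometric decay directly, and your norm conversion yields the slightly tighter $\sqrt{n\kappa(\md)}$ instead of the paper's $\sqrt{n}\,\kappa(\md)$ (both give the same $O(\log(n\kappa(\md)))$ iteration count).
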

\begin{proof}
    Let us first consider the algorithm's update to $\vr_\alpha$. Observe that  
    \[
    \mm_\alpha \vr_\alpha^{\, \, (k+1)} - \vones = \mm_\alpha \vr_{\alpha}^{\, \, (k)} - \vones - \mm_\alpha \Prightk_{2 \alpha}(\mm_\alpha \vr_{\alpha}^{\, \, (k)} - \vones).
    \]
    With this, let us define $\ve^{\, \,  (k)}_r = \mm_\alpha \vr_{\alpha}^{\, \, (k)} - \vones$ and consider $\mathbb{E} [ \| \ve^{\, \, (k+1)}_r \|_\md ]$ where the expectation is taken over $\Prightk_{2 \alpha}$. By Lemma~\ref{lemma:MProg} it follows that
    \[
    \Exp \left[ \left\lVert \ve^{\, \, (k+1)}_r \right\rVert_\md \right] = \Exp \left[ \left\lVert \ve^{\, \, (k)}_r - \mm_\alpha \Pright_{2 \alpha}(\ve^{\, \, (k)}_r) \right\rVert_\md \right] \leq \frac{3}{4} \left\lVert \ve^{\, \, (k)}_r \right\rVert_\md ~.
    \]
    Thus by inducting on $k$ we obtain 
    \[
    \Exp \left[ \|\ve^{\, \, (k)}_r\|_\md \right] \leq  \Big(\frac{3}{4}\Big)^k \|\ve^{\, \, (0)}_r \|_\md =\Big(\frac{3}{4}\Big)^k \|\vones\|_\md
    \]
    for any $k$, where the expectation is taken over all the $\mP_{2\alpha}^{(r,i)}$ chosen. Now note that if for vectors $\vy, \vz$ we have $\| \vy \|_\md \leq \| \vz \|_\md$ then $\| \vy \|_2 \leq \kappa(\md) \| \vz \|_2$. Applying this to the above yields
    \[
    \Exp \left[ \|\ve^{\, \, (k)}_r\|_\infty \right] \leq \Exp \left[ \|\ve^{\, \, (k)}_r\|_2 \right] \leq \kappa(\md) \Big(\frac{3}{4}\Big)^k \|\vones \|_2 = \sqrt{n} \kappa(\md) \Big(\frac{3}{4}\Big)^k.
    \]
    Applying the same chain of reasoning to $\vl_\alpha$ with $\mP^{(l, i)}_{2 \alpha}$ in the $\md^{-1}$ norm allows us to similarly conclude
    \[
    \Exp \left[ \|\ve^{\, \,(k)}_l\|_\infty \right] \leq \Exp \left[ \|\ve^{\, \,(k)}_l\|_2 \right] \leq \kappa(\md) \Big(\frac{3}{4}\Big)^k \|\vones\|_2 =  \sqrt{n} \kappa(\md) \Big(\frac{3}{4}\Big)^k
    \]
    by Lemma~\ref{lemma:MProg}. With the choice of $k$ given above, both of these quantities are at most $\frac{1}{\poly(n)}$. With this, we have 
    \[
    \Pr( \|\ve^{\, \,(k)}_r \|_\infty > 1/2 ) \leq  2 \Exp \left[ \|\ve^{\, \, (k)}_r \|_\infty \right]\leq \frac{1}{\poly(n)} 
    \]
    \[
    \Pr( \|\ve^{\, \, (k)}_l \|_\infty > 1/2 ) \leq  2 \Exp \left[ \|\ve^{\, \, (k)}_l \|_\infty \right]\leq \frac{1}{\poly(n)} 
    \]
    by Markov's inequality-- by the union bound we thus have both $\|\ve^{\, \, (k)}_r \|_\infty \leq 1/2$ and $\|\ve^{\, \, (k)}_l \|_\infty \leq 1/2$ with probability at least $1 - \frac{1}{\poly(n)}$. The running time per iteration follows trivially from Lemma~\ref{lemma:MSolve}: to compute $\vr_{\alpha}^{\, \, (i+1)}, \vl_{\alpha}^{\, \, (i+1)}$ from  $\vr_{\alpha}^{\, \, (i)}, \vl_{\alpha}^{\, \, (i)}$ we need to compute $\Prightk_{2\alpha}$ and $\Pleftk_{2\alpha}$, apply each of these operators to a single vector, and do $O(m)$ additional work in matrix-vector products. This can clearly be implemented in the desired running time.
\end{proof}

Now, we are ready to prove the main theorem of this section. 

\begin{theorem}
\label{thm:scaling}
    Let $\ma \in \R^{\nn}$ be a nonnegative matrix with $m$ nonzero entries and $\rho(\ma) < 1$, and let $\mm = \eye - \ma$. Pick $\epsilon > 0$, and let $K \geq \max (\|\mm^{-1}\|_\infty, \|\mm^{-1}\|_1)$ be some known parameter. Then, the algorithm \fullmscale$(\ma, \epsilon, K)$ computes a pair of diagonal matrices $(\ml, \mr)$ where $\ml((1+\epsilon)\eye - \ma)\mr$ is RCDD with high probability. This algorithm runs in time
    \[
    O\Big(\mathcal{T}_{solve}\Big(m, n, 18 K^2, \frac{1}{216 K^3}\Big) \log(n K) \log\Big(\frac{\|\ma\|_1 + \|\ma\|_\infty}{\epsilon} \Big) \Big).
    \]
\end{theorem}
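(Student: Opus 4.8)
The plan is to analyze Algorithm~\ref{alg:MMatrix} by tracking the invariant that at the start of each outer-loop iteration we possess a valid left-right scaling $(\ml_{2\alpha},\mr_{2\alpha})$ of $\mm_{2\alpha}$, then show the inner loop terminates quickly and produces a valid scaling of $\mm_{\alpha}$, and finally count iterations and sum per-iteration costs. First I would establish the base case: for $\alpha_0 = 2\max\{\|\ma\|_1,\|\ma\|_\infty\}$ the matrix $\mm_{\alpha_0} = (1+\alpha_0)\eye - \ma$ is strictly RCDD (each diagonal entry $1+\alpha_0$ exceeds both the row and column off-diagonal absolute sums, which are at most $\|\ma\|_\infty$ and $\|\ma\|_1$ respectively), so $\vl_{\alpha_0} = \vr_{\alpha_0} = \tfrac{1}{\alpha_0}\vones$ is a legitimate starting scaling in the sense that $\diag(\vl_{\alpha_0})\mm_{\alpha_0}\diag(\vr_{\alpha_0})$ is RCDD --- and moreover $\mm_{\alpha_0}\vr_{\alpha_0} = \tfrac{1}{\alpha_0}\mm_{\alpha_0}\vones$ has all entries at least $1$ times a positive quantity, with the residual $\|\mm_{\alpha_0}\vr_{\alpha_0}-\vones\|_\infty$ small enough to satisfy the inner while-loop guard, or we simply note the inner loop is entered with $\ve^{(0)} = -\vones$ anyway.

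Next I would invoke the key structural fact, Lemma~\ref{lem:pdscaling}: since $\mm = \eye - \ma$ is an invertible M-matrix, the scalings $\ml_0,\mr_0$ that make $\ml_0\mm\mr_0$ RCDD exist, and for $\md \defeq \ml_0\mr_0^{-1}$ the matrix $\md^{1/2}\mm\md^{-1/2} + \md^{-1/2}\mm\md^{1/2}$ is PD. This is the (unknown) preconditioning matrix $\md$ in Lemmas~\ref{lemma:MProg} and \ref{lemma:scaling_innerloop}. The crucial point is to bound $\kappa(\md)$: I claim $\kappa(\md) = O(K^2)$ where $K \geq \max(\|\mm^{-1}\|_\infty,\|\mm^{-1}\|_1)$. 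Indeed, the natural choice of scalings is $\vr_0 = \mm^{-1}\vones$ and $\vl_0 = \mm^{-\top}\vones$, and since $\mm^{-1}$ is entrywise nonnegative, each entry of $\vr_0$ lies in $[\text{something},\|\mm^{-1}\|_\infty]$ and similarly for $\vl_0$; the ratio $\md_{ii} = \vl_0(i)/\vr_0(i)$ then has condition number at most $\|\mm^{-1}\|_\infty \cdot \|\mm^{-1}\|_1 / (\text{min entries})$, and one shows the min entries are bounded below by a constant (each row/column sum of $\mm^{-1}$ is at least $1$ since $\mm\vones \le \vones$ entrywise after appropriate normalization, or via $\vones^\top \mm^{-1} \vones \geq n$ type bounds), giving $\kappa(\md) \le 18K^2$ to match the stated running time. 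This is the step I expect to be the main obstacle --- pinning down the exact constant and the correct lower bound on the entries of $\mm^{-1}\vones$ requires care, and is presumably where an auxiliary M-matrix fact from Appendix~\ref{sec:mmatrix_facts} is used.

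Given $\gamma = 18K^2 \geq \kappa(\md)$ and the choice $K$ (the algorithm's third argument) playing the role $\tfrac{1}{8\gamma}$ in the $\code{SolvefromScale}$ calls --- here one must reconcile the naming: the algorithm passes $\tfrac{1}{8K}$ as the accuracy, and we need $\tfrac{1}{8K} \le \tfrac{1}{8\gamma}$, i.e.\ $K \ge \gamma = 18K^2$, which is false, so actually the intended reading is that the accuracy parameter fed to $\code{SolvefromScale}$ must be $\tfrac{1}{8\cdot 18 K^2}$ and the theorem's running-time expression with $\frac{1}{216K^3}$ confirms the internal RCDD solves are called at accuracy $\Theta(1/K^3)$ after accounting for the extra $\kappa(\ml),\kappa(\mr) = O(K)$ factors from Lemma~\ref{lemma:MSolve} --- I would apply Lemma~\ref{lemma:scaling_innerloop} to conclude that after $k = O(\log(n\kappa(\md))) = O(\log(nK))$ inner iterations we get $\|\mm_\alpha\vr_\alpha^{(k)}-\vones\|_\infty \le \tfrac12$ and $\|\mm_\alpha^\top\vl_\alpha^{(k)}-\vones\|_\infty \le \tfrac12$ w.h.p. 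Then I would check this suffices for $(\diag(\vl_\alpha^{(k)}),\diag(\vr_\alpha^{(k)}))$ to be a valid scaling of $\mm_\alpha$: the residual bound $\tfrac12$ gives $\mm_\alpha\vr_\alpha^{(k)} \ge \tfrac12\vones > 0$ entrywise and likewise $\mm_\alpha^\top\vl_\alpha^{(k)} > 0$, and since $\mm_\alpha$ is an M-matrix with nonpositive off-diagonals, Lemma~\ref{lemma:MMatrix_RCDD} applies to show $\diag(\vl_\alpha^{(k)})\mm_\alpha\diag(\vr_\alpha^{(k)})$ is RCDD. This re-establishes the invariant for the next value $\alpha/2$, closing the induction; note that $\md$ stays fixed throughout because it depends only on $\mm$, not on $\alpha$, and Lemma~\ref{lem:preconditioner} shows $\md$ works simultaneously for all $\mm_\alpha$.

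Finally I would assemble the running time. The outer loop runs while $\alpha > \epsilon$ starting from $\alpha_0 = 2\max(\|\ma\|_1,\|\ma\|_\infty)$ and halving, giving $O(\log((\|\ma\|_1+\|\ma\|_\infty)/\epsilon))$ outer iterations; each outer iteration runs $O(\log(nK))$ inner iterations by Lemma~\ref{lemma:scaling_innerloop}; each inner iteration costs two calls to $\code{SolvefromScale}$ plus $O(m)$ work, and by Lemma~\ref{lemma:MSolve} each such call costs $O(\mathcal{T}_{solve}(m,n,\kappa(\ml_{2\alpha}\mm_{2\alpha}\mr_{2\alpha}), \cdot))$ at accuracy $\Theta(1/K^3)$. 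It remains to bound $\kappa(\ml_{2\alpha}\mm_{2\alpha}\mr_{2\alpha})$: I would argue this RCDD matrix has condition number $O(K^2)$ as well --- its largest eigenvalue magnitude is controlled since the scalings have bounded entries (the residual bound forces $\vr_\alpha^{(k)}, \vl_\alpha^{(k)}$ to have entries comparable to $\mm_\alpha^{-1}\vones$, hence bounded by $O(\|\mm^{-1}\|_\infty) = O(K)$ above and $\Omega(1)$ below), and symmetrizing an RCDD matrix with such diagonal gives $\kappa \le 18K^2$ matching the theorem statement. Multiplying the three factors and the per-iteration $\mathcal{T}_{solve}(m,n,18K^2,\tfrac{1}{216K^3})$ gives exactly the claimed bound $O(\mathcal{T}_{solve}(m,n,18K^2,\tfrac{1}{216K^3})\log(nK)\log((\|\ma\|_1+\|\ma\|_\infty)/\epsilon))$, since the final output at $\alpha \le \epsilon$ is by construction a scaling making $\ml_\alpha((1+\alpha)\eye - \ma)\mr_\alpha$ RCDD with $\alpha \le \epsilon$, and adding more to the diagonal only helps RCDD-ness so $\ml_\alpha((1+\epsilon)\eye-\ma)\mr_\alpha$ is RCDD too.
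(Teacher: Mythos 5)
Your proposal is correct and follows essentially the same approach as the paper's proof: establish the initial scaling for large $\alpha$, fix $\md = \ml_0 \mr_0^{-1}$ once and for all (since it depends only on $\mm$, not $\alpha$), invoke Lemma~\ref{lem:pdscaling} and Lemma~\ref{lemma:scaling_innerloop} to get $O(\log(nK))$ inner iterations, bound the relevant condition numbers by $O(K^2)$ via Lemmas~\ref{lemma:smallest_scaling}, \ref{lemma:scaling_infnorms}, and~\ref{lemma:scaling_condition}, and multiply by the $O(\log((\|\ma\|_1+\|\ma\|_\infty)/\epsilon))$ outer halvings. You also correctly flagged the apparent mismatch between the accuracy $\tfrac{1}{8K}$ that Algorithm~\ref{alg:MMatrix} passes to $\code{SolvefromScale}$ and the accuracy $\tfrac{1}{8\gamma}$ with $\gamma=\Theta(K^2)$ that Lemma~\ref{lemma:MProg} requires; the only slips are small constant bookkeeping (the paper gets $\kappa(\md)\le 9K^2$, not $18K^2$, while $18K^2$ is the bound on $\kappa(\ml_{2\alpha}\mm_{2\alpha}\mr_{2\alpha})$, and the lower bound $\vr_0 \ge \tfrac12\vones$ follows from $(1+\alpha)\vr_i \ge \vw_i$ in Lemma~\ref{lemma:smallest_scaling} rather than from a row-sum argument), none of which affect the structure of the argument.
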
 
\begin{proof}
    We begin by proving the algorithm's correctness. We observe that at the start of the algorithm, we consider $\mm_\alpha$ with a large value of $\alpha$ to make it RCDD with a trivial scaling by Lemmas \ref{lemma:init_scaling} and \ref{lemma:MMatrix_RCDD}. In addition, at every level of $\alpha$ we hit in the algorithm we use a valid scaling of $\mm_{2\alpha}$ to end up with a pair of vectors $\vl_\alpha, \vr_\alpha$ where $\|\mm_\alpha \vr_\alpha - \vones \|_\infty \leq \frac{1}{2}$ and $\|\mm_\alpha^\top \vl_\alpha - \vones \|_\infty \leq \frac{1}{2}$. Thus, we are able to find $\vl_\alpha, \vr_\alpha$ where for any $i \in [n]$ we have $(\mm_\alpha \vr_\alpha)_i \in [\frac{1}{2}, \frac{3}{2}]$ and $(\mm_\alpha^\top \vl_\alpha)_i \in [\frac{1}{2}, \frac{3}{2}]$. By Lemma~\ref{lemma:MMatrix_RCDD} we therefore see $\ml_\alpha \mm_\alpha \mr_\alpha$ is an RCDD matrix. Therefore when the algorithm terminates, we return a pair of vectors $\vl_{\alpha^*}, \vr_{\alpha^*}$ where $\ml_{\alpha^*} \mm_{\alpha^*} \mr_{\alpha^*}$ is RCDD. But since $\alpha^* \leq \epsilon$ we see $\ml_{\alpha^*} \mm_{\epsilon} \mr_{\alpha^*} = \ml_{\alpha^*} (\mm_{\alpha^*}+ (\epsilon-\alpha^*)\eye) \mr_{\alpha^*}= \ml_{\alpha^*} \mm_{\alpha^*}  \mr_{\alpha^*} + (\epsilon- \alpha^*)\ml_{\alpha^*} \mr_{\alpha^*}$: this is RCDD since $\epsilon - \alpha^* \geq 0$, $\ml_{\alpha^*}$ and $\mr_{\alpha^*}$ are both positive by Lemma~\ref{lemma:MMatrix_monotone}, and the fact that the sum of two RCDD matrices is RCDD.

    We now bound the running time. To do this, we first bound the cost of a single phase of going from scalings of $\mm_{2\alpha}$ to $\mm_{\alpha}$. As alluded to above, by Lemma~\ref{lemma:init_scaling} and the invariant that our algorithm maintains we can assume inductively that $(\mm_{2 \alpha} \vr_{2 \alpha})_i \in [\frac{1}{2}, \frac{3}{2}]$ and $(\mm_{2 \alpha}^\top \vl_{2 \alpha})_i \in [\frac{1}{2}, \frac{3}{2}]$ for any $i$. Define $\vr_0 = \mm_0^{-1} \vones$, $\vl_0 = \mm_0^{-\top} \vones$, and $\md = \ml_0 \mr_0^{-1}$. We observe that $\mm$ is an M-matrix and that $\ml \mm \mr$ is RCDD by Lemma~\ref{lemma:MMatrix_RCDD}: thus by Lemma~\ref{lem:pdscaling} we conclude that $\md^{-1/2} \mm_0^\top \md^{1/2} + \md^{1/2} \mm_0 \md^{-1/2} \succeq \mzero$. By Lemma~\ref{lemma:scaling_condition}, we see that 
    \begin{itemize}
    \item $\kappa(\ml_{2\alpha} \mm_{2\alpha} \mr_{2\alpha}) \leq 18 ||\mm^{-1}||_\infty ||\mm^{-1}||_1 \leq 18 K^2$ 
    \item $\kappa(\ml_{2\alpha}) \leq 3 ||\mm^{-1}||_1 \leq 3 K $
    \item $\kappa(\mr_{2\alpha}) \leq 3 ||\mm^{-1}||_\infty  \leq 3 K$
    \item $\kappa(\md) \leq 9 ||\mm^{-1}||_\infty ||\mm^{-1}||_1 \leq 9 K^2$.
    \end{itemize}
    We note that $9 K^2 \geq \kappa(\md)$, and by Lemma~\ref{lemma:scaling_innerloop} we can compute $\vl_\alpha, \vr_\alpha$ satisfying $(\mm_\alpha \vr_\alpha)_i \in [\frac{1}{2}, \frac{3}{2}]$ and $(\mm_\alpha^\top \vl_\alpha)_i \in [\frac{1}{2}, \frac{3}{2}]$ in $O(\log(n \kappa(\md)))$ iterations each costing 
    \[
    O\Big(\mathcal{T}_{solve}\big(m, n, \kappa(\ml_{2 \alpha} \mm_{2 \alpha} \mr_{2 \alpha}), \frac{1}{72 K^2 \kappa(\mr_{2 \alpha})}\big) \! +\! \mathcal{T}_{solve}\big(m, n, \kappa(\ml_{2 \alpha} \mm_{2 \alpha} \mr_{2 \alpha}), \frac{1}{72 K^2 \kappa(\ml_{2 \alpha})}\big)\Big).
    \]
    time. Plugging in our condition number bounds and simplifying yields that in time
    \[
    O\Big(\mathcal{T}_{solve}\big(m, n, 18 K^2, \frac{1}{216 K^3}\big) \log(n K) \big)
    \]
    we can go from our scalings of $\mm_{2\alpha}$ to scalings of $\mm_\alpha$. Since $\alpha$ is halved only $O(\log(\frac{\|\ma\|_1 + \|\ma\|_{\infty}}{\epsilon}))$ times, the result follows.
\end{proof}

\begin{algorithm} [t]
    \caption {\sc $\code{SolveM}(\ma, \epsilon, K)$} \label{alg:SolveM}    
\begin{algorithmic}[1]
    \State \textbf{Input:} $\ma\in \R^{n\times n}$ is an M-matrix, $K$ some parameter satisfying $K \geq \max \{ \|(\eye - \ma)^{-1}\|_\infty, \|(\eye - \ma)^{-1}\|_1 \}$, $\epsilon >0$ is a real number.
    \State \textbf{output:} Operators P satisfying Theorem~\ref{lemma:MSolve}.
    \State $\ml, mr \gets $ \fullmscale$(\ma, \epsilon, K)$;  
    \State $\ms \gets \ml ((1+\epsilon) \eye -  \ma) \mr$;
    \State $\opZ^{(r)} \gets \code{RCDDSolve}(\ms, \frac{\delta}{\kappa(\ml)})$ generated by Oracle~\ref{oracle:RCDDSolve};
    \State \Return operators $\Pright(\vx) = \mr \opZ^{(r)}(\ml \vx)$
\end{algorithmic}	
\end{algorithm}

Our main result for scaling M-matrices follows as a corollary. We restate it for convenience:
\mscaler*

\begin{proof}
Observe that we can assume $s=1$ without loss of generality. By using Theorem \ref{thm:rcdd_solver} inside of Theorem \ref{thm:scaling}, we can replace the $\mathcal{T}_{solve}$ with $\otilde(m)$: the claim follows immediately.
\end{proof}
We also demonstrate that this scaling procedure also gives an efficient algorithm for solving M-matrices and thereby proves Theorem \ref{thm:msolve}. We restate it as well.
\msolve*
\begin{proof}
Consider Algorithm \ref{alg:SolveM} above. By Theorems \ref{thm:mscale} and \ref{thm:rcdd_solver}, we see that $(\ml, \mr)$ is a valid scaling of $((1+\epsilon) s \eye -  \ma)$ and thus $\code{RCDDSolve}$ generates a solver for $\ms$. The theorem follows by relatively straightforward algebra analogous to Theorem \ref{lemma:MSolve}.
\end{proof}

This concludes the proof of our algorithm for computing RCDD scalings of M-matrices. In many settings the norm of $\mm^{-1}$ is fairly small and can be reasoned about in a relatively straightforward fashion. However, in the next section on computing Perron vectors, we will be computing scalings for M-matrices which are nearly singular. Thus, we will want a bound on our algorithm's running time which is hopefully independent on how ill-conditioned the matrix $\mm$ is. Fortunately, we can obtain such a bound as a straightforward corollary of our main result:
\begin{corollary} \label{corollary:scale_eigenvec}
    Let $\ma \in \R^{\nn}$ be a nonnegative matrix with $m$ nonzero entries and $\rho(\ma) < 1$. Pick $\epsilon \in [0,1]$, and let $\mm_{\epsilon/2} = (1+ \frac{\epsilon}{2})\eye - \ma$. Let $\eigl$ and $\eigr$ denote the left and right top eigenvectors of $\ma$, respectively. Let $\gamma \geq \kappa(\eigl) + \kappa(\eigr)$. Then the algorithm \fullmscale$(\frac{\ma}{1+\epsilon/2}, \epsilon/3, \frac{2 \gamma}{\epsilon})$ generates a pair of diagonal matrices $(\ml, \mr)$ where $\ml((1+\epsilon)\eye - \ma)\mr$ is RCDD with high probability. This procedure runs in time
    \[
    O\left(\mathcal{T}_{solve}\left(m, n, \frac{72 \gamma^2}{\epsilon^2}, \frac{\epsilon^3}{1728 \gamma^3}\right) \log\left(n \frac{\gamma}{\epsilon}\right) \log\left(\frac{\|\ma\|_1 + \|\ma\|_\infty}{\epsilon} \right) \right).
    \]
\end{corollary}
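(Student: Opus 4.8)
The plan is to read Corollary~\ref{corollary:scale_eigenvec} off of Theorem~\ref{thm:scaling} applied to the rescaled matrix $\ma' \defeq \ma/(1+\epsilon/2)$. First come the easy checks: $\ma'$ is nonnegative and irreducible with the same $m$ nonzero entries as $\ma$; $\rho(\ma') = \rho(\ma)/(1+\epsilon/2) < 1$ since $\rho(\ma)<1$; and $\ma'$ has the same left and right Perron vectors $\eigl,\eigr$ as $\ma$. Hence $\mm' \defeq \eye - \ma' = \frac{1}{1+\epsilon/2}\mm_{\epsilon/2}$ is an invertible M-matrix, and Theorem~\ref{thm:scaling} applies to \fullmscale$(\ma',\epsilon/3,2\gamma/\epsilon)$ provided $2\gamma/\epsilon \geq \max(\|(\mm')^{-1}\|_\infty,\|(\mm')^{-1}\|_1)$. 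The remaining work is (a) verifying this bound on $K$, (b) turning the RCDD guarantee for $(1+\epsilon/3)\eye - \ma'$ into one for $(1+\epsilon)\eye-\ma$, and (c) substituting into the running-time bound.

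For (a), the point is that the $\ell_\infty$- and $\ell_1$-operator norms of $\mm_{\epsilon/2}^{-1}$ stay controlled even though $\mm_{\epsilon/2}$ itself can be arbitrarily close to singular (which is exactly what happens when $\rho(\ma)$ is near $1$). As $\mm_{\epsilon/2}$ is an invertible M-matrix, $\mm_{\epsilon/2}^{-1}$ is entrywise nonnegative, so $\|\mm_{\epsilon/2}^{-1}\|_\infty = \|\mm_{\epsilon/2}^{-1}\vones\|_\infty$. Normalizing $\eigr$ so that $\min_i(\eigr)_i = 1$ gives $\vones \leq \eigr \leq \kappa(\eigr)\vones$ entrywise, and since $\mm_{\epsilon/2}\eigr = ((1+\epsilon/2)-\rho(\ma))\eigr$ and $\mm_{\epsilon/2}^{-1}\geq 0$ entrywise, applying $\mm_{\epsilon/2}^{-1}$ to $\vones \leq \eigr$ yields
\[
\mm_{\epsilon/2}^{-1}\vones \;\leq\; \mm_{\epsilon/2}^{-1}\eigr \;=\; \frac{\eigr}{1+\epsilon/2-\rho(\ma)} \;\leq\; \frac{\kappa(\eigr)}{1+\epsilon/2-\rho(\ma)}\,\vones .
\]
Since $\rho(\ma)<1$ forces $1+\epsilon/2-\rho(\ma) > \epsilon/2$, we get $\|\mm_{\epsilon/2}^{-1}\|_\infty < 2\kappa(\eigr)/\epsilon$, and the mirror-image argument with $\mm_{\epsilon/2}^{\top}$ and the left Perron vector $\eigl$ gives $\|\mm_{\epsilon/2}^{-1}\|_1 = \|\mm_{\epsilon/2}^{-\top}\vones\|_\infty < 2\kappa(\eigl)/\epsilon$. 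Combined with $(\mm')^{-1} = (1+\epsilon/2)\mm_{\epsilon/2}^{-1}$ and $\gamma \geq \kappa(\eigl)+\kappa(\eigr)$, this shows $\max(\|(\mm')^{-1}\|_\infty,\|(\mm')^{-1}\|_1) = O(\gamma/\epsilon)$, which is the substance behind the choice $K = 2\gamma/\epsilon$ (landing on that exact constant just requires tracking $1+\epsilon/2\leq 3/2$, $\kappa(\eigl),\kappa(\eigr)\geq 1$, and $\epsilon\leq 1$).

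For (b), \fullmscale$(\ma',\epsilon/3,2\gamma/\epsilon)$ outputs positive diagonal matrices $(\ml,\mr)$ with $\ml[(1+\epsilon/3)\eye - \ma']\mr$ RCDD. Then I would use the identity, valid for $\epsilon\in[0,1]$,
\[
(1+\epsilon)\eye - \ma \;=\; (1+\epsilon/2)\,[(1+\epsilon/3)\eye - \ma'] \;+\; c\,\eye, \qquad c \;=\; (1+\epsilon) - (1+\epsilon/2)(1+\epsilon/3) \;=\; \tfrac{\epsilon(1-\epsilon)}{6} \;\geq\; 0 ,
\]
so that, conjugating by $\ml$ and $\mr$,
\[
\ml[(1+\epsilon)\eye-\ma]\mr \;=\; (1+\epsilon/2)\,\ml[(1+\epsilon/3)\eye-\ma']\mr \;+\; c\,\ml\mr .
\]
The first summand is a positive scalar times an RCDD matrix, hence RCDD; the second is a nonnegative diagonal matrix (as $\ml,\mr>0$ and $c\geq 0$), hence RCDD; and a sum of RCDD matrices is RCDD, so $\ml[(1+\epsilon)\eye-\ma]\mr$ is RCDD with high probability, as claimed.

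For (c), I would feed $K = 2\gamma/\epsilon$, accuracy $\epsilon/3$, and input $\ma'$ (with $\|\ma'\|_1+\|\ma'\|_\infty = (\|\ma\|_1+\|\ma\|_\infty)/(1+\epsilon/2) \leq \|\ma\|_1+\|\ma\|_\infty$) into the running-time bound of Theorem~\ref{thm:scaling}: then $18K^2 = 72\gamma^2/\epsilon^2$, $1/(216K^3) = \epsilon^3/(1728\gamma^3)$, $\log(nK) = O(\log(n\gamma/\epsilon))$, and $\log\frac{\|\ma'\|_1+\|\ma'\|_\infty}{\epsilon/3} = O(\log\frac{\|\ma\|_1+\|\ma\|_\infty}{\epsilon})$, whose product is exactly the stated bound. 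I expect the only genuinely substantive step to be the Perron-vector estimate in (a): the delicate observation is that, although $\mm_{\epsilon/2}$ and $\mm'$ may be arbitrarily ill-conditioned, the hypothesis $\rho(\ma)<1$ keeps the ``shift gap'' $1+\epsilon/2-\rho(\ma)$ above $\epsilon/2 = \Omega(\epsilon)$, and the Perron vectors are precisely the vehicle converting this scalar lower bound into an $O(\gamma/\epsilon)$ bound on the row and column sums of $\mm_{\epsilon/2}^{-1}$; everything else is elementary algebra together with invocations of Theorem~\ref{thm:scaling} and the closure properties of RCDD matrices.
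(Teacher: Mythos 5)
Your proof follows the paper's own route essentially step for step: invoke Theorem~\ref{thm:scaling} on $\ma' = \ma/(1+\epsilon/2)$, certify that $K = 2\gamma/\epsilon$ dominates $\max(\|(\mm')^{-1}\|_\infty,\|(\mm')^{-1}\|_1)$ via the Perron vectors, convert the RCDD scaling of $(1+\epsilon/3)\eye - \ma'$ into one of $(1+\epsilon)\eye - \ma$ using $(1+\epsilon/2)(1+\epsilon/3) \leq 1+\epsilon$, and substitute into the running-time bound. Your step (a) amounts to a self-contained re-derivation of the paper's Lemma~\ref{lemma:mmatrix_to_eigenvector} (which, incidentally, the paper's proof mis-cites as Lemma~\ref{lemma:eigvec_scaling}); your version, applying the nonnegativity of $\mm_{\epsilon/2}^{-1}$ to the inequality $\vones \leq \eigr$ directly, is arguably cleaner than the paper's power-series argument. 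One small caveat worth flagging: you correctly notice that the theorem's hypothesis is on $(\mm')^{-1} = (1+\epsilon/2)\mm_{\epsilon/2}^{-1}$, which picks up a factor $1+\epsilon/2$ beyond what $K = 2\gamma/\epsilon$ strictly covers when $\kappa(\eigr)$ is large — but the paper's own proof elides this same factor (it only bounds $\|\mm_{\epsilon/2}^{-1}\|_\infty$), so this is a shared looseness in the constant, not a gap in your argument relative to the paper's.
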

\begin{proof}
    By the previous theorem, \fullmscale$(\frac{\ma}{1+\epsilon/2}, \epsilon/3, K)$ generates a pair of diagonal matrices $(\ml, \mr)$ where $\ml((1+\frac{\epsilon}{3})\eye - \frac{\ma}{1+\epsilon/2})\mr$ is RCDD. Now, multiplying by a constant will not change tha RCDDness of the resulting matrix, and since $(1+\frac{\epsilon}{3})(1+\frac{\epsilon}{2}) \leq  1 + \epsilon$ for $\epsilon \in [0,1]$ we can conclude that $(\ml, \mr)$ is also a valid scaling for  $\ml((1+\epsilon)\eye - \ma)\mr$ with high probability. We must now bound the running time. We observe by Lemma \ref{lemma:eigvec_scaling} that 
    \[
    \|\mm_{\epsilon/2}^{-1}\|_\infty \leq \frac{2\gamma}{\epsilon} \quad \text{and} \quad \|\mm_{\epsilon/2}^{-1}\|_1 \leq \frac{2\gamma}{\epsilon}.
    \]
    Thus, $\frac{2\gamma}{\epsilon}$ functions as a valid choice of $K$ in our previous theorem-- plugging this into our previous running time and simplifying gives the result.
\end{proof}

A simpler algorithm and analysis can be derived in the case of symmetric M-matrices. We assume access to the following oracle for solving SDD linear systems:
\begin{restatable}[SDD Solver]{oracle}{SDDSolve}
    \label{oracle:SDDSolve}
    Let $\ms \in \R^{n \times n}$ be an SDD matrix, and let $x \in \R^n$ be a vector. Then there is a procedure called $\code{SDDSolve}(\ms, \epsilon)$ which can generate a linear operator $\opZ: \R^n \to \R^n$ where with high probability
    \[
    \| \ms^{-1} \vx - \opZ (\vx)\|_\ms \leq \epsilon \| \ms^{-1} \vx \|_\ms     
    \]
    Further, we can compute and apply $\opZ$ to vectors in $\mathcal{T}_{SDDsolve}(m, n, \epsilon)$ time. 
\end{restatable}
With this, we prove the following result.
\begin{restatable}[Symmetric M-matrix Scaling]{theorem}{symmscaling}
\label{thm:symmscaling}
    Let $\ma \in \R^{\nn}$ be a symmetric nonnegative matrix with $m$ nonzero entries and let $s > 0$ be such that $\rho(\ma) < s$. Let $\mm = s \eye - \ma$. Pick $\epsilon > 0$. Then \fullsymmscale$(\ma, \epsilon, K)$ computes a diagonal matrix $\mv$ where $\mv((1+\epsilon)s\eye - \ma)\mv$ is SDD with high probability. This algorithm runs in time
    \[
    O\Big(\mathcal{T}_{SDDsolve}\Big(m, n, \frac{1}{4} \Big) \log(n \kappa(\mm)) \log\Big(\frac{1}{\epsilon} \Big) \Big).
    \]
\end{restatable}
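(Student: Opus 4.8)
The plan is to specialize the machinery of Theorem~\ref{thm:scaling} and Algorithm~\ref{alg:MMatrix} to the symmetric case, where the analysis collapses substantially. As there, normalize so that $s=1$ (this changes neither which diagonal matrices make $(1+\epsilon)\eye-\ma$ SDD nor the running time) and set $\mm_\alpha \eqdef (1+\alpha)\eye - \ma$. Since $\ma$ is symmetric nonnegative with $\rho(\ma)<1$, Perron--Frobenius gives $\lmax(\ma) = \rho(\ma) < 1$, so every $\mm_\alpha$ is symmetric positive definite and an irreducible M-matrix, and $\mm_\alpha^{-1}$ is entrywise nonnegative with positive diagonal. The algorithm maintains a single diagonal $\mv_\alpha \eqdef \mdiag(\vv_\alpha)$ with $\vv_\alpha \approx \mm_\alpha^{-1}\vones$, computed for a decreasing sequence of $\alpha$'s starting from $\alpha_0 = \Theta(\|\ma\|_\infty)$ (at which $\mm_{\alpha_0}$ is already SDD with the trivial scaling $\mv_{\alpha_0} = \eye$) and halving $\alpha$ down to $\le \epsilon$. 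For correctness: once $\|\mm_\alpha\vv_\alpha - \vones\|_\infty \le \tfrac12$, in particular $\mm_\alpha\vv_\alpha > \vzero$, so $\vv_\alpha = \mm_\alpha^{-1}(\mm_\alpha\vv_\alpha) > \vzero$, and $\mv_\alpha\mm_\alpha\mv_\alpha$ is symmetric with nonpositive off-diagonals and all row/column sums in $[\tfrac12,\tfrac32]$; by Lemma~\ref{lemma:MMatrix_RCDD} it is RCDD, and a symmetric RCDD matrix is SDD. The returned $\mv = \mv_{\alpha^*}$ with $\alpha^*\le\epsilon$ then satisfies that $\mv((1+\epsilon)\eye-\ma)\mv = \mv\mm_{\alpha^*}\mv + (\epsilon-\alpha^*)\mv^2$ is SDD, being a sum of an SDD matrix and a nonnegative diagonal.

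The simplification over the asymmetric argument is that the ``conditioning'' matrix $\md = \ml_0\mr_0^{-1}$ of Lemma~\ref{lem:pdscaling} is here the identity, so the $\kappa(\md)=\Theta(K^2)$ factors vanish. I would first record the spectral sandwich $\tfrac12\mm_{2\alpha} \preceq \mm_\alpha \preceq \mm_{2\alpha}$: the right inequality is $\mm_{2\alpha}-\mm_\alpha = \alpha\eye \succeq \mzero$, and the left is $2\mm_\alpha - \mm_{2\alpha} = \eye - \ma \succeq \mzero$, which holds since $\lmax(\ma)<1$ — this is exactly Lemma~\ref{lem:preconditioner} with $\md=\eye$ (using that $\mm+\mm^\top = 2\mm$ is PD). Consequently $\mm_{2\alpha}^{-1/2}\mm_\alpha\mm_{2\alpha}^{-1/2}$ has spectrum in $[\tfrac12,1]$, so the exact preconditioned Richardson map $\eye - \mm_{2\alpha}^{-1}\mm_\alpha$ has $\mm_{2\alpha}$-norm at most $\tfrac12$, and $\|\mm_{2\alpha}^{-1}\mm_\alpha\|_{\mm_{2\alpha}} \le 1$.

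To handle inexactness, given the scaling $\mv_{2\alpha}$ of $\mm_{2\alpha}$ the matrix $\ms_{2\alpha} = \mv_{2\alpha}\mm_{2\alpha}\mv_{2\alpha}$ is SDD; invoking Oracle~\ref{oracle:SDDSolve} on it at accuracy $\tfrac14$ yields an operator $\opZ$ and the preconditioner $\mP_{2\alpha}(\vx) = \mv_{2\alpha}\opZ(\mv_{2\alpha}\vx)$ for $\mm_\alpha$. The key clean point is that, because $\|\mv_{2\alpha}\vw\|_{\mm_{2\alpha}} = \|\vw\|_{\ms_{2\alpha}}$, the oracle's relative error in the $\ms_{2\alpha}$-norm transfers with \emph{no} condition-number loss to $\|(\mm_{2\alpha}^{-1} - \mP_{2\alpha})\vx\|_{\mm_{2\alpha}} \le \tfrac14\|\mm_{2\alpha}^{-1}\vx\|_{\mm_{2\alpha}}$. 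Combined with the previous paragraph, the inexact iteration $\vx \mapsto \vx - \mP_{2\alpha}(\mm_\alpha\vx - \vones)$ contracts the error $\vx - \mm_\alpha^{-1}\vones$ by at most $\tfrac12 + \tfrac14 = \tfrac34$ per step in the $\mm_{2\alpha}$-norm — the symmetric analogue of Lemmas~\ref{lemma:MProg} and~\ref{lemma:scaling_innerloop}. Starting from $\vzero$, the initial error has $\mm_{2\alpha}$-norm $O(\sqrt{n}\,\kappa(\mm))$, and passing from the $\mm_{2\alpha}$-norm to the desired bound $\|\mm_\alpha\vv_\alpha-\vones\|_\infty \le \tfrac12$ costs a further $\poly(n,\kappa(\mm))$ factor, so $O(\log(n\kappa(\mm)))$ inner iterations suffice, each costing one SDD solve at accuracy $\tfrac14$ plus $O(m)$ work. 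Running this inner loop once per halving of $\alpha$ — which happens $O(\log(1/\epsilon))$ times, absorbing the logarithmic dependence on $\|\ma\|_\infty$ — gives the stated running time, and the high-probability guarantee follows by a union bound over the polylogarithmically many SDD solves.

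The main obstacle is precisely the part that keeps the SDD-solve accuracy at a universal constant rather than at something $\kappa$-dependent: this rests entirely on measuring the preconditioner error in the $\mm_{2\alpha}$-norm via the isometry $\|\mv_{2\alpha}\vw\|_{\mm_{2\alpha}} = \|\vw\|_{\ms_{2\alpha}}$, together with the spectrum-in-$[\tfrac12,1]$ fact that bounds the error-amplification factor $\|\mm_{2\alpha}^{-1}\mm_\alpha\|_{\mm_{2\alpha}}$ by $1$. Everything else — the reduction of solving $\mm_{2\alpha}$ to an SDD solve, the $O(\log(n\kappa(\mm)))$ iteration count, and the final sum-of-SDD-matrices step — is routine and parallels the asymmetric development with $\md$ set to the identity.
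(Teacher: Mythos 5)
Your core contraction argument matches the paper's: you identify the spectral sandwich $\tfrac12\mm_{2\alpha}\preceq\mm_\alpha\preceq\mm_{2\alpha}$ (equivalently $\eye-\ma\succeq\mzero$ and $\alpha\eye\succeq\mzero$), the isometry $\|\mv_{2\alpha}\vw\|_{\mm_{2\alpha}}=\|\vw\|_{\ms_{2\alpha}}$ that keeps the SDD-solve accuracy at a universal constant, the $\tfrac12+\tfrac14=\tfrac34$ contraction in the $\mm_{2\alpha}$-norm, and the SDD conclusion via Lemma~\ref{lemma:MMatrix_RCDD}. These are the paper's Lemmas~\ref{symm:SDD} and~\ref{symm:inductivescale} in slightly different clothing.

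The gap is in initialization, and it is not cosmetic. You start the halving at $\alpha_0=\Theta(\|\ma\|_\infty)$ with the trivial scaling $\mv_{\alpha_0}=\eye$, which is the correct move in the asymmetric Algorithm~\ref{alg:MMatrix}; but then the number of halvings is $\Theta(\log(\|\ma\|_\infty/\epsilon))$, not $O(\log(1/\epsilon))$ as the theorem states. Your claim that the $\log\|\ma\|_\infty$ dependence is ``absorbed'' into $\log(1/\epsilon)$ is not justified: for symmetric $\ma$ with $\rho(\ma)<1$ we only have $\|\ma\|_\infty\le\sqrt{n}\,\rho(\ma)$, so $\log\|\ma\|_\infty$ can be $\Theta(\log n)$, and for a modest fixed $\epsilon$ (say $\epsilon=1/2$) your iteration count is $\Theta(\log n)$ rather than $O(1)$. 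This would yield $O\big(\mathcal{T}_{SDDsolve}\log(n\kappa(\mm))\,(\log n+\log(1/\epsilon))\big)$, which is strictly worse than the theorem's bound when $\log(1/\epsilon)=o(\log n)$. The paper avoids this by a different warmup: it begins at $\alpha=1$, noting that $\mm_1=2\eye-\ma$ satisfies $\eye\preceq\mm_1\preceq 3\eye$ (since $-\eye\prec\ma\prec\eye$ for symmetric $\ma$ with $\rho(\ma)<1$), so the unpreconditioned Richardson step $\vv\leftarrow\vv-\tfrac14(\mm_1\vv-\vones)$ contracts the residual by $\tfrac34$ per step (Lemma~\ref{symm:initscale}) and computes $\vv_1$ in $O(m\log n)$ time with no SDD solves at all. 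Only then does the halving loop run, from $\alpha=1$ down to $\epsilon$, giving the stated $O(\log(1/\epsilon))$ outer iterations. To make your proof match the theorem you should replace your $\alpha_0=\Theta(\|\ma\|_\infty)$ initialization with this well-conditioned warmup at $\alpha=1$.
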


In the interest of brevity, we defer our analysis of this procedure to Section \ref{sec:symmscaling}.
\section{Finding Top Eigenvalue and Eigenvector of Positive Matrices}\label{sec:perron-alg}
In this section we present one of the fundamental results of our paper which is the first nearly linear time algorithm for computing the top eigenvalue and eigenvectors of non-negative matrices characterized by Perron-Frobenius theorem. 
Here we discuss how to use our M-matrix scaling algorithm to come up with nearly linear time algorithms for the Perron problem and M-matrix decision problem.
As we argued previously, these two problems are essentially equivalent. We use the M-matrix scaling algorithm from the previous section to solve the Perron problem, and then we use our reduction combined with some ideas from Algorithm~\ref{alg:MMatrix-overview} to come up with a nearly linear time algorithm for the Perron problem. The following theorem gives the statement that we will prove in this section.

\begin{theorem}[Nearly Linear Time Perron Computation] \label{thm:perron-overview}
	Given a non-negative irreducible matrix $\ma\in \R_{\geq 0}^{n\times n}$, and $\delta>0$, Algorithm~\ref{alg:perron} finds real number $s>0$, and positive vectors $\vl,\vr,\in \R_{>0}^{n}$ such that $(1-\delta)\rho(\ma) < s \leq \rho(\ma)$, $\|(s \eye - \mm) \vr\|_\infty \leq \frac{\delta}{K^2}\norm{\vr}_\infty$, and $\|\vl^\top (s \eye - \mm)\|_\infty \leq \frac{\delta}{K^2}\norm{\vl}_\infty$ in time 
	\[
	O\Big(\mathcal{T}_{solve}\Big(m, n, O(\frac{K^6}{\delta^2}), O(\frac{\delta^3}{K^9}) \Big) \log(n \frac{K}{\delta}) \log\Big(\frac{\|\ma\|_1 + \|\ma\|_\infty}{\epsilon} \Big)  \log(\frac{|s_1-s_2|}{\epsilon\rho(\ma)}) \Big)~,		
	\]	
	where $K = \Theta(\kappa(\eigl)+\kappa(\eigr))$ and $\eigl$, $\eigr$ are the left and right eigenvectors of $\ma$, of eigenvalue  $\rho(\ma)$. 
\end{theorem}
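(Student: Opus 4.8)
The plan is to carry out the two-stage reduction sketched in Section~\ref{sec:perron-alg-overview}, the whole procedure being Algorithm~\ref{alg:perron}. First I would estimate $\rho(\ma)$ to within a $(1-\delta)$ multiplicative factor by binary searching over candidate values $s$, testing each with a capped variant of \fullmscale\ that decides whether $s\eye-\ma$ is an M-matrix; this is the routine \findperronvalue\ of Algorithm~\ref{alg:perronvalue}. Then, using the estimate $s$ returned, I would produce the claimed approximate left and right Perron vectors by one more call to the scaling machinery of Section~\ref{sec:scaling}. For each candidate $s$ it is convenient to work with $\ma/s$: then $\eye-\ma/s$ is a strict M-matrix exactly when $s>\rho(\ma)$, a positive answer of the decision routine is the assertion $\rho(\ma)<(1+\epsilon)s$ for an internal accuracy parameter $\epsilon$, and a negative answer is $\rho(\ma)\geq s$.

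For the decision subroutine I would run the outer $\alpha$-halving loop of \fullmscale\ on $\ma/s$, but abort each invocation of preconditioned Richardson (Algorithm~\ref{alg:prichardson}) after $c\log(nK/\delta)$ iterations. Validity of this cap hinges on the ``potentially unknown'' matrix $\md$ of Lemma~\ref{lemma:scaling_innerloop}: when $s>\rho(\ma)$ we may take $\md_0=\diag(\eigl)\diag(\eigr)^{-1}$, since by Lemma~\ref{lemma:MMatrix_RCDD} and Perron--Frobenius the pair $(\diag(\eigl),\diag(\eigr))$ is a valid scaling of $\eye-\ma/s$ (as $(\eye-\ma/s)\eigr=(1-\rho(\ma)/s)\eigr>0$), and then $\kappa(\md_0)\leq\kappa(\eigl)\kappa(\eigr)=O(K^2)$ \emph{regardless of how close $s$ is to $\rho(\ma)$}. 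Hence whenever $s\eye-\ma$ genuinely is an M-matrix, Lemma~\ref{lemma:scaling_innerloop} forces each inner loop to finish within $O(\log(n\kappa(\md_0)))=O(\log(nK))$ iterations with high probability; conversely, if some inner loop exceeds the cap, then with high probability $s\eye-\ma$ is not an M-matrix, i.e.\ $\rho(\ma)\geq s$. Wrapping this in \findperronvalue: initialize a search interval $[s_1,s_2]$ with $s_1\leq\rho(\ma)\leq s_2$ and $s_2/s_1$ polynomial in $n$ and the entries of $\ma$ using the Collatz--Wielandt formula (Lemma~\ref{lem:collatz}) and $\rho(\ma)\leq\max\{\|\ma\|_1,\|\ma\|_\infty\}$, then bisect $[\log s_1,\log s_2]$ calling the decision routine at each midpoint; after $O(\log(|s_1-s_2|/(\epsilon\rho(\ma))))$ rounds the interval is multiplicatively $(1+O(\delta))$-narrow and the routine outputs $s$ with $(1-\delta)\rho(\ma)<s\leq\rho(\ma)$.

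To recover the vectors, note the bookkeeping of the search also yields a value $s_+$ with $\rho(\ma)<s_+\leq(1+O(\delta))s$, so $s_+\eye-\ma$ is a genuine invertible M-matrix. I would apply Corollary~\ref{corollary:scale_eigenvec} to $\ma/s_+$ with $\gamma=\Theta(\kappa(\eigl)+\kappa(\eigr))=\Theta(K)$ and its accuracy parameter set to $\Theta(\delta/K^2)$, obtaining in $\otilde(m)$ time diagonal scalings making the correspondingly shifted M-matrix RCDD, then use Lemma~\ref{lemma:MSolve} to compute $\vr>0$ approximating $((1+\Theta(\delta/K^2))s_+\eye-\ma)^{-1}\vones$ and symmetrically, from the transpose, $\vl>0$. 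By Lemma~\ref{lem:perron-approx}, $\vr$ is an $O(\delta/K^2)$-approximate right eigenvector of $\ma$ at eigenvalue $\rho(\ma)$; since $s$, $s_+$, $\rho(\ma)$ all agree up to a $1\pm O(\delta)$ factor, combining the approximate-eigenvector bound with a triangle inequality in which the shift is perturbed from $s_+$ to $s$ yields $\|(s\eye-\ma)\vr\|_\infty\leq\frac{\delta}{K^2}\|\vr\|_\infty$ once the hidden constants in the accuracy parameter are chosen appropriately, and likewise $\|\vl^\top(s\eye-\ma)\|_\infty\leq\frac{\delta}{K^2}\|\vl\|_\infty$. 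This eigenvector-recovery call dominates the cost: the $K$-parameter that Corollary~\ref{corollary:scale_eigenvec} hands to \fullmscale\ is $\Theta(K^3/\delta)$, so Theorem~\ref{thm:scaling} contributes the factor $\mathcal{T}_{solve}(m,n,O(K^6/\delta^2),O(\delta^3/K^9))$, and multiplying by the $O(\log(nK/\delta))$ inner-iteration count, the $O(\log((\|\ma\|_1+\|\ma\|_\infty)/\epsilon))$ $\alpha$-halving count, and the $O(\log(|s_1-s_2|/(\epsilon\rho(\ma))))$ bisection count reproduces the claimed running time (the decision calls during the search are no more expensive, their $K$-parameter being only $\Theta(K^2)$).

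The main obstacle is that $K=\Theta(\kappa(\eigl)+\kappa(\eigr))$ is unknown in advance, yet it controls both the Richardson cap $c\log(nK/\delta)$ and every internal accuracy above, and this is aggravated by the deliberate near-singularity of the matrices we scale (the shift $s_+$ is within a $1\pm O(\delta)$ factor of $\rho(\ma)$, so the relevant inverse has norm $\Theta(K^3/\delta)$ rather than $O(1)$; it is precisely Lemma~\ref{lemma:eigvec_scaling}/Corollary~\ref{corollary:scale_eigenvec} that keep the runtime a function of $K$ and $\delta$ only, not of this conditioning). To handle the unknown $K$, I would wrap everything in the doubling scheme of Section~\ref{sec:kfix}: guess $K$, run the search and the eigenvector step, and use a verifiable consistency check --- a purported scaling of the near-singular M-matrix can be tested directly for the RCDD property, and a correct estimate of $\rho(\ma)$ must be consistent across successive guesses --- to detect an underestimate and, if so, double $K$ and restart. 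A naive ``double $K$ until it works'' does not suffice because there is no clean termination signal, so the real work in that scheme is designing this test; once it is in place only $O(\log K)$ rounds occur and the per-round costs telescope, so the total time is dominated by the final round and matches Theorem~\ref{thm:perron-overview}.
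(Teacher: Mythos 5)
Your high-level plan --- a binary search driven by an M-matrix decision subroutine, eigenvector recovery via the M-matrix scaling machinery, and an outer doubling scheme over the unknown conditioning parameter $K$ --- matches the paper's Section~\ref{sec:perron-alg} step for step, and your running-time accounting is correct. Your choice of the hidden matrix $\md_0=\diag(\eigl)\diag(\eigr)^{-1}$ (built directly from the Perron vectors) to cap the preconditioned-Richardson inner loop is a slightly cleaner instantiation than the paper's $\md=\ml_0\mr_0^{-1}$, but both are valid by Lemma~\ref{lemma:MMatrix_RCDD} and Lemma~\ref{lem:pdscaling}, and both give $\kappa(\md)=O(K^2)$.

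The concrete gap is the termination criterion for the doubling scheme; you flag this yourself, but the two checks you float do not close it. Verifying that the returned scaling makes the shifted matrix RCDD only re-certifies an \emph{upper} bound $\rho(\ma)<(1+\epsilon)s$; with $K$ too small, \findperronvalue\ can terminate at an $s$ that vastly overshoots $\rho(\ma)$ and still hand back a perfectly valid RCDD scaling of a now well-conditioned matrix, so this test conveys nothing about tightness. ``Consistency of $s$ across consecutive guesses'' has neither a termination guarantee nor a correctness guarantee: two successive wrong values of $K$ can yield the same wrong estimate. The paper's actual criterion is Lemma~\ref{lem:perron-approx-inverse}: the positive vector $\vr$ you already compute is itself a \emph{Collatz--Wielandt certificate}. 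One checks the entrywise inequality $\diag(\vr)^{-1}\ma\vr\geq(1-\delta)s\vones$; by the Collatz--Wielandt formula (Lemma~\ref{lem:collatz}) this directly \emph{proves} $\rho(\ma)\geq(1-\delta)s$, supplying the lower bound on $\rho(\ma)$ that the RCDD test cannot, and Lemma~\ref{lem:perron-approx-inverse} guarantees the check passes once $K\geq\kappa(\eigl)+\kappa(\eigr)$ and $s$ lies within a factor $1+O(\delta/K^2)$ of $\rho(\ma)$. Combined with a direct residual test that $\vl,\vr$ are $\frac{\delta}{2K^2}$-approximate eigenvectors, this is precisely the acceptance test in Algorithm~\ref{alg:perron}; it simultaneously certifies the scalar estimate and the eigenvector quality in a single round, which is why the doubling scheme terminates correctly. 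Supplying this certificate and its proof is what your proposal is missing.
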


\subsection{The M-matrix Decision Problem}\label{sec:mmatrix-decision}
	Here we show how a slightly modified version of Algorithm~\ref{alg:decision} can be used to solve the M-matrix decision problem. Note that if Algorithm~\ref{alg:decision} succeeds in finding a scaling for $\mm_\epsilon$, we have a proof that $\rho(\ma)< 1+\epsilon$. However, Algorithm~\ref{alg:decision} only works on the premise that $\rho(\ma) < 1$. If this was false, we are no longer able to guarantee that i) our scalings of $\mm_\alpha$ are always positive, ii) the iteration used to compute scalings of $\mm_\alpha$ converges in the claimed number of iterations and iii) we can actually solve the intermediate RCDD systems in the claimed running time. To be more precise, Algorithm \ref{alg:decision} uses some previously obtained RCDD scaling of $\mm_{2 \alpha}$ to find a scaling of $\mm_\alpha$. If $\mm$ was not an M-matrix however (or equivalently if $\rho(A) \geq 1$) we would be unable to use Lemma \ref{lemma:scaling_innerloop} and thus we would have no upper bound on the number of iterations the inner loop of Algorithm \ref{alg:decision} takes to converge. Even worse, we also have no guarantee that $\mm_\alpha$ even admits an RCDD scaling, and thus even if our inner loop did converge we cannot say that it gives a valid scaling. In addition, without the guarantee that $\mm$ is an M-matrix we do not have any bounds on $\|\mm_\alpha^{-1}\|_\infty$ or $\|\mm_\alpha^{-1}\|_1$: thus we are unable to ensure that each RCDD solve runs in the time we need it to to.

	Fortunately, we observe that both of these failure modes for our algorithm can be efficiently checked for. If we want to verify failure mode  i), we can simply check if our scaling vectors $\vr_\alpha$ and $\vl_\alpha$ are not entrywise positive. If we want to check failure mode ii), we can simply count the number of iterations in the inner loop of Algorithm \ref{alg:decision} and ensure that it does not exceed Lemma \ref{lemma:scaling_innerloop}'s guarantee. Finally, to check failure mode iii) we can simply ensure that each RCDD solve does not take too long. This way we can come up with a nearly linear time algorithm which either reports that $\mm$ is not an M-matrix, or provides a proof that $\mm_\epsilon$ is an M-matrix. 
	
	\paragraph{A note on not knowing the condition number of top eigenvectors of $\ma$.} The last paragraph gives the outline of our strategy to solve M-matrix decision problem. However, there is one more slight complication that arises when using the M-matrix scaling algorithm from the previous section. In order to guarantee the correctness of the output of our algorithm, we require knowledge of some parameter $K$ which upper bounds the condition numbers of the left and right eigenvectors of $\ma$. This too can be circumvented, but we will demonstrate our fix for this problem at the end: in the meanwhile, we assume access to some $K \geq \kappa(\eigl) + \kappa(\eigr)$.
	
	Given the above discussion, we are now ready to present our algorithm for the M-matrix decision problem. Algorithm~\ref{alg:decision} gets as input an irreducible non-negative matrix $\ma\in R_{\geq 0}^{n\times n}$, real number $\epsilon > 0$, and $K \in \R_{>0}$ which upper bounds the condition numbers of the left and right eigenvectors of $\ma$. It either reports that $\mm = \eye -\ma$ is not an M-matrix or finds scalings $\vl, \vr \in \R_{\geq 0}^{n}$ such that $\diag(\vl) \mm_\epsilon \diag(\vr)$ is RCDD. Algorithm~\ref{alg:decision} is essentially identical to Algorithm~\ref{alg:MMatrix-overview} with the modifications that we return that $\mm$ is not an M-matrix whenever our inner loop iteration takes too long to converge or the scaling vectors we compute are not positive. The following lemma provides a formal specification of Algorithm~\ref{alg:decision}.

\begin{algorithm} [t]
	\caption {\sc \mdecide$(\ma, \epsilon, \gamma)$} \label{alg:decision}    
	\begin{algorithmic}[1]        
	\State \textbf{Input:} $A\in \R_{\geq 0}^{n\times n}$ is an irreducible matrix with $\rho(A)<1$, and $\epsilon, \gamma >0$ are real numbers.
	\State \textbf{Output:} If $\mm =\eye-\ma$ is an M-matrix then returns $\vl, \vr \in \R_{>0}^n$ such that $\mm_\epsilon \vr >0$ and $\vl^\top \mm_\epsilon > 0$, otherwise returns $\vl,\vr =-\vones, -\vones$.
	\State $\alpha \eq 2 \max\{\|A\|_1, \|A\|_\infty\}$
	\State $\vl_\alpha \gets \frac{1}{\alpha} \vones, \quad \vr_\alpha \gets \frac{1}{\alpha} \vones$
	\While{$\alpha > \epsilon$}
	\State $\alpha \eq \alpha/2, \quad  k \gets 0$
	\State $\vl_{\alpha}^{\, \, (0)} \gets \vzero, \quad \vr_{\alpha}^{\, \, (0)} \gets \vzero$
	\State $\mm_\alpha = (1+\alpha) \eye - A$.
	\State $\mathcal{T}_{RCDD} = O\Big(\mathcal{T}_{solve}\Big(m, n, 18 \gamma^2, \frac{1}{216 \gamma^3}\Big)\Big)$
	\For{$k = 1, 2, ... O(n\gamma) $}
	\State $\Prightk_{2 \alpha}, \Pleftk_{2 \alpha}  \gets  \code{SolvefromScale}(\mm_{2\alpha}, \ml_{2 \alpha}, \mr_{2\alpha}, \delta)$
	\State $\vl_{\alpha}^{\, \,(k+1)} \eq \vl_{\alpha}^{\, \, (k)} - \Pleftk_{2 \alpha} (\mm^{\top}_{\alpha} \vl_{\alpha}^{\, \, (k)} - \vones)$ 
	\State $\vr_{\alpha}^{\, \, (k+1)} \eq \vr_{\alpha}^{\, \, (k)} - \Prightk_{2 \alpha} (\mm_\alpha \vr_{\alpha}^{\, \, (k)} - \vones)$
	\If {applying $\Pleftk_{2 \alpha}$ or $\Prightk_{2 \alpha}$ takes longer than $\mathcal{T}_{RCDD}$ time}
	\State \Return $-\vones,-\vones$ as a sign that $ \mm = \eye - \ma$ is not an M-matrix.
	\EndIf
	\EndFor
	\State $\vl_\alpha, \vr_\alpha \eq \vl_\alpha^{\, \, (k)}, \vr_\alpha^{\, \, (k)}$
	\If{$\frac{3}{2} \vones > \mm_\alpha \vr_\alpha > \frac{1}{2} \vones, \frac{3}{2} \vones > \mm_\alpha^\top \vl_\alpha > \frac{1}{2} \vones, \vr_\alpha > \vzero,$ and $\vl_\alpha > \vzero$}
	\State continue;
	\Else 
	\State \Return $-\vones,-\vones$ as a sign that $\mm = \eye - \ma$ is not an M-matrix.
	\EndIf
	\EndWhile
	\State \Return $\ml_\alpha,\mr_\alpha$ with $\ml_\alpha \mm_{\epsilon} \mr_\alpha$ being RCDD 
	\end{algorithmic}	
\end{algorithm}
 
\begin{lemma}\label{lem:decision}
	Given a non-negative irreducible matrix $\ma\in \R_{\geq 0}^{n\times n}$, and $\epsilon,K \in \R_{>0}$, let $\eigl, \eigr\in \R_{>0}^{n}$ be the left and right Perron vectors of $\ma$, respectively. Given that $K\geq \kappa(\eigl)+\kappa(\eigr)$ the algorithm {\sc \mdecide($\frac{\ma}{1+\epsilon/2}$, $\epsilon/3$, $\frac{2K}{\epsilon}$)} either comes up with a proof that $\mm = \eye-\ma$ is not an M-matrix, or proves $\mm_\epsilon = \mm + \epsilon \eye$ is an M-matrix by finding $\vl, \vr >0$ such that $\ml \mm_\epsilon \mr$ is RCDD. This procedure runs in time
	\[
    O\Big(\mathcal{T}_{solve}\Big(m, n, \frac{72 K^2}{\epsilon^2}, \frac{\epsilon^3}{1728 K^3}\Big) \log(n \frac{K}{\epsilon}) \log\Big(\frac{\|\ma\|_1 + \|\ma\|_\infty}{\epsilon} \Big) \Big).
	\]
\end{lemma}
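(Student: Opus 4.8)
The algorithm here, \mdecide\ (Algorithm~\ref{alg:decision}), is the M-matrix scaling routine of Section~\ref{sec:scaling} (Algorithm~\ref{alg:MMatrix}) run on the normalized matrix $\ma' \eqdef \frac{\ma}{1+\epsilon/2}$ with target accuracy $\epsilon/3$ and with its condition-number parameter instantiated as $\gamma \eqdef \frac{2K}{\epsilon}$, but instrumented with three ``tripwires'' that abort and return $(-\vones,-\vones)$: (a) an intermediate solve produced by $\code{SolvefromScale}$ on $\mm'_{2\alpha}\eqdef(1+2\alpha)\eye-\ma'$ exceeds the budget $\mathcal{T}_{RCDD}=O(\mathcal{T}_{solve}(m,n,18\gamma^2,\tfrac{1}{216\gamma^3}))$; (b) the inner loop fails to reach $\|\mm'_\alpha\vr_\alpha-\vones\|_\infty,\,\|(\mm'_\alpha)^\top\vl_\alpha-\vones\|_\infty\le\tfrac12$ within the iteration bound of Lemma~\ref{lemma:scaling_innerloop} (whose condition-number bound is taken to be $O(\gamma^2)$); or (c) at the end of a level the computed $\vr_\alpha,\vl_\alpha$ are not strictly positive or fail $\tfrac12\vones<\mm'_\alpha\vr_\alpha<\tfrac32\vones$ and $\tfrac12\vones<(\mm'_\alpha)^\top\vl_\alpha<\tfrac32\vones$ entrywise. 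The plan is to prove \emph{soundness} (normal termination certifies $\mm_\epsilon=(1+\epsilon)\eye-\ma$ is an M-matrix) and \emph{completeness} (if $\mm=\eye-\ma$ is an M-matrix, i.e.\ $\rho(\ma)<1$, then no tripwire fires); together these give the stated dichotomy, and the running time drops out of the instrumented thresholds.

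\textbf{Soundness.} Suppose the final level uses $\alpha^\star\le\epsilon/3$ and tripwire (c) did not fire, so $\vl_{\alpha^\star},\vr_{\alpha^\star}>\vzero$ and every coordinate of $\mm'_{\alpha^\star}\vr_{\alpha^\star}$ and $(\mm'_{\alpha^\star})^\top\vl_{\alpha^\star}$ lies in $[\tfrac12,\tfrac32]$. Since $\mm'_{\alpha^\star}$ has non-positive off-diagonal entries, Lemma~\ref{lemma:MMatrix_RCDD} with $\vx=\vl_{\alpha^\star},\vy=\vr_{\alpha^\star}$ shows $\ml_{\alpha^\star}\mm'_{\alpha^\star}\mr_{\alpha^\star}$ is RCDD. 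I would then undo the normalization exactly as in the proof of Corollary~\ref{corollary:scale_eigenvec}: from $(1+\tfrac{\epsilon}{2})\mm'_{\alpha^\star}=(1+\tfrac{\epsilon}{2})(1+\alpha^\star)\eye-\ma$ and $(1+\tfrac{\epsilon}{2})(1+\alpha^\star)\le(1+\tfrac{\epsilon}{2})(1+\tfrac{\epsilon}{3})\le1+\epsilon$ (for $\epsilon\le1$) we get $\mm_\epsilon=(1+\tfrac{\epsilon}{2})\mm'_{\alpha^\star}+c\,\eye$ with $c\ge0$, hence
\[
\ml_{\alpha^\star}\mm_\epsilon\mr_{\alpha^\star}=(1+\tfrac{\epsilon}{2})\,\ml_{\alpha^\star}\mm'_{\alpha^\star}\mr_{\alpha^\star}+c\,\ml_{\alpha^\star}\mr_{\alpha^\star}
\]
is a positive multiple of an RCDD matrix plus a nonnegative diagonal matrix, hence RCDD. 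Therefore $\vl\eqdef\vl_{\alpha^\star}>\vzero$ and $\vr\eqdef\vr_{\alpha^\star}>\vzero$ satisfy the conclusion, and by Lemma~\ref{lemma:MMatrix_RCDD} their existence proves $\mm_\epsilon$ is an M-matrix.

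\textbf{Completeness and running time.} Here I would replay the analysis of Algorithm~\ref{alg:MMatrix} (Theorem~\ref{thm:scaling} and Corollary~\ref{corollary:scale_eigenvec}), checking that each threshold is met. If $\rho(\ma)<1$ then $\rho(\ma')<1$, so $\mm'_\alpha$ is an invertible M-matrix for all $\alpha\ge0$; and since $K\ge\kappa(\eigl)+\kappa(\eigr)$, the same bound used in Corollary~\ref{corollary:scale_eigenvec} (Lemma~\ref{lemma:eigvec_scaling}) shows $\gamma=\tfrac{2K}{\epsilon}$ is a legitimate value of the scaling routine's parameter, i.e.\ $\gamma\ge\max(\|(\eye-\ma')^{-1}\|_\infty,\|(\eye-\ma')^{-1}\|_1)$. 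Consequently the per-level condition-number estimates from the proof of Theorem~\ref{thm:scaling} hold with $K$ there replaced by $\gamma$, and: each RCDD solve finishes within $\mathcal{T}_{RCDD}$, so tripwire (a) does not fire; Lemma~\ref{lemma:scaling_innerloop}, applied with $\md=\ml_0\mr_0^{-1}$ (where $\ml_0,\mr_0$ are the exact scalings of $\eye-\ma'$, for which the hypothesis of Lemma~\ref{lemma:scaling_innerloop} holds by Lemma~\ref{lem:pdscaling}) and $\kappa(\md)=O(\gamma^2)$, guarantees the inner loop meets the error condition in $O(\log(n\gamma))$ iterations with high probability, so tripwire (b) does not fire; and since $\mm'_\alpha$ is an M-matrix its inverse is entrywise nonnegative, in fact entrywise positive by irreducibility, so the error bound forces $\vr_\alpha=(\mm'_\alpha)^{-1}(\mm'_\alpha\vr_\alpha)>\vzero$ and likewise $\vl_\alpha>\vzero$, so tripwire (c) does not fire. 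An induction down the halving sequence --- base case: the initial large shift makes $\mm'_{2\alpha}$ itself RCDD with trivial scaling $\tfrac{1}{2\alpha}\vones$ --- maintains the invariant that $(\ml_{2\alpha},\mr_{2\alpha})$ is a valid scaling of $\mm'_{2\alpha}$ at every level, so the routine returns valid scalings. For the time bound: the outer loop halves $\alpha$ only $O(\log\tfrac{\|\ma\|_1+\|\ma\|_\infty}{\epsilon})$ times, each outer iteration runs $O(\log(nK/\epsilon))$ inner iterations, and each inner iteration is $O(1)$ calls to the RCDD solver on matrices of condition number $O(K^2/\epsilon^2)$ to accuracy $O(\epsilon^3/K^3)$; the product is the claimed bound. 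Since the tripwires cap every solve and loop \emph{unconditionally}, this time bound holds even when $\mm$ is not an M-matrix.

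\textbf{Main obstacle.} The delicate point is the completeness direction: the tripwire thresholds must be chosen in advance purely in terms of $\gamma=\tfrac{2K}{\epsilon}$, whereas the analysis of Algorithm~\ref{alg:MMatrix} is phrased in terms of $\max(\|(\eye-\ma')^{-1}\|_\infty,\|(\eye-\ma')^{-1}\|_1)$, which is not known a priori and is infinite precisely when $\rho(\ma)\ge1$. Making the substitution rigorous requires the bound $\|(\eye-\ma')^{-1}\|=O(\gamma)$, valid exactly because $K\ge\kappa(\eigl)+\kappa(\eigr)$ (Lemma~\ref{lemma:eigvec_scaling}), together with carrying the ``valid scaling of $\mm'_{2\alpha}$'' invariant down every level so that the per-level condition-number estimates in the proof of Theorem~\ref{thm:scaling} apply throughout. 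Once this bookkeeping is in place, the lemma is essentially Corollary~\ref{corollary:scale_eigenvec} with the three aborts bolted on.
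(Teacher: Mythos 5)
Your proposal is correct and takes essentially the same approach as the paper's proof: it checks soundness (a clean pass of the tripwires yields, via Lemma~\ref{lemma:MMatrix_RCDD} and the normalization unwinding of Corollary~\ref{corollary:scale_eigenvec}, a valid RCDD scaling certifying $\mm_\epsilon$ is an M-matrix) and then shows, using Lemma~\ref{lemma:scaling_innerloop} and the condition-number bounds of Lemmas~\ref{lemma:scaling_condition} and \ref{lemma:mmatrix_to_eigenvector} with $\gamma = 2K/\epsilon$, that when $\rho(\ma) < 1$ no tripwire fires, so a tripwire firing is a valid disproof. The only stylistic difference is that you phrase the second half as completeness rather than as the paper's contrapositive (a failed check implies $\eye - \ma$ is not an M-matrix), and you explicitly flag the step of bounding $\max(\|(\eye-\ma')^{-1}\|_\infty,\|(\eye-\ma')^{-1}\|_1)$ in terms of $\gamma$, which the paper's proof leaves to Corollary~\ref{corollary:scale_eigenvec}.
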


\begin{proof}
	This algorithm is just Algorithm \ref{alg:MMatrix} with some added tests to ensure we are not in one of the three failure modes of our algorithm. We note that the choice of constants here matches that of Corollary \ref{corollary:scale_eigenvec}. If our algorithm passes all of the checks on lines 14 and 19, we are able to guarantee that we have a valid scaling for every $\mm_\alpha$ seen during the course of our algorithm. Further, we can guarantee that computing a scaling for a given $\alpha$ value takes only $O(\mathcal{T}_{RCDD} \log(n \frac{K}{\epsilon}))$ time since we upper bound the number of RCDD solves and the time that each solve takes. Thus, since there are $\log\Big(\frac{\|\ma\|_1 + \|\ma\|_\infty}{\epsilon} \Big) \Big)$ values of $\alpha$ considered, we obtain a scaling of $(1 + \epsilon/3) \eye - \frac{\ma}{1+\epsilon/2}$ (and consequently a scaling of $(1+\epsilon)\eye - \ma$) in time
	\[
		O\Big(\mathcal{T}_{solve}\Big(m, n, \frac{72 K^2}{\epsilon^2}, \frac{\epsilon^3}{1728 K^3}\Big) \log(n \frac{K}{\epsilon}) \log\Big(\frac{\|\ma\|_1 + \|\ma\|_\infty}{\epsilon} \Big) \Big).
	\] 
	Now to complete the proof we need to argue that if our algorithm fails one of its internal checks then $\eye - \ma$ is not an M-matrix. If we fail the check on line 19, there are two cases. If we fail either $\frac{3}{2} \vones > \mm_\alpha \vr_\alpha > \frac{1}{2} \vones$ or $\frac{3}{2} \vones > \mm_\alpha^\top \vl_\alpha > \frac{1}{2} \vones$, then we contradict Lemma \ref{lemma:scaling_innerloop}'s bound on the number of iterations needed to get this level of precision assuming $\eye - \ma$ was an M-matrix. If instead we fail $\vr_\alpha > \vzero,$ or $\vl_\alpha > \vzero$ then we have a vector which demonstrates that $\mm_\alpha$ is not monotone: thus $\mm_\alpha$ and consequently $\eye - \ma$ is not an M-matrix.  If we fail the check on line 14, then we contradict the running time guarantee of $\code{SolvefromScale}$ with the condition number guarantees provided by Lemmas \ref{lemma:scaling_condition} and \ref{lemma:mmatrix_to_eigenvector}. Thus again we have a disproof that $\eye - \ma$ is an M-matrix, and thus the result is proved.
	\end{proof}

\subsection{Perron problem}\label{sec:perron-sub}
	Given that we have an algorithm for the M-matrix decision problem we are ready to present our algorithm for the Perron problem. We first present an algorithm that computes an $\epsilon$ approximation of the largest eigenvalue of an irreducible matrix $\ma\in \R_{\geq 0}^{\nn}$, assuming that we know some $K \geq \kappa(\eigl) + \kappa(\eigr)$ where $\eigl$ and $\eigr$ are the left and right top eigenvectors of $\ma$. We then use this as a subroutine to give a general algorithm which finds approximate top eigenvalue and eigenvectors simultaneously without using any assumption about $K$.

\begin{algorithm} [t]
	\caption{This algorithm finds the top eigenvalue of a non-negative irreducible matrix.} \label{alg:perronvalue}
	\begin{algorithmic}[1]
		\Function{\findperronvalue}{$\ma$, $s_1$, $s_2$, $\epsilon$, $K$} 
		\State \textbf{Input:} $\ma\in \R_{\geq 0}^{n\times n}$ is an irreducible matrix, $s_1,s_2, K, \epsilon \in \R_{>0}$.
		\State \textbf{Requirements:} $\epsilon < 1$, $s_1 < \rho(\ma) \leq s_2$, $K \geq \kappa(\eigl)+\kappa(\eigr)$
		\If{$(1+\epsilon/2)s_1 \geq s_2$ }
		\State \Return $s_2$
		\EndIf
		\State $s_m \eq \frac{s_1+s_2}{2}$
		\State $\delta = \frac{1}{2} \frac{s_2 - s_1}{s_2 + s_1}$
		\State $\vl, \vr \eq$ {\sc \mdecide}($\frac{\ma}{s_m(1+\delta/2)}$, $\delta/3$, $\frac{2K}{\delta}$)
		\If{$\vl,\vr < \vzero$}
		\State \Return {\sc \findperronvalue}($\ma$, $s_m$, $s_2$, $\epsilon$, $K$)
		\Else
		\State \Return {\sc \findperronvalue}($\ma$, $s_1$, $(1+\epsilon/2)s_m$, $\epsilon$, $K$)
		\EndIf
		\EndFunction
	\end{algorithmic}	
\end{algorithm}

\begin{lemma}\label{lem:perrron-find}
	Given a non-negative irreducible matrix $\ma\in \R_{\geq 0}^{\nn}$, and $s_1,s_2,\epsilon,K \in \R_{\geq 0}$, let $\eigl, \eigr \in \R_{>0}^n$ be the left and right eigenvectors of $\ma$. Given $K \geq \kappa(\eigl)+\kappa(\eigr)$, $\rho(\ma) \in (s_1,s_2]$, and $\epsilon < 1/2$, {\sc \findperronvalue($\ma$, $s_1$, $s_2$, $\epsilon$, $K$)} returns $s\in \R_{>0}$, such that $\rho(\ma) \leq s < (1+\epsilon)\rho(\ma)$ in time 
	\[
	O\Big(\mathcal{T}_{solve}\Big(m, n, \frac{7200 K^2}{\epsilon^2}, \frac{\epsilon^3}{1728000 K^3}\Big) \log(n \frac{K}{\epsilon}) \log\Big(\frac{\|\ma\|_1 + \|\ma\|_\infty}{\epsilon} \Big)  \log(\frac{|s_1-s_2|}{\epsilon\rho(\ma)}) \Big)	
	\]
\end{lemma}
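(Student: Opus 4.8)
The plan is to induct on the depth of the recursion in {\sc \findperronvalue}, maintaining the invariant $s_1 < \rho(\ma) \le s_2$, and to bound the total running time as the number of recursive calls times the cost of one call. The key observation for a single step is that every call {\sc \mdecide}$\big(\tfrac{\ma}{s_m(1+\delta/2)},\,\delta/3,\,\tfrac{2K}{\delta}\big)$ is an instance of Lemma~\ref{lem:decision} with its parameter ``$\epsilon$'' set to $\delta$: the matrix passed in is a positive scalar multiple of $\ma$, and multiplying a non-negative matrix by a positive constant leaves both Perron vectors — hence $\kappa(\eigl)$ and $\kappa(\eigr)$ — unchanged, so $K \ge \kappa(\eigl)+\kappa(\eigr)$ supplies exactly the condition-number bound Lemma~\ref{lem:decision} needs (consistent with the third argument $\tfrac{2K}{\delta}$). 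Lemma~\ref{lem:decision} then gives a clean dichotomy: either {\sc \mdecide} returns $\vl,\vr>\vzero$ certifying $(1+\delta/3)\eye - \tfrac{\ma}{s_m(1+\delta/2)}$ is an M-matrix, which forces $\rho(\ma) \le (1+\delta/2)(1+\delta/3)\,s_m$; or it returns $(-\vones,-\vones)$ witnessing $\eye - \tfrac{\ma}{s_m(1+\delta/2)}$ is not an M-matrix, which forces $\rho(\ma) > (1+\delta/2)\,s_m > s_m$.

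Next I would check that each branch preserves the invariant. On the ``not an M-matrix'' branch we recurse on $(s_m, s_2]$: the lower endpoint is valid since $\rho(\ma) > s_m$, the upper is inherited. On the ``M-matrix certified'' branch we recurse on $(s_1, (1+\epsilon/2)s_m]$: the lower endpoint is inherited, and the upper is valid provided $(1+\delta/2)(1+\delta/3)\,s_m \le (1+\epsilon/2)\,s_m$. Writing $s_1 = (1-2\delta)s_m$, $s_2 = (1+2\delta)s_m$ and using that we are not at the base case, i.e. $(1+\epsilon/2)s_1 < s_2$ (which forces $\delta > \tfrac{\epsilon}{8+2\epsilon} \ge \tfrac{\epsilon}{10}$), one relates $\delta$ to $\epsilon$ — together with the contraction of the interval ratio once the recursion gets going — precisely enough to verify this inequality. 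Granting the invariant, the base case $(1+\epsilon/2)s_1 \ge s_2$ returns $s = s_2$, and then $\rho(\ma) \le s_2 = s$ while $s = s_2 \le (1+\epsilon/2)s_1 < (1+\epsilon/2)\rho(\ma) < (1+\epsilon)\rho(\ma)$, which is precisely the claimed output.

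For the running time I would bound the recursion depth with the potential $\Phi = \log(s_2/s_1)$, which each branch multiplies by a factor bounded away from $1$ until it falls below $\log(1+\epsilon/2) = \Theta(\epsilon)$, giving depth $O\big(\log\tfrac{|s_1-s_2|}{\epsilon\,\rho(\ma)}\big)$. The cost of a single call comes from Lemma~\ref{lem:decision} with ``$\epsilon$'' $=\delta$; since $\delta \ge \tfrac{\epsilon}{10}$ in the non-base case, every $\tfrac1\delta$ becomes $O(\tfrac1\epsilon)$, so one call costs
\[
O\Big(\mathcal{T}_{solve}\Big(m, n, O\big(\tfrac{K^2}{\epsilon^2}\big), O\big(\tfrac{\epsilon^3}{K^3}\big)\Big)\, \log\big(n\tfrac{K}{\epsilon}\big)\, \log\tfrac{\|\ma\|_1+\|\ma\|_\infty}{\epsilon}\Big),
\]
and multiplying by the depth yields the stated bound, the explicit constants $7200$ and $1728000$ absorbing the $\delta \mapsto \epsilon$ conversion.

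The step I expect to be the main obstacle is the invariant on the ``M-matrix certified'' branch: showing $\rho(\ma) \le (1+\epsilon/2)s_m$ even though {\sc \mdecide} only directly certifies $\rho(\ma) \le (1+\delta/2)(1+\delta/3)s_m$. This demands tracking exactly how the decision gap $\delta$ used inside {\sc \mdecide} compares to the target accuracy $\epsilon$ as the search interval shrinks, and it is the one place where the specific choices — $\delta = \tfrac12\tfrac{s_2-s_1}{s_2+s_1}$, the rescaling by $1+\delta/2$, and the contraction factor $1+\epsilon/2$ — must be played off against one another rather than being harmlessly loose.
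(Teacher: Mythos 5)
Your overall blueprint matches the paper's: induct on the recursion depth, check that each branch preserves the invariant $\rho(\ma)\in(s_1,s_2]$, show the interval shrinks by a constant factor per step, and convert the lower bound $\delta\ge\epsilon/10$ into the stated constants. The base-case analysis and the running-time bookkeeping (including the factor absorbing $7200 = 72\cdot 100$ and $1728000 = 1728\cdot 1000$) are fine.

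The step you flag as your ``main obstacle'' is, however, a genuine gap in your argument and not just a delicate computation. You want $(1+\delta/2)(1+\delta/3)\le 1+\epsilon/2$ so that recursing on $(s_1,(1+\epsilon/2)s_m]$ preserves $\rho(\ma)\le s_2'$. The only constraint on $\delta$ available from the non-base-case condition $(1+\epsilon/2)s_1<s_2$ is a \emph{lower} bound $\delta>\epsilon/(8+2\epsilon)$, which points the wrong way. There is no matching upper bound: with $\delta=\tfrac12\tfrac{s_2-s_1}{s_2+s_1}$, the quantity $\delta$ tends to $1/2$ as $s_1/s_2\to 0$ (e.g.\ the very first call from $\code{Simple\text{-}Perron}$, which passes $s_1=0$), so the inequality you need fails badly for small $\epsilon$. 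Under the $(1+\epsilon/2)s_m$ reading the invariant is simply not preserved, and one can cook up runs where the algorithm would return $s<\rho(\ma)$.

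What the paper actually proves, and how it avoids this, is visible in its interval-contraction computation: it bounds $\max\{|s_m-s_2|,\,|s_1-(1+\delta)s_m|\}$, not $|s_1-(1+\epsilon/2)s_m|$. In other words, the paper's argument is written for a recursion whose Else branch goes to $(s_1,(1+\delta)s_m]$, for which the invariant $\rho(\ma)\le(1+\delta)s_m$ is \emph{immediate} from the M-matrix certificate (no inequality to juggle), and the factor-$\tfrac34$ contraction follows exactly from the choice of $\delta$. The appearance of $(1+\epsilon/2)s_m$ in Algorithm~\ref{alg:perronvalue} is best read as a typo for $(1+\delta)s_m$. So rather than trying to close the $(1+\delta/2)(1+\delta/3)\le 1+\epsilon/2$ gap, you should switch the Else-branch endpoint to $(1+\delta)s_m$; then the invariant is tautological and your contraction and running-time analysis go through essentially as you have sketched them.
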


\begin{proof}
	{\sc \findperronvalue} is a recursive function. We show in each recursive step $|s_1-s_2|$ is decreased by a multiplicative factor, while the requirement that  $\rho(\ma) \in (s_1,s_2]$ is preserved. By invoking {\sc \mdecide}($\frac{\ma}{s_m(1+\delta/2)}$, $\delta/3$, $\frac{2K}{\delta}$) we get one of the following outcomes by Lemma \ref{lem:decision}:
	\begin{itemize}
		\item $\eye - \frac{\ma}{s_m}$ is not an M-matrix: This means $\rho(\ma) \in (s_m, s_2]$.
		\item $(1+\delta) \eye - \frac{\ma}{s_m}$ is an M-matrix: This means $\rho(\ma) \in (s_1, (1+\delta)s_m]$.
	\end{itemize}
	Note that 
	$$
	\max\{|s_m-s_2| , |s_1-(1+\delta)s_m|\} = \max\{\frac{|s_2 - s_1|}{2}, \frac{|s_2 - s_1|}{2} + \delta \frac{s_1 + s_2}{2} \} \leq \frac{3}{4} |s_2 - s_1|
	$$
	since $\delta$ was chosen to be $\frac{1}{2} \frac{s_2 - s_1}{s_2 + s_1}$. Thus in every iteration we cut the size of our interval by a constant factor. To finish, we will need an upper bound on the number of iterations needed before terminating as well as a lower bound on $\delta$ since it is the precision used inside of the {\sc \mdecide} calls. To do this, we make two simple observations. First, if we ever make a call where $s_2 \leq (1+\epsilon) s_1$, then we have $\rho(\ma) \leq s_2 \leq (1+\epsilon) s_1 \leq (1+\epsilon) \rho(\ma)$: we can (and do) return $s_2$ as our approximate spectral radius. Second, if we call the algorithm with $s_2 - s_1 \leq \frac{\epsilon}{2} \rho(\ma)$, we observe that $\rho(\ma) \leq s_2 = s_1 + |s2 - s1| \leq s_1 + \frac{\epsilon}{2} \rho(\ma)$ and thus $\frac{s_1}{1-\epsilon/2} \geq \rho(\ma)$. But now 
	\[
	s_2 \leq s_1 + \frac{\epsilon}{2} \rho(\ma) \leq s_1 + \frac{\epsilon s_1}{2 - \epsilon} \leq (1 + \epsilon) s_1
	\]
	and by the previous observation $s_2$ is an $\epsilon$-approximate largest eigenvalue for $\ma$. With this insight as well as the previous guarantee on the interval size we can immediately bound the number of iterations before terminating by $O(\log(\frac{|s_1-s_2|}{\epsilon\rho(\ma)}))$: after this many steps we can guarantee the intervals we are considering have size at most $\frac{\epsilon}{2} \rho(\ma)$ and we consequently return our upper bound as a valid estimator. 
	
	To bound $\delta$, we note that we can freely assume that $s_2 - s_1 > \frac{\epsilon}{2} \rho(\ma)$: if otherwise we wouldn't even attempt to run {\sc \mdecide}. With this observation, we split our analysis into cases. If $s_2 - s_1 > \frac{1}{5} (s_2 + s_1)$, then we immediately see that 
	\[
	\delta = \frac{1}{2} \frac{s_2 - s_1}{s_2 + s_1} > \frac{1}{10} \frac{s_2 + s_1}{s_2 + s_1} = \frac{1}{10}.
	\]
	If instead $s_2 - s_1 \leq \frac{1}{5} (s_2 + s_1)$, then we observe by rearrangement that $\frac{4}{5} s_2 \leq \frac{6}{5} s_1$ and therefore $s_1 + s_2 \leq \frac{5}{2} s_1 \leq \frac{5}{2} \rho(\ma)$. With this, we get
	\[
	\delta = \frac{1}{2}  \frac{s_2 - s_1}{s_2 + s_1}  > \frac{\epsilon}{4} \frac{\rho(\ma)}{s_2 + s_1} > \frac{\epsilon}{10} \frac{\rho(\ma)}{\rho(\ma)}. = \frac{\epsilon}{10}.
	\] 
	Thus $\delta \geq \frac{\epsilon}{10}$. Since each call to {\sc \mdecide} requires 
	\[
	O\Big(\mathcal{T}_{solve}\Big(m, n, \frac{72 K^2}{\delta^2}, \frac{\delta^3}{1728 K^3}\Big) \log(n \frac{K}{\epsilon}) \log\Big(\frac{\|\ma\|_1 + \|\ma\|_\infty}{\epsilon} \Big) \Big)	
	\]
	time by Lemma \ref{lem:decision}, substituting in our bound for $\delta$ as well as our iteration cost yields the claimed running time.
\end{proof}

\subsubsection{Approximate top eigenvectors}\label{sec:approx-pvector}

So far we have shown how to compute the largest eigenvalue of a non-negative irreducible matrix $\ma$ within a multiplicative error of $\epsilon$. Next, we show how to use our developed machinery to get an approximate eigenvector of $\ma$ corresponding to this largest eigenvalue (See Definition~\ref{def:approx-ev} for approximate eigenvector). 

The following lemma shows that if we are able to approximately solve linear systems in M-matrices, we can generate approximate top eigenvectors. 

\begin{lemma}\label{lem:perron-approx}
	Given a non-negative irreducible matrix $\ma\in \R_{\geq 0}^{n\times n}$ and $\epsilon \in \R_{>0}$, let $\vw \in \R^n$ be a vector where $\vw_i \in [\frac{1}{4}, \frac{7}{4}]$ for all $i$ and define $\vr = [(1+\epsilon)\eye - \frac{\ma}{\rho(\ma)}]^{-1} \vw$. We have $\|(\eye-\frac{\ma}{\rho(\ma)})\vr \|_\infty \leq 8 \epsilon \|\vr\|_\infty$.
\end{lemma}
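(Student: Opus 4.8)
The plan is to reduce everything to a single scalar identity. First I would normalize: set $\mb \defeq \ma / \rho(\ma)$, so that $\rho(\mb) = 1$, and write $\mm_\epsilon \defeq (1+\epsilon)\eye - \mb$, which is an invertible M-matrix since $\rho(\mb) = 1 < 1+\epsilon$. By definition $\vr = \mm_\epsilon^{-1}\vw$. The first observation is purely algebraic: since $\eye - \mb = \mm_\epsilon - \epsilon \eye$, we get
\[
(\eye - \mb)\vr = \mm_\epsilon \vr - \epsilon \vr = \vw - \epsilon\vr,
\]
so $\|(\eye - \mb)\vr\|_\infty \le \|\vw\|_\infty + \epsilon\|\vr\|_\infty \le \tfrac{7}{4} + \epsilon\|\vr\|_\infty$, using the hypothesis $\vw_i \in [\tfrac14,\tfrac74]$. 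Thus it suffices to show $\|\vr\|_\infty \ge \tfrac{1}{4\epsilon}$, because then $\tfrac{7}{4} \le 7\epsilon\|\vr\|_\infty$ and the two terms combine to $8\epsilon\|\vr\|_\infty$.

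The heart of the argument is the lower bound on $\|\vr\|_\infty$, and here I would invoke the left Perron vector $\vv_l > 0$ of $\ma$ (equivalently of $\mb$), which satisfies $\vv_l^\top \mb = \vv_l^\top$ since $\rho(\mb)=1$. Then $\vv_l^\top \mm_\epsilon = (1+\epsilon)\vv_l^\top - \vv_l^\top = \epsilon\,\vv_l^\top$, so $\vv_l^\top \mm_\epsilon^{-1} = \tfrac{1}{\epsilon}\vv_l^\top$ and hence $\vv_l^\top \vr = \tfrac{1}{\epsilon}\vv_l^\top \vw$. Since $\vv_l > 0$ and $\vw_i \ge \tfrac14$, the right side is at least $\tfrac{1}{4\epsilon}\|\vv_l\|_1$; since $\mm_\epsilon^{-1}$ is entrywise nonnegative and $\vw > 0$ we have $\vr > 0$, so the left side is at most $\|\vr\|_\infty \|\vv_l\|_1$. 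Dividing through by $\|\vv_l\|_1 > 0$ gives $\|\vr\|_\infty \ge \tfrac{1}{4\epsilon}$, completing the proof.

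The only genuinely nontrivial step is realizing that one should test $\vr$ against the \emph{left} Perron vector to extract a clean scalar relation (using that $\vv_l$ is an eigenvector of $\mm_\epsilon$ with eigenvalue $\epsilon$); the rest is the identity $\eye - \mb = \mm_\epsilon - \epsilon\eye$ and two elementary norm comparisons. I would also remark that the positivity of $\vr$ (needed for $\vv_l^\top \vr \le \|\vr\|_\infty\|\vv_l\|_1$) follows from the M-matrix fact that $\mm_\epsilon^{-1} \ge \mzero$ entrywise together with $\vw > \vzero$, which is already available in the preliminaries on M-matrices.
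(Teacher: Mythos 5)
Your proof is correct, and the first and last steps (the decomposition $(\eye-\mb)\vr = \vw - \epsilon\vr$ and the final arithmetic $\tfrac{7}{4} + \epsilon\|\vr\|_\infty \le 8\epsilon\|\vr\|_\infty$) coincide with the paper's. Where you diverge is the lower bound $\|\vr\|_\infty \ge \tfrac{1}{4\epsilon}$: the paper expands $\mm_\epsilon^{-1} = \tfrac{1}{1+\epsilon}\sum_{i\ge 0}\bigl(\tfrac{\mb}{1+\epsilon}\bigr)^i$, bounds $\vw$ from below entrywise by $\tfrac{1}{4}\vv_r/\|\vv_r\|_\infty$ with $\vv_r$ the \emph{right} Perron vector, and sums the resulting geometric series to bound individual entries of $\vr$; you instead test $\vr$ against the \emph{left} Perron vector, using that $\vv_l^\top \mm_\epsilon = \epsilon\vv_l^\top$ to get $\vv_l^\top\vr = \tfrac1\epsilon\vv_l^\top\vw \ge \tfrac{1}{4\epsilon}\|\vv_l\|_1$ and then H\"older (or positivity) to convert this to a bound on $\|\vr\|_\infty$. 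Your route is slightly shorter and avoids the series manipulation entirely, at the price of invoking the left eigenvector relation explicitly; the paper's route is more self-contained in that it works directly with the entries of $\vr$ and the same Neumann series that appears elsewhere in the analysis. Both are clean and both buy exactly the same constant. One small remark: positivity of $\vr$ is not strictly necessary for $\vv_l^\top\vr \le \|\vr\|_\infty\|\vv_l\|_1$ — that inequality is just H\"older, since $\vv_l > 0$ — though you do need $\vv_l^\top \vr > 0$, which your lower bound already delivers.
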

\begin{proof}
By the triangle inequality, we have
\begin{align*}
\left\lVert(\eye-\frac{\ma}{\rho(\ma)})\vr \right\rVert_\infty  &=  \left\lVert \left[\left((1+\epsilon)\eye-\frac{\ma}{\rho(\ma)}\right) - \epsilon \eye\right] \vr \right\rVert_\infty \\
&\leq \left\lVert((1+\epsilon)\eye-\frac{\ma}{\rho(\ma)})\vr \right\rVert_\infty \! + \! \left\lVert\epsilon \vr \right\rVert_\infty \leq \left\lVert \vw \right\rVert_{\infty} + \epsilon\left\lVert\vr \right\rVert_\infty \leq \frac{7}{4} + \epsilon \left\lVert \vr \right\rVert_\infty.
\end{align*}
Now, we note that 
\[
	\|\vr\|_\infty = \|[(1+\epsilon)\eye - \frac{\ma}{\rho(\ma)}]^{-1} \vw \|_\infty = \|\frac{1}{1+\epsilon} \sum_{i=0}^{\infty} \left(\frac{\ma}{(1+\epsilon) \rho(\ma)}\right)^i \vw\|_\infty.
\]
Let $\vv$ be the top right eigenvector for $\ma$: note that this is a positive vector by Perron-Frobenius. As $\ma^i$ is nonnegative for any integer $i \geq 0$ and since $\vw_i \geq \frac{1}{4} \geq \frac{1}{4} \frac{\vv_i}{\|\vv\|_\infty}$ for any $i$, we obtain
\begin{align*}
	\|\vr\|_\infty &= \|\frac{1}{1+\epsilon} \sum_{i=0}^{\infty} \left(\frac{\ma}{(1+\epsilon) \rho(\ma)}\right)^i \vw\|_\infty \\
	&\geq \frac{1}{4}  \|\frac{1}{1+\epsilon} \sum_{i=0}^{\infty} \left(\frac{\ma}{(1+\epsilon) \rho(\ma)}\right)^i \frac{\vv}{\|\vv\|_\infty}\|_\infty \\
	&= \frac{1}{4 \| \vv \|_\infty} \|\frac{1}{1+\epsilon} \sum_{i=0}^{\infty} \left(\frac{\rho(\ma)}{(1+\epsilon) \rho(\ma)}\right)^i \vv \|_\infty \\
	&= \frac{1}{4 \| \vv \|_\infty} \frac{1}{1+\epsilon} \sum_{i=0}^{\infty} \left(\frac{1}{1+\epsilon}\right)^i \| \vv \|_\infty = \frac{1}{4 \epsilon}.
\end{align*}
Combining these two facts yields 
$$
\left\lVert(\eye-\frac{\ma}{\rho(\ma)})\vr \right\rVert_\infty \leq \frac{7}{4}+ \epsilon\left\lVert\vr \right\rVert_\infty \leq 8 \epsilon \norm{\vr}_\infty.
$$	
\end{proof}

This fact inspires an algorithm for computing approximate Perron vectors. We first find $s \in (\rho(\ma), (1+\epsilon)\rho(\ma)]$ by using {\sc \findperronvalue} and next use {\sc \mscale} to compute  $\vl \approx {[(1+\epsilon)\eye-\frac{\ma}{s}]^{-1}}^\top \vones$ and $\vr \approx [(1+\epsilon)\eye-\frac{\ma}{s}]^{-1} \vones$. Note that by Lemma~\ref{lem:perron-approx}, $\vl$ and $\vr$ are approximate eigenvectors corresponding to $\rho(\ma)$. We formally state this in the following algorithm and thoerem.

\newcommand{\simpleperron}{$\code{Simple-Perron}$}
\begin{algorithm}[h]
	\caption{Simple Algorithm for computing the approximate largest eigenvalue and eigenvectors of a non-negative irreducible matrix $\ma$ when a bound on the condition number of the left and right eigenvectors are given.} \label{alg:perron-simple}
	\begin{algorithmic}[1]	
		\Function{\simpleperron}{$\ma$, $\epsilon$, $K$}
			\State \textbf{Input:} $\ma\in \R_{\geq 0}^{n\times n}$ is an irreducible matrix, and $\epsilon,K >0$ is a real number.
			\State \textbf{Output:} $\langle s, \vl, \vr \rangle$ 
			such that $1< \frac{s}{\rho(\ma)} \leq 1+\epsilon$, $\norm{\vl^\top(\eye-\frac{\ma}{s})}_\infty <4\epsilon$, and $\norm{(\eye-\frac{\ma}{s})\vr}_\infty <4\epsilon$
			\State $s \eq $ {\sc \findperronvalue}($\ma$, 0, $||A||_\infty$, $\epsilon$,  $K$)
			\State $\vl, \vr \eq$ {\sc \mscale($\frac{\ma}{(1+\epsilon/2)s}$, $\epsilon/3$, $\frac{2K}{\epsilon}$)} 
			
			\State \Return $\langle s, \vl, \vr \rangle$
		\EndFunction
	\end{algorithmic}
\end{algorithm}

\begin{lemma}
	Let $\ma$ be an irreducible nonnegative matrix. Let $1/4 > \epsilon > 0$ be a parameter, and let $K \geq \kappa(\eigl) + \kappa(\eigr)$. Then with high probability the procedure {\sc \simpleperron}($\ma, \epsilon, K$) vectors $\vl, \vr$ satisfying $\norm{\vl^\top(\eye-\frac{\ma}{s})}_\infty < O(\epsilon) \norm{\vl}$ and $\norm{(\eye-\frac{\ma}{s})\vr}_\infty < O(\epsilon) \norm{\vr}$. This procedure runs in time 
	\[
		O\Big(\mathcal{T}_{solve}\Big(m, n, O(\frac{K^2}{\epsilon^2}), O(\frac{\epsilon^3}{K^3}) \Big) \log(n \frac{K}{\\epsilon}) \log\Big(\frac{\|\ma\|_1 + \|\ma\|_\infty}{\epsilon} \Big)  \log(\frac{|s_1-s_2|}{\epsilon\rho(\ma)}) \Big)		
	\]
	\end{lemma}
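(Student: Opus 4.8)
The plan is to analyze the two subroutine calls in \simpleperron\ in order, inheriting correctness and running time from Lemma~\ref{lem:perrron-find} and Corollary~\ref{corollary:scale_eigenvec}, and then to show that the diagonal scalings produced by \fullmscale\ are themselves good approximate eigenvectors. First I would check the hypotheses of Lemma~\ref{lem:perrron-find} for the call \findperronvalue$(\ma,0,\|\ma\|_\infty,\epsilon,K)$: $\epsilon<1/2$ holds, $K\ge\kappa(\eigl)+\kappa(\eigr)$ is assumed, and $\rho(\ma)\in(0,\|\ma\|_\infty]$ since the Perron--Frobenius theorem gives $\rho(\ma)>0$ for irreducible nonnegative $\ma$ while $\rho(\ma)\le\|\ma\|_\infty$ is the standard spectral-radius bound. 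Hence this call returns, with high probability in $n$, a value $s$ with $\rho(\ma)\le s<(1+\epsilon)\rho(\ma)$ (I will assume $s>\rho(\ma)$; the boundary case needs only a cosmetic change). Since $s\ge\rho(\ma)$ the matrix $\ma/\big((1+\epsilon/2)s\big)$ has spectral radius at most $1/(1+\epsilon/2)<1$, and $\ma/s$ shares the top eigenvectors $\eigl,\eigr$ of $\ma$ (rescaling a matrix does not move its eigenvectors), so $K$ still bounds the sum of their condition numbers. Therefore Corollary~\ref{corollary:scale_eigenvec}, applied to $\ma/s$ with parameter $\epsilon$ and $\gamma=K$, shows that \fullmscale$(\ma/((1+\epsilon/2)s),\,\epsilon/3,\,2K/\epsilon)$ returns, with high probability, diagonal scalings $(\ml,\mr)$ (equivalently the vectors $\vl,\vr$) making $\ml\big((1+\epsilon)\eye-\ma/s\big)\mr$ RCDD, within the running time stated there.

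The substance of the proof is converting these scalings into approximate eigenvectors. Here I would open up Algorithm~\ref{alg:MMatrix}: by the invariant used in the proof of Theorem~\ref{thm:scaling}, the returned $\vr$ is the vector $\vr_{\alpha^*}$ for the final parameter $\alpha^*\le\epsilon/3$ and satisfies $\big[(1+\alpha^*)\eye-\ma/((1+\epsilon/2)s)\big]\vr\in[\tfrac12,\tfrac32]^n$ coordinatewise, and $\vl$ the transposed bound. Multiplying by $(1+\epsilon/2)$, setting $\beta:=(1+\epsilon/2)(1+\alpha^*)-1$ (so $\tfrac{\epsilon}{2}\le\beta\le\epsilon$ when $\epsilon<1/4$) and $\vw:=\big[(1+\beta)\eye-\ma/s\big]\vr$, one gets $\vw\in[\tfrac12,\tfrac74]^n$, i.e.\ $\vr=\big[(1+\beta)\eye-\ma/s\big]^{-1}\vw$ with $\vw$ in the range required by Lemma~\ref{lem:perron-approx}. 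I would then rerun that lemma's proof with $\ma/s$ in place of $\ma/\rho(\ma)$ and $\beta$ in place of $\epsilon$: the Neumann expansion $\vr=\tfrac{1}{1+\beta}\sum_{i\ge0}\big(\tfrac{\ma}{(1+\beta)s}\big)^i\vw$ converges because $\rho(\ma)\le s<(1+\beta)s$, and comparing $\vw$ entrywise with $\tfrac14\eigr/\|\eigr\|_\infty$ together with $\ma^i\eigr=\rho(\ma)^i\eigr$ gives
\[
\|\vr\|_\infty\ \ge\ \frac{1}{4}\cdot\frac{s}{(1+\beta)s-\rho(\ma)}\ \ge\ \frac{1}{4}\cdot\frac{s}{\beta s+(s-\rho(\ma))}\ >\ \frac{1}{8\epsilon},
\]
using $\beta\le\epsilon$ and $s-\rho(\ma)<\epsilon\rho(\ma)\le\epsilon s$. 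Since $(\eye-\ma/s)\vr=\vw-\beta\vr$, the triangle inequality then yields $\|(\eye-\ma/s)\vr\|_\infty\le\tfrac74+\epsilon\|\vr\|_\infty<15\,\epsilon\|\vr\|_\infty$; the identical computation with $\ma^\top,\eigl,\vl$ gives $\|\vl^\top(\eye-\ma/s)\|_\infty=\|(\eye-\ma^\top/s)\vl\|_\infty<15\,\epsilon\|\vl\|_\infty$, which is the claimed $O(\epsilon)$ bound (with $\|\cdot\|$ read as $\|\cdot\|_\infty$), and $1<s/\rho(\ma)\le1+\epsilon$ was recorded above.

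For the running time I would simply add the two calls' costs. By Lemma~\ref{lem:perrron-find} with $s_1=0$, $s_2=\|\ma\|_\infty$, the \findperronvalue\ call costs
\[
O\Big(\mathcal{T}_{solve}\Big(m,n,O\big(\tfrac{K^2}{\epsilon^2}\big),O\big(\tfrac{\epsilon^3}{K^3}\big)\Big)\log\big(n\tfrac{K}{\epsilon}\big)\log\Big(\tfrac{\|\ma\|_1+\|\ma\|_\infty}{\epsilon}\Big)\log\Big(\tfrac{|s_1-s_2|}{\epsilon\rho(\ma)}\Big)\Big),
\]
while Corollary~\ref{corollary:scale_eigenvec} gives $O\big(\mathcal{T}_{solve}(m,n,O(K^2/\epsilon^2),O(\epsilon^3/K^3))\log(nK/\epsilon)\log((\|\ma\|_1+\|\ma\|_\infty)/\epsilon)\big)$ for the \fullmscale\ call, which is no larger once the extra $\log(1/s)\le\log(1/\rho(\ma))$ in its argument-size logarithm is folded into the $\log(|s_1-s_2|/(\epsilon\rho(\ma)))$ factor. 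Hence the total is dominated by the first expression, as claimed, and the overall high-probability guarantee follows by a union bound over the polylogarithmically many internal RCDD solves.

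The step I expect to be the crux is the second paragraph. \fullmscale\ only promises ``some'' RCDD scaling, so one must look inside it to recognize that the returned vectors are the concrete near-solutions of $\big[(1+\alpha^*)\eye-\ma/((1+\epsilon/2)s)\big]\vr\approx\vones$, and then redo the Perron-approximation estimate of Lemma~\ref{lem:perron-approx} with the \emph{approximate} eigenvalue $s$ in place of $\rho(\ma)$, carefully tracking the constants that accumulate through the $\epsilon\mapsto\epsilon/3\mapsto\beta$ rescalings so that the leftover additive error (bounded by $\tfrac74$) stays within an $O(\epsilon)$ factor of $\|\vr\|_\infty$ --- which is precisely why the matching $\Omega(1/\epsilon)$ lower bound on $\|\vr\|_\infty$, obtained from the true Perron vector, is essential.
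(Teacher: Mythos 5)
Your proof is correct and takes essentially the same route as the paper: call \findperronvalue\ to bracket $\rho(\ma)$, call \fullmscale\ to produce scalings, observe that the returned scaling vectors are the Richardson-iteration approximate solutions of $\mm_{\alpha^*}\vr\approx\vones$, and feed them through the argument of Lemma~\ref{lem:perron-approx}. The only difference is cosmetic: the paper rescales so as to invoke Lemma~\ref{lem:perron-approx} as stated (in terms of $\ma/\rho(\ma)$, with $\alpha=s/\rho(\ma)$) and tacitly passes to $\ma/s$ afterward, whereas you redo the Neumann-series estimate directly with $s$ in place of $\rho(\ma)$ and track the $\alpha^*\le\epsilon/3$ slack a bit more explicitly.
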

\begin{proof}
	By Lemma \ref{lem:perrron-find}, we have that $s \in (\rho(\ma), (1+\epsilon) \rho(\ma))$. Thus $\eye-\frac{\ma}{s}$ is an M-matrix. Thus, {\sc \mscale} will succeed in finding scaling vectors $\vl_\epsilon$ and $\vr_\epsilon$ which satisfy $||((1+\epsilon/3)\eye-\frac{\ma}{(1+\epsilon/2)s}) \vr_\epsilon - \vones||_\infty \leq 1/2$ and $||((1+\epsilon/3)\eye-\frac{\ma}{(1+\epsilon/2)s})^\top \vl_\epsilon - \vones||_\infty \leq 1/2$. Define $\alpha = \frac{s}{\rho(\ma)}$. By rearranging these two equations we get
	\[
		||((1 + \epsilon) \alpha \eye - \frac{\ma}{\rho(\ma)}) \vr_\epsilon - \vones||_\infty \leq \frac{1+\epsilon/3}{2} \alpha 	
\]		
and
\[		
		||((1+\epsilon) \alpha \eye-\frac{\ma}{\rho(\ma)})^\top \vl_\epsilon - \vones||_\infty \leq \frac{1+\epsilon/3}{2} \alpha.
	\]
	Since $(1+\epsilon)\alpha) \leq 1+ 2\epsilon$ and $\frac{1+\epsilon/3}{2} \leq \frac{3}{4}$, applying Lemma \ref{lem:perron-approx} implies the result. The running time follows from Lemma \ref{lem:decision}.
\end{proof}

\subsection{Fix for not knowing $K$}\label{sec:kfix}

All the algorithms we have presented for the M-matrix decision problem and the Perron problem depend on knowing an upper bound on the condition number of the left and right eigenvectors of $\ma$, namely $K$. Unfortunately we do not know this value in advance. To fix this we use a doubling technique and start with an initial guess $K=1$ and double it until we get the appropriate results. However, we run into a problem when naively implementing this strategy. For example, consider the M-matrix decision problem and function {\sc \mdecide} (Algorithm~\ref{alg:decision}). If {\sc \mdecide($\ma$, $\epsilon$, $K$)} fails (i.e. reports $\eye-\ma$ is not an M-matrix) it might be caused either by the fact that $\rho(\ma)>1$ or our guess for $K$ is wrong. Therefore, doubling does not give a fix in that case. But we are able to show that doubling gives us a fix for the Perron problem which in turn can be used to solve the M-matrix decision problem (without dependence on knowing $K$ in advance).

The main trick we use in our fix is represented in Lemma~\ref{lem:perron-approx-inverse} below. If we use doubling combined with {\sc \findperronvalue}, we are guaranteed that after $\log(\kappa(\eigr)+\kappa(\eigl))$ guesses for $K$, for any positive $\epsilon$ we get $\epsilon$ approximation to $\rho(\ma)$, but we cannot verify that as we do not know $\rho(A)$. Lemma~\ref{lem:perron-approx-inverse} gives us a tool to do that.
Let $\delta >0$ be any positive number less than one. Lemma~\ref{lem:perron-approx-inverse} suggests that if we had $s$ where 
\begin{equation}\label{eq:fixk}
\rho(\ma) \leq s\leq (1+\frac{\delta}{8k(\eigr)^2})\rho(\ma),
\end{equation}
 then we can construct a proof that  $\rho(\ma) \geq (1-\delta) s$. 
 This gives us a convenient way to verify whether the approximation to the eigenvalue generated by {\sc \findperronvalue} is correct, and hence a way to determine if our choice of $K$ was large enough. Note that no matter the choice of $K$ our algorithm returns some $s > \rho(\ma)$.
.

\begin{lemma}\label{lem:perron-approx-inverse}
	Given a non-negative irreducible matrix $\ma\in \R_{\geq 0}^{n\times n}$ and a real number $0<\epsilon <\frac{1}{8\kappa(\eigr)^2}$ where $\eigr\in \R_{>0}^n$ is the right eigenvector of $\ma$; Let $s$ be any real number such that $\rho(\ma) \leq s < (1+\epsilon)\rho(\ma)$, $\vr = [(1+\epsilon)\eye - \frac{\ma}{s}]^{-1}\vones$, and $\vv = \frac{\vr}{\|\vr\|_\infty}$, then $\mv^{-1} \frac{\ma}{s} \vv \geq (1-8\epsilon \kappa(\eigr)^2) \vones$.
\end{lemma}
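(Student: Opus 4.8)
The plan is to relate the vector $\vr = [(1+\epsilon)\eye - \frac{\ma}{s}]^{-1}\vones$ to the true right Perron vector $\eigr$ of $\ma$, using the fact that $s$ is a very tight approximation to $\rho(\ma)$. First I would write out the Neumann series
\[
\vr = \frac{1}{1+\epsilon}\sum_{i=0}^\infty \left(\frac{\ma}{(1+\epsilon)s}\right)^i \vones,
\]
which converges since $\rho(\ma) \le s < (1+\epsilon)s$. The key point is that $\frac{\ma}{s}$ is entrywise nonnegative, so applying $\frac{\ma}{s}$ to $\vr$ and comparing with $\vr$ itself reduces to controlling how the series behaves under one application of $\frac{\ma}{s}$. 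The target inequality $\mv^{-1}\frac{\ma}{s}\vv \ge (1 - 8\epsilon\kappa(\eigr)^2)\vones$ is equivalent, entrywise, to $(\frac{\ma}{s}\vr)_j \ge (1 - 8\epsilon\kappa(\eigr)^2)\vr_j$ for every $j$, so I would work directly with $\frac{\ma}{s}\vr$ versus $\vr$.

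**Sandwiching $\vr$ by the Perron vector.**

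The main step is to show that $\vr$ is \emph{close} (up to the factor $\epsilon\kappa(\eigr)^2$) to a positive multiple of $\eigr$. Since $\ma \eigr = \rho(\ma)\eigr$, one gets exactly
\[
\left[(1+\epsilon)\eye - \tfrac{\ma}{s}\right]^{-1}\eigr = \frac{1}{(1+\epsilon) - \rho(\ma)/s}\,\eigr,
\]
and because $\rho(\ma) \le s < (1+\epsilon)\rho(\ma)$ we have $(1+\epsilon) - \rho(\ma)/s \in (\epsilon, 1+\epsilon)$, so this scalar is $\Theta(1/\epsilon)$. Now I would compare $\vones$ with $\eigr$: normalizing $\eigr$ so that $\|\eigr\|_\infty = 1$, the condition number bound $\kappa(\eigr) = \|\eigr\|_2\|\eigr^{-1}\|_2$ (or the appropriate max/min ratio) gives $\frac{1}{\kappa(\eigr)}\vones \le \eigr \le \vones$ entrywise, hence also a two-sided bound between $\vones$ and $\eigr$. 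Applying the entrywise-monotone operator $[(1+\epsilon)\eye - \frac{\ma}{s}]^{-1}$ (monotone because its Neumann series has nonnegative terms) to this sandwich converts it into a sandwich for $\vr$ between scalar multiples of $\eigr$, with the multiplicative slack controlled by $\kappa(\eigr)$. This is exactly the kind of argument used in the proof of Lemma~\ref{lem:perron-approx}, and I would mimic that structure. From $\frac{\ma}{s}\eigr = \frac{\rho(\ma)}{s}\eigr$ and $\frac{\rho(\ma)}{s} \ge \frac{1}{1+\epsilon} \ge 1 - \epsilon$, the desired inequality for $\eigr$ holds with room to spare; transferring it to $\vr$ costs a factor that I expect to be bounded by $1 - O(\epsilon\kappa(\eigr)^2)$ once the two-sided $\eigr$-vs-$\vones$ bounds are applied on both numerator and denominator, accounting for two factors of $\kappa(\eigr)$ (one from each side of the $\mv^{-1}(\cdot)\vv$ conjugation).

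**The main obstacle.**

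The delicate part is bookkeeping the constant: the statement wants precisely $8\epsilon\kappa(\eigr)^2$, and the hypothesis $\epsilon < \frac{1}{8\kappa(\eigr)^2}$ is exactly what guarantees the right-hand side stays positive, so the constants must be tracked carefully rather than absorbed into $O(\cdot)$. The tricky inequality chain is: lower-bounding $(\frac{\ma}{s}\vr)_j$ requires a \emph{lower} bound on $\vr$ in terms of $\eigr$ near coordinate $j$, while the comparison $\vr_j$ on the right-hand side needs an \emph{upper} bound on $\vr_j$ in terms of $\eigr_j$ — and it is the interplay of these opposite-direction bounds, each losing a factor up to $\kappa(\eigr)$, that produces the $\kappa(\eigr)^2$. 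I would handle this by first establishing clean entrywise inequalities $c_1\,\eigr \le \vr \le c_2\,\eigr$ with $c_2/c_1 \le 1 + O(\epsilon\kappa(\eigr)^2)$ (using $\|\eigr^{-1}\|_\infty\|\eigr\|_\infty \le \kappa(\eigr)$-type bounds and the $\Theta(1/\epsilon)$ scalar above), then combine $(\frac{\ma}{s}\vr)_j \ge c_1(\frac{\ma}{s}\eigr)_j = c_1\frac{\rho(\ma)}{s}\eigr_j \ge c_1(1-\epsilon)\eigr_j$ with $\vr_j \le c_2\eigr_j$ to get $(\frac{\ma}{s}\vr)_j \ge \frac{c_1}{c_2}(1-\epsilon)\vr_j \ge (1 - 8\epsilon\kappa(\eigr)^2)\vr_j$ for $\epsilon$ in the stated range. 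Dividing through by $\vr_j = \|\vr\|_\infty\,\vv_j$ and rewriting in matrix form gives $\mv^{-1}\frac{\ma}{s}\vv \ge (1 - 8\epsilon\kappa(\eigr)^2)\vones$ as claimed.
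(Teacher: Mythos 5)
Your plan has a genuine gap in the step where you claim you can establish a two-sided entrywise sandwich $c_1\,\eigr \le \vr \le c_2\,\eigr$ with $c_2/c_1 \le 1 + O(\epsilon\kappa(\eigr)^2)$. The bounds you actually have available give a much looser sandwich: normalizing $\|\eigr\|_\infty = 1$, the entrywise inequality $\eigr \le \vones \le \kappa(\eigr)\,\eigr$ together with the monotonicity of $[(1+\epsilon)\eye - \frac{\ma}{s}]^{-1}$ yields only $c\,\eigr \le \vr \le \kappa(\eigr)\,c\,\eigr$ with $c = \frac{1}{(1+\epsilon)-\rho(\ma)/s}$, so $c_2/c_1 = \kappa(\eigr)$. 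This is a dimensionless factor independent of $\epsilon$, and it cannot be improved in general: the resolvent amplifies the $\eigr$-component of $\vones$ by $\Theta(1/\epsilon)$, but since no spectral-gap assumption is made, the non-$\eigr$ components of $\vones$ can be amplified by comparable factors, so the resolvent does not ``clean up'' $\vones$ toward a multiple of $\eigr$ any better than the $\kappa(\eigr)$ slack already present in $\vones$ versus $\eigr$. Feeding $c_2/c_1 = \kappa(\eigr)$ into your chain gives $(\frac{\ma}{s}\vr)_j \ge \frac{1-\epsilon}{\kappa(\eigr)}\,\vr_j$, which is far weaker than $(1-8\epsilon\kappa(\eigr)^2)\vr_j$ whenever $\kappa(\eigr)$ is large. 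So the proof as outlined would not close.

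The paper's proof sidesteps this entirely: it never needs an entrywise \emph{upper} bound on $\vr$ in terms of $\eigr$. Instead it combines (i) the entrywise \emph{lower} bound $\vr \ge \frac{1}{(1+\epsilon-\alpha)}\cdot\frac{\eigr}{\|\eigr\|_\infty}$ with (ii) the norm bound $\|\vr\|_\infty \le \kappa(\eigr)/\epsilon$ from Lemma~\ref{lemma:eigvec_scaling}, to lower-bound $\min_j \vv_j \ge \frac{\epsilon}{(1+\epsilon-\alpha)\kappa(\eigr)^2} \ge \frac{1}{2\kappa(\eigr)^2}$, and then divides the residual bound $\frac{\ma}{s}\vv \ge \vv - 4\epsilon\vones$ (in the spirit of Lemma~\ref{lem:perron-approx}) by $\vv$ entrywise. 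In fact, once you notice the exact identity $\frac{\ma}{s}\vr = (1+\epsilon)\vr - \vones$, the target $(\mv^{-1}\frac{\ma}{s}\vv)_j = (1+\epsilon) - \tfrac{1}{\vr_j}$ shows the whole lemma reduces to an entrywise \emph{lower} bound on $\vr$; your lower bound $\vr_j \ge \frac{1}{2\epsilon\kappa(\eigr)}$ already delivers $(1+\epsilon) - 2\epsilon\kappa(\eigr) \ge 1 - 8\epsilon\kappa(\eigr)^2$ directly. So the fix is to drop the upper-bound side of your sandwich, which is both unobtainable at the strength you wanted and unnecessary for the conclusion.
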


\begin{proof}
	We note that since $\eye - \frac{\ma}{s}$ is an M-matrix. By Lemma \ref{lemma:eigvec_scaling}, we obtain $\|\mr\|_\infty \leq \| [(1+\epsilon)\eye - \frac{\ma}{s}]^{-1}\|_\infty \leq \frac{\kappa(\eigr)}{\epsilon}$. Next we show an element-wise lower-bound on $\vr$. Let $\alpha = \rho(\ma)/s$. Since $\vones \geq  \frac{\eigr}{\|\eigr\|_\infty}$, we observe
	\begin{align}
		\vr 
		&\geq  \frac{1}{1+\epsilon}\sum_{i=0}^\infty \left(\frac{\alpha}{1+\epsilon}\right)^i \left(\frac{\ma}{\rho(\ma)}\right)^i \frac{\eigr}{\|\eigr\|_\infty}\nonumber\\
		&= \frac{1}{1+\epsilon}\sum_{i=0}^\infty \left(\frac{\alpha}{1+\epsilon}\right)^i  \frac{\eigr}{\|\eigr\|_\infty} \nonumber \\
		&\geq \frac{1}{1+\epsilon-\alpha} \cdot \frac{\eigr}{\|\eigr\|_\infty} \label{eq:perron-approx-inverse-eq3}
	\end{align}
	Note that $ \frac{\eigr}{\|\eigr\|_\infty} \geq \frac{1}{\kappa(\eigr)}\vones$, and therefore by \eqref{eq:perron-approx-inverse-eq3}
	\begin{equation}\label{eq:perron-approx-inverse-eq4}
	 \vr \geq \frac{1}{1+\epsilon-\alpha} \cdot \frac{\eigr}{\|\eigr\|_\infty} \geq  \frac{1}{1+\epsilon-\alpha} \cdot \frac{1}{\kappa(\eigr)}\vones
	\end{equation}
	Now combining the above yields
	\begin{equation}\label{eq:perron-approx-inverse-eq5}
		\vv = \frac{\vr}{\|\vr\|_\infty} \geq \frac{\epsilon}{(1+\epsilon-\alpha)\kappa(\eigr)^2}\vones \quad   
		\Rightarrow \quad  \diag(\mv)^{-1} \vones \leq  \frac{(1+\epsilon-\alpha)\kappa(\eigr)^2}{\epsilon}\vones
	\end{equation}
	By Lemma~\ref{lem:perron-approx}, we have $\|(\eye - \frac{\ma}{s}) \vv \|_\infty < 4\epsilon$ and therefore
	\begin{equation}\label{eq:perron-approx-inverse-eq6}
		(\eye- \frac{\ma}{s}) \vv \leq  \epsilon \vones \quad \Rightarrow \quad \frac{\ma}{s} \vv \geq \vv - 4\epsilon \vones.
	\end{equation}
	Now combining~\eqref{eq:perron-approx-inverse-eq5} and \eqref{eq:perron-approx-inverse-eq6}, we have,
	\[
		\diag(\mv)^{-1} \frac{\ma}{s} \vv \geq \vones - 4 \epsilon \cdot \diag(\mv)^{-1} \vones \geq \vones -  4(1+\epsilon-\alpha)\kappa(\eigr)^2\vones
	\]
	But $\alpha \geq \frac{1}{1+\epsilon}$ and hence $1+\epsilon - \alpha \leq 1+\epsilon - \frac{1}{1+\epsilon} \leq 2 \epsilon$. Therefore,
	\[
		\diag(\mv)^{-1} \frac{\ma}{s} \vv  \geq \vones -  4(1+\epsilon-\alpha)\kappa(\eigr)^2\vones \geq (1-8\epsilon\kappa(\eigr)^2) \vones.
	\]
	
\end{proof}

\newcommand{\computeperron}{$\code{Compute-Perron}$}
\begin{algorithm}[t]
	\caption {The nearly linear time algorithm for computing the largest eigenvalue and its corresponding approximate eigenvectors} \label{alg:perron}
	\begin{algorithmic}[1]	
		\Function{\computeperron}{$\ma$, $\delta$}
		\State \textbf{Input:} $\ma\in \R_{\geq 0}^{n\times n}$ is an irreducible matrix, and $\delta \in \R_{>0}$ less than one.
		\State \textbf{Output:} Real number $s>0$, and vectors $\vl, \vr\in \R_{>0}^{n}$ such that $(1-\delta) \rho(\ma) < s \leq \rho(\ma)$, $\norm{\vl^\top(\eye-\ma/s)}_\infty <\frac{\delta}{2K^2} \norm{\vl}_\infty$, and $\norm{(\eye-\ma/s)\vr}_\infty <\frac{\delta}{2K^2} \norm{\vr}_\infty$
		\State $K \eq 1$
		\While{true}
		\State $\langle s, \vl, \vr \rangle \eq $ {\sc \simpleperron}($\ma$, $\frac{\delta}{8K^2}$,  $K$)
		\If{$\vl$, $\vr$ are $\frac{\delta}{2K^2}$-approximate eigenvectors of $s$, \\and either $\mr^{-1}\ma\vr \geq (1-\delta)s\vones$ or $\ml^{-1}\ma^\top\vl \geq (1-\delta)s\vones$}
		\State \Return $\langle s, \vl, \vr \rangle$
		\EndIf
		\State $K \eq 2K$
		\EndWhile
		\EndFunction
	\end{algorithmic}
\end{algorithm}

Note that if $K \geq \kappa(\eigr)+\kappa(\eigl)$, then {\sc \simpleperron($\ma$, $\frac{\delta}{8K^2}$,$K$)} returns $s$ and $\vr$ that satisfy the terms of Lemma~\ref{lem:perron-approx-inverse}. Therefore, they can be used to verify whether we are close enough to $\rho(\ma)$ or not. Knowing this we are ready to give our final algorithm which does not depend on knowing the right $K$ in advance. We start with $K=1$, and double $K$ in each iteration. In each iteration {\sc \simpleperron($\ma$, $\frac{\delta}{8K^2}$,$K$)} is called. We do this until we get either $\diag(\vr)^{-1} \ma \vr \geq (1-\delta) s$, or $\diag(\vl)^{-1} \ma^\top \vl \geq (1-\delta) s$. A pseudo-code for this algorithm is provided in Algorithm~\ref{alg:perron}. Below is a short proof of Theorem~\ref{thm:perron-overview}.

\begin{proof}[Proof of Theorem~\ref{thm:perron-overview}]
 Note that after $O\left(\log(\kappa(\eigl)+\kappa(\eigr))\right)$, $K$ becomes greater than or equal to $\kappa(\eigl)+\kappa(\eigr)$. At this point by Lemma~\ref{lem:perrron-find} and \ref{lem:perron-approx-inverse} , {\sc \simpleperron} returns $\langle s, \vl, \vr \rangle$ such that $(1-\delta)\rho(\ma) < s \leq \rho(\ma)$, $\|(s \eye - \mm) \vr\|_\infty \leq \frac{\delta}{K^2}\norm{\vr}_\infty$, and $\|\vl^\top (s \eye - \mm)\|_\infty \leq \frac{\delta}{K^2}\norm{\vl}_\infty$. Thus giving the conditions in the problem statement. For the running time, note that each call to {\sc \simpleperron} consists of a call to {\sc \findperronvalue} and a call to {\sc \mscale}. By Lemma~\ref{lem:perrron-find}, each call to {\sc \simpleperron} takes at most
 \[
	O\Big(\mathcal{T}_{solve}\Big(m, n, O(\frac{K^6}{\delta^2}), O(\frac{\delta^3}{K^9}) \Big) \log(n \frac{K}{\delta}) \log\Big(\frac{\|\ma\|_1 + \|\ma\|_\infty}{\delta} \Big)  \log \left(\frac{|s_1-s_2|}{\delta\rho(\ma)}\right) \Big)		
\]
 time. As we run {\sc \simpleperron} $O(\log K)$ times, the total running follows.
\end{proof}

As an immediate corollary to this result, our main theorem follows. We restate it for clarity.

\perron*
\begin{proof}
	By using Theorem \ref{thm:rcdd_solver} inside of Theorem \ref{thm:perron-overview}, the result follows immediately.
\end{proof}

\section{Applications}
\label{sec:applications}

In this section, we present a variety of problems in which M-matrices appear and show how our methods can be applied to them.
Matrices with non-positive off diagonals and non-negative diagonals (known as $Z$-matrices) appear in many applications like finite difference methods for partial differential equations, input-output productions and growth models in economics, linear complementarity problems in operations research,  Markov processes in probability and statistics, and social networks analysis \cite{cottle2008linear,berman1994nonnegative,murty1988linear,ten2010input}.
In many of these problems, if the underlying $Z$-matrix is actually an M-matrix, we will have additional interesting and meaningful structural properties. For example, in the Katz centrality problem (covered in Section~\ref{sec:katz}) the centrality measure is only valid if the corresponding matrix is an M-matrix. 

In the following subsections, we provide faster algorithms for some well-known applications which involve M-matrices. We start with examples like
computing Katz Centrality, and solving Leontief Equations in which our results can be applied directly. Then, in Section~\ref{sec:singular} we show how to compute the top singular value and its corresponding singular vectors of a non-negative matrix in nearly linear time. In Section~\ref{sec:graph-kernels}, we show another application of our method in computing random-walk based graph kernels. These kernels are used in a variety of fields like biology, social networks, and natural language processing to compare the structure of two graphs \cite{kash2003, kash2004, vish2010}.

\subsection{Leontief Equations}


In economics, input-output analysis is an important quantitative technique that models the interdependencies between different branches of a national economy \cite{miller2009input, ten2010input}. In recognition of W.~Leontief contributions for which he received the Nobel prize in economics, the name Leontief model is widely used to refer to these models.

The Leontief model explains the interdependencies between different sectors within a national economy. It models how the output of one sector can be used as input to another sector  and how  changes in the production of one will affect the others. Let $n$ be the number of sectors in an economy. Suppose sector $i$ produces $x_i$ units of a single homogeneous good. Assume that the $j$th sector requires $a_{ij}$ units from sector $i$ in order to produce one unit of good $j$. Further assume there is an external demand $d_i \geq 0$ for the good produced by sector $i$. Then for each sector $i$ we have
$$
x_i = \sum_{j=1}^{n} a_{ij}x_j + d_i.
$$
Using vector notation, we can write these conditions as $\vx = \ma \vx + \vd$, or equivalently $(\mI-\ma)\vx = \vd$. Existence of a non-negative vector $\vx$ (production rates of each sector) that satisfies the input-output relation is crucial for the economy to be in equilibrium (demand equals supply).

In economics, the condition which guarantees existence of such vector is known as Hawkin-Simons condition~\cite{hawkins1949note}. The Hawkin-Simons theorem states that the necessary and sufficient condition for the existence of a non-negative vector $\vx$ which solves $(\mI - \ma)\vx = \vd$ is that all principal minors of $\mI-\ma$ be positive, which is equivalent to $\mI - \ma$ be an M-matrix \cite{berman1994nonnegative}. Therefore checking whether $\mI-\ma$ satisfies Hawkin-Simons condition is equivalent to solving the M-matrix decision problem. Note that we showed how to solve this problem in nearly linear time in Section~\ref{sec:perron-alg}. Moreover, for a given demand vector $\vd$, one can use our matrix scaling algorithm to obtain a  vector $\hat{x}$ which satisfies $(\mI-\ma)\hat{x} \approx \vd $ in nearly linear time.

\subsection{Katz Centrality}\label{sec:katz}

As a relatively simple application of our techniques to solve M-matrices, we present a nearly-linear time algorithm for computing the Katz centralities in a graph. This is a measure of centrality similar to PageRank following the same intuition that in a social network, a person is influential if they can influence other influential people. Algebraically, we define the Katz vector $\vv$ to be the solution to the equation $\vv = \alpha \ma \vv + \vb$, where $\ma$ is the weighted adjacency matrix of the underlying graph, $\alpha < \rho(\ma) $ is some appropriately chosen decay parameter, and $\vb$ is some initial ground truth vector\cite{katz1953new}. In contrast, the PageRank vector $\vp$ is the vector which satisfies $\vp = \beta \md^{-1} \ma \vp + \vb$  where $\beta < 1$ and $\md$ is the degree matrix of the underlying graph. 

Unlike PageRank, Katz centrality has remained relatively unexplored. This is partially because of the algorithmic challenges in computing Katz centrality. For PageRank  we have nearly-linear time exact algorithms obtained by directly applying Laplacian system solvers, and in the special case of unweighted graphs the celebrated PPR-Push algorithm of Andersen et al computes vectors $vp'$ satisfying $\|\md^{-1} r(vp') \|_\infty \leq \epsilon$ in nearly-constant time.\cite{andersen2006local}
While we do not present an analogue of this  nearly-constant running time, we do leverage our M-matrix solver to obtain an analogous nearly-linear time exact algorithm for computing Katz centrality. This follows almost immediately from the definition: basic algebra shows that the Katz centrality satisfies 
\[
\vv = (\eye - \alpha \ma)^{-1} \vb.
\]

Thus by using Theorem \ref{thm:msolve} on the linear system which arises here, we get our desired nearly-linear time algorithm.

\subsection{Left and Right Top Singular Vectors}\label{sec:singular}
	Here we show how to compute the top singular value and associated eigenvectors of a non-negative matrix in nearly linear time. Let $\ma \in \R_{\geq 0}^{n\times n}$ be a non-negative matrix. The top singular value $\sigma_{\max}$, top right singular vector $\vv_{R}$, and top left singular vector $\vv_{L}$ are defined as follows:

$$\sigma_{\max} = \max_{\|\vv\|_2 = 1} \|\ma\vv\|_2$$
$$\vv_R = \argmax_{\|\vv\|_2 = 1} \|\ma \vv\|_2$$
$$\vv_L = \argmax_{\|\vv\|_2 = 1} \|\vv^\top \ma\|_2$$

From these definitions, it is clear that $\sigma_{\max}$ is equal to the top eigenvalue of $\ma^\top \ma$. And $\vv_R$ and $\vv_L$ are the top eigenvectors of $\ma^\top \ma$ and $\ma \ma^\top$ respectively. So the problem of finding the top singular value, and the top right singular vector of $\ma$ reduces to computing the top eigenvalue and eigenvector of symmetric matrix $\ma^\top \ma$ (and $\ma \ma^\top$ for the top left singular vector). 

One might apply our methods directly to matrix $\ma^\top \ma$ to find its largest eigenvalue and corresponding eigenvector. However, $\ma^\top \ma$ might be dense and in the worst case the number of non-zero elements in $\ma^\top \ma$ might be quadratic in the number of non-zero elements of $\ma$.  Below, we describe a fix for this problem.

Let $\mm=\mI-\ma^\top \ma$.  Our framework still gives a nearly linear time algorithm for computing $\rho(\ma^\top \ma)$, if we are able to do two operations in nearly linear time: (i) apply $\mm$ to an arbitrary vector $\vy \in \R^n$ and (ii) solve linear systems in $\mm$ when $\mm$ is RCDD (see Oracle~\ref{oracle:RCDDSolve}). We can obviously do (i) in linear time as one can compute $\ma \ma^\top \vy$ in $O(\nnz(\ma)+n)$. Next we show how to do (ii) in nearly linear time. 

Because of the structure in $\ma \ma^\top$, we are able to compute a constant preconditioner for $\ma$, which then can be used by preconditioned richardson to produce a nearly linear time solver for $\ma$. The following lemma from Cohen et. al. \cite{coh16} which in turn is inspired from \cite{PengS14} is crucial for obtaining the preconditioner for  $\ma$ in nearly linear time. 

\begin{lemma}\label{lem:vecsparse}
Let $\vx \in \R_{\geq 0}^n$ be a non-zero vector, and  $\sigma, p\in (0,1)$. Also let $t$ be the number of non-zero elements in $\vx$ and  $\mlap = \diag(\vx) - \frac{1}{\|\vx\|_1}\vx \vx^\top$. There is a randomized algorithm that in time $O(t\sigma^{-2} \log(t/p))$ and with probability at least $1-p$ computes a Laplacian $\tilde{\mlap}$ such that $\tilde{\mlap}\approx_\sigma \mlap$ and number of non-zero elements in $\tilde{\mlap}$ is $O(t\sigma^{-2} \log(t/p))$.
\end{lemma}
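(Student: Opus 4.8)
The plan is to recognize $\mlap = \diag(\vx) - \frac{1}{\|\vx\|_1}\vx\vx^\top$ as a rescaling of the Laplacian of a complete ``product demand'' graph on the support of $\vx$, and then to spectrally sparsify it by leverage‑score sampling, exploiting the fact that for this particular graph the leverage scores have a closed form which can be sampled from in constant amortized time. First I would discard the indices $i$ with $x_i = 0$ (they contribute zero rows and columns to $\mlap$) and reindex the support as $[t]$, so $\vx \in \R_{>0}^t$ (for $t=1$ we have $\mlap = \mzero$ and nothing to prove). A direct expansion gives $\mlap = \sum_{1\le i<j\le t} w_{ij}\,\vb_{ij}\vb_{ij}^\top$ with $\vb_{ij} := \ve_i - \ve_j$ and $w_{ij} := x_i x_j/\|\vx\|_1 > 0$; thus $\mlap$ is the Laplacian of the complete weighted graph on $[t]$, which is connected with kernel $\mathrm{span}(\vones)$.

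The key step is computing the edge leverage scores exactly. Let $\md = \diag(\vx)$, which is invertible on $[t]$. Conjugating, $\md^{-1/2}\mlap\md^{-1/2} = \mI - \vu\vu^\top/\|\vu\|_2^2 =: \mP$ where $\vu_i = \sqrt{x_i}$ and $\|\vu\|_2^2 = \|\vx\|_1$, so $\mP$ is the orthogonal projection onto $\vu^\perp$ and $\mlap^\pseudo = \md^{-1/2}\mP\md^{-1/2}$. The leverage score of edge $(i,j)$ is $\tau_{ij} := w_{ij}\,\vb_{ij}^\top\mlap^\pseudo\vb_{ij}$; writing $\vf_{ij} := \md^{-1/2}\vb_{ij} = \ve_i/\sqrt{x_i} - \ve_j/\sqrt{x_j}$, one checks $\vf_{ij}^\top\vu = 0$, so $\vf_{ij}^\top\mP\vf_{ij} = \|\vf_{ij}\|_2^2 = 1/x_i + 1/x_j$, giving
\[
\tau_{ij} = \frac{x_i x_j}{\|\vx\|_1}\Big(\frac{1}{x_i} + \frac{1}{x_j}\Big) = \frac{x_i + x_j}{\|\vx\|_1}, \qquad \sum_{i<j}\tau_{ij} = t-1 .
\]
Crucially, the induced distribution $p_{ij} := \tau_{ij}/(t-1)$ can be sampled from without ever materializing the $\binom{t}{2}$ edges: draw an endpoint $a$ with probability $x_a/\|\vx\|_1$ using an alias table built in $O(t)$ time, then draw the other endpoint uniformly from $[t]\setminus\{a\}$; the resulting unordered pair $\{i,j\}$ has probability exactly $p_{ij}$ and each draw costs $O(1)$.

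I would then take $M = \Theta\big(\sigma^{-2}(t-1)\log((t-1)/p)\big)$ independent edges $e_1,\dots,e_M$ from $p$ and set $\tilde\mlap := \frac{1}{M}\sum_{k=1}^M \frac{w_{e_k}}{p_{e_k}}\vb_{e_k}\vb_{e_k}^\top$, merging the (at most $M$) distinct edges. This is a genuine Laplacian (a nonnegative combination of edge Laplacians) with $\E[\tilde\mlap] = \mlap$. By the standard spectral‑sparsification guarantee for leverage‑score sampling (Spielman--Srivastava, via matrix Chernoff), with probability at least $1-p$ one has $(1-\sigma)\mlap \preceq \tilde\mlap \preceq (1+\sigma)\mlap$, which after a constant rescaling of $\sigma$ is exactly $\tilde\mlap \approx_\sigma \mlap$; these two‑sided bounds also force $\ker\tilde\mlap = \ker\mlap$. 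Finally $\tilde\mlap$ has $O(M) = O(t\sigma^{-2}\log(t/p))$ nonzero entries, and the total time is $O(t)$ for preprocessing plus $O(1)$ per sample, i.e.\ $O(t\sigma^{-2}\log(t/p))$.

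The main difficulty is the exact identity $\tau_{ij} = (x_i+x_j)/\|\vx\|_1$ together with the observation that it enables $O(1)$‑amortized sampling without touching the $\Theta(t^2)$ edges; the concentration step is a black box. The only delicate points are handling the rank deficiency of $\mlap$ (work with $\mlap^\pseudo$, and note the kernels of $\mlap$ and $\tilde\mlap$ coincide by the two‑sided bound) and using an alias table rather than binary search so that sampling costs $O(1)$ rather than $O(\log t)$ per draw, which is what is needed to meet the stated time bound exactly.
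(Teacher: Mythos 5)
The paper does not prove this lemma; it is cited as a known result from Cohen et al.\ \cite{coh16} (who in turn credit \cite{PengS14}), so there is no in-paper proof to compare against. Your argument is correct and is essentially the approach used in those references: you correctly identify $\mlap$ as the Laplacian of the complete weighted graph with $w_{ij}=x_ix_j/\|\vx\|_1$, derive the closed-form leverage scores $\tau_{ij}=(x_i+x_j)/\|\vx\|_1$ (your congruence-to-a-projection computation and the check that $\vf_{ij}\perp\vu$ are both right, and summing gives $t-1$), observe that the leverage-score distribution factors so that one can sample an edge in $O(1)$ by drawing one endpoint $\propto x_a$ and the other uniformly over the rest, and then invoke matrix Chernoff for the $(1\pm\sigma)$ guarantee with $M=\Theta(\sigma^{-2}t\log(t/p))$ samples. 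Your attention to using an alias table so that each draw is $O(1)$ (avoiding an extra $\log t$ factor from binary search over a CDF) and to the degenerate $t=1$ case are both legitimate fine points that make the stated time bound exact.
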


Using Lemma~\ref{lem:vecsparse} we are able to give a nearly linear time solver for $\mm$. Note that the following lemma is standard, when we have access to a sparsifier for $\mm$.

\begin{lemma}
	For a given non-negative matrix $\ma \in \R_{\geq 0}^{n\times n}$, let $\mm \eqdef \mI - \ma\ma^\top$ be an RCDD matrix. Given a vector $\vb\in \R^n$ and $\epsilon \in \R_{>0}$, let $\vx = \mm^{-1}\vb$, there is an algorithm which computes an operator $Z: \R^n \rightarrow \R^n$, such that
	$$
	\norm{Z(\vb) - \vx}_\mm \leq \epsilon \norm{\vx}_\mm
	$$
	in time $\otilde(\nnz(\ma))$. 
\end{lemma}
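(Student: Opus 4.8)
The plan is to build a constant-quality preconditioner for $\mm$ whose linear systems are solvable in $\otilde(\nnz(\ma))$ time, exploiting the fact that $\mm$ is a nonnegative diagonal plus a low-rank correction, and then run preconditioned Richardson (Algorithm~\ref{alg:prichardson}). First I would rewrite $\mm$ as a nonnegative diagonal matrix plus a nonnegative combination of graph Laplacians. Writing $\vvar{c}_k$ for the $k$-th column of $\ma$, we have $\ma\ma^\top = \sum_{k=1}^n \vvar{c}_k \vvar{c}_k^\top$, and for each $k$ with $\vvar{c}_k \neq \vzero$ the matrix $\mlap_k \defeq \diag(\vvar{c}_k) - \frac{1}{\|\vvar{c}_k\|_1}\vvar{c}_k\vvar{c}_k^\top$ is a graph Laplacian (exactly as in Lemma~\ref{lem:vecsparse}) supported on the $t_k \defeq \nnz(\vvar{c}_k)$ nonzeros of $\vvar{c}_k$. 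Since $\vvar{c}_k\vvar{c}_k^\top = \|\vvar{c}_k\|_1 \diag(\vvar{c}_k) - \|\vvar{c}_k\|_1 \mlap_k$, we get
\[
\mm = \md + \sum_{k : \vvar{c}_k \neq \vzero} \|\vvar{c}_k\|_1 \mlap_k, \qquad \md \defeq \mI - \diag(\ma\ma^\top\vones).
\]
Because $\mm$ is RCDD and $\ma\ma^\top$ is entrywise nonnegative, diagonal dominance of $\mm$ is precisely the statement $\ma\ma^\top\vones \le \vones$ entrywise, hence $\md \succeq \mzero$; in particular $\mm$ is symmetric RCDD, i.e. SDD, and is PD since it is invertible by hypothesis. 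The weights $\|\vvar{c}_k\|_1$ and the diagonal $\md$ are computable in $O(\nnz(\ma)+n)$ time.

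Next I would sparsify. Fix a small absolute constant $\sigma \in (0,\tfrac14]$ and a failure probability $p = n^{-\Theta(1)}$, and apply Lemma~\ref{lem:vecsparse} to each $\vvar{c}_k$ to obtain Laplacians $\tilde{\mlap}_k \approx_\sigma \mlap_k$ with $O(t_k \sigma^{-2}\log(t_k/p))$ nonzeros, in the same running time; a union bound makes all $n$ succeed with high probability. Setting $\tilde\mm \defeq \md + \sum_k \|\vvar{c}_k\|_1 \tilde{\mlap}_k$, spectral approximation is preserved under nonnegative combinations and under adding the common PSD term $\md$, so $\tilde\mm \approx_\sigma \mm$; moreover $\tilde\mm$ is SDD, has $\otilde(\nnz(\ma))$ nonzeros (since $\sum_k t_k = \nnz(\ma)$ and $\log(t_k/p) = O(\log n)$), and is assembled in $\otilde(\nnz(\ma))$ time. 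Running a nearly linear time SDD solver (Oracle~\ref{oracle:SDDSolve}) on $\tilde\mm$ with constant target accuracy yields, with high probability, a linear operator $\opZ_0$ applicable in $\otilde(\nnz(\ma))$ time. Since $\tilde\mm \approx_\sigma \mm$ forces $e^{-\sigma}\mI \preceq \mm^{1/2}\tilde\mm^{-1}\mm^{1/2} \preceq e^{\sigma}\mI$, and the solver error contributes only a further small constant, $\opZ_0$ is a good preconditioner for $\mm$: $\norm{\mI - \opZ_0 \mm}_\mm \le c_0$ for some absolute constant $c_0 < 1$.

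Finally I would run Algorithm~\ref{alg:prichardson} on $\mm \vx = \vb$ with preconditioner $\opZ_0$, starting from $\vx_0 = \vzero$, for $K = O(\log(1/\epsilon))$ iterations, and output the result as $Z(\vb)$. The error recursion $\vx_{k+1} - \mm^{-1}\vb = (\mI - \opZ_0\mm)(\vx_k - \mm^{-1}\vb)$ gives $\norm{Z(\vb) - \vx}_\mm \le c_0^{K}\norm{\vx}_\mm \le \epsilon\norm{\vx}_\mm$ for the stated $K$ since $c_0$ is a constant. Each iteration costs $O(\nnz(\ma)+n)$ for one multiplication by $\mm$, applied as $\vy \mapsto \vy - \ma(\ma^\top \vy)$, plus $\otilde(\nnz(\ma))$ for one application of $\opZ_0$, so together with the preprocessing the total is $\otilde(\nnz(\ma))$. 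The main obstacle is the bookkeeping in the middle step: one must combine the $e^{\pm\sigma}$ comparison between $\mm$ and $\tilde\mm$ with the constant solver error of $\opZ_0$ and verify that the contraction factor $c_0$ stays bounded away from $1$, which is what pins down how small $\sigma$ and the solver accuracy must be chosen; this is exactly the ``standard'' preconditioned-Richardson argument referenced in the lemma statement, while the Laplacian decomposition identity, the nonnegativity of $\md$, the nonzero count, and the union bound over the $n$ sparsifications are all routine.
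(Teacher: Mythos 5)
Your proof is correct and follows essentially the same route as the paper's: decompose $\mm$ as a nonnegative diagonal plus a nonnegative combination of rank-one Laplacians indexed by the columns of $\ma$, sparsify each via Lemma~\ref{lem:vecsparse}, solve the resulting sparse SDD system to get a constant-quality preconditioner, and boost to $\epsilon$-accuracy with $O(\log(1/\epsilon))$ steps of preconditioned Richardson. You flesh out several details the paper leaves implicit (nonnegativity of the residual diagonal from RCDD, the union bound over the $n$ sparsifications, the per-iteration cost of applying $\mm$ as $\vy\mapsto\vy-\ma(\ma^\top\vy)$), but the decomposition, the key lemma, and the overall strategy are identical.
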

\begin{proof}
Let $\va_i$ be the $i$th column of $\ma$, \ and $\mlap_i \eqdef \norm{\va_i}_1\diag(\va_i) - \va_i \va_i^\top$. Since $\mm$ is RCDD, we can write
 $$
  \mm=\mI - \sum_{i=1}^n \va_i \va_i^\top  = F + \sum_{i=1}^n \mlap_i,
 $$
where $F$ is a non-negative diagonal matrix. Let $d_i$ be the number of non-zero elements in $\va_i$. 
 Using Lemma~\ref{lem:vecsparse} we can sparsify each $\mlap_i$  in nearly linear time in $d_i$. 
 Let $\tilde{\mlap}_i$ be the $\sigma$-sparsifier of $\mlap_i$ obtained from Lemma~\ref{lem:vecsparse}. 
 Then with high probability $\tilde{\mm} \eqdef F + \sum_{i=1}^n \tilde{\mlap}_i$ is a spectral sparsifier of $\mm$ (i.e. $\tilde{\mm} \approx_\sigma \mm$) with size $\tilde{O}(m\sigma^{-2})$, where $m=\sum_{i} d_i = \nnz(\ma)$. Note that
  $$
  \tilde{\mm} \approx_\sigma \mm \Rightarrow \tilde{\mm}^{-1} \approx_\sigma \mm^{-1} \cite{peng14}
  $$
  But $\tilde{M}$ has linear size.
Since $\tilde{\mm}$ is symmetric strictly diagonally dominant (SDD) matrix, we can apply results from standard SDD solvers  ~\cite{koutis10,kel13} to get an operator $\tilde{Z}$, such that
$\norm{\tilde{\mm}\tilde{Z}-\mI}_{\tilde{\mm}} < c = O(\sigma)$. By tunning constant $c$, we can get the same result for $\mm$ as well. Therefore, we can get an operator $\tilde{Z}$ such that 
$$
\norm{\mm\tilde{Z}-\mI}_{\mm} < O(\sigma)
$$
in time $\tilde{O}(m\sigma^{-2})$. Having this, $Z=${\sc \prichardson}($\mm$, $\tilde{Z}$,$b$,$\epsilon$, $\vx_0$) has the properties mentioned in the statement (\cite{peng14}, Lemma 4.4). 
 \end{proof}

 By the above lemma, we can get the same results as in Theorem~\ref{thm:perron-overview} for computing $\sigma_{\max}(\ma)$ and its corresponding singular vectors. Note that since $\ma \ma^\top$ is symmetric, a tighter analysis of our general M-matrix scaling algorithm is possible and gives better bounds with fewer polylogarithmic factors for the top singular value problem.

\subsection{Graph Kernels}\label{sec:graph-kernels}
	In this section we present another important application of our method on computing graph kernels. Kernel functions are widely used in machine learning for comparing the similarity of two complex objects. Graph kernels are specifically used for comparing the similarity between two graphs with applications in social networks, studying chemical compounds, comparison and function prediction of protein structures, and analysis of semantic structures in natural language processing \cite{kash2003, kash2004, vish2010}. 

One of the main obstacles in applying the existing algorithms for computing the kernel function between two graphs is their large running time~\cite{vish2010}. Here, we show how to obtain improved running times for computing canonical kernel functions known as random walk kernels.

Commonly, graphs considered in this contexts are labeled. An edge labeled graph is denoted by $G=\langle V, E, \ell \rangle$, where $V$ is the set of vertices, $E \subseteq V\times V$ is the set of edges, $\ell:E\rightarrow \Sigma$ is a label assignment function to edges, and $\Sigma$ is the set of labels. 
Here we assume the set of labels is finite, and to ease the notations let $\Sigma\eqdef \{1,2,\cdots, d\}$ (i.~e. labels are positive integers).

Let $G$ and $H$ be edge labeled graphs. Random walk kernels, compare $G$ and $H$  based on the similarity of labels in simultaneous random walks on $G$ and $H$. 
It will be easier to think of simultaneous walks on $G$ and $H$ as a random walk on their Cartesian product: 
\begin{definition}
Given two graphs $G = \langle V_G, E_G, \ell_G \rangle$ and $H = \langle V_H, E_H, \ell_H\rangle$, Cartesian product of $G$ and $H$ is a labeled graph denoted by $G\otimes H = \langle V_{G\otimes H}, E_{G\otimes H}, \ell_{G\otimes H}\rangle$, where
$$
V_{G\otimes H} = \{(u,v)|u\in V_G, v\in V_H\}
$$
$$
E_{G\otimes H} = \{((u,v), (w,z))|(u,w)\in E_G, (v,z) \in E_H\}
$$
and $\ell_{G\otimes H}: E_{G\otimes H} \rightarrow \Sigma\times \Sigma$ is a label assignment function such that for an edge $e \eqdef ((u,v), (w,z)) \in E_{G\otimes H}$, $\ell_{G\otimes H}(e) = (\ell_G((u,w)), \ell_H((v,z))$.
\end{definition}

Note that a walk on $G\otimes H$ corresponds to two simultaneous walks on $G$ and $H$. Let $\mw \in \R^{nm\times nm}$ be an arbitrary matrix. We use $\mw[(i,j),(g,h)]$ notation to refer to $\mw[i \cdot m+j, g \cdot m+h]$.  For a given product graph $G\otimes H$, we say $\mw_{G\otimes H} \in \R_{\geq 0}^{|V_G||V_H|\times |V_G||V_H|}$ is a weighted adjacency matrix of $G\otimes H$ if 
$$
\begin{cases}
\mw_{G\otimes H}[(u,v), (w,z)]  \geq 0 & \text{if }(u,w)\in E_G \wedge (v,z) \in H, \\
\mw_{G\otimes H}[(u,v), (w,z)]  = 0 & \text{otherwise.}
\end{cases}
$$
When it is clear from the context, we may use $\mw_\times$ instead of $\mw_{G\otimes H}$ to refer to a weighted adjacency matrix of $G\otimes H$. Note that  $\mw_\times[(u,v),(w,z)]$ can be used as a measure for how similar are edges $(u,w)\in E_G$ and $(v,z) \in E_H$ (i.e. if this value is large they are considered to be more similar). For example the following weight matrix

$$
\begin{cases}
\mw_{G\otimes H}[(u,v), (w,z)]  = 1 & \text{if }(u,w)\in E_G \wedge (v,z) \in H \wedge \ell((u,w))=\ell((v,z)),\\
\mw_{G\otimes H}[(u,v), (w,z)]  = 0 & \text{otherwise.}
\end{cases}
$$
considers two edges to be similar if they have the same label. 

In the same sense, one can use $\mw_\times$ to come up with a measure for comparing length $k$ walks in graph $G$ and $H$. Particularly one can use $\mw_\times^k[(u,v),(w,z)]$, to compare the similarity of length $k$ walks from $u$ to $w$ in $G$ and from $v$ to $z$ in $H$. 
Using $\mw_\times$ as a similarity measure, Vishnawathan et al. \cite{vish2010} define the random walk graph kernel between $G$ and $H$ as follows.

\begin{definition}
Given labeled graphs $G$ and $H$, and initial and stopping probability distributions  $p_\times,q_\times \in \R_{\geq 0}^{|V_G||V_H|}$ such that $\|p_\times\|_1=1$ and $\|q_\times\|_1=1$, the graph kernel of $G$ and $H$ is defined as
\begin{equation}\label{eq:kernel}
\kappa(G,H) = \sum_{k=0}^{\infty} \lambda_k q^\top_\times \mw_\times^k p_\times
\end{equation}
where $\lambda_k$ is a decay factor depending on the length of walks, such that the above sum converges.
\end{definition}

Note that $q^\top_\times \mw_\times^k p_\times$ can be considered as the expected similarity of length $k$ walks in graph $G$ and $H$.
Vishnawathan et al. \cite{vish2010} consider a  ``geometric decay'' for $\lambda_k$, which means $\lambda_k =\lambda^k$ for some positive $\lambda<1$. In this case Equation~\ref{eq:kernel} can be written more concisely as
\begin{equation}\label{eq:kernel-def}
\kappa(G,H) = \sum_{k=0}^{\infty}  q^\top_\times (\lambda \mw_\times)^k p_\times = q_\times^\top(\mI-\lambda \mw_\times)^{-1} p_\times
\end{equation}
Observe that the problem of computing graph kernels reduces to inverting $\mI-\lambda \mw_\times$. Moreover we see that a sufficient condition for $\kappa(G,H)$ to converge is $\|\lambda \mw_\times\| < 1$. Given this one may immediately apply our techniques to compute graph kernels in nearly linear time in the size of $\mw_{\times}$. Moreover, for a given $\mw_{\times}$ we are able to check whether it defines a valid kernel, by applying our M-matrix decision algorithm on $\mI-\lambda \mw_{\times}$.

\section{Acknowledgments}
We thank Michael B. Cohen for helpful conversations which identified key technical challenges in solving M-matrices that motivated this work.

\bibliographystyle{abbrv}
\bibliography{bibs/abbrv,bibs/ref}

\section{M-matrix Facts}
\label{sec:mmatrix_facts}

Here we provide several facts about M-matrices that we use extensively throughout the paper. For a good survey on M-matrices refer to \cite{berman1994nonnegative}.

\begin{lemma} \label{lemma:MMatrix_monotone} For non-negative $\ma \in \R^\nn_{\geq 0}$ the matrix $\mm = \mI - \ma$ is an M-matrix if and only if 
	\begin{enumerate}[label=\textbf{\roman*})]
		\item for any vector $x$ satisfying $\mm x > 0$ entrywise we have $x > 0$. 
		\item for any vector $y$ satisfying $\mm^\top y > 0$ entrywise we have $y > 0$.
	\end{enumerate}
\end{lemma}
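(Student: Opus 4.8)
The plan is to prove the two implications separately, drawing in both directions on two standard facts: a nonnegative matrix $\ma$ always has a nonnegative (though possibly not strictly positive) eigenvector for its spectral radius $\rho(\ma)$, and when $\rho(\ma)<1$ the Neumann series $(\mI-\ma)^{-1}=\sum_{i\ge 0}\ma^{i}$ converges to an entrywise nonnegative inverse. Throughout I use that $\mm=\mI-\ma$ is an M-matrix if and only if $\rho(\ma)\le 1$.

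For the forward direction I would split on whether $\mm$ is invertible. If $\rho(\ma)<1$, then $\mm$ is invertible with $\mm^{-1}=\sum_{i\ge 0}\ma^{i}\ge \mzero$ entrywise; being invertible, $\mm^{-1}$ has no zero row, so each of its rows is a nonzero nonnegative vector. Hence if $\mm x>0$ entrywise then for every $i$ there is a $j$ with $(\mm^{-1})_{ij}>0$, so $x_{i}=\sum_{j}(\mm^{-1})_{ij}(\mm x)_{j}\ge (\mm^{-1})_{ij}(\mm x)_{j}>0$, i.e.\ $x>0$; applying the same argument to $\mm^{\top}=\mI-\ma^{\top}$, whose inverse is $(\mm^{-1})^{\top}\ge\mzero$, gives (ii). If instead $\rho(\ma)=1$, let $w\ge 0$, $w\ne 0$ satisfy $\ma^{\top}w=w$ (a nonnegative left eigenvector for $\rho(\ma)=1$, from Perron--Frobenius theory); then $w^{\top}\mm=0$, so $\mm x>0$ entrywise is impossible since it would force $0=w^{\top}(\mm x)>0$, and (i) holds vacuously. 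Symmetrically a nonnegative right eigenvector $v$ with $\mm v=0$ makes (ii) vacuous.

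For the converse I would argue by contrapositive: assuming $\rho(\ma)>1$, I construct a vector violating (i) (or, symmetrically, (ii)). Pick $v\ge 0$, $v\ne 0$ with $\ma v=\rho(\ma)v$ and set $S=\operatorname{supp}(v)$. The principal submatrix on $S$ then satisfies $\ma_{S}v_{S}=\rho(\ma)v_{S}$ with $v_{S}>0$, so $\mm_{S}(-v_{S})=(\rho(\ma)-1)v_{S}>0$ strictly; moreover, since $(\ma v)_{i}=0$ for $i\notin S$ and $v_{j}>0$ for $j\in S$, we get $\ma_{ij}=0$ whenever $i\notin S$ and $j\in S$, so the rows of $\mm$ indexed by $S^{c}$ depend only on the $S^{c}$-coordinates of $x$. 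I would take $x_{S}=-v_{S}$ and choose $x_{S^{c}}$ to be a sufficiently small vector with $\mm_{S^{c}}x_{S^{c}}>0$ --- this is available from the Neumann series when $\rho(\ma_{S^{c}})<1$, and by recursing on $\ma_{S^{c}}$ when $\rho(\ma_{S^{c}})>1$ --- with the smallness chosen so that the head start $(\rho(\ma)-1)v_{S}>0$ on the $S$-rows is not destroyed. The resulting $x$ satisfies $\mm x>0$ entrywise but $x\not>0$, contradicting (i). When $\ma$ is irreducible this collapses to a one-liner: $v>0$ strictly, $S=[n]$, and $x=-v$ already works, since $\mm(-v)=(\rho(\ma)-1)v>0$.

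The main obstacle is exactly this last step of the converse when $\ma$ is reducible: propagating strict positivity of $\mm x$ from the dominant block to all coordinates while keeping the auxiliary coordinates small. The genuinely delicate case is when a complementary block $\ma_{S^{c}}$ sits on the M-matrix boundary, $\rho(\ma_{S^{c}})=1$; there $\mm_{S^{c}}$ is itself a singular M-matrix and (by the forward direction) admits \emph{no} vector mapped strictly above zero, so the construction cannot be routed through the first block and must instead be carried out on the transpose side, using a nonnegative left eigenvector to violate (ii). Organizing this cleanly is where the Frobenius normal form and the most care are needed; once the Neumann-series and Perron--Frobenius inputs are in hand, the remaining bookkeeping is routine.
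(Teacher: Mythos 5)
The paper states this lemma without proof in Appendix Section~\ref{sec:mmatrix_facts}; it is presented as a known fact about M-matrices (the appendix points to \cite{berman1994nonnegative} for background), so there is no internal proof to compare against. Your forward direction is correct: when $\rho(\ma)<1$ the Neumann series together with the fact that the invertible matrix $\mm^{-1}\geq \mzero$ has no zero row gives $\mm x>0 \Rightarrow x>0$, and when $\rho(\ma)=1$ a nonnegative left eigenvector $w\geq 0$, $w\neq \vzero$ with $w^\top\mm=\vzero$ makes $\mm x>0$ impossible, so (i) is vacuous; (ii) is symmetric.

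The converse, however, breaks at exactly the point you flag as ``delicate,'' and the bookkeeping you hope is routine cannot be carried through: when $\rho(\ma_{S^c})=1$ on the right-eigenvector side, the transpose side can be stuck in exactly the same way, so there is no vector violating either (i) or (ii). A concrete counterexample to the stated equivalence is
\[
\ma = \begin{pmatrix} 1 & 1 & 0 \\ 0 & 2 & 1 \\ 0 & 0 & 1 \end{pmatrix},
\qquad
\mm = \mI - \ma = \begin{pmatrix} 0 & -1 & 0 \\ 0 & -1 & -1 \\ 0 & 0 & 0 \end{pmatrix}.
\]
Here $\rho(\ma)=2$, so $\mm$ is not an M-matrix, yet the third row of $\mm$ is identically zero (so no $x$ satisfies $\mm x>0$) and the first row of $\mm^\top$ is identically zero (so no $y$ satisfies $\mm^\top y>0$); hence (i) and (ii) both hold vacuously. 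In other words, the ``only if'' direction as literally stated fails for reducible $\ma$, and no Frobenius-normal-form routing can rescue it. It does hold when $\ma$ is irreducible, where your one-liner $x=-v$ already works since $v>0$ strictly, so $x\not>0$ while $\mm x=(\rho(\ma)-1)v>0$. Since the paper only ever invokes the forward implication (e.g.\ to conclude $x,y>0$ in the proof of Lemma~\ref{lemma:MMatrix_RCDD}, and to flag non-M-matrices in the decision algorithm), this does not affect the paper's results, but your proposed proof of the converse cannot be completed as written; one would need either to add an irreducibility hypothesis or to replace (i) and (ii) with the classical monotonicity characterization of nonsingular M-matrices, which builds in the existence of a suitable $x$ and avoids the vacuous case.
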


\MMatrixRCDD*

\begin{proof}
	We begin with the first statement. Recall by definitions that $\ms$ is RCDD if for every $i$, 
	\[
	\ms_{ii} \geq \sum_{j \neq i} |\ms_{ij}| \hspace{3mm} \text{and} \hspace{3mm} \ms_{ii} \geq \sum_{j \neq i} |\ms_{ji}|.
	\]
	Note that by Lemma \ref{lemma:MMatrix_monotone}, we have $y > \vzero$ and $x > \vzero$. Thus, since $\ms_{ij} = x_i \mm_{ij} y_j$ $\ms_{ij}$ has the same sign as $\mm_{ij}$ for all $i, j$. As $\mm_{ii} > 0$ for all $i$ and $\mm_{ij} \leq 0$ for $i \neq j$, we see that $\ms$ is RCDD if 
	\[
	\ms_{ii} + \sum_{j \neq i} \ms_{ij} \geq 0 \hspace{3mm} \text{and} \hspace{3mm} \ms_{ii} + \sum_{j \neq i} \ms_{ji} \geq 0,
	\]
	or in other words if $\ms \vones \geq \vzero$ and $\ms^\top \vones \geq \vzero$. Now, observe for every $i$, $(\ms \vones)_i = (\mx\mm\my \vones)_i = (\mx\mm y)_i = x_i (\mm y)_i  > \vzero$ and $(\ms^\top \vones)_i = (\my \mm^\top \mx \vones)_i = (\my \mm^\top x)_i = y_i (\mm^\top x)_i > \vzero$ by definition of $x$ and $y$. Thus $\ms$ is RCDD. 

	We now analyze the second statement. As $x$ and $y$ are positive vectors, we again observe that $\ms_{ij}$ is has the same sign that $\mm_{ij}$ does by the same argument as in the previous claim. Now, observe that since $\ms$ is RCDD and has nonpositive off diagonal,$\ms$ being RCDD is equivalent to the claims that $\ms \vones = \mx \mm y> \vzero$ and $\ms^\top \vones = \my \mm^\top x > \vzero$. As $x$ and $y$ are positive, these in turn are equivalent to $\mm y > \vzero$ and $\mm^\top x > \vzero$.. By the Perron-Frobenius theorem, we know that there exists a positive vector $\vv$ where $\ma \vv = \rho(\ma) \vv$. As $\vv > 0$, we have for any vector $\va > 0$ that $\va^\top \vv > 0$ and so 
	\[
	0 = \vzero^\top \vv < \vv^\top \mm^\top x = s \vv^\top \vx - (\ma \vv)^\top \vx = (s -\rho(\ma)) \vv^\top \vx.	
	\]
	Thus $s - \rho(\ma) > 0$: this implies $\rho(\ma) < s$ and $\mm$ is an M-matrix.
\end{proof}

\pdscaling*
\begin{proof}
	We note that by construction
	\begin{align*}
	\md^{-1/2} \mm^\top \md^{1/2} + \md^{1/2} \mm \md^{-1/2} &= \md^{-1/2} (\mm^\top \md + \md \mm )\md^{-1/2} \\
	&= \md^{-1/2} (\mm^\top \ml \mr^{-1} + \mr^{-1} \ml \mm )\md^{-1/2} \\
	&= \md^{-1/2} \mr^{-1} (\mr \mm^\top \ml +  \ml \mm \mr) \mr^{-1} \md^{-1/2}.
	\end{align*}
	Now note that $\ms = \ml \mm \mr$ is RCDD by assumption. By the previous calculation, we see that our goal is equivalent to proving that $\ms + \ms^\top$ is positive definite. But now $\ms + \ms^\top$ is a symmetric diagonally dominant matrix: it clearly has nonpositive off diagonal, and $\ms + \ms^\top \geq \vzero$ since $\ms$ is RCDD. The fact that $\ms + \ms^\top$ is positive definite follows by the Gershgorin disk theorem. 
\end{proof}	


\begin{lemma} \label{lemma:pos_infnorm}
	Let $\mb$ be a nonnegative-entried matrix. Then $\|\mb\|_\infty = \max_i (\mb \vones)_i$. 
\end{lemma}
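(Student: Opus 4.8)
The plan is to prove the two inequalities $\|\mb\|_\infty \le \max_i(\mb\vones)_i$ and $\|\mb\|_\infty \ge \max_i(\mb\vones)_i$ directly from the definition of the induced $\ell_\infty$ norm, $\|\mb\|_\infty = \max_{\|\vx\|_\infty = 1}\|\mb\vx\|_\infty$, using nonnegativity of the entries of $\mb$ at exactly one spot.

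For the upper bound, I would fix any $\vx$ with $\|\vx\|_\infty \le 1$ and estimate each coordinate: for every $i$,
\[
|(\mb\vx)_i| = \Big|\sum_j \mb_{ij}\vx_j\Big| \le \sum_j \mb_{ij}|\vx_j| \le \sum_j \mb_{ij} = (\mb\vones)_i,
\]
where the second inequality uses $\mb_{ij}\ge 0$ and $|\vx_j|\le 1$. Taking the maximum over $i$ gives $\|\mb\vx\|_\infty \le \max_i(\mb\vones)_i$, and then the maximum over all such $\vx$ yields $\|\mb\|_\infty \le \max_i(\mb\vones)_i$.

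For the lower bound, I would simply plug in the test vector $\vx = \vones$, which satisfies $\|\vones\|_\infty = 1$. Since $\mb$ is nonnegative, $\mb\vones$ has nonnegative entries, so $\|\mb\vones\|_\infty = \max_i |(\mb\vones)_i| = \max_i(\mb\vones)_i$. Hence $\|\mb\|_\infty \ge \|\mb\vones\|_\infty = \max_i(\mb\vones)_i$. Combining the two bounds gives the claimed equality. (One can also read the statement off the standard row-sum formula $\|\mb\|_\infty = \max_i \sum_j |\mb_{ij}|$ recorded in Section~\ref{sub:prelim_notation} together with $\sum_j|\mb_{ij}| = \sum_j \mb_{ij} = (\mb\vones)_i$, but giving the short self-contained argument above seems cleanest.)

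There is no real obstacle here; the only point requiring care is that nonnegativity of $\mb$ is what lets the absolute values be dropped and what makes the all-ones vector an exact maximizer, so I would make sure that use is called out explicitly.
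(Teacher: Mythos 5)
Your proof is correct, but it takes a slightly different route than the paper. The paper's proof is a one-liner: it invokes the row-sum identity $\|\mb\|_\infty = \max_i \sum_j |\mb_{ij}|$, which is already recorded in Section~\ref{sub:prelim_notation}, drops the absolute values using nonnegativity, and identifies $\sum_j \mb_{ij} = (\mb\vones)_i$. You instead prove both inequalities from scratch out of the variational definition $\|\mb\|_\infty = \max_{\|\vx\|_\infty = 1}\|\mb\vx\|_\infty$, establishing the upper bound coordinatewise and the lower bound by plugging in $\vones$; in effect you re-derive the row-sum formula in the nonnegative case. You even flag the paper's shorter argument as a parenthetical alternative. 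Both are sound; the paper's is tighter because it leans on a fact already stated, while yours is fully self-contained and correctly isolates where nonnegativity is used (to drop $|\cdot|$ and to make $\vones$ an exact maximizer). No gap.
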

\begin{proof}
	Observe that $\|\mb\|_\infty = \max_i \sum_{j} |\mb_{ij}| = \max_i \sum_{j} \mb_{ij}$ since $\mb$ is nonnegative. Now $\sum_{j} \mb_{ij} = (\mb \vones)_i$, and the result follows.
\end{proof}

\begin{lemma} \label{lemma:eigvec_scaling}
	Let $\ma$ be an irreducible nonnegative matrix. Let $\eigr$ and $\eigl$ be the top right and left eigenvectors of $\ma$ respectively. Then $\|\mv_r^{-1} \ma \mv_r\|_\infty = \rho(\ma)$ and $\|\mv_l^{-1} \ma^\top \mv_l\|_\infty = \rho(\ma)$.
\end{lemma}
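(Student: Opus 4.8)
The plan is to reduce both claims to Lemma~\ref{lemma:pos_infnorm}, which computes the $\ell_\infty$ norm of a nonnegative matrix as its largest row sum, i.e. as the largest coordinate of the matrix applied to $\vones$. First I would record the hypotheses needed to even make sense of the statement: since $\ma$ is irreducible and nonnegative, the Perron--Frobenius theorem guarantees that the top right and left eigenvectors $\eigr,\eigl$ are entrywise positive, so $\mv_r = \diag(\eigr)$ and $\mv_l = \diag(\eigl)$ are invertible with strictly positive diagonal. In particular $\mv_r^{-1}\ma\mv_r$ and $\mv_l^{-1}\ma^\top\mv_l$ are again nonnegative (a positive diagonal similarity preserves the entrywise sign pattern), so Lemma~\ref{lemma:pos_infnorm} is applicable to each of them.

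Next, for the right scaling I would simply compute $\mv_r^{-1}\ma\mv_r\vones$. Since $\mv_r\vones = \eigr$ and $\ma\eigr = \rho(\ma)\eigr$, this equals $\mv_r^{-1}\rho(\ma)\eigr = \rho(\ma)\,\mv_r^{-1}\eigr = \rho(\ma)\vones$. Hence every row sum of the nonnegative matrix $\mv_r^{-1}\ma\mv_r$ equals $\rho(\ma)$, and by Lemma~\ref{lemma:pos_infnorm} we conclude $\|\mv_r^{-1}\ma\mv_r\|_\infty = \max_i (\rho(\ma)\vones)_i = \rho(\ma)$.

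For the left scaling I would observe that $\ma^\top$ is itself an irreducible nonnegative matrix whose top right eigenvector is $\eigl$ (a left eigenvector of $\ma$ is a right eigenvector of $\ma^\top$, and $\rho(\ma^\top)=\rho(\ma)$ since a matrix and its transpose have the same spectrum). Applying the argument of the previous paragraph verbatim with $\ma$ replaced by $\ma^\top$ and $\eigr$ replaced by $\eigl$ gives $\|\mv_l^{-1}\ma^\top\mv_l\|_\infty = \rho(\ma^\top) = \rho(\ma)$, as desired.

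I do not expect a genuine obstacle here; the proof is essentially the one-line identity $\mv_r^{-1}\ma\mv_r\vones = \rho(\ma)\vones$ combined with Lemma~\ref{lemma:pos_infnorm}. The only two points warranting a sentence of care are: (i) invoking Perron--Frobenius so that the eigenvectors are strictly positive, which is what makes the diagonal scalings well-defined and keeps the scaled matrices nonnegative; and (ii) noting $\rho(\ma^\top)=\rho(\ma)$ so that the two halves of the statement really produce the same constant.
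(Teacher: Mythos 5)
Your proof is correct and follows essentially the same route as the paper's: both invoke Perron--Frobenius for positivity of the eigenvectors, observe that the diagonal similarity keeps the matrix nonnegative, and then apply Lemma~\ref{lemma:pos_infnorm} together with the one-line identity $\mv_r^{-1}\ma\mv_r\vones = \rho(\ma)\vones$ (and the analogous computation for the transpose). The only cosmetic difference is that you make the reduction of the left case to the right case via $\ma^\top$ fully explicit, whereas the paper simply remarks that the argument is ``essentially identical.''
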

\begin{proof}
	We prove the result for $\eigr$: the proof for $\eigl$ is essentially identical. Note that by the Perron-Frobenius theorem $\eigr$ is a positive vector. Thus, we can see that $\mb = \mv_r^{-1} \ma \mv_r$ is itself a nonnegative matrix. By Lemma \ref{lemma:pos_infnorm}, we have 
	\[	
	\|\mb\|_\infty = \max_i \, (\mb \vones)_i = \max_i \, (\mv_r^{-1} \ma \eigr)_i = \rho(\ma) \max_i \, (\mv_r^{-1} \eigr)_i = \rho(\ma) \max_i \, \vones_i = \rho(\ma).
	\]
\end{proof}

\section{Proofs for Section~\ref{sec:scaling}}
\label{sec:scaling_proofs}

\begin{lemma} \label{lemma:init_scaling}
	Let $\mm$ be an M-matrix, and consider $\mm_\alpha = \mm + \alpha \eye$ with $\alpha = 2\max\{\|\mm\|_\infty, \|\mm\|_1 \}$. If $\vl = \vr = \frac{1}{\alpha} \vones$, we have for every $i$
	\[
	(\mm_\alpha \vr)_i \in \left[\frac{1}{2}, \frac{3}{2}\right]	\quad \text{and} \quad (\mm_\alpha^\top \vl)_i \in \left[\frac{1}{2}, \frac{3}{2}\right].
	\]
\end{lemma}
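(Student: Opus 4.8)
The plan is to expand $\mm_\alpha \vr$ and $\mm_\alpha^\top \vl$ explicitly and bound the resulting perturbation of $\vones$ using the definitions of the induced $\ell_\infty$ and $\ell_1$ matrix norms. Since $\vr = \frac{1}{\alpha}\vones$, we have
\[
\mm_\alpha \vr = (\mm + \alpha \eye)\tfrac{1}{\alpha}\vones = \vones + \tfrac{1}{\alpha}\mm\vones,
\]
so that $(\mm_\alpha \vr)_i = 1 + \tfrac{1}{\alpha}(\mm\vones)_i$ for every $i$. It therefore suffices to show $|(\mm\vones)_i| \le \alpha/2$ for all $i$. This is immediate from the definition of the induced norm: $|(\mm\vones)_i| \le \|\mm\vones\|_\infty \le \|\mm\|_\infty \|\vones\|_\infty = \|\mm\|_\infty$, and $\|\mm\|_\infty \le \tfrac{\alpha}{2}$ because $\alpha = 2\max\{\|\mm\|_\infty,\|\mm\|_1\} \ge 2\|\mm\|_\infty$. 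Hence $(\mm_\alpha \vr)_i \in [1-\tfrac12,\,1+\tfrac12] = [\tfrac12,\tfrac32]$.

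For the left scaling the argument is symmetric. Since $\vl = \frac{1}{\alpha}\vones$ we have $\mm_\alpha^\top \vl = \vones + \tfrac{1}{\alpha}\mm^\top\vones$, so $(\mm_\alpha^\top \vl)_i = 1 + \tfrac{1}{\alpha}(\mm^\top\vones)_i$. Now $|(\mm^\top\vones)_i| \le \|\mm^\top\vones\|_\infty \le \|\mm^\top\|_\infty = \|\mm\|_1 \le \tfrac{\alpha}{2}$, where the identity $\|\mm^\top\|_\infty = \|\mm\|_1$ is the one recorded in the preliminaries (Section~\ref{sub:prelim_notation}) and $\alpha \ge 2\|\mm\|_1$. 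This yields $(\mm_\alpha^\top \vl)_i \in [\tfrac12,\tfrac32]$, completing the proof.

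The only point requiring care is the normalization convention for the induced matrix norms — $\|\cdot\|_\infty$ is the maximum absolute row sum and $\|\cdot\|_1$ the maximum absolute column sum, with $\|\mm^\top\|_\infty = \|\mm\|_1$ — so that the two factors of $2$ in the choice of $\alpha$ are matched to the correct norm in each of the two bounds. Beyond this bookkeeping there is no real obstacle; in particular the M-matrix hypothesis on $\mm$ plays no role in the estimate itself and merely fixes the setting in which $\mm_\alpha$ is used.
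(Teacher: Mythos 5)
Your proof is correct and follows essentially the same approach as the paper's: expand $\mm_\alpha \vr = \vones + \tfrac{1}{\alpha}\mm\vones$, bound the perturbation entrywise via $\|\mm\|_\infty \le \alpha/2$, and argue symmetrically for the left scaling using $\|\mm^\top\|_\infty = \|\mm\|_1$. You simply spell out the left-scaling case more explicitly than the paper's ``The proof for $\vl$ is similar.''
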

\begin{proof}
	We see that $\mm \vr = \vones + \frac{1}{\alpha} \mm \vones$. Now, since $\|\mm \vones\|_\infty \leq \frac{\alpha}{2}$, we have for any $i$ that $(\mm \vones)_i \in [-\frac{\alpha}{2}, \frac{\alpha}{2}]$ and thus $(\mm \vr)_i = \vones_i + \frac{1}{\alpha} (\mm \vones)_i \in [1 - \frac{1}{2}, 1 + \frac{1}{2}] = [\frac{1}{2}, \frac{3}{2}]$. The proof for $\vl$ is similar.
\end{proof}

\begin{lemma} \label{lemma:smallest_scaling}
	Let $\mm = \eye - \ma$ be an M-matrix. Let $\alpha \geq 0$ be a parameter, and let $\vw$ be a vector where $\vw_i \in [\frac{1}{2}, \frac{3}{2}]$ for every $i$. Consider the vectors $\vr = (\mm + \alpha \eye)^{-1} \vw$ and $\vl = (\mm + \alpha \eye)^{-\top} \vw$. Then for every $i$ we have $\vr_i \geq \frac{1}{2(1+\alpha)}$ and $\vl_i \geq \frac{1}{2(1+\alpha)}$
\end{lemma}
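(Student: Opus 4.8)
The plan is to route everything through the Neumann (geometric) series for $(\mm + \alpha\eye)^{-1}$. Since $\mm = \eye - \ma$ with $\ma$ entrywise nonnegative, we have $\mm + \alpha\eye = (1+\alpha)\eye - \ma = (1+\alpha)\bigl(\eye - \tfrac{1}{1+\alpha}\ma\bigr)$. Because $\mm$ is an M-matrix, $\rho(\ma)\le 1$, and since $\mm+\alpha\eye$ is invertible (which is automatic when $\alpha>0$, as then $\rho(\ma)\le 1<1+\alpha$) we in fact have $\rho(\ma)<1+\alpha$, i.e. $\rho\bigl(\tfrac{1}{1+\alpha}\ma\bigr)<1$. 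Hence the series
\[
(\mm+\alpha\eye)^{-1} \;=\; \frac{1}{1+\alpha}\sum_{i=0}^{\infty}\Bigl(\tfrac{1}{1+\alpha}\ma\Bigr)^{i}
\]
converges. (Equivalently, one can observe that $\mm+\alpha\eye$ is an invertible M-matrix and quote the standard fact, also used in Section~\ref{sec:mmatrix-rcdd}, that its inverse is entrywise nonnegative.)

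The key point is then just positivity: every term $\bigl(\tfrac{1}{1+\alpha}\ma\bigr)^{i}$ is entrywise nonnegative since $\ma$ is, so $(\mm+\alpha\eye)^{-1}$ is entrywise nonnegative, and retaining only the $i=0$ term gives the entrywise bound $(\mm+\alpha\eye)^{-1}\ge \tfrac{1}{1+\alpha}\eye$. (Without appealing to the series, the same bound follows from $(1+\alpha)(\mm+\alpha\eye)^{-1} = \eye + (\mm+\alpha\eye)^{-1}\ma \ge \eye$ entrywise, using that $(\mm+\alpha\eye)^{-1}$ and $\ma$ are both nonnegative.) Applying this to the nonnegative vector $\vw$ yields $\vr = (\mm+\alpha\eye)^{-1}\vw \ge \tfrac{1}{1+\alpha}\vw$ entrywise; since each $\vw_i\ge\tfrac12$ this gives $\vr_i\ge\tfrac{1}{2(1+\alpha)}$ for all $i$, as claimed.

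For $\vl$ I would repeat the argument verbatim with $(\mm+\alpha\eye)^\top = (1+\alpha)\eye - \ma^\top$: the matrix $\ma^\top$ is again entrywise nonnegative and $\rho(\ma^\top)=\rho(\ma)<1+\alpha$, so $(\mm+\alpha\eye)^{-\top}\ge\tfrac{1}{1+\alpha}\eye$ entrywise and therefore $\vl_i\ge\tfrac{1}{2(1+\alpha)}$. There is no genuine obstacle in this lemma; the only point that needs a word of care is justifying the invertibility/convergence (i.e. that $\mm+\alpha\eye$ is an invertible M-matrix), after which the conclusion is a one-line entrywise positivity argument.
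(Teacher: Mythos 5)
Your proof is correct and is essentially the paper's argument. The paper rearranges $(\mm+\alpha\eye)\vr=\vw$ to $(1+\alpha)\vr = \vw + \ma\vr$ and then drops the nonnegative term $\ma\vr\ge 0$ (using Lemma~\ref{lemma:MMatrix_monotone} to get $\vr>0$), which is exactly the content of your parenthetical remark $(1+\alpha)(\mm+\alpha\eye)^{-1}=\eye+(\mm+\alpha\eye)^{-1}\ma\ge\eye$ lifted to the matrix level; the Neumann-series presentation is just a different justification for the entrywise nonnegativity of $(\mm+\alpha\eye)^{-1}$. The one point worth flagging explicitly, as you did, is that invertibility of $\mm+\alpha\eye$ is needed; in the paper's usage $\rho(\ma)<1$ so this always holds, even at $\alpha=0$.
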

\begin{proof}
	By the definition of $\vr$, we have  $\frac{1}{2} \leq \vw_i = ((\mm + \alpha \eye) \vr)_i = \sum_j ((1+\alpha)\eye + \ma)_{ij} \vr_j = (1+\alpha) \vr_i - \sum_j \ma_{ij} \vr_j$. Note that by Lemma \ref{lemma:MMatrix_monotone} $\vr$ is a positive vector and that $\ma$ is a nonnegative matrix. Thus $\sum_j \ma_{ij} \vr_j \geq 0$, and we conclude that $\frac{1}{2} \leq (1+\alpha) \vr_i$: this implies the result. An identical proof gives the result for $\vl$ as well. 
\end{proof}

\begin{lemma} \label{lemma:scaling_infnorms}
	Let $\mm = \eye - \ma$ be an M-matrix. Let $\alpha \geq 0$, and let $\mm_\alpha = \mm + \alpha \eye$. Let $\vl_\alpha$ and $\vr_\alpha$ be vectors where $(\mm_\alpha^\top \vl_\alpha)_j \in [\frac{1}{2}, \frac{3}{2}]$ and $(\mm_\alpha \vr_\alpha)_j \in [\frac{1}{2}, \frac{3}{2}]$ for all $j$. Then
	\begin{enumerate}[label=\textbf{\roman*})]
	\item $\|\mr_\alpha\|_\infty \leq \frac{3}{2 (1+ \alpha)} \|\mm^{-1}\|_\infty$
	\item $\|\ml_\alpha\|_\infty \leq \frac{3}{2 (1+ \alpha)} \|\mm^{-1}\|_1$
	\end{enumerate}
\end{lemma}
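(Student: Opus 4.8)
The plan is to reduce everything to two facts: that $\mm_\alpha = (1+\alpha)\eye - \ma$ is an \emph{invertible} M-matrix (so $\mm_\alpha^{-1}$ is entrywise nonnegative), and that $\mm_\alpha^{-1}$ is dominated entrywise by $\frac{1}{1+\alpha}\mm^{-1}$. First I would note that $\rho(\ma) < 1 \le 1+\alpha$, so $\mm_\alpha$ is an invertible M-matrix and $\mm_\alpha^{-1} \ge \mzero$ entrywise; moreover, since $(\mm_\alpha \vr_\alpha)_j \ge \frac12 > 0$ and $(\mm_\alpha^\top \vl_\alpha)_j \ge \frac12 > 0$ for all $j$, Lemma~\ref{lemma:MMatrix_monotone} gives $\vr_\alpha > \vzero$ and $\vl_\alpha > \vzero$. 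Hence $\mr_\alpha$ and $\ml_\alpha$ are positive diagonal matrices and $\|\mr_\alpha\|_\infty = \|\vr_\alpha\|_\infty$, $\|\ml_\alpha\|_\infty = \|\vl_\alpha\|_\infty$, so it suffices to bound these two vector norms.

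Next I would write $\vw \defeq \mm_\alpha \vr_\alpha$, which by hypothesis satisfies $\vw \le \frac32 \vones$ entrywise; applying the nonnegative matrix $\mm_\alpha^{-1}$ preserves this inequality, so $\vr_\alpha = \mm_\alpha^{-1}\vw \le \frac32 \mm_\alpha^{-1}\vones$ entrywise, giving $\|\vr_\alpha\|_\infty \le \frac32 \|\mm_\alpha^{-1}\vones\|_\infty$. By Lemma~\ref{lemma:pos_infnorm} applied to the nonnegative matrix $\mm_\alpha^{-1}$, this is exactly $\frac32 \|\mm_\alpha^{-1}\|_\infty$. Running the identical argument with $\mm_\alpha^\top$ in place of $\mm_\alpha$ (using $\mm_\alpha^\top \vl_\alpha \le \frac32 \vones$ and that $\mm_\alpha^{\invtrans}$ is also nonnegative) yields $\|\vl_\alpha\|_\infty \le \frac32 \|\mm_\alpha^{\invtrans}\|_\infty = \frac32 \|\mm_\alpha^{-1}\|_1$, where the last equality is the identity $\|\mb^\top\|_\infty = \|\mb\|_1$ from the preliminaries.

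Finally I would relate $\mm_\alpha^{-1}$ to $\mm^{-1}$ through the Neumann series
\[
\mm_\alpha^{-1} = \frac{1}{1+\alpha}\left(\eye - \tfrac{\ma}{1+\alpha}\right)^{-1} = \frac{1}{1+\alpha}\sum_{i=0}^{\infty}\frac{\ma^i}{(1+\alpha)^i},
\]
which converges since $\rho(\ma/(1+\alpha)) < 1$. Because $\ma \ge \mzero$ and $1+\alpha \ge 1$, every term obeys $\ma^i/(1+\alpha)^i \le \ma^i$ entrywise, so $\mm_\alpha^{-1} \le \frac{1}{1+\alpha}\sum_{i=0}^\infty \ma^i = \frac{1}{1+\alpha}\mm^{-1}$ entrywise (all matrices here being nonnegative). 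Since for nonnegative matrices $\mzero \le \mx \le \my$ entrywise one has $\|\mx\|_\infty \le \|\my\|_\infty$ and $\|\mx\|_1 \le \|\my\|_1$ (both norms being maxima of sums of magnitudes of entries), it follows that $\|\mm_\alpha^{-1}\|_\infty \le \frac{1}{1+\alpha}\|\mm^{-1}\|_\infty$ and $\|\mm_\alpha^{-1}\|_1 \le \frac{1}{1+\alpha}\|\mm^{-1}\|_1$; combining with the previous paragraph gives i) and ii). I do not anticipate a genuine obstacle here — the only thing to be careful about is keeping every intermediate inequality entrywise among nonnegative objects so that the passage from entrywise domination to domination of the induced $\|\cdot\|_\infty$ and $\|\cdot\|_1$ norms is legitimate, and recognizing that invertibility of the M-matrix $\mm$ is precisely what guarantees the nonnegativity of $\mm_\alpha^{-1}$ (hence of $\vr_\alpha,\vl_\alpha$) used throughout.
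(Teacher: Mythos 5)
Your proof is correct and follows essentially the same route as the paper's: both hinge on the entrywise nonnegativity of $\mm_\alpha^{-1}$ (to pass from $\vw \le \tfrac{3}{2}\vones$ to $\vr_\alpha \le \tfrac{3}{2}\mm_\alpha^{-1}\vones$), on Lemma~\ref{lemma:pos_infnorm}, and on term-by-term comparison of the Neumann series for $\mm_\alpha^{-1}$ against that for $\mm^{-1}$. The only cosmetic difference is that you carry the entrywise domination $\mm_\alpha^{-1} \le \tfrac{1}{1+\alpha}\mm^{-1}$ at the matrix level before taking norms, whereas the paper passes to $\|\mm_\alpha^{-1}\vones\|_\infty$ first and compares sums there; the content is identical.
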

\begin{proof}
	Define $w = \mm_\alpha \vr_\alpha$. Note that by definition $w(i) \in [\frac{1}{2}, \frac{3}{2}]$ for every $i$. Since $\frac{3}{2} \vones - w \geq \vzero$, we conclude that $\mm_\alpha^{-1} (\frac{3}{2} \vones - w) = \frac{3}{2}\mm_\alpha^{-1} \vones - \vr_\alpha \geq \vzero$ and hence $\|\vr_\alpha\|_\infty \leq \frac{3}{2} \|\mm_\alpha^{-1} \vones\|_\infty$. As $\mm_\alpha = (1 + \alpha) \eye - \ma$ and $\rho(\ma) < 1$, we see
	\[
	\|\mm_\alpha^{-1} \vones\|_\infty =  \frac{1}{1+\alpha} \|\sum_{i=0}^\infty \Big(\frac{\ma}{1+\alpha}\Big)^i \vones\|_\infty \leq \frac{1}{1+\alpha} \|\sum_{i=0}^\infty \Big(\frac{\ma}{1+\alpha}\Big)^i\|_\infty.
	\] 
	Now, observe that snce $\ma$ is a nonnegative matrix, $\ma^i$ is nonnegative for every nonnegative integer $i$. Thus, $\sum_{i=0}^\infty \Big(\frac{\ma}{1+\alpha}\Big)^i$ is a nonnegative matrix. By Lemma \ref{lemma:pos_infnorm}, we see that 
	\[
	\|\mm_\alpha^{-1} \vones\|_\infty \leq 	\frac{1}{1+\alpha} \|\sum_{i=0}^\infty \Big(\frac{\ma}{1+\alpha}\Big)^i\|_\infty = \frac{1}{1+\alpha} \max_j (\sum_{i=0}^\infty \Big(\frac{\ma}{1+\alpha}\Big)^i \vones)_j.
	\]
	Now as $\frac{1}{1+\alpha} \leq 1$ and $\ma^i$ is nonnegative, replacing the $\alpha$ in the above with $\epsilon$ can only increase every entry of the sum and therefore can only increase our bound. Thus
	\[
	\|\mm_\alpha^{-1} \vones\|_\infty \leq \frac{1}{1+\alpha} \max_j (\sum_{i=0}^\infty \Big(\frac{\ma}{1+\epsilon}\Big)^i \vones)_j = \frac{1}{1+\alpha} \|\sum_{i=0}^\infty \ma^i \|_\infty = \frac{1}{1+\alpha} \|\mm^{-1}\|_\infty
	\]
	where we again used Lemma \ref{lemma:pos_infnorm}. Combining the pieces we obtain
	\[
	\|\mr_\alpha\|_\infty = \|\vr_\alpha\|_\infty \leq \frac{3}{2} \|\mm_\alpha^{-1} \vones\|_\infty \leq  \frac{3}{2 (1+\alpha)} \|\mm^{-1}\|_\infty.  
	\]
	Essentially the same argument gives $\textbf{ii)}$ as well. 
\end{proof}

\begin{lemma} \label{lemma:scaling_condition}
	Let $\mm = \eye - \ma$ be an M-matrix. Let $\alpha \geq 0$ and $\mm_\alpha = \mm + \alpha \eye$. Let $\vl_\alpha$ and $\vr_\alpha$ be vectors where $(\mm_\alpha^\top \vl_\alpha)_j \in [\frac{1}{2}, \frac{3}{2}]$ and $(\mm_\alpha \vr_\alpha)_j \in [\frac{1}{2}, \frac{3}{2}]$ for all $j$. Define $\ms_\alpha = \ml_\alpha \mm_\alpha \mr_\alpha$. Then
	\begin{enumerate}[label=\textbf{\roman*})]
	\item $\kappa(\mr_\alpha) \leq 3 \|\mm^{-1}\|_\infty$
	\item $\kappa(\ml_\alpha) \leq 3 \|\mm^{-1}\|_1$
	\item $\kappa(\ms_\alpha) \leq 18  \|\mm^{-1}\|_\infty \|\mm^{-1}\|_1$
	\item $\kappa(\ml_0 \mr_0^{-1}) \leq 9 \|\mm^{-1}\|_\infty \|\mm^{-1}\|_1$
	\end{enumerate}
\end{lemma}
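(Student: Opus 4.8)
The plan is to derive all four bounds from two ingredients already in hand: the entrywise lower bounds $(\vr_\alpha)_i,(\vl_\alpha)_i \geq \tfrac{1}{2(1+\alpha)}$ coming from Lemma~\ref{lemma:smallest_scaling} (applied with $\vw = \mm_\alpha\vr_\alpha$, resp. $\vw = \mm_\alpha^\top\vl_\alpha$, whose entries lie in $[\tfrac12,\tfrac32]$ by hypothesis), and the $\ell_\infty$ upper bounds $\|\vr_\alpha\|_\infty \leq \tfrac{3}{2(1+\alpha)}\|\mm^{-1}\|_\infty$, $\|\vl_\alpha\|_\infty \leq \tfrac{3}{2(1+\alpha)}\|\mm^{-1}\|_1$ from Lemma~\ref{lemma:scaling_infnorms}. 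Parts \textbf{i)}, \textbf{ii)}, \textbf{iv)} will follow by pure bookkeeping; part \textbf{iii)} will require one extra structural observation about $\ms_\alpha$.

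For \textbf{i)}, since $\mr_\alpha = \mdiag(\vr_\alpha)$ is positive diagonal, $\kappa(\mr_\alpha) = \|\vr_\alpha\|_\infty/\min_i(\vr_\alpha)_i \leq \big(\tfrac{3}{2(1+\alpha)}\|\mm^{-1}\|_\infty\big)/\big(\tfrac{1}{2(1+\alpha)}\big) = 3\|\mm^{-1}\|_\infty$; the factors $(1+\alpha)$ cancel. The identical computation for $\vl_\alpha$ gives \textbf{ii)}. For \textbf{iv)}, note that for positive diagonal matrices $\kappa(\ml_0\mr_0^{-1}) \leq \kappa(\ml_0)\kappa(\mr_0)$, and the canonical choices $\vr_0 = \mm^{-1}\vones$, $\vl_0 = \mm^{-\top}\vones$ satisfy $\mm_0\vr_0 = \mm_0^\top\vl_0 = \vones$, so they meet the hypothesis of the lemma at $\alpha = 0$; applying \textbf{i)} and \textbf{ii)} at $\alpha=0$ yields $\kappa(\ml_0\mr_0^{-1}) \leq 9\|\mm^{-1}\|_\infty\|\mm^{-1}\|_1$.

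For \textbf{iii)} I would bound $\|\ms_\alpha\|_2$ and $\|\ms_\alpha^{-1}\|_2$ separately. First observe $\ms_\alpha = \ml_\alpha\mm_\alpha\mr_\alpha$ is RCDD with nonpositive off‑diagonal by Lemma~\ref{lemma:MMatrix_RCDD} (here $\mm_\alpha = (1+\alpha)\eye - \ma$ is an M‑matrix since $\rho(\ma) < 1 \leq 1+\alpha$, and $\mm_\alpha\vr_\alpha,\mm_\alpha^\top\vl_\alpha > 0$), hence $\ms_\alpha$ is itself an M‑matrix and $\ms_\alpha^{-1} \geq 0$. For the norm of $\ms_\alpha$: RCDD‑ness gives $\sum_j|(\ms_\alpha)_{ij}| \leq 2(\ms_\alpha)_{ii}$ for every row and column, so $\|\ms_\alpha\|_2 \leq \sqrt{\|\ms_\alpha\|_1\|\ms_\alpha\|_\infty} \leq 2\max_i(\ms_\alpha)_{ii}$; since $(\ms_\alpha)_{ii} = (\vl_\alpha)_i(\mm_\alpha)_{ii}(\vr_\alpha)_i \leq (1+\alpha)\|\vl_\alpha\|_\infty\|\vr_\alpha\|_\infty$, Lemma~\ref{lemma:scaling_infnorms} gives $\|\ms_\alpha\|_2 \leq \tfrac{9}{2(1+\alpha)}\|\mm^{-1}\|_1\|\mm^{-1}\|_\infty$. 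For the inverse: $(\ms_\alpha\vones)_i = (\vl_\alpha)_i(\mm_\alpha\vr_\alpha)_i \geq \tfrac{1}{2(1+\alpha)}\cdot\tfrac12 = \tfrac{1}{4(1+\alpha)}$, so $4(1+\alpha)\,\ms_\alpha\vones \geq \vones$; multiplying by $\ms_\alpha^{-1}\geq 0$ gives $\ms_\alpha^{-1}\vones \leq 4(1+\alpha)\vones$, and since $\ms_\alpha^{-1}$ is nonnegative Lemma~\ref{lemma:pos_infnorm} yields $\|\ms_\alpha^{-1}\|_\infty \leq 4(1+\alpha)$; the same argument on $\ms_\alpha^\top$ (also RCDD, also an M‑matrix) gives $\|\ms_\alpha^{-1}\|_1 = \|(\ms_\alpha^\top)^{-1}\|_\infty \leq 4(1+\alpha)$, hence $\|\ms_\alpha^{-1}\|_2 \leq 4(1+\alpha)$. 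Multiplying, the $(1+\alpha)$ factors cancel and $\kappa(\ms_\alpha) \leq 18\|\mm^{-1}\|_1\|\mm^{-1}\|_\infty$.

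The step I expect to be the main obstacle is the $\alpha$‑free bound on $\|\ms_\alpha^{-1}\|_2$ in part \textbf{iii)}. The naive submultiplicative estimate $\|\ms_\alpha^{-1}\|_2 \leq \|\mr_\alpha^{-1}\|_2\,\|\mm_\alpha^{-1}\|_2\,\|\ml_\alpha^{-1}\|_2$ leaves an uncancelled factor $\|\mm_\alpha^{-1}\|_2$ of order $\|\mm^{-1}\|$, which would degrade the final bound to $(\|\mm^{-1}\|_1\|\mm^{-1}\|_\infty)^{3/2}$; and bounding $\|\ms_\alpha\|_2$ by $\sqrt{\|\ms_\alpha\|_1\|\ms_\alpha\|_\infty}$ alone does not suffice either. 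What makes the linear bound go through is exploiting that $\ms_\alpha$ is an \emph{M‑matrix}, not merely RCDD, so that $\ms_\alpha^{-1} \geq 0$ and the monotonicity trick ``$\ms_\alpha^{-1}(c^{-1}\ms_\alpha\vones - \vones) \geq 0$'' (the same device used in Lemma~\ref{lemma:scaling_infnorms}) applies — hence establishing that $\ms_\alpha$ and $\ms_\alpha^\top$ are M‑matrices via Lemma~\ref{lemma:MMatrix_RCDD} is the crucial structural observation.
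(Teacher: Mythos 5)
Your proof is correct, and for parts \textbf{i)}, \textbf{ii)}, \textbf{iv)} and the $\|\ms_\alpha\|_2$ half of \textbf{iii)} it matches the paper's argument essentially step for step (Lemma~\ref{lemma:scaling_infnorms} for the upper bounds, Lemma~\ref{lemma:smallest_scaling} for the entrywise lower bounds, $\sqrt{\|\cdot\|_1\|\cdot\|_\infty}$ for $\|\ms_\alpha\|_2$, and $\kappa(\ml_0\mr_0^{-1}) \leq \kappa(\ml_0)\kappa(\mr_0)$). Where you diverge is the bound on $\|\ms_\alpha^{-1}\|_2$. The paper invokes a theorem of Varah: for a strictly diagonally dominant matrix, $\|\ms_\alpha^{-1}\|_2 \leq 1/\sqrt{\beta\gamma}$ where $\beta$ and $\gamma$ are the minimum row and column dominance gaps, which reduces to $\min_k(\ms_\alpha\vones)_k$ and $\min_k(\ms_\alpha^\top\vones)_k$ because the off-diagonal is nonpositive. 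You instead observe that $\ms_\alpha = \ml_\alpha\mm_\alpha\mr_\alpha$ has a nonnegative inverse (since $\mm_\alpha^{-1}\geq 0$ and the scalings are positive diagonal), combine $\ms_\alpha\vones \geq \tfrac{1}{4(1+\alpha)}\vones$ with $\ms_\alpha^{-1}\geq 0$ to get $\ms_\alpha^{-1}\vones \leq 4(1+\alpha)\vones$, read off $\|\ms_\alpha^{-1}\|_\infty$ and $\|\ms_\alpha^{-1}\|_1$ via Lemma~\ref{lemma:pos_infnorm}, and interpolate. Both arguments give the identical bound $\|\ms_\alpha^{-1}\|_2 \leq 4(1+\alpha)$. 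Your route has the virtue of being self-contained --- it reuses the M-matrix monotonicity machinery the paper already establishes rather than importing Varah's external result --- and it makes explicit the structural fact that $\ms_\alpha$ is itself an M-matrix, whereas the paper's appeal to Varah is a one-line black box that hides where the dominance gaps actually come from. Your closing remark correctly diagnoses why a naive submultiplicative bound would lose a factor of roughly $\|\mm^{-1}\|$, which is a good sanity check that something $\ms_\alpha$-specific must be used.
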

\begin{proof}
	We again prove the claims in order. We start with \textbf{i)}. We observe that $\|\mr_\alpha\|_\infty \leq \frac{3}{2(1+\alpha)} \|\mm^{-1}\|_\infty$ by Lemma \ref{lemma:scaling_infnorms} and that $\|\mr_\alpha^{-1}\|_\infty \leq 2(1+\alpha)$ by Lemma \ref{lemma:smallest_scaling}. Combining these two implies the result. A similar proof gives \textbf{ii)} as well.
	
	We now move to \textbf{iii)}. Note that $\kappa(\ms_\alpha) = \|\ms_\alpha\|_2 \|\ms_\alpha^{-1}\|_2.$. We will bound these two norms separately. Starting with $\|\ms_\alpha\|_2$, we observe by Lemma \ref{lemma:MMatrix_RCDD} that $\ms_\alpha$ is an RCDD matrix. Thus, for any $i$ we observe 
	\[
	\sum_j |(\ms_\alpha)_{ij}| = (\ms_\alpha)_{ii} + \sum_{j\neq i} |(\ms_\alpha)_{ij}| \leq 2 (\ms_\alpha)_{ii} =  2 (1 + \alpha) (\vr_\alpha)_i (\vl_\alpha)_i \leq 2 (1+\alpha) \|\mr_\alpha\|_\infty \|\ml_\alpha\|_\infty. 
	\] 
	and
	\[
	\sum_j |(\ms_\alpha)_{ji}| = (\ms_\alpha)_{ii} + \sum_{j\neq i} |(\ms_\alpha)_{ji}| \leq 2 (\ms_\alpha)_{ii} =  2 (1 + \alpha) (\vr_\alpha)_i (\vl_\alpha)_i \leq 2 (1+\alpha) \|\mr_\alpha\|_\infty \|\ml_\alpha\|_\infty. 
	\]
	Thus both $\|\ms_\alpha\|_\infty$ and $\|\ms_\alpha\|_1$ are at most $2 (1+\alpha) \|\mr_\alpha\|_\infty \|\ml_\alpha\|_\infty$. As $\|\mb\|_2^2 \leq \|\mb\|_1 \|\mb\|_\infty$ for any matrix $\mb$, we see that $\|\ms_\alpha\|_2 \leq 2 (1+\alpha) \|\mr_\alpha\|_\infty \|\ml_\alpha\|_\infty$ as well. By applying Lemma \ref{lemma:scaling_infnorms} to this, we obtain
	\[
	\|\ms_\alpha\|_2 \leq \frac{9(1+\alpha) \|\mm^{-1}\|_\infty  \|\mm^{-1}\|_1}{2(1+\alpha)^2}.
	\]
	For $\|\ms_\alpha^{-1}\|_2$, we use a result of \cite{Varah}: if $\beta = \min_k (\ms_\alpha)_{kk} - \sum_{j \neq k} |(\ms_\alpha)_{jk}|$ and $\gamma = \min_k (\ms_\alpha)_{kk} - \sum_{j \neq k} |(\ms_\alpha)_{kj}|$, then $\|\ms_\alpha^{-1}\|_2 \leq \frac{1}{\sqrt{\beta \gamma}}$. As $\ms_\alpha$ has a nonnegative off-diagonal, we see that 
	\[
	\beta = \min_k (\ms_\alpha)_{kk} - \sum_{j \neq k} |(\ms_\alpha)_{jk}| = \min_k (\ms_\alpha \vones)_k \geq \min_i (\vl_\alpha)_i \min_k (\mm_\alpha \vr_\alpha)_k 
	\]
	and likewise
	\[
	\gamma = \min_k (\ms_\alpha^\top \vones)_k \geq \min_i (\vr_\alpha)_i \min_k (\mm_\alpha^\top \vl_\alpha)_k.
	\]
	As  $(\mm_\alpha^\top \vl_\alpha)_j \in [\frac{1}{2}, \frac{3}{2}]$ and $(\mm_\alpha \vr_\alpha)_j \in [\frac{1}{2}, \frac{3}{2}]$ for all $j$, we obtain $\beta \geq \frac{1}{2} \min_i (\vl_\alpha)_i$ and $\gamma \geq \frac{1}{2} \min_i (\vr_\alpha)_i$. We have by Lemma \ref{lemma:smallest_scaling} that both $\min_i (\vr_\alpha)_i$ and $\min_i (\vl_\alpha)_i$ are at least $\frac{1}{2(1 + \alpha)}$. Thus combining these we see that $\|\ms_\alpha^{-1}\|_2 \leq 4 (1+\alpha)$. Putting this together with the previous bound on $\|\ms_\alpha\|_2$ we obtain
	\[
	\kappa(\ms_\alpha) \leq \frac{36(1+\alpha)^2  \|\mm^{-1}\|_\infty  \|\mm^{-1}\|_1}{2 (1+\alpha)^2} = 18 \|\mm^{-1}\|_\infty  \|\mm^{-1}\|_1.
	\]
	Finally, $\textbf{iv)}$ follows from $\kappa(\ml_0 \mr_0^{-1}) \leq \kappa(\ml_0) \kappa(\mr_0)$ and $\textbf{i)}$ and $\textbf{ii)}$.
\end{proof}

\begin{lemma} \label{lemma:mmatrix_to_eigenvector}
	Let $\ma$ be a nonnegative matrix with $\rho(\ma) < 1$, and let $\mm = \eye - \ma$. Let $\epsilon > 0$ be a parameter and define $\mm_\epsilon = \mm + \epsilon \eye$. Let $\eigl$ and $\eigr$ represent the top left and right eigenvectors of $\ma$ respectively. Then $\|\mm_\epsilon^{-1}\|_\infty \leq \frac{1}{\epsilon} \kappa(\eigr)$ and $\|\mm_\epsilon^{-1}\|_1 \leq \frac{1}{\epsilon} \kappa(\eigl)$.
\end{lemma}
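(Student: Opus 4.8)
The plan is to exploit that $\mm_\epsilon = (1+\epsilon)\eye - \ma$ with $\rho(\ma) < 1$, so $\mm_\epsilon^{-1}$ admits the convergent Neumann series $\mm_\epsilon^{-1} = \frac{1}{1+\epsilon}\sum_{i\ge 0}\bigl(\frac{\ma}{1+\epsilon}\bigr)^i$, which is an \emph{entrywise nonnegative} matrix since $\ma \ge \mzero$. By Lemma~\ref{lemma:pos_infnorm} it then suffices to bound $\mm_\epsilon^{-1}\vones$ entrywise, and the idea is to sandwich $\vones$ between scalar multiples of the Perron vector $\eigr$, which diagonalizes the series.

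First I would record the elementary vector inequality $\frac{1}{\kappa(\eigr)}\vones \le \frac{\eigr}{\|\eigr\|_\infty} \le \vones$, valid since $\eigr > \vzero$ by Perron--Frobenius. Applying the nonnegative operator $\mm_\epsilon^{-1}$ to $\vones \le \kappa(\eigr)\,\frac{\eigr}{\|\eigr\|_\infty}$ gives $\mm_\epsilon^{-1}\vones \le \frac{\kappa(\eigr)}{\|\eigr\|_\infty}\,\mm_\epsilon^{-1}\eigr$ entrywise. Next, since $\ma\eigr = \rho(\ma)\eigr$ the series collapses to a scalar geometric sum: $\mm_\epsilon^{-1}\eigr = \frac{1}{1+\epsilon}\sum_{i\ge 0}\bigl(\frac{\rho(\ma)}{1+\epsilon}\bigr)^i\eigr = \frac{1}{1+\epsilon-\rho(\ma)}\eigr$, and $1+\epsilon-\rho(\ma) > \epsilon$ because $\rho(\ma) < 1$, so $\mm_\epsilon^{-1}\eigr \le \frac{1}{\epsilon}\eigr$ entrywise. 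Combining the two displays, $\mm_\epsilon^{-1}\vones \le \frac{\kappa(\eigr)}{\epsilon}\,\frac{\eigr}{\|\eigr\|_\infty} \le \frac{\kappa(\eigr)}{\epsilon}\vones$, and Lemma~\ref{lemma:pos_infnorm} yields $\|\mm_\epsilon^{-1}\|_\infty \le \frac{\kappa(\eigr)}{\epsilon}$.

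For the $\ell_1$ bound I would dualize: $\|\mm_\epsilon^{-1}\|_1 = \|\mm_\epsilon^{-\top}\|_\infty = \|((1+\epsilon)\eye - \ma^\top)^{-1}\|_\infty$. Now $\ma^\top$ is nonnegative with $\rho(\ma^\top) = \rho(\ma) < 1$, and its top right eigenvector is exactly $\eigl$, since $\eigl^\top \ma = \rho(\ma)\eigl^\top$ is the same as $\ma^\top \eigl = \rho(\ma)\eigl$. Re-running the previous paragraph with $\ma$ replaced by $\ma^\top$ and $\eigr$ replaced by $\eigl$ gives $\|((1+\epsilon)\eye - \ma^\top)^{-1}\|_\infty \le \frac{\kappa(\eigl)}{\epsilon}$, which is the claimed $\ell_1$ bound.

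There is no real obstacle here; the points requiring care are (i) justifying convergence and entrywise nonnegativity of the Neumann series, immediate from $\rho(\ma) < 1$ and $\ma \ge \mzero$; (ii) checking that each inequality is applied entrywise and is preserved under the nonnegative operator $\mm_\epsilon^{-1}$; and (iii) keeping straight the duality $\|\cdot\|_1 = \|(\cdot)^\top\|_\infty$ together with the correspondence between $\eigl$ and the right Perron vector of $\ma^\top$. If anything is ``the hard part'' it is merely the bookkeeping of entrywise monotonicity, which is routine.
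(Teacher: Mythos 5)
Your proof is correct and arrives at the stated bound. It is close in spirit to the paper's argument but is executed with a different key lemma: both start from the convergent nonnegative Neumann series for $\mm_\epsilon^{-1}$ and exploit $\ma\eigr=\rho(\ma)\eigr$, but the paper conjugates the series by $\mv_r$, invokes Lemma~\ref{lemma:eigvec_scaling} to obtain $\|\mv_r^{-1}\ma\mv_r\|_\infty=\rho(\ma)<1$, and then bounds the conjugated sum via submultiplicativity of the induced $\infty$-norm and the triangle inequality, whereas you apply Lemma~\ref{lemma:pos_infnorm} directly to the nonnegative matrix $\mm_\epsilon^{-1}$, sandwich $\vones$ between positive multiples of $\eigr/\|\eigr\|_\infty$, and collapse the series against $\eigr$ to get the entrywise estimate $\mm_\epsilon^{-1}\vones\le\frac{\kappa(\eigr)}{\epsilon}\vones$. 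The two techniques have essentially the same mathematical content (Lemma~\ref{lemma:eigvec_scaling} is itself proved via Lemma~\ref{lemma:pos_infnorm}), but your entrywise version is more direct and transparently carries the tighter denominator $1+\epsilon-\rho(\ma)$ up to the final relaxation, while the paper loosens $\|\mb^i\|_\infty\le\rho(\ma)^i$ to $\|\mb^i\|_\infty<1$ before summing. For the $\ell_1$ claim the paper merely asserts ``an equivalent proof with $\eigl$''; you make that step explicit via the duality $\|\mm_\epsilon^{-1}\|_1=\|\mm_\epsilon^{-\top}\|_\infty$ and the identification of $\eigl$ as the right Perron vector of $\ma^\top$, which is exactly the right formalization of what the paper leaves implicit.
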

\begin{proof}
	By Lemma \ref{lemma:eigvec_scaling}, we know that $\|\mv_r^{-1} \ma \mv_r\|_\infty = \rho(\ma) < 1.$ With this in mind, we define $\mb = \mv_r^{-1} \ma \mv_r$ and write
	\begin{align*}
	\|\mm_\epsilon^{-1}\|_\infty &= \|((1+\epsilon) \eye - \ma)^{-1}\|_\infty = \frac{1}{1+\epsilon} \| \sum_{i=0}^\infty \Big(\frac{\ma}{1+\epsilon}\Big)^i \|_\infty \\
	&=  \frac{1}{1+\epsilon} \| \sum_{i=0}^\infty \Big(\frac{\mv_r \mb \mv_r^{-1}}{1+\epsilon}\Big)^i \|_\infty =  \frac{1}{1+\epsilon} \| \sum_{i=0}^\infty \mv_r \Big(\frac{\mb}{1+\epsilon}\Big)^i \mv_r^{-1} \|_\infty\\
	&\leq \frac{1}{1+\epsilon} \|\mv_r\|_\infty \|\mv_r^{-1}\|_\infty \| \sum_{i=0}^\infty \Big(\frac{\mb}{1+\epsilon}\Big)^i \|_\infty.
	\end{align*}
	Now, $\|\mv_r\|_\infty \|\mv_r^{-1}\|_\infty = \kappa(\eigr)$. Further, we have $\|\mb^i\|_\infty \leq \|\mb\|_\infty^i < 1.$ Thus, by triangle inequality we obtain
	\[
	\|\mm_\epsilon^{-1}\|_\infty \leq \frac{\kappa(\eigr)}{1+\epsilon} \| \sum_{i=0}^\infty \Big(\frac{\mb}{1+\epsilon}\Big)^i \|_\infty \leq \frac{\kappa(\eigr)}{1+\epsilon} \sum_{i=0}^\infty \frac{\|\mb^i\|}{(1+\epsilon)^i} < \frac{1}{1+\epsilon} \kappa(\eigr) \sum_{i=0}^\infty \frac{1}{(1+\epsilon)^i} = \frac{\kappa(\eigr)}{\epsilon}
	\]
	as desired. An equivalent proof gives the result with $\eigl$. 
\end{proof}

\section{Scaling Symmetric M-matrices}
\label{sec:symmscaling}

Here, we show how the M-matrix scaling algorithm and analysis from Section \ref{sec:scaling} can be simplified in the case where our input M-matrix is symmetric. Additionally, we give our algorithm's exact running time assuming access to the best SDD linear system solvers currently available. We first define our oracle for solving symmetric SDD linear systems:

\SDDsolve*

We prove the following result:

\symmscaling*

\begin{algorithm} [t]
    \caption {\sc \fullsymmscale$(A, \epsilon, K)$} \label{alg:symMMatrix}    
\begin{algorithmic}[1]        
            \State \textbf{Input:} $\ma \in \R_{\geq 0}^{n\times n}$ a symmetric matrix with $\rho(\ma)<1$, $\epsilon$ real numbers.
            \State \textbf{Output:} Positive diagonal matrix $\mv$ such that $\mv [(1+\epsilon) \eye - \ma] \mv$ is SDD
        \State $\alpha \eq 1, k \eq 0, \vv_\alpha^{\, \, (0)} \gets \vzero$
        \While{$\|\mm_\alpha \vv_\alpha^{\, \, (k)} - \vones \|_\infty > \frac{1}{2}$}
            \State $\vv_{\alpha}^{\, \,(k+1)} \eq \vv_{\alpha}^{\, \, (k)} -  \frac{1}{4} [\mm_{\alpha} \vv_{\alpha}^{\, \, (k)} - \vones]$ 
            \State $k \gets k + 1$  
        \EndWhile
        \State $\vv_\alpha \gets \vv_\alpha^{\, \, (k)}$
        \While{$\alpha > \epsilon$}
            \State $\alpha \eq \alpha/2, \quad  k \gets 0$
            \State $\vv_{\alpha}^{\, \, (0)} \gets \vzero$
            \State $\mm_\alpha = (1+\alpha) \eye - \ma$.
            \While{$\|\mm_{\alpha} \vv_{\alpha}^{\, \, (k)} - \vones\|_\infty > \frac{1}{2}$}
                \State $\opZ_{2 \alpha}  \gets  \code{SDDSolve}(\mv_{2\alpha}\mm_{2\alpha}\mv_{2\alpha}, \frac{1}{4})$
                \State $\vv_{\alpha}^{\, \,(k+1)} \eq \vv_{\alpha}^{\, \, (k)} - \mv_{2\alpha} \opZ_{2 \alpha} (\mv_{2\alpha} [\mm_{\alpha} \vv_{\alpha}^{\, \, (k)} - \vones])$ 
                \State $k \gets k + 1$
            \EndWhile
            \State $\vv_\alpha \eq \vv_\alpha^{\, \, (k)}$
        \EndWhile
        
    \State \Return $\mv_\alpha$ with $\mv_\alpha \mm_{\epsilon} \mv_\alpha$ being SDD 
\end{algorithmic}	
\end{algorithm}

Our proof of correctness follows much the same blueprint as the proof of the asymmetric scaling algorithm in Section \ref{sec:scaling}. As before, we begin by making the simplifying assumption that $s=1$ for our M-matrix $\mm$: this proceeds without loss of generality as before. We first show that our iteration to compute the initial scaling $\vv_1$ converges quickly, and then show that under our assumption on $\code{SDDSolve}$'s properties the iteration to compute $\vv_\alpha$ from $\vv_{2\alpha}$ also converges quickly. We then combine these facts into our claim.

\begin{lemma}
    \label{symm:initscale}
Let $\ma$ be a symmetric nonnegative matrix with $\rho(\ma) < 1$, and let $\mm_1 = 2 \eye - \ma$. Let $\vx$ be any vector, and define $\vy = \vx - \frac{1}{4} [\mm_1 \vx - \vones]$. Then 
\[
\| \mm_1 \vy - \vones \|_2 \leq \frac{3}{4} \| \mm_1 \vx - \vones \|_2.    
\]
\end{lemma}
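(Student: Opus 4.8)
The plan is to recognize the map $\vx \mapsto \vy$ as one step of a Richardson iteration for the linear system $\mm_1 \vz = \vones$ with step size $\tfrac14$, and to bound its error-contraction factor by the spectral radius of the iteration matrix. Concretely, I would first compute the residual recursion: since $\vy = \vx - \tfrac14(\mm_1\vx - \vones)$, we have
\[
\mm_1 \vy - \vones = \mm_1\vx - \vones - \tfrac14 \mm_1(\mm_1 \vx - \vones) = \left(\eye - \tfrac14 \mm_1\right)(\mm_1 \vx - \vones).
\]
Thus it suffices to show $\left\lVert \eye - \tfrac14 \mm_1 \right\rVert_2 \leq \tfrac34$.

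Next I would substitute $\mm_1 = 2\eye - \ma$ to get $\eye - \tfrac14\mm_1 = \tfrac12 \eye + \tfrac14 \ma$. Since $\ma$ is symmetric, this matrix is symmetric, so its $\ell_2$-operator norm equals its largest eigenvalue in absolute value. The eigenvalues of $\tfrac12\eye + \tfrac14\ma$ are exactly $\tfrac12 + \tfrac14\lambda$ as $\lambda$ ranges over the (real) eigenvalues of $\ma$. Because $\rho(\ma) < 1$ and $\rho(\ma) = \max\{|\lmax(\ma)|, |\lmin(\ma)|\}$, every eigenvalue $\lambda$ of $\ma$ satisfies $|\lambda| < 1$, hence $\tfrac12 + \tfrac14\lambda \in \left(\tfrac14, \tfrac34\right)$. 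Therefore $\left\lVert \tfrac12\eye + \tfrac14\ma \right\rVert_2 < \tfrac34$, and combining with the residual recursion gives $\|\mm_1\vy - \vones\|_2 \leq \tfrac34\|\mm_1\vx - \vones\|_2$ as claimed.

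There is essentially no hard step here; the only point requiring care is the use of symmetry twice — once so that the operator norm of $\tfrac12\eye + \tfrac14\ma$ coincides with its spectral radius (this is where the asymmetric case would need a different argument), and once to ensure all eigenvalues of $\ma$ are real so that $\rho(\ma)<1$ forces them into $(-1,1)$. I would state these facts explicitly but without elaboration, since they are standard. The whole argument is a few lines and requires no results beyond the definition of $\rho$ recalled in Section~\ref{sub:prelim_notation}.
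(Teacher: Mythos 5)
Your proof is correct and follows essentially the same route as the paper: compute the residual recursion $\mm_1\vy - \vones = (\eye - \tfrac14\mm_1)(\mm_1\vx - \vones)$, then bound $\|\eye - \tfrac14\mm_1\|_2 \le \tfrac34$ using symmetry of $\ma$ and $\rho(\ma)<1$ (you phrase this via eigenvalues of $\tfrac12\eye + \tfrac14\ma$; the paper phrases it via the Loewner bounds $\eye \preceq \mm_1 \preceq 3\eye$, but these are the same fact).
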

\begin{proof}
We observe that 
\begin{align*}
    \|\mm_1 \vy - \vones\|_2 &= \|\mm_1 \vx - \frac{1}{4} \mm_1[\mm_1 \vx - \vones] - \vones \|_2 \\
                            &=\|[\eye - \frac{1}{4} \mm_1][\mm_1 \vx - \vones] \|_2 \\
                            &\leq \|\eye - \frac{1}{4} \mm_1\|_2 \|\mm_1 \vx - \vones \|_2
\end{align*}
Note that since $\ma$ is symmetric and satisfies $\rho(\ma) < 1$, we have $-\eye \preceq \ma \preceq \eye$. Thus $\eye \preceq \mm_1 \preceq 3 \eye$ and so $\frac{1}{4} \eye \preceq \eye - \frac{1}{4} \mm_1 \preceq \frac{3}{4} \eye$. Therefore as $\ma$ is symmetric we can conclude $\| \eye - \frac{1}{4} \mm_1 \|_2 \leq \frac{3}{4}$. Substituting this into the above implies the claim.
\end{proof}

The above fact shows that the iteration to compute the initial $\vv_1$ reduces a residual term by a constant factor in each step. We will now show that the iteration to compute $\vv_\alpha$ converges quickly. Before we do this, we will need to establish that $\mv_{2\alpha}\mm_{2\alpha}\mv_{2\alpha}$ is SDD:

\begin{lemma}
\label{symm:SDD}
Let $\mm$ be a symmetric M-matrix, and let $\vv$ be any vector where $||\mm \vv - \vones ||_\infty \leq \frac{1}{2}$. Then $\mv \mm \mv$ is SDD.
\end{lemma}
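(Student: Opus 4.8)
The plan is to reduce this statement directly to the asymmetric RCDD scaling fact already proven, Lemma~\ref{lemma:MMatrix_RCDD}, using symmetry to collapse the left and right scalings into one.

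First I would unpack the hypothesis: $\|\mm \vv - \vones\|_\infty \leq \frac{1}{2}$ says that every coordinate of $\mm\vv$ lies in the interval $[\frac{1}{2}, \frac{3}{2}]$, and in particular $\mm\vv > \vzero$ entrywise. Since $\mm$ is symmetric, this also gives $\mm^\top \vv = \mm\vv > \vzero$. Now apply Lemma~\ref{lemma:MMatrix_RCDD} with the choice $\vx = \vy = \vv$: $\mm$ is an M-matrix and both $\mm\vv > \vzero$ and $\mm^\top\vv > \vzero$ hold, so the lemma yields that $\ms := \mv\mm\mv$ is RCDD. (Note that the positivity of $\vv$, hence the fact that $\mv$ is an honest positive diagonal matrix, is already handled inside the proof of Lemma~\ref{lemma:MMatrix_RCDD} via M-matrix monotonicity, so no separate argument is needed here.)

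Finally I would observe that $\ms$ is symmetric, since $\mm$ is symmetric and $\mv$ is diagonal: $\ms^\top = \mv^\top \mm^\top \mv^\top = \mv \mm \mv = \ms$. For a symmetric matrix the two families of inequalities in the definition of RCDD, namely $\ms_{ii} \geq \sum_{j \neq i} |\ms_{ij}|$ and $\ms_{ii} \geq \sum_{j \neq i} |\ms_{ji}|$, are identical, so a symmetric RCDD matrix is precisely an SDD matrix. Hence $\mv\mm\mv$ is SDD, which is the claim.

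There is no real obstacle here — the statement is short and follows from machinery already in hand. The only point worth flagging, and it is a choice rather than a difficulty, is recognizing that symmetry is exactly what allows us to feed the single vector $\vv$ into both slots of Lemma~\ref{lemma:MMatrix_RCDD}; in the asymmetric setting one would have needed a separate approximate right solution and approximate left solution, whereas here $\vv$ serves both roles simultaneously, which is why the symmetric scaling algorithm can maintain a single scaling vector $\vv_\alpha$ rather than a pair $(\vl_\alpha, \vr_\alpha)$.
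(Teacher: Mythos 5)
Your proof is correct and takes essentially the same route as the paper: the paper's one-line argument also invokes Lemma~\ref{lemma:MMatrix_RCDD} (together with Lemma~\ref{lemma:MMatrix_monotone}) and then uses symmetry of $\mm$ to conclude SDD. You have merely spelled out the steps that the paper compresses.
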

\begin{proof}
We observe first that for any index $i$ $(\mm \vv)_i \in [\frac{1}{2}, \frac{3}{2}]$. Thus by Lemmas \ref{lemma:MMatrix_monotone} and \ref{lemma:MMatrix_RCDD} and the fact that $\mm$ is symmetric we conclude that $\mv \mm \mv$ is SDD. 
\end{proof}

We now show our iteration for computing $\vv_\alpha$ from $\vv_{2\alpha}$ converges quickly. 

\begin{lemma}
    \label{symm:inductivescale}
    Let $\ma$ be a symmetric nonnegative matrix with $\rho(\ma) < 1$. Let $\alpha > 0$ be some constant, and let $\mm_{\alpha} = (1+\alpha) \eye - \ma$. Let $\vv_{2\alpha}$ be a positive vector such that $\ms = \mv_{2\alpha} \mm_{2\alpha}\mv_{2\alpha}$ is SDD. Define $\vw = \mm_{\alpha}^{-1} \vones$. Let $\vx$ be a vector, and let $\vy = \mv_{2\alpha} \mm_{\alpha} [\vx - \vw]$. Let $\opZ$ be some linear operator satisfying 
    \[
    \| \ms^{-1} \vy - \opZ_{2\alpha} (\vy)\|_\ms  \leq \frac{1}{4} \| \ms^{-1} \vy \|_\ms.     
    \]
     Then if $\vz = \vx - \mv_{2\alpha} \opZ_{2 \alpha}(\vy)$ then
    \[
    \|\vz - \vw \|_{\mm_{2\alpha}} \leq \frac{3}{4} \|\vx - \vw\|_{\mm_{2\alpha}}.
    \]    
\end{lemma}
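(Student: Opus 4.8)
The plan is to analyze the map $\vx\mapsto\vz$ as one step of a preconditioned Richardson iteration toward $\vw=\mm_\alpha^{-1}\vones$, exploiting that $\ma$, $\mm_\alpha=(1+\alpha)\eye-\ma$ and $\mm_{2\alpha}=(1+2\alpha)\eye-\ma$ all commute and hence are simultaneously diagonalizable with real eigenvalues. First I would collect the spectral facts. Since $\ma$ is symmetric and nonnegative, $\lmax(\ma)=\rho(\ma)<1$, so every eigenvalue of $\ma$ lies in $(-1,1)$; hence $\mm_\alpha$ and $\mm_{2\alpha}$ are SPD, $\lmin(\mm_{2\alpha})=1+2\alpha-\rho(\ma)>2\alpha$, and $\mzero\preceq\mm_\alpha\preceq\mm_{2\alpha}$ because $\mm_{2\alpha}-\mm_\alpha=\alpha\eye$. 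Since $\vv_{2\alpha}$ is positive, $\ms=\mv_{2\alpha}\mm_{2\alpha}\mv_{2\alpha}$ is congruent to the SPD matrix $\mm_{2\alpha}$ and hence itself SPD, so $\|\cdot\|_\ms$, $\|\cdot\|_{\mm_{2\alpha}}$, $\|\cdot\|_{\mm_{2\alpha}^{-1}}$ are genuine norms. I would also record $\mv_{2\alpha}\ms^{-1}\mv_{2\alpha}=\mm_{2\alpha}^{-1}$ and the isometry $\|\mv_{2\alpha}\vu\|_{\mm_{2\alpha}}=\|\vu\|_\ms$ for every $\vu$, both immediate from $\ms=\mv_{2\alpha}\mm_{2\alpha}\mv_{2\alpha}$.

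Next I would split the step into an exact part and an error part. Let $\vz^{\ast}=\vx-\mv_{2\alpha}\ms^{-1}\vy$ be the point one would reach with an exact solve. Using $\vy=\mv_{2\alpha}\mm_\alpha(\vx-\vw)$ and $\mv_{2\alpha}\ms^{-1}\mv_{2\alpha}=\mm_{2\alpha}^{-1}$, a one-line computation gives $\vz^{\ast}-\vw=(\eye-\mm_{2\alpha}^{-1}\mm_\alpha)(\vx-\vw)=\alpha\,\mm_{2\alpha}^{-1}(\vx-\vw)$, since $\mm_{2\alpha}-\mm_\alpha=\alpha\eye$. Because $\lmin(\mm_{2\alpha})>2\alpha$ we have $\alpha^2\mm_{2\alpha}^{-1}\preceq\tfrac14\mm_{2\alpha}$, hence $\|\vz^{\ast}-\vw\|_{\mm_{2\alpha}}\leq\tfrac12\|\vx-\vw\|_{\mm_{2\alpha}}$; equivalently, this is Lemma~\ref{lem:preconditioner} applied with $\mm=\eye-\ma$, $\md=\mm_{2\alpha}$, and $\alpha'=2\alpha$, whose hypothesis holds since $\mm$ is symmetric and commutes with $\md$, so $\md^{1/2}\mm\md^{-1/2}+\md^{-1/2}\mm^{\top}\md^{1/2}=2\mm\succeq\mzero$.

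For the error part, $\vz-\vz^{\ast}=\mv_{2\alpha}(\ms^{-1}\vy-\opZ(\vy))$, so by the isometry above and the oracle guarantee, $\|\vz-\vz^{\ast}\|_{\mm_{2\alpha}}=\|\ms^{-1}\vy-\opZ(\vy)\|_\ms\leq\tfrac14\|\ms^{-1}\vy\|_\ms$, and it remains to bound $\|\ms^{-1}\vy\|_\ms$. Using $\mv_{2\alpha}\ms^{-1}\mv_{2\alpha}=\mm_{2\alpha}^{-1}$ one gets $\|\ms^{-1}\vy\|_\ms^2=\vy^{\top}\ms^{-1}\vy=(\vx-\vw)^{\top}\mm_\alpha\mm_{2\alpha}^{-1}\mm_\alpha(\vx-\vw)$, and the operator inequality $\mm_\alpha\mm_{2\alpha}^{-1}\mm_\alpha\preceq\mm_{2\alpha}$ (valid because $\mzero\preceq\mm_\alpha\preceq\mm_{2\alpha}$) yields $\|\ms^{-1}\vy\|_\ms\leq\|\vx-\vw\|_{\mm_{2\alpha}}$, so $\|\vz-\vz^{\ast}\|_{\mm_{2\alpha}}\leq\tfrac14\|\vx-\vw\|_{\mm_{2\alpha}}$. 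The triangle inequality then gives $\|\vz-\vw\|_{\mm_{2\alpha}}\leq(\tfrac12+\tfrac14)\|\vx-\vw\|_{\mm_{2\alpha}}$, the claimed bound. The one spot demanding care — the ``hard part'' — is precisely this error estimate: one must control $\|\ms^{-1}\vy\|_\ms$ through the PSD ordering $\mm_\alpha\preceq\mm_{2\alpha}$ rather than by writing $\mm_\alpha=\mm_{2\alpha}-\alpha\eye$ and applying the triangle inequality, which would only give a factor $\tfrac32$ and an overall bound of $\tfrac78$ instead of $\tfrac34$.
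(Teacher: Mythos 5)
Your proof is correct and follows essentially the same route as the paper's: both split the step into the exact preconditioned update plus a solver-error term, bound the first by $\tfrac12$ via $\mm_{2\alpha}^{-1}\preceq\tfrac{1}{2\alpha}\eye$, and bound the second by $\tfrac14$ via the PSD ordering $\mm_\alpha\preceq\mm_{2\alpha}$ (the paper phrases this as $\|\mm_{2\alpha}^{-1/2}\mm_\alpha\mm_{2\alpha}^{-1/2}\|_2\leq 1$, you as $\mm_\alpha\mm_{2\alpha}^{-1}\mm_\alpha\preceq\mm_{2\alpha}$, which are equivalent). The explicit introduction of the exact iterate $\vz^\ast$ and the reference to Lemma~\ref{lem:preconditioner} are cosmetic differences, not a different argument.
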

\begin{proof}
We observe 
\begin{align*}
    \vz - \vw &= \vx - \mv_{2\alpha} \opZ_{2 \alpha}(\vy) - \vw \\
    &= \vx -  \mm_{2\alpha}^{-1} \mm_{\alpha} [\vx - \vw] - \vw  +  \mm_{2\alpha}^{-1}  \mm_{\alpha} [\vx - \vw] -\mv_{2\alpha} \opZ_{2 \alpha}(\vy) \\
    &= \alpha \mm_{2\alpha}^{-1} [\vx - \vw] + \mv_{2\alpha} \ms^{-1} \vy - \mv_{2\alpha} \opZ_{2 \alpha}(\vy)
\end{align*}
where we used the fact that $\mm_\alpha = \mm_{2\alpha} - \alpha \eye$. Taking norms of both sides and applying the triangle inequality yields
\begin{align*}
    \|\vz - \vw \|_{\mm_{2\alpha}} &\leq \alpha \|\mm_{2\alpha}^{-1} [\vx - \vw] \|_{\mm_{2\alpha}} + \| \mv_{2\alpha} \ms^{-1} \vy - \mv_{2\alpha} \opZ_{2 \alpha}(\vy) \|_{\mm_{2\alpha}}.
\end{align*}
We split the analysis of this expression in two parts. First, we observe that 
\[
\alpha \|\mm_{2\alpha}^{-1} [\vx - \vw] \|_{\mm_{2\alpha}} \leq \alpha \| \mm_{2\alpha}^{-1} \|_{\mm_{2\alpha}} \| \vx - \vw\|_{\mm_{2\alpha}} =   \alpha \| \mm_{2\alpha}^{-1} \|_2 \| \vx - \vw\|_{\mm_{2\alpha}}.
\]
Now, $\mm_{2\alpha}^{-1} \preceq \frac{1}{2\alpha} \eye$: since $\mm_{2\alpha}$ is symmetric this implies 
\[
\alpha \|\mm_{2\alpha}^{-1} [\vx - \vw]\|_{\mm_{2\alpha}} \leq \frac{1}{2} \| \vx - \vw\|_{\mm_{2\alpha}}.
\]
For the second part, we have by our assumption on $\opZ_{2\alpha}$
\[
\| \mv_{2\alpha} \ms^{-1} \vy - \mv_{2\alpha} \opZ_{2 \alpha}(\vy) \|_{\mm_{2\alpha}} = \|\ms^{-1} \vy - \opZ_{2 \alpha}(\vy) \|_{\ms} \leq \frac{1}{4} \|\ms^{-1} \vy\|_\ms = \frac{1}{4} \|\mv_{2\alpha} \ms^{-1} \mv_{2\alpha} \mm_{\alpha} [\vx - \vw] \|_{\mm_{2\alpha}}.
\]
Simplifying this expression yields 
\[
\| \mv_{2\alpha} \ms^{-1} \vy - \mv_{2\alpha} \opZ_{2 \alpha}(\vy) \|_{\mm_{2\alpha}} \leq \frac{1}{4} \|\mm_{2\alpha}^{-1} \mm_{\alpha} [\vx - \vw] \|_{\mm_{2\alpha}} \leq  \frac{1}{4} \|\mm_{2\alpha}^{-1} \mm_{\alpha} \|_{\mm_{2\alpha}} \| \vx - \vw \|_{\mm_{2\alpha}}.
\]
Observe that $\|\mm_{2\alpha}^{-1} \mm_{\alpha} \|_{\mm_{2\alpha}} = \|\mm_{2\alpha}^{-1/2} \mm_{\alpha} \mm_{2\alpha}^{-1/2}\|_2.$ As $\mm_{\alpha}$ and $\mm_{2\alpha}$ are symmetric and $\mm_\alpha \preceq \mm_{2\alpha}$, we note that $\|\mm_{2\alpha}^{-1/2} \mm_{\alpha} \mm_{2\alpha}^{-1/2}\|_2 \leq \|\mm_{2\alpha}^{-1/2} \mm_{2\alpha} \mm_{2\alpha}^{-1/2}\|_2 = 1$. Substituting into the above expression and combining yields
\[
\| \mv_{2\alpha} \ms^{-1} \vy - \mv_{2\alpha} \opZ_{2 \alpha}(\vy) \|_{\mm_{2\alpha}}  \leq  \frac{1}{4} \| \vx - \vw \|_{\mm_{2\alpha}}.
\]
Combining this with the previous bound gives the claim.
\end{proof}

We now prove our theorem:
\begin{proof}
To begin, we show the correctness of our procedure. If we assume that at every value of $\alpha$ encountered we obtain a $\vv_\alpha$ satisfying $\|\mm_\alpha \vv_\alpha - \vones \|_\infty \leq \frac{1}{2}$, then by Lemma \ref{symm:SDD} we must conclude that $\vv_\alpha$ can be used to construct an SDD scaling for $\mm_\alpha$. Thus our output is clearly a valid scaling for $\mm_\epsilon$. 

We now prove that our algorithm runs in the claimed time. We begin by analyzing the time required to compute the initial scaling $\vv_1$. By Lemma \ref{symm:initscale} and line 5 of the algorithm we observe that for any $k$ we have
\[
\| \mm_1 \vv_1^{\, \, (k+1)} - \vones \|_2 \leq \frac{3}{4} \| \mm_1 \vv_1^{\, \, (k)} - \vones \|_2.    
\]
This implies
\[
\| \mm_1 \vv_1^{\, \, (k)} - \vones \|_\infty \leq \| \mm_1 \vv_1^{\, \, (k)} - \vones \|_2 \leq \Big(\frac{3}{4}\Big)^k \|\vones \|_2 = (\frac{3}{4})^k \sqrt{n}.
\]
Thus after $O(\log n)$ iterations of line 5 we will find some $\vv_1$ which is a valid scaling. Each iteration can be implemented in $O(m)$ time for $O(m \log n)$ total work.

We now analyze the time required to compute $\vv_\alpha$ from $\vv_{2\alpha}$. Assume that all of our calls to \SDDsolve on lines 14 and 15 of the algorithm return a valid approximate solution. We will show that under this assumption we will perform only $\otilde(1)$ calls to \SDDsolve: since each individual call returns a valid estimator with high probability our algorithm itself succeeds with high probability. We note by Lemma \ref{symm:inductivescale} and lines 14 and 15 of the algorithm that for any $k$
\[
\|\vv_\alpha^{\, \, (k+1)} - \mm_\alpha^{-1} \vones \|_{\mm_{2\alpha}} \leq \frac{3}{4} \|\vv_\alpha^{\, \, (k)} - \mm_\alpha^{-1} \vones \|_{\mm_{2\alpha}}.
\] 
Like before this implies
\[
\|\mm_\alpha \vv_\alpha^{\, \, (k)} -  \vones \|_{\mm_\alpha^{-1} \mm_{2\alpha} \mm_\alpha^{-1}} = \|\vv_\alpha^{\, \, (k)} - \mm_\alpha^{-1} \vones \|_{\mm_{2\alpha}} \leq \Big(\frac{3}{4}\Big)^k \|\mm_\alpha^{-1} \vones \|_{\mm_{2\alpha}} = \Big(\frac{3}{4}\Big)^k \| \vones \|_{\mm_\alpha^{-1} \mm_{2\alpha} \mm_\alpha^{-1}}.
\]
Now note that $\mm_\alpha^{-1} \preceq \mm_\alpha^{-1} \mm_{2\alpha} \mm_\alpha^{-1} \preceq 2 \mm_\alpha^{-1}$. With this, it is clear that $||\mm_\alpha^{-1} \mm_{2\alpha} \mm_\alpha^{-1}||_2 \leq 2 ||\mm_\alpha^{-1}||_2$ and $||\mm_\alpha \mm_{2\alpha}^{-1} \mm_\alpha||_2 \leq ||\mm_\alpha||_2.$ Therefore $\kappa(\mm_\alpha^{-1} \mm_{2\alpha} \mm_\alpha^{-1}) \leq 2\kappa(\mm_\alpha) \leq 2 \kappa(\mm)$. Applying this fact in the above yields
\[
\|\mm_\alpha \vv_\alpha^{\, \, (k)} -  \vones \|_\infty \leq \|\mm_\alpha \vv_\alpha^{\, \, (k)} -  \vones \|_2 \leq 2 \Big(\frac{3}{4}\Big)^k \kappa(\mm) \| \vones \|_2 = 2\sqrt{n} \Big(\frac{3}{4}\Big)^k \kappa(\mm). 
\]
Thus after $k = O(\log(n \kappa(\mm)))$ iterations we will have some $\vv_\alpha$ which gives a valid scaling for $\mm_\alpha$. Each iteration in this process to compute $\vv_\alpha$ requires one matrix-vector product with $\mm_\alpha$ (requiring $O(m)$ time), one query to $\opZ$ which takes $\mathcal{T}_{solve}\Big(m, n, \frac{1}{4} \Big)$ time, and $O(n)$ additional work. Thus the iteration to compute $\vv_\alpha$ from $\vv_{2\alpha}$ requires only $\mathcal{T}_{solve}\Big(m, n, \frac{1}{4} \Big) \log(n \kappa(\mm))$ time. To conclude, we observe that in only $\log(1/\epsilon)$ iterations $\alpha$ falls from $1$ to $\epsilon$: thus we need only construct $\vv_\alpha$ from $\vv_{2\alpha}$ that many times before we obtain a scaling for $\mm_\epsilon$. The claimed running time follows.  
\end{proof}
If we equip this algorithm with the current fastest SDD system solvers (from \cite{coh14}), we obtain the following corollary:
\begin{corollary}
\label{symm:fastsolve}
Let $\ma \in \R^{\nn}$ be a symmetric nonnegative matrix with $m$ nonzero entries and $\rho(\ma) < 1$, and let $\mm = \eye - \ma$. Pick $\epsilon > 0$. Then we can compute a diagonal matrix $\mv$ where $\mv((1+\epsilon)\eye - \ma)\mv$ is SDD with high probability in time 
\[
O\Big(m \log(n \kappa(\mm)) \log^{1/2}(n) \mathrm{poly}(\log \log n) \log\Big(\frac{1}{\epsilon} \Big) \Big).
\]
\end{corollary}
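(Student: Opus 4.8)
The plan is to obtain Corollary~\ref{symm:fastsolve} as a direct instantiation of Theorem~\ref{thm:symmscaling}: the statement is just the $s=1$ special case, and the only new ingredient is a concrete running-time bound for the oracle \SDDsolve. Since $s=1$, Theorem~\ref{thm:symmscaling} applies verbatim and already guarantees that \fullsymmscale$(\ma,\epsilon,K)$ computes, with high probability, a diagonal $\mv$ making $\mv((1+\epsilon)\eye-\ma)\mv$ SDD, in time $O(\mathcal{T}_{SDDsolve}(m,n,\tfrac14)\,\log(n\kappa(\mm))\,\log(1/\epsilon))$. Thus the whole task reduces to bounding $\mathcal{T}_{SDDsolve}(m,n,\tfrac14)$.

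For that bound I would cite the nearly-linear-time SDD solver of \cite{coh14}, which (after the standard constant-factor reduction of general SDD systems to SDDM/Laplacian systems) implements Oracle~\ref{oracle:SDDSolve} with $\mathcal{T}_{SDDsolve}(m,n,\delta) = O\big(m \log^{1/2} n \cdot \mathrm{poly}(\log\log n) \cdot \log(1/\delta)\big)$. The crucial observation is that every SDD solve performed inside \fullsymmscale\ (lines 13--14 of Algorithm~\ref{alg:symMMatrix}) is carried out to the \emph{fixed constant} accuracy $\delta=\tfrac14$, so the $\log(1/\delta)$ factor degenerates to $O(1)$ and $\mathcal{T}_{SDDsolve}(m,n,\tfrac14) = O\big(m \log^{1/2} n \cdot \mathrm{poly}(\log\log n)\big)$.

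Plugging this into the running time from Theorem~\ref{thm:symmscaling} yields $O\big(m \log^{1/2} n \cdot \mathrm{poly}(\log\log n) \cdot \log(n\kappa(\mm)) \cdot \log(1/\epsilon)\big)$, which is exactly the claimed bound. Correctness (the output $\mv$ being a valid scaling with high probability) is inherited unchanged: Theorem~\ref{thm:symmscaling} invokes only polylogarithmically many SDD solves, and since the solver of \cite{coh14} succeeds with high probability per call, a union bound over these calls preserves the overall high-probability guarantee.

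The one step I would treat with care --- the closest thing to an obstacle --- is confirming that the solver of \cite{coh14} delivers exactly the interface demanded by Oracle~\ref{oracle:SDDSolve}: a linear operator $\opZ$ with $\|\ms^{-1}\vx - \opZ(\vx)\|_\ms \le \delta\|\ms^{-1}\vx\|_\ms$, with the stated running time holding uniformly in the conditioning of $\ms$ up to the $\log(1/\delta)$ dependence. This matching is routine --- it is the standard form of modern SDD-solver guarantees --- but it is the place where one must check that the cited black box lines up with what Algorithm~\ref{alg:symMMatrix} actually calls; everything past that is pure substitution.
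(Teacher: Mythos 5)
Your proposal is correct and matches the paper's (unstated but implicit) proof exactly: the corollary is obtained by instantiating Theorem~\ref{thm:symmscaling} with $s=1$ and substituting the running time of the SDD solver from \cite{coh14} for $\mathcal{T}_{SDDsolve}(m,n,\tfrac14)$, noting the constant-accuracy argument makes the $\log(1/\delta)$ factor an $O(1)$. No substantive differences.
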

Finally, we obtain the following corollary for solving linear systems in symmetric M-matrices:
\begin{corollary}
Let $\ma \in \R^{\nn}$ be a symmetric nonnegative matrix with $m$ nonzero entries and $\rho(\ma) < 1$, and let $\mm = \eye - \ma$. Then for any vector $\vb$ we can find $\vx$ such that $||\mm \vx - \vb||_2 \leq \epsilon ||\vb||_2$ in time
\[
O\Big(m \log(n \kappa(\mm)) \log(\kappa(\mm)) \log^{1/2}(n) \mathrm{poly}(\log \log n) \log\Big(\frac{1}{\delta} \Big) \Big).
\]
\end{corollary}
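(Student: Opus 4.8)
The plan is to reduce solving $\mm\vx=\vb$ for the symmetric M-matrix $\mm=\eye-\ma$ to one invocation of the symmetric scaling routine of Corollary~\ref{symm:fastsolve} followed by a short run of preconditioned Richardson. As usual we may take $s=1$, which is the present normalization. First I would fix a shift $\epsilon_0 = \Theta(1/\kappa(\mm))$ and invoke Corollary~\ref{symm:fastsolve} with slack parameter $\epsilon_0$, obtaining (with high probability, via the fast SDD solver of \cite{coh14}) a positive diagonal $\mv$ such that $\ms \eqdef \mv\,\mm_{\epsilon_0}\,\mv$ is SDD, where $\mm_{\epsilon_0} = (1+\epsilon_0)\eye-\ma = \mm+\epsilon_0\eye$. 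Since $\log(1/\epsilon_0)=O(\log\kappa(\mm))$, this step costs
\[
O\Big(m \log(n \kappa(\mm)) \log(\kappa(\mm)) \log^{1/2}(n)\,\mathrm{poly}(\log\log n)\Big).
\]
Applying Oracle~\ref{oracle:SDDSolve} to $\ms$ with inner accuracy $\eta = \mathrm{poly}(1/\kappa(\mm))$ yields an operator $\opZ$ for $\ms^{-1}$, and hence the operator $\mP(\vx)=\mv\,\opZ(\mv\vx)$ approximating $\mm_{\epsilon_0}^{-1}=\mv\,\ms^{-1}\,\mv$; each application of $\mP$ costs $O(m\log^{1/2}(n)\,\mathrm{poly}(\log\log n)\log\kappa(\mm))$.

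The key estimate is that $\mm_{\epsilon_0}^{-1}$ preconditions $\mm$ well. Because $\ma$, $\mm$ and $\mm_{\epsilon_0}$ are simultaneously diagonalizable, $\mm_{\epsilon_0}^{-1}\mm-\eye = -\epsilon_0\,\mm_{\epsilon_0}^{-1}$ and
\[
\|\eye - \mm\,\mm_{\epsilon_0}^{-1}\|_2 = \|\eye - \mm_{\epsilon_0}^{-1}\mm\|_2 = \frac{\epsilon_0}{\lmin(\mm)+\epsilon_0}.
\]
By Perron--Frobenius $\lmax(\ma)=\rho(\ma)<1$, so $\lmin(\mm)=1-\rho(\ma)>0$, and $\lmin(\mm)=\lmax(\mm)/\kappa(\mm)=\Omega(1/\kappa(\mm))$ since $\lmax(\mm)=1-\lmin(\ma)\in(0,2)$ (alternatively, one may choose $\epsilon_0$ adaptively by repeated halving, stopping the first time the Richardson loop below converges in the prescribed number of steps, which is guaranteed once $\epsilon_0\le\lmin(\mm)$, at only $O(\log\kappa(\mm))$ extra overhead). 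Hence for $\epsilon_0=\Theta(1/\kappa(\mm))$ this rate is at most $\tfrac12$, and with $\eta$ small enough the perturbed operator still satisfies $\|\eye-\mm\mP\|_2\le\tfrac34$. Then I would run $\prichardson(\mm,\mP,\vb,\epsilon,\vzero,n_{iter})$: writing $\vx^* = \mm^{-1}\vb$, the residual obeys $\mm\vx^{(k+1)}-\vb = (\eye-\mm\mP)(\mm\vx^{(k)}-\vb)$, so after $n_{iter}=O(\log(1/\epsilon))$ iterations $\|\mm\vx-\vb\|_2\le\epsilon\|\vb\|_2$ (taking expectations and union-bounding the high-probability guarantees of the SDD solver over all iterations, as in Lemma~\ref{lemma:scaling_innerloop}). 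Each iteration is one matrix-vector product with $\mm$, one application of $\mP$, and $O(n)$ work, i.e. $O(m\log^{1/2}(n)\,\mathrm{poly}(\log\log n)\log\kappa(\mm))$; summing the scaling phase and the Richardson phase gives the claimed bound, the stated running time being an upper bound for the sum of the two contributions.

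The main obstacle is not any single deep step — all the heavy machinery (the scaling algorithm and its correctness, the RCDD/SDD solvers) is in hand — but the accuracy bookkeeping forced by the fact that $\fullsymmscale$ only ever produces a scaling of the strictly shifted matrix $\mm_{\epsilon_0}$, never of $\mm$ itself. One must simultaneously (i) take $\epsilon_0$ small enough ($\epsilon_0\lesssim 1/\kappa(\mm)$) that $\mm_{\epsilon_0}^{-1}$ preconditions $\mm$ at a constant rate — this is exactly what produces the extra $\log\kappa(\mm)$ factor relative to Corollary~\ref{symm:fastsolve}; (ii) observe that $\fullsymmscale$'s cost depends only on $\log(1/\epsilon_0)$, so shrinking $\epsilon_0$ is cheap; and (iii) track how the inner SDD-solver error $\eta$ propagates through the scaling $\mv$ (whose condition number is only polynomially bounded in $n$ and $\kappa(\mm)$, by Lemma~\ref{lemma:scaling_condition}) into the preconditioned iteration, verifying that $\log(1/\eta)=O(\log\kappa(\mm))$ suffices throughout. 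Lining up these three scales, together with converting the naturally $\mm$-norm / $\mm_{\epsilon_0}$-norm error estimates into the requested $\ell_2$ bound, is where essentially all the (routine) work lies.
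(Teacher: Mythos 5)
Your proof is correct and takes essentially the same approach as the paper: both precondition $\mm$ by $\mm_{\epsilon_0}^{-1}$ once $\epsilon_0 \le \lmin(\mm)$, obtain the needed SDD scaling from \fullsymmscale, and then run preconditioned Richardson, with the paper locating a small enough shift by testing Richardson's convergence at each intermediate level $\alpha$ produced inside \fullsymmscale\ --- precisely your parenthetical adaptive-halving alternative. Note that your claim $\lmin(\mm)=\Omega(1/\kappa(\mm))$ requires $\lmax(\mm)=\Omega(1)$, which is not guaranteed (e.g.\ $\ma=(1-\eta)\eye$ gives $\lmax(\mm)=\eta$), so the adaptive search is the cleaner justification of the $\log\kappa(\mm)$ factor --- though the paper's own proof is equally casual in equating $\log(1/\lmin(\mm))$ with $\log\kappa(\mm)$.
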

\begin{proof}
Note that if $\epsilon = \lambda_{min}(\mm)$ $\mm \preceq \mm_\epsilon \preceq 2\mm$. Thus, by the standard analysis of preconditioned Richardson iteration $O(\log{1}{\delta})$ applications of a linear system solver for $\mm_\epsilon$ yields a linear system solver for $\mm$. By simply running this procedure for each $\mm_\alpha$ encountered in \fullsymmscale and checking if the resulting vector $\vx_\alpha$ satsifes $||\mm \vx_\alpha - \vb||_2 \leq \delta ||\vb||_2$, we will eventually find a vector satisfying this and we will do so within $\log(1/\epsilon) = \log(\kappa(\mm))$ iterations. The running time follows.
\end{proof}
\section{Factor Width 2 Matrices}
\label{sec:factor2}

Here we discuss how to use an algorithm for solving linear systems in symmetric M-matrices to solve linear systems in factor width 2 matrices. The idea behind this reduction is simple. We will show that if $\mm$ is a factor width 2 matrix and $\mm'$ is a matrix where the sign of every off-diagonal entry is made negative, i.e. $\mm'_{ij} = - |\mm_{ij}|$ for all $i \neq j$ and $\mm'_{ii} = \mm_{ii}$, then $\mm'$ is an M-matrix. Consequently, as we saw in Section~\ref{sec:symmscaling}, this implies that there is a diagonal matrix, $\mv$, with positive diagonal entries such that $\mv \mm' \mv$ is symmetric diagonally dominant. This immediately implies that $\mv \mm \mv$ is symmetric diagonally dominant as well. Consequently, we can solve linear systems in $\mm$ simply by computing $\mv$ as in Section~\ref{sec:symmscaling} and then solving $\mv \mm' \mv$ using a SDD solver (note that we can apply the same trick for asymmetric matrices and RCDD solvers). 

In the remainder of this section we formally show that $\mm'$ is an M-matrix. We break this into two lemmas about M-matrices and factor width two matrices. This result and the lemma we use to prove it are fairly well known and we encourage the reader to view \cite{DaitchS09} for further context. Here we provide a short self-contained proof for completeness.

\begin{lemma}[Characterization of M-Matrices]
\label{lem:m-char} All symmetric PSD Z-matrices are M-matrices.
\end{lemma}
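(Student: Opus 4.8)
Recall that a \emph{Z-matrix} is a matrix all of whose off-diagonal entries are nonpositive, and that a matrix $\mm$ is an M-matrix if it can be written $\mm = s\mI - \mb$ with $s > 0$, $\mb$ entrywise nonnegative, and $\rho(\mb) \leq s$. So given a symmetric PSD Z-matrix $\mm$, the plan is simply to exhibit such a decomposition explicitly: take $s$ to be any real number with $s \geq \lmax(\mm)$ and $s > 0$ (for instance $s = \lmax(\mm) + 1$), and set $\mb \eqdef s\mI - \mm$. It then remains to verify the two defining properties of $\mb$.

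First I would check that $\mb \geq \mzero$ entrywise. For $i \neq j$ we have $\mb_{ij} = -\mm_{ij} \geq 0$ precisely because $\mm$ is a Z-matrix. For the diagonal, $\mb_{ii} = s - \mm_{ii}$, and since $\mm$ is symmetric, $\mm_{ii} = \ve_i^\top \mm \ve_i \leq \lmax(\mm) \leq s$, so $\mb_{ii} \geq 0$ as well. Second I would bound $\rho(\mb)$: as $\mm$ is symmetric so is $\mb$, and the eigenvalues of $\mb$ are exactly $s - \lambda$ as $\lambda$ ranges over the eigenvalues of $\mm$. Because $\mm$ is PSD we have $\lambda \geq 0$, and by the choice of $s$ we have $\lambda \leq \lmax(\mm) \leq s$; hence each eigenvalue $s - \lambda$ of $\mb$ lies in $[0,s]$, so $\rho(\mb) \leq s$. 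Combining the two points, $\mm = s\mI - \mb$ is a valid M-matrix decomposition, which proves the claim.

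There is no serious obstacle here; the only point requiring a moment's care is the choice of $s$, which must simultaneously dominate every diagonal entry of $\mm$ (to keep $\mb$ nonnegative) and dominate $\lmax(\mm)$ (to keep $\rho(\mb) \leq s$). The PSD hypothesis is exactly what makes these two requirements compatible and, in fact, redundant, since it forces $0 \leq \mm_{ii} \leq \lmax(\mm)$ for every $i$. I would also note in passing that the argument genuinely uses PSD-ness and not just nonnegativity of the diagonal: if $\mm$ had a negative eigenvalue $\lambda < 0$, then $\mb = s\mI - \mm$ would have eigenvalue $s - \lambda > s$ for every $s$, so $\mm$ could not be an M-matrix under this definition.
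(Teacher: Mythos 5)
Your proof is correct, but it takes a genuinely different route from the paper's. The paper chooses $s = \max_i \mm_{ii}$ (the largest diagonal entry), which makes the nonnegativity of $\ma = s\mI - \mm$ immediate, and then establishes $\rho(\ma) \leq s$ via a ``folding to absolute values'' argument: for a symmetric nonnegative $\ma$, $\rho(\ma) = \max_{\vx \neq 0} |\vx^\top \ma \vx| / \vx^\top\vx$, and replacing $\vx$ by its entrywise absolute value can only increase the Rayleigh quotient, which then equals $s - \vx^\top \mm \vx / \vx^\top\vx \leq s$ by PSD-ness. You instead choose $s \geq \lmax(\mm)$ with $s > 0$, which makes the spectral-radius bound trivial (since $\mb = s\mI - \mm$ commutes with $\mm$, its eigenvalues are exactly $s - \lambda_i(\mm) \in [0,s]$), but shifts the work to nonnegativity of the diagonal, which you handle with the observation $\mm_{ii} = \ve_i^\top \mm \ve_i \leq \lmax(\mm)$. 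Both proofs are valid; yours is arguably cleaner since it avoids the absolute-value trick and the Rayleigh-quotient characterization of spectral radius entirely, at the cost of choosing $s$ from the spectrum rather than from the entries. You also handle the edge case $\mm = \mzero$ a bit more carefully (your $s = \lmax(\mm)+1$ is always strictly positive, whereas the paper's $s = \max_i \mm_{ii}$ would be zero there).
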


\begin{proof}
Let $\mm$ be an arbitrary symmetric PSD Z-matrix. Since $\mm$ is a symmetric Z-matrix we can write it as $\mm = \md - \mm'$ where $\md$ is a  diagonal matrix with $\md_{ii} = \mm_{ii}$ and $\mm'$ is a non-negative diagonal matrix. Since $\mm$ is PSD we know that $\md$ is non-negative.

Let $s = \max_{i} \md_{ii}$ and $\ma = s \mI - \md + \mm'$. Clearly, $\mm = s \mI - \ma$ and $\ma$ is symmetric and entry-wise non-negative by our choice of $s$. Thus to show that $\mm$ is an M-matrix it simply remains to show that $\rho(\ma) \leq s$. However, for any vector $\vx$ if $\vy$ is a vector with $y_i = |x_i|$ non-negativity of $\mm$ implies
\[
\frac{| \vx^\top \ma \vx |}{\vx^\top \vx}
\leq \frac{\vy ^\top \ma \vy}{\vx^\top \vx}
= \frac{\vy^\top \ma \vy}{\vy^\top \vy} 
\]
and consequently,
\begin{align*}
\rho(\ma) &= \max_{\vx \neq 0} \frac{| \vx^\top  \ma \vx|}{\vx^\top \vx}
\leq \max_{\vx \neq 0} \frac{\vx^\top  \ma \vx}{\vx^\top \vx}
= \max_{\vx \neq 0} \frac{\vx^\top (s \mI - \md + \mm') \vx}{\vx^\top \vx}
= \max_{\vx \neq 0} \left[ s - 
\frac{\vx^\top \mm \vx}{\vx^\top \vx}
\right]
\leq s
\end{align*}
where in the last step we used that $\mm$ is PSD by assumption.
\end{proof}

\begin{lemma}[Factor Width 2 Matrices Induce M-Matrices]
If $\mm$ is a factor width 2 matrix and $\mm'$ satisfies $\mm'_{ij} = - |\mm_{ij}|$ for all $i \neq j$ and $\mm'_{ii} = \mm_{ii}$ for all $i$ then $\mm'$ is an M-matrix.
\end{lemma}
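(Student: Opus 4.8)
The plan is to reduce to Lemma~\ref{lem:m-char} by exhibiting an explicit PSD factorization of $\mm'$. Recall that a factor width 2 matrix $\mm$ comes with a matrix $\mc$ having at most two nonzero entries per row such that $\mm = \mc^\top \mc = \sum_k \vc_k \vc_k^\top$, where $\vc_k^\top$ is the $k$-th row of $\mc$. Since each $\vc_k$ is supported on at most two coordinates, each rank-one term $\vc_k \vc_k^\top$ is either a diagonal matrix (if $\vc_k$ has one nonzero entry) or a $2\times 2$ principal submatrix of the form $\begin{pmatrix} a^2 & ab \\ ab & b^2\end{pmatrix}$ embedded in $\R^{n\times n}$ (if $\vc_k$ has two nonzero entries, say $a,b$ in positions $i,j$). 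The first step is to observe that the off-diagonal sign of such a $2\times 2$ block can be flipped while preserving PSD-ness: if $ab > 0$ we may replace $(a,b)$ by $(a,-b)$, obtaining $\begin{pmatrix} a^2 & -ab \\ -ab & b^2\end{pmatrix}$, which is still rank-one PSD with the same diagonal. Doing this for every $k$ produces a matrix $\tilde{\mc}$, again with at most two nonzeros per row, and $\tilde{\mm} \eqdef \tilde{\mc}^\top \tilde{\mc}$ is PSD.

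The second step is to check that $\tilde{\mm} = \mm'$. The diagonal entries are unchanged, since flipping the sign of a coordinate of $\vc_k$ does not change $\vc_k(i)^2$; hence $\tilde{\mm}_{ii} = \mm_{ii} = \mm'_{ii}$. For an off-diagonal entry $(i,j)$ with $i\neq j$, we have $\tilde{\mm}_{ij} = \sum_k \tilde{\vc}_k(i)\tilde{\vc}_k(j)$; after the sign flips each summand $\tilde{\vc}_k(i)\tilde{\vc}_k(j)$ is $\leq 0$ and has the same absolute value as $\vc_k(i)\vc_k(j)$, so $|\tilde{\mm}_{ij}| = \sum_k |\vc_k(i)\vc_k(j)|$; but in general $|\mm_{ij}| = |\sum_k \vc_k(i)\vc_k(j)| \le \sum_k |\vc_k(i)\vc_k(j)|$, so this only gives $\tilde{\mm}_{ij} = -\sum_k |\vc_k(i)\vc_k(j)| \le -|\mm_{ij}|$, not equality. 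This is the main obstacle: naively the sign-flipped factorization can make off-diagonal entries \emph{too negative} when different rows $\vc_k$ contribute to the same off-diagonal entry with opposite signs. The fix is that for a factor width 2 matrix we may assume (or first preprocess to ensure) that for each pair $(i,j)$ at most one row $\vc_k$ has both $i$ and $j$ in its support, or more robustly, that all rows supported on $\{i,j\}$ contribute with the \emph{same} sign of $\vc_k(i)\vc_k(j)$; this is exactly the structural hypothesis one needs, and it is standard (see \cite{DaitchS09}) that this can be arranged without loss of generality by a change of signs of the columns of $\mc$ — flip the sign of column $i$ of $\mc$ for $i$ in an appropriate subset so that every off-diagonal contribution is nonpositive, which is possible precisely because the off-diagonal entries of $\mm$ already determine a consistent $2$-coloring-type condition on a graph whose connected components can be sign-normalized independently. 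With that normalization in force, every summand $\vc_k(i)\vc_k(j)$ already has sign $-\mathrm{sign}(\mm_{ij})$ (or is zero), so $\mm_{ij} = \sum_k \vc_k(i)\vc_k(j)$ has $|\mm_{ij}| = \sum_k|\vc_k(i)\vc_k(j)| = -\tilde{\mm}_{ij}$, giving $\tilde{\mm}_{ij} = -|\mm_{ij}| = \mm'_{ij}$ as desired.

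Finally, with $\mm' = \tilde{\mm}$ established, observe that $\mm'$ is a symmetric matrix that is PSD (being $\tilde{\mc}^\top\tilde{\mc}$) and whose off-diagonal entries $\mm'_{ij} = -|\mm_{ij}| \le 0$ are nonpositive, i.e. $\mm'$ is a symmetric PSD $Z$-matrix. By Lemma~\ref{lem:m-char}, every symmetric PSD $Z$-matrix is an M-matrix, so $\mm'$ is an M-matrix, completing the proof. I expect the sign-normalization argument in the second paragraph above to be the delicate point and would state it carefully, possibly as a short preliminary lemma, rather than the routine verifications of the diagonal and the invocation of Lemma~\ref{lem:m-char}.
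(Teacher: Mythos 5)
Your factorization-and-sign-flip setup matches the paper's, and you correctly identify the crux: after flipping signs within each row of $\mc$ so that the two nonzero entries have opposite signs, one only gets $\tilde{\mm}_{ij} = -\sum_k |\vc_k(i)\vc_k(j)| \leq -|\mm_{ij}| = \mm'_{ij}$, an inequality rather than equality. The gap is in your proposed fix. Flipping the sign of a \emph{column} of $\mc$ cannot make all rows supported on a given pair $\{i,j\}$ contribute with the same sign: if $\vc_1, \vc_2$ both have support $\{i,j\}$ with $\vc_1(i)\vc_1(j) > 0$ and $\vc_2(i)\vc_2(j) < 0$, then negating column $i$ (or $j$) flips \emph{both} products simultaneously, so they remain of opposite sign. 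A concrete counterexample is $\mc = \bigl(\begin{smallmatrix} 1 & 1 \\ 1 & -1 \end{smallmatrix}\bigr)$, giving $\mm = 2\mI$ and $\mm' = 2\mI$, while any row-sign-normalized $\tilde\mc$ yields $\tilde{\mm} = \bigl(\begin{smallmatrix} 2 & -2 \\ -2 & 2 \end{smallmatrix}\bigr) \neq \mm'$, and no column flip can repair this. There is also no ``2-coloring'' structure to exploit here, since the obstruction already lives inside a single pair of coordinates.

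The paper's proof sidesteps this entirely: it never claims $\tilde{\mm} = \mm'$, but keeps the inequality and invokes monotonicity of the spectral radius for entrywise nonnegative matrices. Concretely, writing $\tilde{\mm} = s\mI - \tilde{\ma}$ and $\mm' = s\mI - \ma'$ with the same $s$ (the common diagonal dominates both), the inequality $\tilde{\mm}_{ij} \leq \mm'_{ij} \leq 0$ for $i \neq j$ gives $0 \leq \ma' \leq \tilde{\ma}$ entrywise, so $\rho(\ma') \leq \rho(\tilde{\ma}) \leq s$ (the last step since $\tilde{\mm}$ is a PSD Z-matrix, hence an M-matrix by the preceding lemma). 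Thus $\mm'$ is an M-matrix. Your argument would be repaired by replacing the sign-normalization paragraph with this monotonicity step, which you already have all the ingredients for.
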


\begin{proof}
Since $\mm$ is factor width 2 we have that $\mm = \mb^\top \mb$ where each row of $\mb$ has at most 2 nonzero entries. Now, suppose we let $\mb'$ be the same matrix as $\mb$ except with the signs of the entries changed so that no row contains 2 positive or 2 negative entries. Note that $\mm'' = (\mb')^\top (\mb')$ has the same diagonal entries as $\mm$ however we have that for all $i \neq j$ it is the case that $[\mm'']_{ij} \leq - |\mm_{ij}|$. Consequently, $\mm''$ is a symmetric Z-matrix and clearly PSD (since we explicitly have its factorization). By  Lemma~\ref{lem:m-char} this implies that $\mm''$ is an M-matrix. Furthermore, since its off-diagonal entries are larger in magnitude then those of $\mm'$ and its diagonal entries are the same from the definition of an M-matrix this implies that $\mm'$ is an M-matrix as desired. This follows from the fact that if $\mn \geq \ma \geq 0$ entrywise then $\rho(\mn) \geq \rho(\ma)$.
\end{proof}

\end{document}